\documentclass{article}
\usepackage[utf8]{inputenc}
\usepackage[dvips,letterpaper,margin=1.0in]{geometry}
\usepackage{sty}
\usepackage{hyperref}
\usepackage{framed}

\def\Cx{\mathbb{C}}

\def\Rl{\mathbb{R}}

\def\K{\mathbb{K}}

\def\bb1{\mathbbm{1}}

\def\Span{\mathrm{span}}
\def\rowspan{\mathrm{rowspan}}
\def\colspan{\mathrm{colspan}}
\def\Sym{\mathrm{Sym}}
\def\diag{\operatorname{diag}}

\def\Id{\mathrm{Id}}
\def\pMRA{\mathrm{pMRA}}
\def\dMRA{\mathrm{dMRA}}

\def\U{\mathcal U}

\def\mchoose#1#2{\ensuremath{\left(\kern-.3em\left(\genfrac{}{}{0pt}{}{#1}{#2}\right)\kern-.3em\right)}}

\newcommand{\dihgroup}[1]{\mathcal D_{#1}}
\newcommand{\cycgroup}[1]{\mathcal C_{#1}}
\newcommand{\projOp}{\Pi}
\def\shiftel{R}
\def\reflel{J}
\def\fshiftel{\widehat{R}}
\def\freflel{\widehat{J}}
\def\multi{\operatorname{multi}}
\newcommand{\fprojOp}{\widehat{\Pi}}
\def\bb1{\mathbbm{1}}
\def\sgn{\operatorname{sgn}}
\newcommand{\mset}[1]{\Lbag#1\Rbag}

\title{Beyond Frequency Marching: Orbit Recovery in Dihedral and Projected Multireference Alignment}

\usepackage{authblk}
\author{Tait Weicht\thanks{Email: \textit{tjweicht@ucdavis.edu}. Partially supported by NSF awards CCF-2338091 and DMS-2012764.} }
\author{Alexander S.\ Wein\thanks{Email: \textit{aswein@ucdavis.edu}. Partially supported by a Sloan Research Fellowship and NSF CAREER Award CCF-2338091.}}

\affil{UC Davis}

\date{}

\begin{document}

\maketitle
\begin{abstract}
\emph{Multireference alignment (MRA)} is the task of recovering a hidden ``signal'' vector, given many noisy copies that have been cyclically shifted by unknown offsets. This task belongs to the class of \emph{orbit recovery} problems, in which the observed samples are affected by some group action. These problems have a variety of practical motivations, including the reconstruction of 3-dimensional molecular structure from \emph{cryogenic electron microscopy (cryo-EM)} images. We consider two variants of MRA: \emph{dihedral MRA}, where the cyclic group is replaced by the dihedral group, allowing for reversals of the vector in addition to shifts; and \emph{projected MRA}, where the observations are passed through a projection operator akin to the tomographic projection present in cryo-EM. We apply the method of moments and aim to recover the signal from the third moment tensor of the samples. This inverse problem is well understood for basic MRA, but for the variants we consider there was previously no polynomial-time algorithm known to succeed for generic signals. We give an algorithm for both of these variants. Our method requires the signal length to be a power of two, and recursively subdivides the problem into smaller problems of half the size. The algorithm's success for generic signals is proven, conditional on a conjecture about the rank of a certain symbolic matrix of polynomials. For any given problem size, this conjecture can be verified on a computer.
\end{abstract}
\newpage
\tableofcontents

\newpage
\section{Introduction}

In the \emph{multireference alignment (MRA)} problem~\cite{bandeira_multireference_2014,bandeira_optimal_2020,perry_sample_2019,abbe_sample_2017}, we are given samples of the form
\begin{equation}
    \label{eq:MRA-model}
    y_i = \shiftel_i \cdot x + \sigma\xi_i \in \RR^d, \quad i = 1,\ldots,N
\end{equation}
where
\begin{itemize}
    \item $x\in\Rl^d$ is a ``signal'' we aim to recover;
    \item $\shiftel_i$ is an element of the cyclic group $\cycgroup{d} = \{R_0,R_1,\ldots,R_{d-1}\}$ drawn independently and uniformly at random;
    \item the cyclic group acts on $\RR^d$ via cyclic shifts, so $R_\ell \cdot x$ denotes the vector $x$ with entries cycled to the right by an offset of $\ell$;
    \item $\xi_i \sim \mathcal{N}(0,I)$ is Gaussian noise;
    \item $\sigma > 0$ is the noise level, which determines the \emph{signal-to-noise ratio} $\mathrm{SNR} := 1/\sigma^2$.
\end{itemize}
Given samples $\{y_i\}_{i=1}^N$, the goal is to recover $x$. However, $x$ is only identifiable up to cyclic shift, that is, we aim to recover the \emph{orbit} of $x$ under the group action of cyclic shifts, $\{R_\ell\cdot x : \ell=0,1,\dots,d-1\}$.

MRA is one canonical example of a task within the class of \emph{orbit recovery} problems~\cite{bandeira_estimation_2023}, where the observed samples are affected by some group action.
These problems have a variety of practical motivations, including the reconstruction of 3-dimensional molecular structure from \emph{cryogenic electron microscopy (cryo-EM)} images.
In cryo-EM, individual particles are frozen in a thin layer of ice through which images are taken by a scanning electron microscope.
Particle orientations are random relative to the viewing plane.
While random rotations may be considered nuisance parameters due to these rotations, we can only hope to recover a 3D structure up to a 3D rotation i.e., the structure's orbit under the action of $SO(3)$.
Two added complexities are that the images are tomographic projections of a 3D electro-static potential function onto a 2D imaging plane, and low amounts of radiation must be used to avoid damage and motion blur from melting ice, leading to noisy images.
The tomographic projection results in additional ambiguity since particles of different chirality will have the same distribution of projections.
With projection, one can only hope to recover structure up to handed-ness, or equivalently, up to the action of $O(3)$.
See~\cite{frank_generalized_2016} for a historical narrative of cryo-EM's experimental development and~\cite{singer_mathematics_2019} for a more detailed discussion of the mathematical details we mention here.

Collecting useful experimental cryo-EM data requires considerable expertise, but even after successful data collection, the mathematics of structure determination as a non-linear inverse problem is highly non-trivial.
Orbit recovery in the mathematical models like that those in~\cite{singer_mathematics_2019} is far from solved.
The MRA problem provides a toy model which captures some key features of the cryo-EM problem, including a phase transition first observed empirically by Sigworth~\cite{sigworth_maximum-likelihood_1998} and later proven~\cite{bandeira_optimal_2020,perry_sample_2019,abbe_sample_2017} in which the reconstruction error scales as $1/\mathrm{SNR}^3$ in the high-noise regime.
However, the MRA model simplifies several aspects present in cryo-EM since the action of cyclic shifts is diagonalizable while the action of $SO(3)$ is not.
In addition, in cryo-EM the tomographic projection operation has a non-trivial kernel so that each sample only contains partial information about the object even when $\sigma=0$.

We study two different multi-reference alignment (MRA) variants which aim to introduce these challenges to the MRA model.
The first variant is \emph{dihedral} multi-reference alignment (dMRA) \cite{bendory_dihedral_2022, edidin_reflection-invariant_2026} where samples are drawn from 
\begin{equation}
    \label{eq:dMRA-model}
    y_i = g_i\cdot x + \sigma\xi_i, \quad i=1,\dots, N
\end{equation}
where \eqref{eq:dMRA-model} modifies \eqref{eq:MRA-model} in that $g_i$ is an element of dihedral group $\dihgroup{d} = \{R_0,\ldots,R_{d-1}, JR_0,\ldots,JR_{d-1}\}$ drawn independently and uniformly at random.
Here, $J$ acts by reversing the order of entries in the vector.
The distribution of the samples $\{y\}_{i=1}^N$ is invariant if we exchange $x$ for any $g\cdot x$ with $g\in \dihgroup{d}$ so we can only expect to recover $x$ up to its $\dihgroup{d}$-orbit.

The second variant, a version of which was introduced in \cite[Section~4.4.1]{bandeira_estimation_2023}, is \emph{projected} multi-reference alignment (pMRA)
\begin{equation}
    \label{eq:pMRA-model}
    y_i = \Pi(R_i\cdot x) + \sigma\xi_i, \quad i=1,\dots, N
\end{equation}
where we modify \eqref{eq:MRA-model} by fixing $x\in \Rl^{2d}\cap\Span([1,-1,1,\dots,-1]^\top)^\perp$ and using the projection operation $\Pi: \Rl^{2d}\to\Rl^d$ given by
\begin{equation}
    (\Pi x)[j] = x[j] + x[2d-1-j],\quad j=1,2,\dots,d.
 \end{equation}
We call the application of $\Pi$ ``projection'' since it could be considered a discretization of the tomographic projection operation.
In this context, we can think of entries of $x$ as giving the density for patches of a thin-shell object.
Projection then has the effect of summing two entries when they align on an imaging axis (see Figure \ref{fig:pMRA-model-diagram}).
\begin{figure}[h]
    \centering
    \label{fig:pMRA-model-diagram}
    \includegraphics[width=0.4\linewidth]{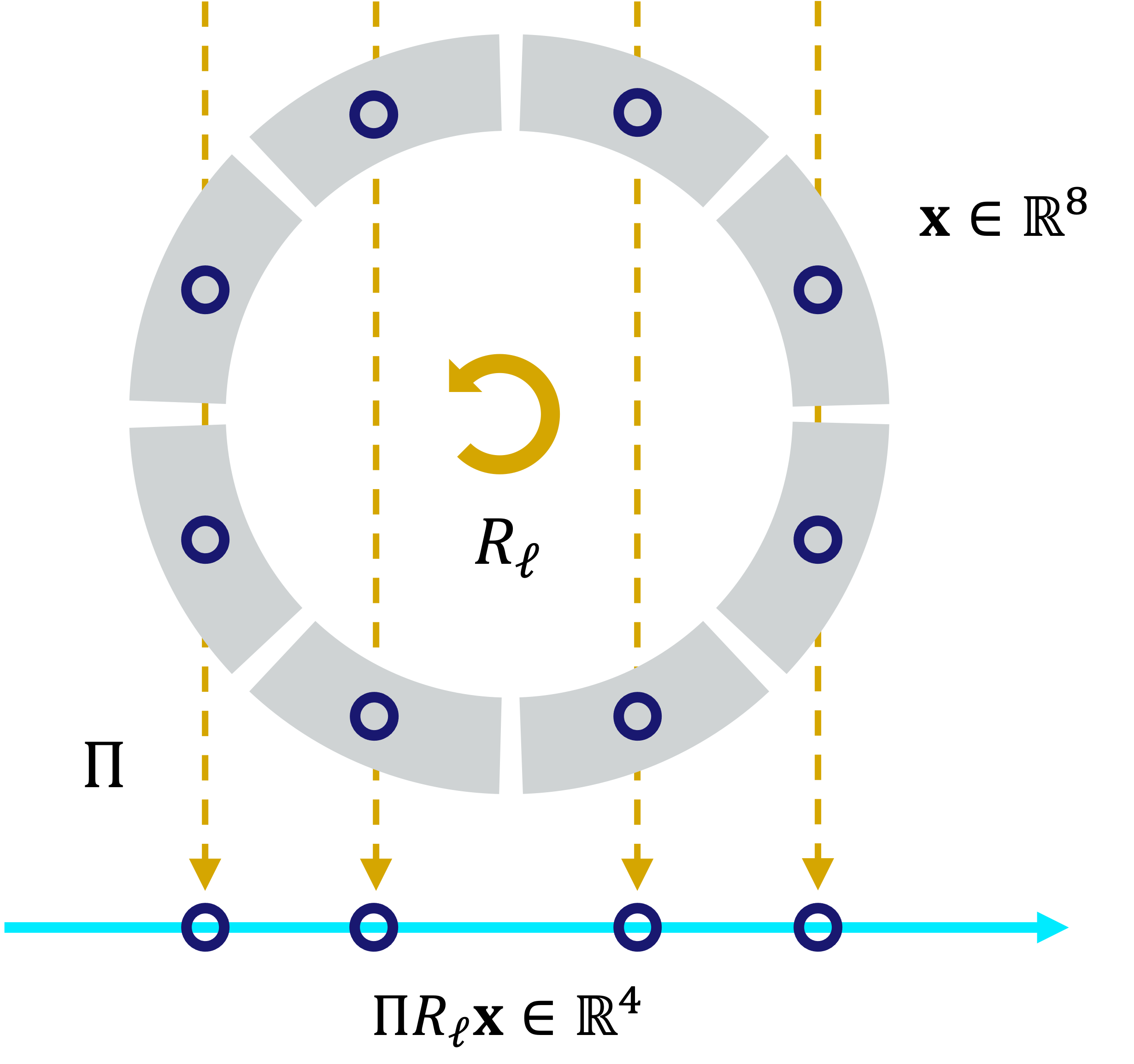}
    \caption{A visualization of the sampling procedure in the pMRA model as tomographic projection of some thin-shell 2D discretized object.}
\end{figure}
An effect of the projection operation is that we can only hope to recover $x$ up to its $\dihgroup{2d}$-orbit instead of its $\cycgroup{2d}$-orbit. 
In fact, the model is unchanged if our group were dihedral instead of cyclic.
The condition that the signal is orthogonal to $(1,-1,1,\dots,-1)^\top$ ensures the problem is well-posed since the vector $(1,-1,1,\dots,-1)^\top$ lies in the kernel of $\Pi \circ g$ for every $g\in\dihgroup{2d}$. This vector represents the highest frequency component present in the signal, so excluding this component amounts to a ``smoothness' assumption on the signal.
Projection has been a key difficulty in extending methods developed for orbit recovery for the MRA model to a model of cryo-EM~\cite{liu_algorithms_2022}.
Projection has also been shown to cause issues with methods such as maximum likelihood estimation (MLE) when the probability distribution on group actions is misspecified \cite{xu_misspecified_2025}.

\paragraph{Method of moments/invariants.}

In this work, we focus on the high noise regime where $\sigma$ is much larger than a typical entry of $x$. The strategy of \emph{synchronization} \cite{singer_three-dimensional_2011, singer_angular_2011, bandeira_non-unique_2020} uses comparisons between samples to label each with a unique group action (up to a global action), after which samples with the same label may be averaged to reduce noise.
In the high noise regime, we cannot hope to estimate the group element associated with each sample by pairwise or even $n$-way comparisons \cite{aguerrebere_fundamental_2016}.
Instead, one can aim to estimate features of the sample distribution that are invariant under the group action and then use these features to recover the signal's orbit under the group action.
We use \emph{moment tensors} of (the signal part of) the sample distribution which are one class of invariant features (capturing all the polynomial invariants).
These moments are given by
\begin{equation}
    \label{eq:invariant-tensors-with-projection}
    T^{(r)}_{G,P}(x) := \Eop_{g\sim\mathrm{Unif}(G)}[(P(g\cdot x))^{\otimes r}]
\end{equation}
where in the dMRA model $G=\dihgroup{d}$ and $P=I$, and in the pMRA model $G= \cycgroup{2d}$ and $P=\Pi$.
Whenever $P$ is omitted in the subscript, it is assumed to be the identity $P = I$. Each $T^{(r)}_{G,P}(x)$ is an order-$r$ tensor where each entry of the tensor is a homogeneous degree-$r$ $G$-invariant polynomial of $x$.
Each entry of the $r$-th order moment may be estimated to an accuracy $\delta$ with high probability using $O(\sigma^{2r}\log(d/\delta))$ samples \cite[Proposition 6.7]{bandeira_estimation_2023} using the empirical moments $\frac1N \sum_i y_i^{\otimes r}$ and some post-processing steps to remove bias caused by noise.
Since the number of samples needed grows exponentially in the order of the tensor we wish to estimate, one might hope that for small $R$ the tensors $\{T^{(r)}_{G,P}(x)\}_{r=1}^R$ \emph{separate} the $G$-orbits of $x$, i.e., contain enough information to identify the orbit of $x$.
Unfortunately, even for the original MRA model this does not hold; there exist signals $x\in\Rl^d$ that will require up to the $d$-th moment to separate orbits (see \cite{bandeira_optimal_2020} for this result in a continuous MRA model and a similar analysis in \cite{perry_sample_2019} for the discrete MRA model).
However, the signals of this type have Lebesgue measure zero.
If we are satisfied with recovering ``most'' signals, we can instead ask how many moments are needed to uniquely identify the $G$-orbit for a \emph{generic} signal, meaning we exclude some measure-zero set of ``bad'' signals (see Definition~\ref{def:generic} for the formal definition).
It has been shown in the MRA model that the first and second moment tensors are not sufficient to identify the orbit of $x$, but the third moment tensor separates the orbits of a generic signals.
As a result, the sample complexity of MRA scales like $\Theta(\sigma^6)$ in the high-noise limit (where throughout this discussion we will ignore the logarithmic dependence on other parameters such as dimension and accuracy)~\cite{bandeira_optimal_2020,perry_sample_2019,abbe_sample_2017}. Further, there is a fast algorithm to recover the orbit from the third moment tensor~\cite{perry_sample_2019}. Beyond MRA, for more general orbit recovery problems, $\Theta(\sigma^{2r})$ samples are both necessary and sufficient, information-theoretically, where $r$ is the number of moments needed to uniquely identify the signal~\cite{bandeira_estimation_2023}.

Both the models we study are strictly harder than the original MRA model in the sense that the invariant tensors in the dMRA and pMRA models have entries which are linear combination of the entries found in the MRA invariant tensors; hence, one obtains less information from these tensors.
One therefore needs at least three moments to achieve orbit recovery in both the pMRA and dMRA models.
Recently, Edidin and Katz~\cite{edidin_reflection-invariant_2026} proved that the first three moments do indeed separate generic $\dihgroup{d}$-orbits\footnote{When we say that certain moments ``separate generic orbits'' we mean that for a generic $x$, the only vectors $y$ that share the same moments as $x$ are the elements of $x$'s orbit.} so the sample complexity of dihedral MRA is $\Theta(\sigma^6)$.
However, the proof method does not result in a computationally efficient algorithm to recover the orbit.
For the pMRA model, even less was known prior to our work: Bandeira et al.\ showed that for a similar model~\cite[Problem 4.7]{bandeira_estimation_2023} --- in which projection is defined for odd-length signals --- the system of polynomials defined by the first three moment tensors will have a finite number of solutions for problems of dimension 3 to 21, but again this does not yield an efficient algorithm for recovery.

In a more general setting that includes cryo-EM, we note that Balanov, Bendory, and Edidin~\cite{balanov_group-invariant_2026} have shown that for an unknown $n$-dimensional object, the $r$-th moment of tomographic projection onto an $m$-dimension subspace determines the $r$-th moment of the un-projected object provided that $m \ge r$, leading to an algorithm that recovers certain un-projected moments from their projected counterparts. However, this does not handle the important case of $r=3$ and $m=2$ in cryo-EM, and does not seem to apply in the models we study here.

\paragraph{Our perspective.}

We work towards finding a polynomial-time algorithm that can provably recover orbits for generic signals in the dihedral and projected MRA models.
In this work we suppose that we have access to exact third order moments $T^{(3)}_{\dihgroup{d}}(x)$ or $T^{(3)}_{\cycgroup{2d},\Pi}(x)$, and we assume linear-algebraic computations can be done in exact arithmetic.
While questions of robustness and noise stability are pertinent (e.g.~\cite{liu_algorithms_2022}), we leave this direction for future work.
We insist on using only moments of order $\le 3$ (and in fact the third moment alone will be enough for our purposes) in order to achieve optimal sample complexity in the high-noise limit.

\paragraph{Known algorithms and their limitations.}
From an algebraic perspective, the entries of $T^{(r)}_{G,P}(x)$ are degree-$r$ homogeneous polynomials in the entries of the vector $x$, so orbit recovery may be achieved by solving a polynomial system of equations.
In principle, the solutions may be found by computing a Gr\"obner basis with an ordering on the monomials to allow for elimination.
In practice, this method is not efficient, and the worst-case time complexity of Gr\"obner basis computation is doubly exponential in the dimension $d$.
We briefly detail two approaches developed in prior work on the MRA model that motivate our approach.

First, \emph{frequency marching} may be used to achieve orbit recovery in the original MRA model \cite{matsuoka_phase_1984,giannakis_signal_1989} and in \emph{cryogenic electron tomography (cryo-ET)}~\cite{bandeira_estimation_2023,liu_algorithms_2022} --- i.e., ``unprojected'' cryo-EM with 3-dimensional observations rather than 2-dimensional projections --- using the first three moments of the sample distribution.
In both models, the second moment determines the magnitudes of the Fourier or Fourier--Bessel coefficients, also called the \emph{power spectrum}.
To recover phases of these coefficients, the third moment of the distribution may be used to set up a linear system of equations modulo $2\pi$ for the coefficients' phases.
The phases of coefficients may then be solved for sequentially, in a procedure called \emph{frequency marching} referring to the fact that one initializes a phase for the first Fourier coefficient (corresponding to a low frequency component of signal) then then uses relationships between phases to determine phases of subsequent Fourier coefficients (higher frequencies).
Bendory et al.~\cite{bendory_bispectrum_2018} give a version of frequency marching for the MRA problem and compare its performance to a number of other algorithms.
Liu and Moitra \cite{liu_algorithms_2022} modify the frequency marching procedure to achieve $SO(3)$-orbit recovery in cryo-ET in quasipolynomial time.

However, Edidin and Katz \cite{edidin_reflection-invariant_2026} showed that adapting the frequency marching method to dihedral MRA, one can only determine each linear equation needed to march frequencies up to an unknown sign.
The ambiguity in sign leads to a combinatorial explosion in which there are exponentially many linear systems, only one of which is guaranteed to have the true $x$ as a solution.
While in theory solving the right system recovers the orbit, the runtime of enumerating all systems is impractical.

Second, \emph{tensor decomposition} methods have been shown to achieve orbit recovery in the MRA model \cite{perry_sample_2019}.
For $x\in\Rl^d$ the third moment tensor takes the form
\[
    T^{(3)}_{\cycgroup{d}}(x) = \sum_{\ell=0}^{d-1} (R_\ell x)^{\otimes 3},
\]
which is the sum of $d$ different $d \times d \times d$ rank-1 tensors. Unlike a matrix factorization, the factorization of a tensor into its rank-1 summands (called its \emph{canonical polyadic (CP) decomposition}) does not require the component vectors to be orthogonal for a unique decomposition~\cite{kruskal_three-way_1977}.
Perry et al.~\cite{perry_sample_2019} showed that \emph{simultaneous diagonalization}, a classical tensor decomposition algorithm, may be used to factor the third moment in a robust way to recover the components $\{R_\ell x\}_{\ell=0}^{d-1}$. Crucially, simultaneous diagonalization requires the rank (number of rank-1 summands to recover) to be no larger than the dimension --- in the case of MRA, this holds with equality.
However, the third moments arising from the models we study take the form
\[
T^{(3)}_{\dihgroup{d}}(x) = \sum_{g\in \dihgroup{d}} (g\cdot x)^{\otimes 3} \qquad \text{and} \qquad T^{(3)}_{\cycgroup{2d}, \Pi}(x) = \sum_{g\in \cycgroup{2d}} (\Pi(g\cdot x))^{\otimes 3}.
\]
Each is the sum of $2d$ different $d \times d \times d$ rank-1 tensors, which is more components than simultaneous diagonalization can handle.
The best known methods for generic tensor decomposition can reach a rank of $(2-\epsilon)d$~\cite{kothari_overcomplete_2024} (see also~\cite{persu_tensors_2018,chen_overcomplete_2022,koiran_efficient_2024}), which barely falls short of our $2d$.
This means we cannot use known tensor methods ``out of the box'' and must take advantage of the special structure in our tensors to when devising an efficient algorithm.

We also note that methods for \emph{fourth-order} tensor decomposition can handle significantly more components, namely $\Theta(d^2)~\cite{de_lathauwer_fourth-order_2007,ma_polynomial-time_2016,hopkins_robust_2019,johnston_computing_2023}$. However, estimating the fourth moment requires sub-optimal sample complexity, so our focus in this work is on methods that only use the third moment and below.

\paragraph{Our contributions.}

Our main results partially resolve the open problem of orbit recovery in the dihedral and projected MRA models.
We state these here.
We first formally define the algebraic idea of generic vectors.

\begin{definition}\label{def:generic}
    A property $Q$ is said to ``hold generically'' for vectors $v\in\K^m$ (or to ``hold for generic choice of'' $v\in\K^m$) if there is a non-zero polynomial $p$ on $\K^m$ such that when $p(v)\neq 0$, $v$ has property $Q$.
\end{definition}

\begin{theorem}[Dihedral MRA]
    \label{thm:dMRA-main}
    Let $d=2^k$, for $k\ge 2$ and suppose a particular symbolic matrix $M = M(d)$ (see \eqref{eq:master-M-dMRA}) has full column-rank.
    There is a polynomial-time procedure that, for generic $x\in\Rl^{2d}$, takes as input the exact third moment $T^{(3)}_{\dihgroup{2d}}(x)$ and outputs an element in the $\dihgroup{2d}$-orbit of $x$.
\end{theorem}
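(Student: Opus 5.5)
The plan is to recurse on the signal length, halving it at each step, following the abstract's description. Write $x\in\Rl^{2d}$ and split it by parity of index: $x = (x_{\mathrm{even}}, x_{\mathrm{odd}})$ with $x_{\mathrm{even}}, x_{\mathrm{odd}}\in\Rl^{d}$. Equivalently, in Fourier terms, pass to the folded signals $u = x_{\mathrm{even}}+x_{\mathrm{odd}}$-type combinations obtained by projecting onto the subgroup $\cycgroup{d}\subset\cycgroup{2d}$ of even shifts.

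The key observation to exploit is that the dihedral group $\dihgroup{2d}$ contains $\dihgroup{d}$-like structure acting on these half-length pieces. Concretely, I will show that from $T^{(3)}_{\dihgroup{2d}}(x)$ one can compute, by fixed linear maps, the third moment $T^{(3)}_{\dihgroup{d}}(y)$ of an auxiliary half-length signal $y\in\Rl^d$. Examples of such a $y$ are the sum $y[j]=x[j]+x[j+d]$, whose Fourier coefficients are the even-frequency coefficients of $x$, or a similar fold. Since third-moment entries of $y$ are linear combinations of third-moment entries of $x$ (the fold commutes with the group action up to the quotient $\dihgroup{2d}\to\dihgroup{d}$), the recursive call then returns an element of the $\dihgroup{d}$-orbit of $y$. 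The base case $d=4$ (i.e.\ $k=2$), or smaller, is a fixed-size problem handled by brute force or by the conjectured rank condition directly. Genericity propagates because $y$ is a surjective linear image of $x$, so a generic $x$ yields a generic $y$.

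The second step is lifting. Knowing $y$ up to a dihedral element, I fix a representative; this fixes the gauge up to the stabilizer ambiguity between the $\dihgroup{2d}$ and $\dihgroup{d}$ actions, which is a finite, small choice that can be absorbed. I then need to recover the remaining half of the Fourier coefficients of $x$ (the odd frequencies). The entries of $T^{(3)}_{\dihgroup{2d}}(x)$ involving one odd-frequency-type component and two known components are linear in the unknowns, with coefficients polynomial in the known $y$. Stacking these relations gives a linear system $M\,z=b$, where $z$ encodes the unknown odd part (or suitable bilinear lifts of it, to linearize quadratic terms) and $M=M(d)$ is the symbolic matrix of \eqref{eq:master-M-dMRA} evaluated at $y$. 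By hypothesis $M$ has full column rank as a polynomial matrix, so some maximal minor is a nonzero polynomial. Its nonvanishing is then a generic condition on $x$, under which the system has a unique solution computable by Gaussian elimination. If $z$ involves lifted quadratic terms, I extract the actual odd coefficients from it, e.g.\ via a rank-one factorization, up to a sign that is fixed by consistency with the reflection/shift gauge.

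The runtime is polynomial: there are $k=\log_2 d$ levels, each requiring linear algebra of polynomial size, and the genericity conditions across levels multiply into a single nonzero polynomial. The main obstacle is the lifting step. It requires showing that the correct gauge choice at the lower level leads to a consistent, uniquely solvable system. It also requires showing that the half-size subproblem's output ambiguity, a $\dihgroup{d}$ element, corresponds exactly to a $\dihgroup{2d}$ ambiguity of $x$ rather than to a spurious one. I would handle this by checking that each element of $\dihgroup{d}$ lifts to the elements $R_\ell, R_{\ell+d}$ (and their reflections) in $\dihgroup{2d}$. Any of these choices gives an equivalent final answer, so I try each of the two lifts and keep the one whose solution reproduces the given third moment.
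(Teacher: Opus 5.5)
The reduction step is correct and matches the paper's. The fold $x[j]+x[j+d]$ has Fourier coefficients $\hat x[2k]$. It is equivariant for $\dihgroup{2d}\to\dihgroup{d}$, and its third moment is the even-frequency subtensor of the Fourier-basis moment.

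\textbf{The lifting step has a genuine gap.}
\begin{itemize}
\item The entries you plan to use carry no information. With one odd-frequency index and two even ones, the index sum is odd, so it is never $0 \bmod 2d$; hence $T[E^c,E,E]\equiv 0$.
\item The only entries involving the odd coefficients $z=\hat x[E^c]$ are those of $T[E^c,E^c,E]$, and they are quadratic in $z$.
\item Linearizing with $u=z^{\oasterisk 2}$ gives the coefficient matrix $A_{\dMRA}(y)$, which can never have full column rank. Its columns indexed by $\mset{j,k}$ and $\mset{-j-1,-k-1}$ are scalar multiples of each other, so $\ker A_{\dMRA}(y)$ has dimension at least $d^2/4$.
\item Gaussian elimination therefore leaves a $d^2/4$-dimensional affine family of candidates for $z^{\oasterisk 2}$. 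No rank hypothesis can change this.
\item The matrix $M(d)$ in the hypothesis is not the coefficient matrix of any system you solve. Its entries contain the unknown odd coefficients (e.g.\ $\hat x[2j_1+1]\hat x[2k_1+1]\hat x[2(j_2+k_2+1)]$), so it cannot be evaluated at $y$. It is only an analysis device.
\end{itemize}

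\textbf{The missing idea is to let the rank-one constraint $u\in\mathcal X_1^\vee$ cut the affine family down to a point.} The paper solves the variety-constrained system $A_{\dMRA}(y)u=b$, $u\in\mathcal X_1^\vee$, by lifting once more:
\begin{itemize}
\item Set $\mathcal U=\{u: Au\in\Span(b)\}$ and intersect $\mathcal U^{\oasterisk 2}$ with the orthogonal complement of the span of the $2\times 2$ symmetric minors. This is the Johnston--Lovitz--Vijayaraghavan subspace--variety intersection step.
\item Extract the rank-one point, rescale it to match $b$, and factor $zz^\top$. This gives $z$ up to a sign, and the sign corresponds to $\fshiftel_d$.
\end{itemize}

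The hypothesis enters through $M=\Phi U$. Its rows come from a basis of the minors, and its columns from a basis of $(\Span(z^{\oasterisk 2})\oasterisk\ker A)+(\ker A)^{\oasterisk 2}$. Full column rank of $M$ is exactly what forces the intersection to equal $\Span((z^{\oasterisk 2})^{\oasterisk 2})$. Showing that $M_{\dMRA}$ is such a representative takes two further ingredients:
\begin{itemize}
\item an explicit sparse basis of $\ker A_{\dMRA}(y)$;
\item a check that the equivalence class of $M$ is unchanged under $\hat x\mapsto g\cdot\hat x$ for $g\in\dihgroup{2d}$.
\end{itemize}

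\textbf{Your base case is also underspecified.} The rank condition cannot start the recursion: lifting from length $2$ gives $M_{\dMRA}$ fewer rows than columns. You need an explicit procedure together with identifiability at length $4$. The paper supplies this by frequency marching, assuming the Fourier coefficients are nonzero.
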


\begin{theorem}[Projected MRA]
    \label{thm:pMRA-main}
    Let $d=2^k$, for $k\ge 3$, and suppose a particular symbolic matrix $M' = M'(d)$ (see \eqref{eq:master-M-pMRA}) has full column-rank. There is a polynomial-time procedure that, for generic $x\in\Rl^{2d}\cap \Span(1,-1,\dots,1,-1)^\perp$, takes as input the exact third moment $T^{(3)}_{\cycgroup{2d},\Pi}(x)$ and outputs an element in the $\dihgroup{2d}$-orbit of $x$.
\end{theorem}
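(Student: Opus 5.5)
We will prove Theorem~\ref{thm:pMRA-main} by reducing projected MRA to a dihedral-type problem and then running a recursive halving scheme that parallels the one behind Theorem~\ref{thm:dMRA-main}. Since $d=2^k$, the signal $x\in\Rl^{2d}$ has length a power of two, and we pass to the Fourier domain. The shift $\shiftel_\ell$ acts diagonally, with $\hat x[m]\mapsto \omega^{m\ell}\hat x[m]$ where $\omega=e^{2\pi i/2d}$. The composition $\Pi\circ g$ identifies each frequency $m$ with $-m$, up to a known phase twist from the index offset $j\mapsto 2d-1-j$. The constraint $x\perp(1,-1,\dots,-1)$ removes exactly the Nyquist frequency $m=d$, and this is the component annihilated by $\Pi\circ g$ for every $g$. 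The first step is therefore to write each entry of $T^{(3)}_{\cycgroup{2d},\Pi}(x)$ explicitly as a linear combination of bispectrum-type invariants $\hat x[a]\hat x[b]\hat x[c]$ with $a+b+c\equiv 0 \pmod{2d}$, symmetrized under $a\mapsto -a$ in each slot. This identifies precisely which $\dihgroup{2d}$-invariant cubic quantities are linearly recoverable from the projected moment.

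The second step is the recursive decomposition. Split $x$ into its even- and odd-indexed parts $x_e,x_o\in\Rl^{d}$, or equivalently split the frequencies into even and odd. The even frequencies of $x$ form a signal of length $d$ on which the index-two subgroup $\cycgroup{d}$ acts, and the same holds for the dihedral subgroup. We will show that the part of the third moment supported on even frequencies is exactly the third moment of the half-size problem. This half-size problem is a dihedral MRA instance of length $d$; in the pMRA case it may instead be a projected instance with a modified projection.

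By induction, which bottoms out at the base case $d=4$ or $d=8$ handled by direct computation (hence $k\ge 2$, resp.\ $k\ge 3$), we recover $x_e$ up to its $\dihgroup{d}$-orbit. We then fix a representative of that orbit. This fixes $g$ up to the stabilizer coset: each element of $\dihgroup{d}$ lifts to two elements of $\dihgroup{2d}$, which differ by the half-turn $\shiftel_d$, and $\shiftel_d$ acts by sign $-1$ on the odd frequencies.

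The third step recovers the odd frequencies given the now-known even part. Any third-moment entry containing exactly two odd frequencies and one known even frequency is bilinear in the unknown odd coefficients $\hat x_o$, with coefficients that are linear in the known $\hat x_e$. Collecting these entries and lifting by introducing the variables $Z=\hat x_o\hat x_o^{\top}$ (restricted to the appropriate symmetric or reflection-symmetric pairs), we obtain a linear system $M\cdot\mathrm{vec}(Z)=b$. Here $M$ is the symbolic matrix whose entries are polynomials in $x_e$; this is the matrix $M'(d)$ of \eqref{eq:master-M-pMRA}, and $M(d)$ of \eqref{eq:master-M-dMRA} in the dihedral case.

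Under the hypothesis that this symbolic matrix has full column rank, its rank is also full at a generic point. The reason is that some maximal minor is a nonzero polynomial, and the nonvanishing of that minor is exactly a genericity condition in the sense of Definition~\ref{def:generic}. Consequently $Z$ is uniquely determined, and $\hat x_o$ is determined from the rank-one matrix $Z$ up to a global sign. That remaining sign is exactly the $\shiftel_d$ ambiguity, so it is harmless. Every step is linear algebra of polynomial size, and there are $k$ recursion levels, so the whole procedure runs in polynomial time. Genericity of $x$ implies genericity of each $x_e$ at every level, since the union of finitely many proper algebraic exceptional sets is contained in the zero set of their product. This gives Theorem~\ref{thm:pMRA-main}, and the same template gives Theorem~\ref{thm:dMRA-main}.

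The main obstacle is the second step in the projected case. There we must verify that, after projection, the even-frequency block of $T^{(3)}_{\cycgroup{2d},\Pi}$ still carries a full, self-similar subproblem; the folding $m\sim -m$ together with the phase twist could instead mix frequencies across the even/odd split. We will first check that $\Pi$ commutes appropriately with the even/odd decomposition in the Fourier basis. If it does not, we will formulate the recursion on a more general family of ``twisted projected'' models that is closed under halving, and carry out the induction within that family. A secondary risk is that the odd-frequency step leaves both a sign and a reflection ambiguity that do not merge cleanly. To rule this out we must compare the lifted-stabilizer count with the solution set of the linear system.
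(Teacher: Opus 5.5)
There is a genuine gap, in fact two. Your recursive skeleton is the paper's: restrict to even frequencies, recover the odd frequencies from the $T[E^c,E^c,E]$ block, and absorb the final sign into $\widehat{R}_d$. Your worry about the second step is also unfounded. Restricting $T^{(3)}_{\pMRA,4d}$ to even indices gives exactly $T^{(3)}_{\pMRA,2d}(\hat x|_E)$, because the shifts $\ell$ and $\ell+2d$ agree on even frequencies, and the even rows of $\widehat{\Pi}$ only touch even columns, with the correct twist. The two steps that fail as written are the third step and the base case.

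\textbf{The third step.} The lifted system in $Z=\hat x_o\hat x_o^{\top}$ can never have full column rank.
\begin{itemize}
\item Every entry of $T[E^c,E^c,E]$ has the form $\hat x[2j+1]\hat x[2k+1]\hat x[-2(j+k+1)]+\hat x[-2j-1]\hat x[-2k-1]\hat x[2(j+k+1)]$.
\item So the unknowns $Z_{j,k}$ and $Z_{-j-1,-k-1}$ (odd-index coordinates, mod $d$) only ever appear in one fixed combination, and their coefficient columns are proportional.
\item For generic $y$, the coefficient matrix therefore has a kernel of dimension $d^2/4$ out of $\binom{d+1}{2}$ unknowns in the dihedral case. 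The kernel is larger in the projected case, where the columns with $j+k+1\equiv d/2 \pmod d$ vanish outright.
\end{itemize}
Linear algebra alone cannot determine $Z$. What pins it down is the nonlinear constraint that $Z$ be a rank-one symmetric matrix, and your proposal has no mechanism for imposing it. This is also why your identification of $M'(d)$ is off: its entries involve the odd coefficients, not only $x_e$.

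The paper solves the variety-constrained system ($Au=b$ with $u$ a symmetric rank-one lift) by lifting once more. Let $\mathcal U$ be spanned by the planted $u$ and $\ker A$. One intersects $\mathcal U^{\oasterisk 2}$ with the orthogonal complement of the $2\times 2$ minors, which are the quadrics cutting out rank-one symmetric matrices. This is done with the Johnston--Lovitz--Vijayaraghavan algorithm, followed by rescaling so that $Au=b$. The matrix $M'$ is $\Phi U$, where $\Phi$ is a basis of those minors and $U$ is a basis of $u\oasterisk\ker A+(\ker A)^{\oasterisk 2}$. Its full column rank is exactly the certificate that this intersection equals $\Span(u^{\oasterisk 2})$. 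Making this concrete requires two further things: an explicit sparse basis of $\ker A$, and a check that the certificate is unchanged (up to matrix equivalence) when $\hat x$ is replaced by $g\cdot\hat x$, since you only know $x_e$ up to its orbit.

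\textbf{The base case.} For projected MRA the bottom of your recursion is not identifiable, so ``direct computation'' cannot return the orbit of $x_e$.
\begin{itemize}
\item At length $4$, the phase of $\hat x[1]$ is not determined at all.
\item At length $8$, the only phase information is $\cos(2\phi_1-\phi_2)=a_1$, $\cos(\phi_1+\phi_2-\phi_3)=a_2$ and $\cos(\phi_2+2\phi_3)=a_3$. Beyond the global reflection this leaves two unresolved signs, i.e.\ up to four $\dihgroup{8}$-orbits with the same third moment, and generically the spurious ones do occur.
\end{itemize}

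The paper instead keeps all (at most four) candidates at length $8$ and runs the extension step on each.
\begin{itemize}
\item The true candidate extends, by the rank hypothesis.
\item Any other candidate that extends produces a length-$16$ vector whose whole projected third moment matches that of $\hat x[K_4]$. The even block matches by construction, the odd--odd--even block by $Au=b$, and the remaining blocks vanish by parity.
\item Such a vector lies in the correct orbit, provided the projected third moment separates generic orbits at length $16$.
\end{itemize}

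That identifiability statement is a genuine extra ingredient, proved in three steps. First, show the projected moment recovers every dihedral cubic invariant not involving the Nyquist coefficient. Second, apply Edidin--Katz's separation theorem for the restricted $O(2)$ action. Third, use a triple with $j_1+j_2+j_3=16$ to force the continuous rotation angle to be a multiple of $2\pi/16$.

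If you intended the base case to be length $16$ itself, you still owe exactly this identifiability argument, plus a constant-size solver. The hypothesis on $M'$ would then play no role at $k=3$, which should signal the mismatch.
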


For any specific dimension $d$, one can use a computer to verify that the symbolic matrices $M(d)$ and $M'(d)$ have full column-rank, implying that the algorithm works for generic inputs of that size. 
This check only needs to be done once and ensures our algorithm then works on generic inputs.
Each of these matrices has more rows than columns under the conditions given, and both classes of matrices have an aspect ratio tending toward $8:3$ as $d$ goes to infinity.
We ran numerical tests giving strong evidence that these matrices have full column-rank for the dimensions $d=4,8$ and $16$.
Further, we provide an implementation of the algorithms in Python and demonstrate their efficacy by recovering signals in $\Rl^{64}$ and $\Rl^{64}\cap\Span(1,-1,\dots, 1,-1)^\perp$ respectively.
In both cases the algorithms recover the signal up to cyclic shift and reversal with high accuracy from the moments described.
These numerical tests and further details may be found in Section~\ref{sec:numerical-experiments}.

The dimension $2d$ must be a power of two since our method uses recursion to extract and solve orbit recovery sub-problems.
We first recover the even Fourier coefficients of $x$ by recursively running the algorithm on a sub-tensor.
We can then recover the odd Fourier coefficients given the even ones using a different subtensor.
While this method is reminiscent of frequency marching, we solve for multiple coefficients simultaneously by setting up a \emph{variety-constrained linear system}, a method that synthesizes the work of Bouss\'e et al.~\cite{bousse_linear_2018} on CPD-constrained linear systems and Johnston et al.~\cite{johnston_computing_2023} on efficiently finding the intersection of a subspace and a conic variety.
Unlike frequency marching, our methodology is able to handle the projection operation in the projected MRA model.

\paragraph{Further discussion and related work.}
A significant body of literature studying orbit recovery exists.
We highlight a couple of key problems not addressed in this work that are particularly interesting practically and theoretically, especially with respect to cryo-EM.

In cryo-EM, often we cannot guarantee that all the particles in a sample are identical.
Either there may be gross differences, e.g., a protein-complex missing a sub-unit, or conformational differences where proteins flex or bend slightly.
Both situations present a form of signal \emph{heterogeneity}.
In the simplest form, one assumes a mixture of $K$ different signals.
Including heterogeneity in an MRA model as in~\cite{perry_sample_2019}, we assume that instead of one distinct signal $x$ there are $K$ different signals that each appear with some probability.
Boumal et al.\ \cite{boumal_heterogeneous_2018} show that fitting the first three moments, heterogeneity $K>\lceil d/6 \rceil$ causes the problem to be ill-posed but a non-convex optimization approach seems to handle up to (approximately) $K\leq\sqrt{d}$ different signals, when the signals are generated randomly (a more restrictive setting than that of generic signals).
Continuous heterogeneity is a considerably different challenge to handle; see \cite{toader_methods_2023} for the current state of the art. Heterogeneity is related to the problem of \emph{super-resolution} in MRA~\cite{bendory_super-resolution_2022}.

Another reality is that due to boundary effects in the ice layer, particle orientation in cryo-EM may not be uniformly distributed~\cite{sharon_method_2020}.
One can make a similar assumption that some cyclic shifts are more or less likely in the MRA model.
If the distribution of group elements is unknown, it must also be estimated.
Interestingly, in some orbit recovery models, non-uniformity of the group elements allows for generic recovery using lower order moments than in the uniform case.
For example, Abbe et al.~\cite{abbe_multireference_2019} give an exact method for recovering the signal and non-uniform distribution of cyclic shifts for the MRA model using the \emph{second} moment rather than the third.
Bendory et al.~\cite{bendory_dihedral_2022} give an information-theoretic proof that the second moment suffices in the dihedral MRA model with a generic distribution on group elements.
Similarly for cryo-EM, results by Sharon et al.~\cite{sharon_method_2020} show that the second moment suffices for accurate estimation, and determine bandlimit conditions on molecular structure and the distribution over viewing angles which ensure well-posedness.

There is also a significant body of research on non-convex methods for orbit recovery, e.g.~\cite{boumal_heterogeneous_2018, fan_likelihood_2023, fan_maximum_2024, xu_misspecified_2025}.
Notably, fitting moments by least squares is demonstrated by Edidin and Katz~\cite{edidin_reflection-invariant_2026} as a practical method for recovery in dihedral MRA, albeit one without guarantees of correctness.
In MRA with a non-uniform distribution, Abas et al.~\cite{abas_generalized_2022} develop a generalized method of moments which determines how to re-weight a least-squares fitting of moments to be asymptotically optimal and demonstrate this improves recovery of the signal and distribution as parameters.
Other directions explore maximum likelihood estimation (MLE) as a recovery technique.
Xu et al.~\cite{xu_misspecified_2025} observe in cryo-EM that misspecifying the probability distribution (e.g., assuming a uniform distribution when viewing angles are non-uniform) does not have a deleterious effect on the likelihood landscape in \emph{unprojected} cryo-EM (cryo-ET) but does result in incorrect estimation when projection is present.
Due to this effect, they propose an expectation maximization (EM) algorithm for estimating the signal and viewing-angle distribution jointly.
To our knowledge, none of the methods based on non-convex optimization provide a rigorous proof that the global optimum can be found in polynomial time. As a result, this line of work deviates from our primary objective of finding algorithms with provable guarantees.

\paragraph{Concurrent work.}
Finally, we point out the concurrent and independent work~\cite{proj-mra}, which also studies a variation of the projected MRA problem (with odd-length signals) and gives an algorithm for recovery from the third moment. Their algorithmic approach is rather different from ours: essentially they have adapted the frequency marching method in a way that avoids the combinatorial explosion we mentioned above. Their algorithm appears to also work for dihedral MRA.

\paragraph{Organization.}
In Section~\ref{sec:preliminaries} we define a number of constructions --- in particular, a symmetrized Kronecker product to produce linear maps between symmetric lifts of vector spaces.
In Section~\ref{sec:dMRA-and-pMRA}, we define the dMRA and pMRA models in the Fourier basis and give explicit expressions for the invariants that result.
Section~\ref{sec:method-for-recovery} covers the steps of our algorithm in detail.
In Section~\ref{sec:analysis-of-aided-recovery}, we give sufficient conditions for the recursive step to succeed and give evidence that these conditions hold for generic signals.
Since our algorithm is recursive, we discuss how orbit recovery may be achieved for the base case in Section~\ref{sec:analysis-base-case}.
In Section~\ref{sec:efficient-implementation} we give an efficient implementation of the recursive step and details for an implementation in Python of the full algorithm.
Finally, in Section~\ref{sec:future-directions} we discuss future directions.

\section{Preliminaries}
\label{sec:preliminaries}
In the following, we use $\K$ to denote the fields $\Rl$ or $\Cx$.
We will use the notation $[n]$ to denote the set of whole numbers from 1 to $n$,
\[
    [n] = \{1,2,\dots,n-1, n\},
\] and the set $[n)$ to denote whole numbers from 0 to $n-1$,
\[
    [n) = \{0,1,2,\dots,n-1\}.
\]
We will primarily use elements $[d]$ to index entries of vectors in the standard basis and elements of $[d)$ to index Fourier coefficients of a vector (entries of a vector in the Fourier basis).

\paragraph{Some combinatorial constructions.}
We will disambiguate between sets and multisets reserving the traditional curly braces for sets, e.g.\ $\{1,2,3\}$ is a set, and using the ``bag'' notation for multisets, e.g. $\mset{1,2,2}$ is a multiset.
We will use the notation $\mchoose{n}{k}$ to mean ``$n$ multi-choose $k$'' where \[\mchoose{n}{k} = \binom{n+k-1}{k}\] and counts the number of $k$-element multisets of $[n]$.
For a set $S$, use $\mchoose{S}{k}$ to denote the set of $k$-element multi-sets of $S$.
Finally, for a $k$-element multiset $K$, we will use $\multi(K)$ to indicate the multiset of multiplicities of the elements in $K$.
In conjunction with the notation that
\[
\binom{n}{m_1,m_2,\dots,m_k} = \frac{n!}{m_1!m_2!\cdots m_k!}
\]
we may count the number of distinct permutations of a multiset $K$ by $\binom{|K|}{\multi(K)}$.
It will also at times be useful to impose a particular total ordering on the set of multisets for which we use the Dershowitz--Manna (DM) ordering.
\begin{definition}[Dershowitz--Manna (DM) ordering]
    \label{def:Dershowitz-Manna-order}
    For two multisets $M$ and $M'$ of $[n]$, we say that $M' \prec M$ when there are two multisets $X,Y$ such that
    \begin{enumerate}[(i)]
        \item $\emptyset \neq X\subseteq M$,
        \item $ M' = (M\setminus X)\cup Y$,
        \item and for every $x\in X$ there is a $y\in Y$ so that $x > y$.
    \end{enumerate}
\end{definition}

\subsection{Tensors and Indexing}

\paragraph{Tensors and tensor spaces.}
We denote a space of $k$-order tensors with shape $(n_1, n_2,\dots, n_k)$ as
\[
\K^{n_1} \otimes \K^{n_2} \otimes \cdots \otimes \K^{n_k} = \Span(u^1\otimes u^2 \otimes \dots \otimes u^k : u^1\in\K^{n_1}, u^2\in\K^{n_2},\dots, u^k\in\K^{n_k}).
\]
The short-hand
\[
    (\K^n)^{\otimes k} := \underbrace{\K^n \otimes \K^n \otimes \cdots \otimes \K^n}_{k \text{ times}}
\]
indicates a repeated tensor product.
\paragraph{Indexing and reshaping.} 
We use hard brackets to indicate a specific entry of a vector or matrix, e.g.\ for a vector $x\in\K^n$, $x[j]$ is the $j$-th entry of $x$ and for a matrix $A \in \K^{m\times n}$, $A[j,k]$ is the entry in the $j$-th row and $k$-th column.
We also use the placeholder ``$\::\:$'' to indicate full range of indices so that e.g.\ for a matrix $A$, $A[j,:]$ is the $j$-th row as a vector and $A[:,k]$ is the $k$-th column.

If a vector or matrix has extra structure we may index into it using tuples to emphasize this structure.
Flattening (or generally, reshaping) using an isomorphism $\K^m\otimes \K^n \cong \K^{mn}$ may be indicated using our indexing system e.g.\ if $A\in \K^m\otimes \K^n$ and we wish to define a flattening to $v\in \K^{mn}$ then we use
\[
    v[(i,j)] := A[i,j]
\]
where a lexicographical ordering on $(i,j)$ gives a bijection between standard basis vectors of $\K^{mn}$ and $\K^m\otimes \K^n$. In general, the tensor space $\K^{n_1} \otimes \K^{n_2} \otimes \cdots \otimes \K^{n_k}$ is isomorphic to the vector space $\K^{n_1n_2\cdots n_k}$. 
The Kronecker product of vectors $u^1\in \K^{n_1}$, $u^2\in \K^{n_2}, \ldots , u^k\in \K^{n_k}$ is a vector $u^1\otimes u^2\otimes \cdots\otimes u^k \in \K^{n_1n_2\cdots n_k}$ defined entry-wise by
\[
    (u^1\otimes u^2\otimes \cdots\otimes u^k)[(j_1,j_2,\dots j_k)] = u^1[j_1]u^2[j_2]\cdots u^k[j_k],
\]
where our indexing by $(j_1,j_2,\dots j_k)\in[n_1]\times [n_2]\times\cdots \times [n_k]$ implicitly uses a fixed bijection with an element of $[n_1n_2\cdots n_k]$ e.g.\ by using the lexicographic order on the tuples.
We say that vectors arising from such Kronecker products have \emph{a tensor product structure}.

In the other direction, for $x\in \K^m$ and $y\in \K^n$ we may accomplish the reshaping of $v \in \K^{mn}$ to a matrix $A \in \K^m\otimes \K^n$ via
\[
    A[i,j] := v[(i,j)].
\]

\subsection{Symmetric Tensor Spaces}

\paragraph{Symmetric tensors and symmetric products.}
Let $S_k$ be the symmetric group on the elements of $[k]$.
A tensor $T\in (\K^n)^{\otimes k}$ is symmetric when for all indices $j_1, j_2, \dots, j_k \in [n]$ and any permutation $\sigma\in S_k$ we have
\[
T[j_1, j_2, \dots, j_k] = T[j_{\sigma(1)}, j_{\sigma(2)}, \dots, j_{\sigma(k)}].
\]
We denote the space of symmetric tensors $S^k(\K^n) \subseteq (\K^n)^{\otimes k}$.
For a tensor without symmetry its symmetric component is given by the projection $\Sym_k:(\K^n)^{\otimes k} \to (\K^n)^{\otimes k}$ defined index-wise for $j_1, j_2, \dots, j_k \in [n]$ by
\begin{equation}
\label{eq:tensor-symmetric-projection}
\Sym_k(T)[j_1, j_2, \dots, j_k] = \frac1{k!}\sum_{\sigma \in S_k} T[j_{\sigma(1)}, j_{\sigma(2)}, \dots, j_{\sigma(k)}].
\end{equation}
The image of $(\K^n)^{\otimes k}$ under $\Sym_k$ is $S^k(\K^n)$.
Symmetrization may be analogously defined for vectors that have appropriate tensor product structure.
A symmetric tensor in $(\K^n)^{\otimes k}$ (or the symmetrization of a vector with tensor product structure) may have many redundant entries.
We will describe linear operators on $S^k(\K^n)$ in a concise manner by specifying a canonical basis for $S^k(\K^n)$ and identifying elements by their coordinate vectors in this subspace.

For each multiset $\mset{j_1,j_2,\dots, j_k}$ of $[n]$, define the vector
\begin{equation}
    \label{eq:symmetric-vector}
    \vec{e}_{\mset{j_1,j_2,\dots,j_k}} = \frac{w_{\mset{j_1,j_2,\dots, j_k}}}{k!} \sum_{\sigma\in S_k} \vec{e}_{\sigma(j_1)} \otimes \vec{e}_{\sigma(j_2)}\otimes \cdots \otimes \vec{e}_{\sigma(j_k)}
\end{equation}
where
\begin{equation}
    \label{eq:multiset-weight}
    w_{\mset{j_1,j_2,\dots, j_k}} := \sqrt{\binom{k}{\multi(\mset{j_1,j_2,\dots, j_k})}}.
\end{equation}
\begin{lemma}
    \label{lemma:canonical-orthonormal-basis-symmetric-space}
    The set $\mathcal S_k = \{ \vec e_M : M \text{ a } $k$\text{-element multiset of } [n]\}$ is an orthonormal basis for $S^k(\K^n)$.
\end{lemma}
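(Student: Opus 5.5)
The plan is to compute each $\vec e_M$ explicitly in the standard basis of $(\K^n)^{\otimes k}$. From that formula, symmetry, orthonormality and spanning should each follow by bookkeeping. I read the sum in \eqref{eq:symmetric-vector} as running over permutations of the tensor \emph{positions}, i.e.\ $\sum_{\sigma\in S_k}\vec e_{j_{\sigma(1)}}\otimes\cdots\otimes\vec e_{j_{\sigma(k)}}$. Over $\Cx$ I use the standard Hermitian inner product; since every coefficient below is real, nothing changes from the real case.

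\textbf{Step 1 (explicit form).} Fix a $k$-element multiset $M=\mset{j_1,\dots,j_k}$ whose distinct elements occur with multiplicities $m_1,\dots,m_r$, so $\multi(M)=\mset{m_1,\dots,m_r}$. As $\sigma$ ranges over $S_k$, the tuple $(j_{\sigma(1)},\dots,j_{\sigma(k)})$ ranges over the distinct rearrangements of $M$. There are $\binom{k}{\multi(M)}=w_M^2$ of them, and each is hit exactly $m_1!\cdots m_r!$ times, since that is the size of the stabilizer. Writing $A(M)$ for the set of distinct rearrangements, we get
\[
\vec e_M=\frac{w_M\, m_1!\cdots m_r!}{k!}\sum_{(i_1,\dots,i_k)\in A(M)}\vec e_{i_1}\otimes\cdots\otimes\vec e_{i_k}=\frac{1}{w_M}\sum_{(i_1,\dots,i_k)\in A(M)}\vec e_{i_1}\otimes\cdots\otimes\vec e_{i_k},
\]
where the second equality uses $w_M^2=k!/(m_1!\cdots m_r!)$.

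\textbf{Step 2 (membership and orthonormality).} The set $A(M)$ is closed under permuting positions, so $\vec e_M$ is fixed by $\Sym_k$ and hence lies in $S^k(\K^n)$.
\begin{itemize}
\item \emph{Orthogonality.} For $M\neq M'$ the sets $A(M)$ and $A(M')$ are disjoint, because a tuple determines its underlying multiset. The vectors $\vec e_M$ and $\vec e_{M'}$ are therefore supported on disjoint sets of standard basis tensors and are orthogonal.
\item \emph{Normalization.} $\|\vec e_M\|^2=|A(M)|/w_M^2=1$.
\end{itemize}
Orthonormality also gives linear independence.

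\textbf{Step 3 (spanning).} Let $T\in S^k(\K^n)$. By symmetry, $T$ is constant on each $A(M)$; call this common value $T[M]$. Since the sets $A(M)$ partition $[n]^k$,
\[
T=\sum_M T[M]\sum_{A(M)}\vec e_{i_1}\otimes\cdots\otimes\vec e_{i_k}=\sum_M w_M\,T[M]\,\vec e_M ,
\]
so $\mathcal S_k$ spans $S^k(\K^n)$.

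No step is a genuine obstacle. The only point needing care is the stabilizer count in Step 1, which is exactly where the weight $w_M$ is chosen to cancel the multiplicities.
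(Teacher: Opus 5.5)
Your proof is correct, and your reading of \eqref{eq:symmetric-vector} as a permutation of tensor positions is the intended one. It differs from the paper's proof in two ways.

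\textbf{Spanning.} The paper cites the well-known fact that the symmetrizations $\Sym_k(\bigotimes_{j\in M}\vec e_j)$ span $S^k(\K^n)$, together with $\dim S^k(\K^n)=\mchoose{n}{k}$. It then only has to show that the $\mchoose{n}{k}$ vectors $\vec e_M$ are orthonormal. You prove spanning directly: a symmetric tensor is constant on each class $A(M)$.

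\textbf{Orthonormality.} The paper never rewrites $\vec e_M$. It expands $\langle \vec e_M,\vec e_{M'}\rangle$ as a double sum over $\sigma,\tau\in S_k$ of $\prod_i\langle \vec e_{j_{\sigma(i)}},\vec e_{j'_{\tau(i)}}\rangle$ and notes that this vanishes unless $M=M'$. For $M=M'$ it counts the $k!\,m_1!\cdots m_r!$ surviving pairs against the prefactor $w_M^2/(k!)^2$. You instead use the stabilizer count once, up front, to get the normal form $\vec e_M=w_M^{-1}\sum_{A(M)}\vec e_{i_1}\otimes\cdots\otimes\vec e_{i_k}$. After that, orthogonality is disjointness of supports and normalization is $|A(M)|=w_M^2$.

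The same counting drives both arguments. Yours is self-contained: it needs neither the cited spanning fact nor the dimension formula, and in fact reproves the latter. It also gives the explicit coordinate map $T\mapsto(w_M\,T[M])_M$. This is exactly the convention the paper later relies on in Definition~\ref{def:symmetrized-Kronecker-product} and in the matricization \eqref{eq:matricization}. The paper's version is shorter on the page but depends on an external fact for the spanning half.
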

See Section~\ref{subsec:properties-symmetric-products} for the proof.
We will define $(\K^n)^{\oasterisk k}$ to be the space of coordinate vectors of $S^k(\K^n)$ in the basis $\mathcal S_k$ i.e. the vector of coefficients when writing a symmetric tensor as a sum of basis elements.
A vector in $(\K^n)^{\oasterisk k}$ then has $\mchoose{n}{k}$ entries which we index using $k$-element multisets of $[n]$.
One can then identify this space with $\K^{\mchoose{n}{k}}$ by sending $\vec{e}_M$ to the vector $\vec{e}_j$ where $j$ is the position of $M$ in DM order on $k$-element multisets of $[n]$.
The symbol ``$\oasterisk$'' will be used to indicate symmetric products of vectors.
\begin{definition}
    \label{def:symmetrized-Kronecker-product}
    We define a symmetrized Kronecker product of vectors $u^1, u^2,\dots, u^k \in \K^n$, denoted $u^1 \oasterisk u^2 \oasterisk \dots \oasterisk u^k \in (\K^n)^{\oasterisk k}$, as the coordinate-vector of $\Sym_k(u^1 \otimes u^2 \otimes  \dots \otimes u^k)$ in $\mathcal S$.
    Entry-wise,
    \begin{equation}
        \label{eq:symmetrized-Kronecker-product}
        (u^1 \oasterisk u^2 \oasterisk \dots \oasterisk u^k)[\mset{j_1,j_2,\dots, j_k}] := \frac{w_{\mset{j_1,j_2,\dots, j_k}}}{k!} \sum_{\sigma\in S_k} u^1[j_{\sigma(1)}] u^2[j_{\sigma(2)}] \cdots u^k[j_{\sigma(k)}]
    \end{equation}
    with $\vec{e}_M$ defined as in \eqref{eq:symmetric-vector}
    for each multiset $\mset{j_1,j_2,\dots, j_k}$ of $[n]$.
\end{definition}

We define the inner product on $(\K^n)^{\oasterisk k}$ for the standard basis vectors as usual where $\langle \vec{e}_M, \vec{e}_K\rangle = 1$ if $M=K$ and $0$ otherwise and extended by linearity to arbitrary coordinate vectors.
Since $\mathcal S$ is an orthonormal basis for $S^k(\K^n)$, taking coordinates in this basis preserves the inner product.

\begin{lemma}
    \label{lemma:symmetric-product-inner-product}
    For vectors $u^1,\dots, u^k, v^1,\dots, v^k\in\K^n$, $\Sym_k$ as given in \eqref{eq:tensor-symmetric-projection} and our inner product on $(\K^n)^{\oasterisk k}$, our symmetrized Kronecker product satisfies
    \[
    \langle u^1\oasterisk u^2\oasterisk \dots \oasterisk u^k, v^1 \oasterisk v^2\oasterisk \dots \oasterisk v^k  \rangle = \langle \Sym_k(u^{1}\otimes u^2 \otimes \cdots\otimes u^k) ,  \Sym_k(v^{1}\otimes v^2 \otimes \cdots\otimes v^k)\rangle.
    \]
\end{lemma}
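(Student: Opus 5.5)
The plan is to reduce the claim to Lemma~\ref{lemma:canonical-orthonormal-basis-symmetric-space}. Write $U = \Sym_k(u^1\otimes\cdots\otimes u^k)$ and $V = \Sym_k(v^1\otimes\cdots\otimes v^k)$, both in $S^k(\K^n)$. By Definition~\ref{def:symmetrized-Kronecker-product}, $u^1\oasterisk\cdots\oasterisk u^k$ and $v^1\oasterisk\cdots\oasterisk v^k$ are the coordinate vectors of $U$ and $V$ in the basis $\mathcal S_k$. So it is enough to show that the coordinate map $S^k(\K^n)\to(\K^n)^{\oasterisk k}$ preserves inner products.

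\textbf{Step 1: coefficients.} First I will check that the entries in \eqref{eq:symmetrized-Kronecker-product} really are the expansion coefficients of $U$ in $\mathcal S_k$, i.e.\ that $U = \sum_M c_M \vec e_M$ with $c_M = (u^1\oasterisk\cdots\oasterisk u^k)[M]$. Since $\mathcal S_k$ is orthonormal, the coefficient is $c_M = \langle \vec e_M, U\rangle$. Expanding $\vec e_M$ via \eqref{eq:symmetric-vector} and using that $U$ is symmetric gives
\[
c_M = \frac{w_M}{k!}\sum_{\sigma\in S_k} U[j_{\sigma(1)},\dots,j_{\sigma(k)}] = w_M\, U[j_1,\dots,j_k],
\]
where $M=\mset{j_1,\dots,j_k}$. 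Expanding $U[j_1,\dots,j_k]$ by \eqref{eq:tensor-symmetric-projection} gives $\frac{1}{k!}\sum_\sigma u^1[j_{\sigma(1)}]\cdots u^k[j_{\sigma(k)}]$. This matches \eqref{eq:symmetrized-Kronecker-product} exactly.

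\textbf{Step 2: Parseval.} With orthonormality from Lemma~\ref{lemma:canonical-orthonormal-basis-symmetric-space}, bilinearity gives
\[
\langle U,V\rangle = \sum_{M,K} \overline{c_M}\, d_K \langle \vec e_M,\vec e_K\rangle = \sum_M \overline{c_M}\, d_M,
\]
with the conjugation convention matching whatever the paper uses over $\Cx$. The right-hand side is the defining inner product of the coordinate vectors on $(\K^n)^{\oasterisk k}$, which proves the claim.

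\textbf{Main obstacle.} The only delicate point is Step 1, specifically the bookkeeping of the weight $w_M$ and the repeated terms in $\vec e_M$. When $M$ has repeated elements, the sum over $\sigma$ in \eqref{eq:symmetric-vector} produces each distinct basis tensor $\binom{k}{\multi(M)}^{-1}k!$ times. That bookkeeping is exactly what makes $\|\vec e_M\|=1$, and it is already packaged inside Lemma~\ref{lemma:canonical-orthonormal-basis-symmetric-space}. Computing $c_M$ through the inner product $\langle \vec e_M, U\rangle$, rather than by matching entries directly, sidesteps this count entirely. For the same reason, it does not matter how one reads the notation $\vec e_{\sigma(j_i)}$: it means $\vec e_{j_{\sigma(i)}}$.
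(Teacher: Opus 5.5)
Your argument is correct, but it takes a different route from the paper's appendix proof.

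The paper never invokes Lemma~\ref{lemma:canonical-orthonormal-basis-symmetric-space}. It expands $\langle u^1\oasterisk\cdots\oasterisk u^k, v^1\oasterisk\cdots\oasterisk v^k\rangle$ directly from the entrywise formula \eqref{eq:symmetrized-Kronecker-product}, as a sum over sorted multisets $M$ weighted by $w_M^2/(k!)^2$. It then observes that $w_M^2=\binom{k}{\multi(M)}$ is exactly the number of distinct orderings of $M$. This turns the sum into an unrestricted sum over all $k$-tuples in $[n]^k$, which is visibly $\langle \Sym_k(u^1\otimes\cdots\otimes u^k),\Sym_k(v^1\otimes\cdots\otimes v^k)\rangle$.

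You instead push that counting into the orthonormality of $\mathcal S_k$ and finish with Parseval. Before doing so, you check that the entrywise formula really gives the coordinates of $U$ in $\mathcal S_k$, via $c_M=\langle \vec e_M,U\rangle = w_M\,U[j_1,\dots,j_k]$; symmetry of $U$ makes the duplicate terms harmless. What your version buys:
\begin{itemize}
\item It is the rigorous form of the one-line remark the paper makes just before the lemma: taking coordinates in an orthonormal basis preserves the inner product.
\item It justifies the paper's unproved assertion that the two descriptions in Definition~\ref{def:symmetrized-Kronecker-product} (coordinate vector versus entrywise formula) agree.
\item It makes the conjugation convention over $\Cx$ explicit, which the appendix computation glosses over by multiplying the two sums without conjugating.
\end{itemize}
The paper's computation, in exchange, is self-contained and needs no prior lemma. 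It also exhibits the identity $w_M^2 = $ (number of distinct orderings of $M$), which is the same counting fact that underlies the orthonormality you rely on.
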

\begin{corollary}
    \label{cor:symmetric-product-inner-product-expansion}
    For vectors $u^1,\dots, u^k, v^1,\dots, v^k\in\K^n$, and the inner product on $(\K^n)^{\oasterisk k}$,
    \[
    \langle u^1\oasterisk u^2\oasterisk \dots \oasterisk u^k, v^1 \oasterisk v^2\oasterisk \dots \oasterisk v^k  \rangle = \frac1{k!} \sum_{\sigma\in S_k}  \prod_{i=1}^k \langle u^{i}, v^{\sigma(i)}\rangle.
    \]
\end{corollary}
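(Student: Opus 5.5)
The plan is to derive the corollary directly from Lemma~\ref{lemma:symmetric-product-inner-product}, which reduces it to a computation in the ambient tensor space $(\K^n)^{\otimes k}$. There are three steps.

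\textbf{Step 1: replace one symmetrization by the identity.} The map $\Sym_k$ in \eqref{eq:tensor-symmetric-projection} is an average of the coordinate-permutation operators $P_\sigma$. Each $P_\sigma$ permutes the standard basis of $(\K^n)^{\otimes k}$, so it is orthogonal (unitary over $\Cx$). Its adjoint is $P_{\sigma^{-1}}$, and $\sigma\mapsto\sigma^{-1}$ is a bijection of $S_k$, so $\Sym_k$ is self-adjoint. It is also idempotent, since averaging an already symmetric tensor leaves it unchanged. Hence $\Sym_k$ is an orthogonal projection, and
\[
\langle \Sym_k(u^1\otimes\cdots\otimes u^k), \Sym_k(v^1\otimes\cdots\otimes v^k)\rangle = \langle u^1\otimes\cdots\otimes u^k, \Sym_k(v^1\otimes\cdots\otimes v^k)\rangle .
\]

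\textbf{Step 2: expand the remaining symmetrization.} Write
\[
\Sym_k(v^1\otimes\cdots\otimes v^k)=\frac1{k!}\sum_{\sigma\in S_k} v^{\sigma(1)}\otimes\cdots\otimes v^{\sigma(k)}.
\]
To see this, note that the entry of the right-hand side at $(j_1,\dots,j_k)$ is $\frac1{k!}\sum_{\sigma}\prod_i v^{\sigma(i)}[j_i]$. Reindexing by $\tau=\sigma^{-1}$, this equals $\frac1{k!}\sum_\tau \prod_m v^m[j_{\tau(m)}]$, which matches \eqref{eq:tensor-symmetric-projection}.

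\textbf{Step 3: factor the inner products.} Use linearity of the inner product together with the factorization of the inner product of rank-one tensors,
\[
\langle a^1\otimes\cdots\otimes a^k, b^1\otimes\cdots\otimes b^k\rangle=\prod_i\langle a^i,b^i\rangle .
\]
This identity follows immediately from the entrywise definition of the Kronecker product, since the sum over multi-indices factors. Combining the three steps gives
\[
\frac1{k!}\sum_{\sigma\in S_k}\prod_{i=1}^k\langle u^i,v^{\sigma(i)}\rangle ,
\]
which is the claimed formula.

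\textbf{Expected obstacle.} The only delicate point is Step 2: checking that the permutation in \eqref{eq:tensor-symmetric-projection}, which acts on indices, corresponds to permuting the vectors $v^i$. This is a relabeling $\sigma\leftrightarrow\sigma^{-1}$ that is harmless because we sum over all of $S_k$. Over $\Cx$, the convention for which argument is conjugated does not affect the argument, since it is applied consistently in every step.
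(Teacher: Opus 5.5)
Your proof is correct and follows the paper's route. Both arguments reduce via Lemma~\ref{lemma:symmetric-product-inner-product} to the ambient inner product of two symmetrizations, expand, and factor the inner products of rank-one tensors. The only difference is cosmetic: the paper expands both $\Sym_k$'s into a double sum over $S_k\times S_k$ and collapses it by reindexing, whereas you remove one symmetrization up front using that $\Sym_k$ is a self-adjoint idempotent, which is the same reindexing in operator form.
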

See Appendix~\ref{subsec:properties-symmetric-products} for proofs of these facts.

We may now identify linear operators $A : S^{k}(\K^m) \to S^{k}(\K^n)$ by identifying $A$ as a matrix in $\K^{\mchoose{m}{k}\times \mchoose{n}{k}}$.
We can construct such operators by defining a symmetric product on matrices.

\begin{definition}
    \label{def:symmetric-matrix-product}
    For matrices $A^1,A^2,\dots, A^k \in \K^{m\times n}$ define a fully symmetrized Kronecker product of these matrices, denoted $A^1 \oasterisk A^2 \oasterisk \dots \oasterisk A^k \in \K^{\mchoose{m}{k}\times \mchoose{n}{k}}$, entry-wise by
    \begin{equation}
        (A^1 \oasterisk A^2 \oasterisk \dots \oasterisk A^k)[\mset{j_1,j_2,\dots, j_k}, \mset{j'_1, j'_2,\dots, j'_k}] := 
        \frac{w_{\mset{j_1,j_2,\dots, j_k}}w_{\mset{j'_1, j'_2,\dots, j'_k}}}{(k!)^2} \sum_{\sigma\in S_k} \sum_{\tau\in S_k} \prod_{i=1}^k A^i[j_{\sigma(i)}, j'_{\tau(i)}],
    \end{equation}
    where $\mset{j_1,j_2,\dots, j_k}$ is a $k$-element multiset of $[m]$ and $\mset{j'_1, j'_2,\dots, j'_k}$ is a $k$-element multiset of $[n]$.
\end{definition}

\begin{lemma}
    \label{lemma:symmetric-product-equivalences}
    The fully symmetrized Kronecker product may be equivalently defined, using Definition~\ref{def:symmetrized-Kronecker-product} for the symmetric Kronecker product of vectors, either row-wise by
    \begin{equation}
        \label{eq:symmetric-matrix-product-row}
        (A^1 \oasterisk A^2 \oasterisk \dots \oasterisk A^k)[\mset{j_1,j_2,\dots, j_k}, :]= \frac{w_{\mset{j_1,j_2,\dots, j_k}}}{k!}\sum_{\sigma\in S_k}  A^1[j_{\sigma(1)}, :] \oasterisk A^2[j_{\sigma(2)}, :] \oasterisk \cdots \oasterisk A^k[j_{\sigma(k)}, :]
    \end{equation}
    for any $\mset{j_1,j_2,\dots, j_k}$ a $k$-element multiset of $[m]$,
    or column-wise by
    \begin{equation}
        \label{eq:symmetric-matrix-product-column}
        (A^1 \oasterisk A^2 \oasterisk \dots \oasterisk A^k)[:,\mset{j'_1, j'_2,\dots, j'_k}]= \frac{w_{\mset{j'_1, j'_2,\dots, j'_k}}}{k!}\sum_{\tau\in S_k}  A^1[:,j'_{\tau(1)}] \oasterisk A^2[:,j'_{\tau(2)}] \oasterisk \cdots \oasterisk A^k[:,j'_{\tau(k)}]
    \end{equation}
    for any $\mset{j'_1,j'_2,\dots,j'_k}$ a $k$-element multiset of $[n]$.
\end{lemma}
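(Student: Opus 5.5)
The plan is to prove the row-wise identity \eqref{eq:symmetric-matrix-product-row} by expanding its right-hand side entrywise with Definition~\ref{def:symmetrized-Kronecker-product}, and then to get the column-wise identity \eqref{eq:symmetric-matrix-product-column} by the symmetric argument.

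For the row-wise form, fix a row multiset $J=\mset{j_1,\dots,j_k}$ of $[m]$, a column multiset $J'=\mset{j'_1,\dots,j'_k}$ of $[n]$, and a permutation $\sigma\in S_k$. Apply Definition~\ref{def:symmetrized-Kronecker-product} to the vectors $u^i := A^i[j_{\sigma(i)},:]\in\K^n$ and evaluate at $J'$:
\[
\bigl(A^1[j_{\sigma(1)},:]\oasterisk\cdots\oasterisk A^k[j_{\sigma(k)},:]\bigr)[J'] = \frac{w_{J'}}{k!}\sum_{\tau\in S_k}\prod_{i=1}^k A^i[j_{\sigma(i)}, j'_{\tau(i)}].
\]
Now multiply by $w_J/k!$ and sum over $\sigma\in S_k$. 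The result is exactly the double sum in Definition~\ref{def:symmetric-matrix-product}, including the prefactor $w_Jw_{J'}/(k!)^2$. Since this holds for every $J'$, the two sides agree as row vectors indexed by multisets of $[n]$.

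For the column-wise form, fix $J'$ and $\tau$. Apply Definition~\ref{def:symmetrized-Kronecker-product} to the columns $A^i[:,j'_{\tau(i)}]\in\K^m$ and evaluate at $J$. This gives $\frac{w_J}{k!}\sum_{\sigma}\prod_i A^i[j_{\sigma(i)},j'_{\tau(i)}]$. Multiplying by $w_{J'}/k!$ and summing over $\tau$ again recovers the defining formula.

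The only point that needs care is that everything is well defined. The definition is stated in terms of a representative tuple $(j_1,\dots,j_k)$ of each multiset, so I would check that the double sum does not depend on that choice. Replacing $(j_1,\dots,j_k)$ by a reordering $(j_{\pi(1)},\dots,j_{\pi(k)})$ just reindexes the sum over $\sigma$ via $\sigma\mapsto\pi\sigma$, which is a bijection of $S_k$. The same reindexing applies to $\tau$ on the column side, and likewise to the inner sums in \eqref{eq:symmetrized-Kronecker-product}. Once that is settled, both identities reduce to rearranging finite sums, so I do not expect any genuine obstacle.
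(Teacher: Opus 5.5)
Your proposal is correct and follows the same route as the paper. You expand each summand of the row-wise (resp.\ column-wise) formula entrywise via Definition~\ref{def:symmetrized-Kronecker-product}, then multiply by the outer weight and sum over $\sigma$ (resp.\ $\tau$) to recover the double sum of Definition~\ref{def:symmetric-matrix-product}. Your extra check that the formulas do not depend on which tuple represents each multiset (via the reindexing $\sigma\mapsto\pi\sigma$) is sound, and it is something the paper leaves implicit.
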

See Section~\ref{subsec:properties-symmetric-products} for the proof.
If $A^1 = A^2 = \dots = A^k = A$, then we may call  $A^{\oasterisk k}$ the \emph{$k$-th order symmetric lift} of the matrix $A$ and the formulas simplify considerably. Next, our symmetrized Kronecker product obeys a symmetrized version of the mixed-product rule.
\begin{lemma}
    \label{lemma:symmetric-matrix-products}
    For matrices $A^1, A^2, \dots, A^k \in \K^{m\times p}$ and $B^1, B^2, \dots, B^k \in \K^{p\times n}$,
    \[
        (A^1 \oasterisk A^2 \oasterisk \cdots \oasterisk A^k)(B^1 \oasterisk B^2 \oasterisk \cdots \oasterisk B^k) = \frac1{k!}\sum_{\rho\in S_k}(A^1 B^{\rho(1)})\oasterisk (A^2 B^{\rho(2)}) \oasterisk \cdots \oasterisk (A^k B^{\rho(k)}).
    \]
\end{lemma}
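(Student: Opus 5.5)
The plan is to express both sides as matrices of an operator with respect to the orthonormal bases $\mathcal S_k$, and to compare them through their action on elementary symmetric tensors.

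First I will identify $A^1\oasterisk\cdots\oasterisk A^k$ with the operator $\Sym_k\circ(A^1\otimes\cdots\otimes A^k)$ restricted to $S^k(\K^p)$, written in coordinates. Concretely, I claim that for any $u^1,\dots,u^k\in\K^n$,
\[
(A^1\oasterisk\cdots\oasterisk A^k)(u^1\oasterisk\cdots\oasterisk u^k)=\frac1{k!}\sum_{\rho\in S_k} (A^1u^{\rho(1)})\oasterisk\cdots\oasterisk(A^ku^{\rho(k)}).
\]
To verify this, I take the inner product of the left side with a basis vector $\vec e_M$. By Definition~\ref{def:symmetric-matrix-product}, the $M$-th entry is $\sum_{M'}(A^1\oasterisk\cdots\oasterisk A^k)[M,M']\,(u^1\oasterisk\cdots\oasterisk u^k)[M']$. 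Substituting \eqref{eq:symmetrized-Kronecker-product} and using the identity $\sum_{M'} w_{M'}^2/k!\,(\cdots)=\sum_{(j'_1,\dots,j'_k)}$, which turns multiset sums with weights into sums over ordered tuples (the weight squared is the number of distinct orderings), the double permutation sum collapses. The result is exactly $\frac{w_M}{k!}\sum_\sigma\frac1{k!}\sum_\tau\prod_i (A^iu^{\tau(i)})[j_{\sigma(i)}]$, which is the $M$-entry of the right side. This bookkeeping with weights and permutation sums is the main obstacle, and I will do it carefully once, for vectors.

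Second, I will apply this to columns. By Lemma~\ref{lemma:symmetric-product-equivalences}, column $\mset{j'_1,\dots,j'_k}$ of $B^1\oasterisk\cdots\oasterisk B^k$ equals $\frac{w}{k!}\sum_\tau B^1[:,j'_{\tau(1)}]\oasterisk\cdots\oasterisk B^k[:,j'_{\tau(k)}]$. Applying the claim above with $u^i=B^i[:,j'_{\tau(i)}]$ gives $\frac{w}{k!}\sum_\tau\frac1{k!}\sum_\rho\bigoplus$-type terms $\oasterisk_i A^iB^{\rho(i)}[:,j'_{\tau(\rho(i))}]$. Here I use that $\oasterisk$ of vectors is invariant under permuting its factors, since $\Sym_k$ is. I reindex by $\tau'=\tau\circ\rho$ for each fixed $\rho$. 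Then, by the column formula of Lemma~\ref{lemma:symmetric-product-equivalences} applied to the matrices $C^i=A^iB^{\rho(i)}$, the result is the corresponding column of $\frac1{k!}\sum_\rho C^1\oasterisk\cdots\oasterisk C^k$.

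The key point to check in this reindexing is that the $u^{\rho(i)}$ in the vector identity pairs with $A^i$. This gives column index $j'_{\tau(\rho(i))}$ attached to $A^iB^{\rho(i)}$, which is precisely the form the column formula requires after substituting $\tau'=\tau\circ\rho$. Linearity and the column-wise equality of the two matrices then finish the proof.
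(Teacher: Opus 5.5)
Your proof is correct, but it is organized differently from the paper's.

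The paper works entrywise. It writes the $(M,M')$ entry of the product as the pairing of row $M$ of $A^1\oasterisk\cdots\oasterisk A^k$ (row form of Lemma~\ref{lemma:symmetric-product-equivalences}) with column $M'$ of $B^1\oasterisk\cdots\oasterisk B^k$ (column form). It then applies Corollary~\ref{cor:symmetric-product-inner-product-expansion} to each such pairing of symmetrized products, which produces the factors $A^iB^{\rho(i)}$. The result is recognized as the entry formula of Definition~\ref{def:symmetric-matrix-product}.

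You instead first establish the matrix--vector case, which is the lemma itself with $n=1$ and $B^i=u^i$. You do the conversion from weighted multiset sums to ordered-tuple sums by hand. You then extend column by column, using only the column form and the reindexing $\tau'=\tau\circ\rho$.

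The paper's route is shorter because that weight bookkeeping was already done once, in Lemma~\ref{lemma:symmetric-product-inner-product}. In your version you should state why the conversion $\sum_{M'}w_{M'}^2F(M')=\sum_{(j'_1,\dots,j'_k)}F(j'_1,\dots,j'_k)$ is legitimate: your summand is a symmetric function of the ordered tuple, because both permutation sums run over all of $S_k$. You should also keep the factors of $1/k!$ outside that identity rather than inside it as you wrote. Your final entry formula is nonetheless right.

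In exchange, your route buys three things:
\begin{itemize}
\item Your intermediate identity is the coordinate form of the fact that $A^1\oasterisk\cdots\oasterisk A^k$ represents $\Sym_k\circ(A^1\otimes\cdots\otimes A^k)$ on symmetric tensors.
\item It gives $A^{\oasterisk k}u^{\oasterisk k}=(Au)^{\oasterisk k}$ with no further work.
\item It makes explicit the reindexing by $\tau\circ\rho$ that the paper performs silently. Right after the paper applies the corollary, the column index attached to $B^{\rho(i)}$ is really $j'_{\tau(\rho(i))}$; it becomes $j'_{\tau(i)}$ only after that reindexing.
\end{itemize}
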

See Section~\ref{subsec:properties-symmetric-products} for the proof.
Applied to $k$-th order symmetric lifts this gives the following simpler corollary.
\begin{corollary}
    \label{cor:lift-of-product}
    For $A\in \K^{m\times p}$ and $B \in \K^{p\times n}$
    \[
        (AB)^{\oasterisk k} = A^{\oasterisk k} B^{\oasterisk k}.
    \]
\end{corollary}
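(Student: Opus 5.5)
The plan is to specialize Lemma~\ref{lemma:symmetric-matrix-products} to the case $A^1=\dots=A^k=A$ and $B^1=\dots=B^k=B$. With this choice the left-hand side of that lemma is exactly $A^{\oasterisk k}B^{\oasterisk k}$, where each symmetric lift is the fully symmetrized Kronecker product of Definition~\ref{def:symmetric-matrix-product} with all factors equal.

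On the right-hand side, every summand indexed by $\rho\in S_k$ is
\[
(AB)\oasterisk(AB)\oasterisk\cdots\oasterisk(AB)=(AB)^{\oasterisk k},
\]
independent of $\rho$, because $A^iB^{\rho(i)}=AB$ for every $i$. Averaging $k!$ identical terms with the prefactor $\frac1{k!}$ therefore returns $(AB)^{\oasterisk k}$, which is the claimed identity.

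The only point to check is dimensional bookkeeping. Here $A\in\K^{m\times p}$ gives $A^{\oasterisk k}\in\K^{\mchoose{m}{k}\times\mchoose{p}{k}}$, and $B\in\K^{p\times n}$ gives $B^{\oasterisk k}\in\K^{\mchoose{p}{k}\times\mchoose{n}{k}}$. The product is therefore defined, and it lands in $\K^{\mchoose{m}{k}\times\mchoose{n}{k}}$, matching the shape of $(AB)^{\oasterisk k}$. These are precisely the shape hypotheses under which Lemma~\ref{lemma:symmetric-matrix-products} is stated.

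There is no real obstacle, since all the combinatorial content (the double sum over permutations and the weights $w_M$) is absorbed by Lemma~\ref{lemma:symmetric-matrix-products}.
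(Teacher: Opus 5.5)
Your proof is correct and is exactly the paper's argument: the corollary is obtained by specializing Lemma~\ref{lemma:symmetric-matrix-products} to $A^i=A$ and $B^i=B$, so all $k!$ summands equal $(AB)^{\oasterisk k}$ and the $\frac{1}{k!}$ average collapses to that single term. Your dimension check is a harmless addition.
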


\paragraph{Partial symmetries.} We say that a tensor in the space $(\K^{n_1})^{\oasterisk k_1} \otimes (\K^{n_2})^{\oasterisk k_2} \otimes \cdots \otimes (\K^{n_L})^{\oasterisk k_L}$ has $L$ different partial-symmetries along its modes.
The order of a tensor is this space is $k_1 + k_2 + \cdots + k_L$.
We define the standard basis for this space indexed by an $L$-tuple of multisets $M_i = \mset{j^i_1,j^i_2,\dots j^i_{k_i}}$ for $i=1,2,\dots,L$,
\[
    \vec{e}_{(M_1,M_2,\dots,M_L)} := \vec{e}_{M_1} \otimes \vec{e}_{M_2} \otimes \cdots \otimes \vec{e}_{M_L}.
\]
\textbf{Warning:} Unlike the tensor product, our symmetrized Kronecker product is not associative, e.g.\ $(w\oasterisk x) \oasterisk (y\oasterisk z) \neq w\oasterisk x \oasterisk y\oasterisk z$ since these belong to different tensor spaces.
Specifically, $(w\oasterisk x) \oasterisk (y\oasterisk z)\in \left((\K^n)^{\oasterisk 2}\right)^{\oasterisk 2} \cong \K^m$ where $m = \binom{\binom{n+1}{2}+1}{2}$ whereas $w\oasterisk x \oasterisk y\oasterisk z \in (\K^n)^{\oasterisk 4} \cong \K^k$ where $k = \binom{n+3}{4}$.

\paragraph{Reshaping tensors with symmetry.}
While the vectors of $(\K^n)^{\oasterisk k}$ are a convenient way of storing the entries of a symmetric tensor without redundancy, to use the multilinear structure of one must convert the tensor back to a matrix.
First, if $v\in (\K^n)^{\oasterisk 2}$ and gives the coordinates of a symmetric matrix in our canonical basis, we define the matricization entry-wise by
\begin{equation}
    \label{eq:matricization}
    \mathrm{matricize}(v)[j_1,j_2] = \frac1{w_{\mset{j_1,j_2}}} v[\mset{j_1,j_2}],\quad j_1,j_2\in[n].
\end{equation}
More generally for $\ell \leq k$, $\mathrm{matricize}_{\ell}:(\K^n)^{\oasterisk k}\mapsto (\K^n)^{\oasterisk \ell} \otimes (\K^n)^{\oasterisk k-\ell}$ reshapes a symmetric tensor $v$ to a matrix via
\begin{equation*}
    \mathrm{matricize}_{\ell}(v)[M,K] = \frac{w_M w_K}{w_{M\cup K}} v[M\cup K]
\end{equation*}
where $M$ is an $\ell$-element multiset of $[n]$ and $K$ is a $(k-\ell)$-element multiset of $[n]$.
This particular matrix is $\ell$-th \emph{catalecticant matrix} of $v$ expressed as coordinates in the bases $\mathcal S_\ell$ and $\mathcal S_{k-\ell}$ defined in Lemma~\ref{lemma:canonical-orthonormal-basis-symmetric-space} for the row space and column space respectively.

\section{Dihedral MRA and MRA with Projection}
\label{sec:dMRA-and-pMRA}
In this section, we define and compare the two main models of interest in this work: dihedral MRA (dMRA) and MRA with projection (pMRA).
\paragraph{Dihedral MRA.}
In the dMRA model, we observe $N$ samples according to
\begin{equation}
    y_i = g_i \cdot x + \sigma\xi_i
\end{equation}
where $x$ is some fixed signal in $\Rl^d$, each $g_i$ is a group element of the dihedral group $\dihgroup{d}$, each $\xi_i$ is independent standard Gaussian noise in $\Rl^d$, and $\sigma > 0$ is a scalar that determines the signal-to-noise ratio.
We will index entries in $x$ starting at zero so that $x = (x[0], x[1],\dots, x[d-1]).$

The dihedral group $\dihgroup{d}$ may be generated from cyclic shifts and reflections.
We will denote these $R_\ell$ and $J$ respectively with $\dihgroup{d} =\{\shiftel_0, \shiftel_1,\dots, \shiftel_{d-1}, J\shiftel_0, J\shiftel_1,\dots, J\shiftel_{d-1}\}$.
For $\ell \in [d)$, we define the action of $R_\ell \in \dihgroup{d}$ to be a cycling of entries in $x$ to the right by $\ell$ positions so that for each index $j\in [d)$,
\[
    (R_\ell \cdot x)[j] = x[j-\ell \hspace{-2mm} \pmod d].
\]
In general, we will assume that any indexing we do is done cyclically modulo the length of the vector involved so we may omit the modulus.
For $J\in \dihgroup{d}$, we define its action to be reversal of entries in a vector so that for an index $j\in [d]$, 
\[
(J\cdot x)[j] = x[d-j-1].
\]
Using the samples $\{y_i\}_{i=1}^N$, one aims to recover the signal $x$ up to the action of $\dihgroup{d}$.

\paragraph{MRA with projection.} Our second model is MRA with projection (pMRA). In the pMRA model, we observe $N$ samples according to
\begin{equation}
    y_i = \Pi(g_i \cdot x) + \sigma \xi_i.
\end{equation}
Here now $x$ is a fixed signal in $\Rl^{2d}$, $g_i$ are group elements of the cyclic group $\cycgroup{2d}=\{\shiftel_0, \shiftel_1,\dots, \shiftel_{2d-1}\}$, and again $\xi_i$ is independent standard Gaussian noise (now in $\Rl^{2d}$) with $\sigma > 0$ a scalar determining the signal-to-noise ratio.
We also define the linear operator, $\Pi:\Cx^{2d}\to \Cx^d$, defined entry-wise by for $j\in[d)$ by
\begin{equation}
    (\Pi x)[j] = x[j] + x[2d-j-1].
\end{equation}
This projection operation is a proxy for tomographic projection or the X-ray transform in the sense that if we view the object as being the discretization of some ``thin-shell'' circular 2D object with $x_j$ the density in the $j$-th patch, projection sums two entries when they are align vertically (see Figure \ref{fig:pMRA-model-diagram}).

In this model, one might aim to recover the signal $x$ up to the action of $\dihgroup{2d}$ from the projections $\{y_i\}_{i=1}^N$. Note that while the group actions are cyclic, we can only hope recover the signal up to action by the dihedral group. Indeed, a signal $x$ and its reflection $Jx$ will produce the same distribution over samples.
In this manner, the pMRA model captures the same ambiguity that occurs in tomography; when the orientations in three dimensions are unknown, one can only distinguish an object up to reflection.
However, in another regard the model suffers from a small issue.
Complete recovery for a signal is impossible since the vector $c = (1,-1,1,-1,\dots, -1)$ lies in the kernel of $\projOp R_\ell$ for every $\ell \in [d]$.
This component is the \emph{Nyquist component} and under the Fourier transform corresponds to the $d$-th Fourier coefficient of $x$.
Since this component is of highest frequency, the inability to recover this component could be taken as akin to an optical or electron-imaging system having some limited resolving power. In light of this, we refine the goal: using samples $\{y_i\}_{i=1}^N$, one aims to recover all Fourier coefficients of the signal $x$ except the $d$-th one, up to the action of $\dihgroup{2d}$.

\paragraph{Advantages of the Fourier Basis.}
The action of $\cycgroup{2d}$ on $\Cx^{2d}$ is diagonalized in the Fourier basis and gives the standard representation for the cyclic group.
However, $\dihgroup{2d}$ is not a commutative group, the actions of its elements cannot be simultaneously diagonalized, and the matrices giving the action in the Fourier basis are all sparse with non-zero entries only on the diagonal or anti-diagonal.

For a signal $x\in\Rl^d$, denote the Fourier coefficients of $x$ indexed from zero, by
\[
    \hat{x}[k] := \sum_{j=0}^{d-1} x[j]\cdot e^{-2\pi i j k/d}
\]
where $i=\sqrt{-1}$.
Define 
\[
    \hat{x} := (\hat{x}[k] : k=[d))\in \Cx^d.
\]
Implicitly $\hat{x} = \mathcal F_d x$ where $\mathcal F_d:\Cx^{d \times d}$ is a linear transformation giving the change of basis from the standard basis into the Fourier basis in $\Cx^d$.
In addition, since $x$ is a real signal, we have that $\hat x$ satisfies $\hat{x}[-k] = \overline{\hat{x}[k]}$ with indices taken modulo $d$ so that they lie in $[d)$. Next, define the following matrices $\fshiftel_{\ell,d}\in\Cx^{d\times d}$, $\ell\in[d)$, and $\freflel_d\in\Cx^{d\times d}$, where for $v\in\Cx^d$ with entries indexed by $j\in[d)$,
\begin{align}
    \label{eq:fourier-shift-operation}
    (\fshiftel_{\ell,d} \cdot v)[j] &= e^{-2\pi i \ell j / (2d)}v[j], \\
    \label{eq:fourier-reflect-operation}
    (\freflel_d \cdot v)[j] &= v[-j],
\end{align}
and the matrix $\fprojOp_{2d} \in \Cx^{d\times 2d}$, where for $v\in \Cx^{2d}$,
\begin{equation}
    \label{eq:fourier-projection-operation}
    (\fprojOp_{2d} v)[j] =  v[j] + e^{2\pi i j/{2d}}v[-j].
\end{equation}
We will omit the sub-scripted dimensions $d$ and $2d$ where it may be inferred from context.

\subsection{Invariants from the dMRA and pMRA Models}
\label{subsec:invariants-dMRA-pMRA-models}
Using the linear operators defined in \eqref{eq:fourier-shift-operation}-\eqref{eq:fourier-projection-operation} we define the following invariant tensors of a complex vector $\hat x \in \Cx^{2d}$,
\begin{align}
     T_{\dMRA,2d}^{(r)}(\hat x) &:= \frac1{4d} \left(\sum_{\ell=0}^{2d-1}  (\fshiftel_{\ell,2d}\cdot \hat{x})^{\oasterisk r} + \sum_{\ell=0}^{2d-1}  (\freflel\fshiftel_{\ell,2d} \cdot \hat{x})^{\oasterisk r}\right), \label{eq:dMRA-invariant-tensor} \\
     T_{\pMRA,2d}^{(r)}(\hat x) &:= \frac1{2d} \left(\sum_{\ell=0}^{2d-1}  (\fprojOp_{2d} \fshiftel_{\ell,2d} \cdot \hat{x})^{\oasterisk r} \right). \label{eq:pMRA-invariant-tensor}
\end{align}

As suggested by our notation, we claim that these tensors are the same invariants from the dMRA and pMRA models given by \eqref{eq:invariant-tensors-with-projection} under a change of basis.
First we give the following lemma.
\begin{lemma}
    \label{lemma:almost-diagonalize-groups}
    For any $d$,
    \[
        \fshiftel_\ell = \mathcal F \shiftel_\ell \mathcal F^{-1},
    \]
    for any $\ell\in[d)$,
    \[
        \freflel = \mathcal F \reflel\shiftel_{-1} \mathcal F^{-1},
    \]
    and there is a full-rank square matrix $C$ giving a change of basis such that
    \[
        \fprojOp_{2d} = C\projOp_{2d} \mathcal F^{-1}.
    \]
\end{lemma}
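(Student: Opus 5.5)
The plan is to check all three identities directly from the definitions, one entry at a time in the Fourier basis. Throughout, $n$ denotes the ambient dimension, and I read $\fshiftel_{\ell,n}$ as multiplication of the $k$-th coordinate by $e^{-2\pi i \ell k/n}$. This is the convention forced by $\hat x = \mathcal F x$.

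\textbf{Shifts.} Take $v=\mathcal F x$ and substitute $j\mapsto j+\ell$ in the Fourier sum:
\[
\mathcal F(\shiftel_\ell x)[k]=\sum_j x[j-\ell]e^{-2\pi i jk/n}=e^{-2\pi i \ell k/n}\hat x[k].
\]
This gives $\fshiftel_\ell=\mathcal F\shiftel_\ell\mathcal F^{-1}$.

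\textbf{Reflection.} First compute the vector itself: $(\reflel\shiftel_{-1}x)[j]=(\shiftel_{-1}x)[n-j-1]=x[n-j]=x[-j]$. Its Fourier transform is
\[
\sum_j x[-j]e^{-2\pi i jk/n}=\hat x[-k],
\]
so $\freflel=\mathcal F\reflel\shiftel_{-1}\mathcal F^{-1}$. A consequence I will use below is $\mathcal F\reflel\mathcal F^{-1}=\freflel\fshiftel_{1}$. Concretely, $(\mathcal F\reflel x)[k]=e^{2\pi i k/(2d)}\hat x[-k]$ in dimension $2d$.

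\textbf{Projection.} Let $P:\Cx^{2d}\to\Cx^d$ restrict to the first $d$ coordinates. Then $\projOp_{2d}=P(I+\reflel)$. For $v=\mathcal F x$, the reflection computation gives
\[
w:=\mathcal F(I+\reflel)x,\qquad w[k]=v[k]+e^{2\pi i k/(2d)}v[-k],\quad k\in[2d).
\]
So $w$ agrees with $\fprojOp_{2d}v$ on $[d)$. The vector $w$ also satisfies $w[-k]=e^{-2\pi i k/(2d)}w[k]$, and $w[d]=0$. Hence $w$ is determined by its first $d$ entries: $w=E\,\fprojOp_{2d}v$ for the injective linear extension $E:\Cx^d\to\Cx^{2d}$ defined by
\[
(Eu)[k]=u[k]\ \text{for } k\in[d),\qquad (Eu)[d]=0,\qquad (Eu)[-k]=e^{-2\pi i k/(2d)}u[k].
\]
It follows that $\projOp_{2d}\mathcal F^{-1}v=P\mathcal F^{-1}E\,\fprojOp_{2d}v$. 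It therefore suffices to show that $B:=P\mathcal F^{-1}E\in\Cx^{d\times d}$ is invertible, and then take $C=B^{-1}$.

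\textbf{Invertibility of $B$.} By the reflection identity, the symmetry defining the image of $E$ is exactly the Fourier-side statement of $\reflel$-invariance. So for every $u$, the vector $y=\mathcal F^{-1}Eu$ satisfies $y=\reflel y$, that is, $y[2d-1-j]=y[j]$. Now suppose $Bu=0$. Then $y$ vanishes on its first $d$ entries, and by this symmetry it vanishes on the last $d$ as well. Hence $y=0$, and $u=0$ because $\mathcal F$ is invertible and $E$ is injective.

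\textbf{Main obstacle.} I expect the delicate part to be the bookkeeping in this last step. One must check that the image of $E$ is precisely $\mathcal F$ of the $\reflel$-invariant vectors. That requires handling the self-paired indices $k=0$ and $k=d$, where the phase condition becomes $w[d]=-w[d]$ and forces $w[d]=0$. One must also keep the $2d$ versus $d$ conventions in the phase factors consistent.
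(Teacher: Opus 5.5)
Your proof is correct. For the shift and reflection identities it matches the paper in substance. The paper computes the entries of $\mathcal F\shiftel_1\mathcal F^{-1}$ and $\mathcal F\reflel\shiftel_{-1}\mathcal F^{-1}$ as geometric sums of roots of unity and gets general $\ell$ by taking powers. You instead reindex the DFT sum applied to a vector, which is the same computation.

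The projection claim is where your route genuinely differs.

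\textbf{The paper's route.}
\begin{itemize}
\item It uses $\projOp^\top\projOp = I+\reflel$ and conjugates to get $\mathcal F(I+\reflel)\mathcal F^{-1} = I+\freflel\fshiftel_1$.
\item It observes that $\fprojOp$ is the first $d$ rows of this matrix, which gives the closed form $C=\mathcal F[[d),:]\,\projOp^\top$.
\item Invertibility of $C$ is a rank count. The rows of $\fprojOp$ are nonzero with disjoint supports, so $\operatorname{rank}\fprojOp=d$, and $\fprojOp = C\projOp\mathcal F^{-1}$ forces $\operatorname{rank} C\ge d$.
\end{itemize}

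\textbf{Your route.}
\begin{itemize}
\item You factor $\projOp = P(I+\reflel)$ and run the relation in the opposite direction, obtaining $\projOp\mathcal F^{-1}=B\fprojOp$ with $B=P\mathcal F^{-1}E$. So you construct $C^{-1}$ rather than $C$.
\item You prove $B$ injective through the $\reflel$-symmetry of $\mathcal F^{-1}Eu$. I checked this, including at the self-paired indices $k=0$ and $k=d$: $e^{2\pi i k/(2d)}(Eu)[-k]=(Eu)[k]$ holds for every $k\in[2d)$.
\end{itemize}

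\textbf{What each buys.} Your argument is longer, but it explains why passing from $\projOp$ to $\fprojOp$ loses nothing. The Fourier image of $\reflel$-symmetric vectors is parametrized by its first $d$ coordinates, with the Nyquist coordinate forced to zero. The paper's argument is shorter and gives $C$ explicitly.

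Your invertibility step could also be replaced by the paper's rank count. $\projOp$ has $d$ nonzero rows with disjoint supports $\{j,2d-1-j\}$, so $\projOp\mathcal F^{-1}=B\fprojOp$ has rank $d$. Hence $B$ has rank $d$, with no symmetry bookkeeping needed.
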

Using this we have the following result.
\begin{proposition}
    \label{prop:invariants-change-of-basis}
    Let $x\in\Rl^{2d}$ and let $T^{(r)}_{\dihgroup{2d}}(x)$ and $T^{(r)}_{\cycgroup{2d},\Pi}(x)$ be $r$-order invariant tensor of $x$ as defined in \eqref{eq:invariant-tensors-with-projection}.
    For $\mathcal F\in \Cx^{2d\times 2d}$ the matrix giving the change of basis from the standard basis into the Fourier basis in $\Cx^{2d}$,
    \[
        T_{\dMRA,2d}^{(r)}(\mathcal F x) = \mathcal F^{\oasterisk r} \left(T^{(r)}_{\dihgroup{2d}}(x)\right).
    \]
    For $C$ the change-of-basis matrix in Lemma~\ref{lemma:almost-diagonalize-groups}, 
    \[
        T_{\pMRA,2d}^{(r)}(\mathcal F x) = C^{\oasterisk r} \left(T^{(r)}_{\cycgroup{2d},\Pi}(x)\right).
    \]
\end{proposition}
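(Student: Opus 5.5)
The plan is to combine Lemma~\ref{lemma:almost-diagonalize-groups} with the fact that symmetric lifts are multiplicative (Corollary~\ref{cor:lift-of-product}). Throughout we read the moment tensor \eqref{eq:invariant-tensors-with-projection} as stored in the canonical symmetric basis, i.e.\ $T^{(r)}_{G,P}(x) = \Eop_{g}[(P(g\cdot x))^{\oasterisk r}]$. Two simple observations will be used. First, for a matrix $A$ and a vector $v$, viewing $v$ as a single-column matrix gives $(Av)^{\oasterisk r} = A^{\oasterisk r} v^{\oasterisk r}$ by Corollary~\ref{cor:lift-of-product}, since $v^{\oasterisk r}$ for a one-column matrix coincides with the vector product of Definition~\ref{def:symmetrized-Kronecker-product}. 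Second, $A^{\oasterisk r}$ is linear in its action, so it commutes with averages over the group.

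For the dMRA identity, I would start from \eqref{eq:dMRA-invariant-tensor} with $\hat x = \mathcal F x$ and use the first part of Lemma~\ref{lemma:almost-diagonalize-groups}: $\fshiftel_\ell \mathcal F x = \mathcal F \shiftel_\ell x$. For the reflection terms, the lemma gives $\freflel\fshiftel_\ell \mathcal F x = \mathcal F \reflel \shiftel_{-1}\shiftel_\ell x = \mathcal F \reflel\shiftel_{\ell-1}x$. As $\ell$ ranges over $[2d)$, the element $\reflel\shiftel_{\ell-1}$ ranges over the whole reflection coset $\{\reflel \shiftel_m\}$, so the two sums together enumerate every $g\in\dihgroup{2d}$ exactly once. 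Using $(\mathcal F g x)^{\oasterisk r} = \mathcal F^{\oasterisk r}(gx)^{\oasterisk r}$ and pulling $\mathcal F^{\oasterisk r}$ out of the average (the normalization $1/(4d) = 1/|\dihgroup{2d}|$ matches the uniform expectation), we obtain $\mathcal F^{\oasterisk r}T^{(r)}_{\dihgroup{2d}}(x)$.

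For the pMRA identity, the third part of the lemma gives $\fprojOp_{2d}\fshiftel_\ell \mathcal F x = C\projOp \mathcal F^{-1}\mathcal F \shiftel_\ell x = C\projOp(\shiftel_\ell x)$. Hence each summand equals $C^{\oasterisk r}(\projOp(\shiftel_\ell x))^{\oasterisk r}$, and averaging over $\ell\in[2d)$ with weight $1/(2d)$ gives $C^{\oasterisk r}T^{(r)}_{\cycgroup{2d},\Pi}(x)$.

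The only step needing care, and the one I expect to be the main obstacle, is the reindexing of the reflection terms: one must check that the composition conventions in Lemma~\ref{lemma:almost-diagonalize-groups} (which side $\shiftel_{-1}$ appears on, and that $\shiftel_{-1}\shiftel_\ell = \shiftel_{\ell-1}$) really produce a bijection onto the reflection coset. This holds because the shifts form a cyclic group, so the $\shiftel_{-1}$ offset is just a relabeling. A secondary point is the identification between the tensor form \eqref{eq:invariant-tensors-with-projection} and its $\oasterisk$-coordinates. I would note that the coordinate map $S^r\to(\K^n)^{\oasterisk r}$ intertwines $A^{\otimes r}$ restricted to symmetric tensors with $A^{\oasterisk r}$, which follows from Lemma~\ref{lemma:symmetric-product-equivalences} and linearity.
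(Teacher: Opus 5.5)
Your proposal is correct and follows the argument the paper intends: combine Lemma~\ref{lemma:almost-diagonalize-groups} with Corollary~\ref{cor:lift-of-product} to pull $\mathcal F^{\oasterisk r}$ (resp.\ $C^{\oasterisk r}$) out of each summand, and absorb the $\shiftel_{-1}$ offset by reindexing the reflection sum, which is the same manipulation used in the proof of Lemma~\ref{lemma:pMRA-invariants-from-dMRA-invariants}. The appendix only writes out the proof of the lemma itself, so your remark that \eqref{eq:invariant-tensors-with-projection} must be read in $\oasterisk$-coordinates (with $A^{\oasterisk r}$ representing $A^{\otimes r}$ on symmetric tensors) fills in a point the paper leaves implicit.
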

The subscript $2d$ may be dropped when the dimension is clear from context.
See Section~\ref{subsec:fourier-basis} for the proof of Lemma~\ref{lemma:almost-diagonalize-groups} and Proposition~\ref{prop:invariants-change-of-basis}, including the explicit construction of the change-of-basis matrix $C$.
Lemma~\ref{lemma:almost-diagonalize-groups} is also useful in proving the following fact.
\begin{proposition}
    \label{prop:two-clean-projections-sufficient}
    In the pMRA model, two projections $\{\projOp\shiftel_{\ell_1}x, \projOp\shiftel_{\ell_2}x\}$ from known directions $\ell_1\neq \ell_2$ are sufficient to recover
    $x\in\Rl^{2d}\cap\Span(1,-1,1,\dots,-1)^\perp$ when $\ell_1-\ell_2$ does not divide $d$.
\end{proposition}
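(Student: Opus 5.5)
The plan is to pass to the Fourier basis using Lemma~\ref{lemma:almost-diagonalize-groups}. There the projection of a shifted signal decouples into independent $2\times 2$ linear systems, one for each pair of conjugate frequencies $\{j, 2d-j\}$. Recovery then reduces to checking that each of these small systems is nonsingular.

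\textbf{Step 1 (reduction to Fourier data).} Since $C$ is an invertible change of basis and $\fprojOp_{2d}\fshiftel_{\ell} = C\projOp_{2d}\shiftel_\ell\mathcal F^{-1}$, knowing $\projOp\shiftel_{\ell}x$ is equivalent to knowing $z_\ell := \fprojOp_{2d}\fshiftel_\ell \hat x$ with $\hat x=\mathcal F x$. Write $\omega = e^{2\pi i/(2d)}$. By \eqref{eq:fourier-shift-operation} and \eqref{eq:fourier-projection-operation}, for each $j\in[d)$,
\[
z_\ell[j] = \omega^{-\ell j}\,\hat x[j] + \omega^{(1+\ell)j}\,\hat x[2d-j].
\]
Hence the coordinate $z_\ell[j]$ involves only the unknowns $\hat x[j]$ and $\hat x[2d-j]$. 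As $j$ ranges over $[d)$, these pairs cover every frequency in $[2d)$ except $j=d$. That frequency is the Nyquist coefficient, which vanishes by the assumption $x\perp(1,-1,\dots,-1)$, so it need not be recovered.

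\textbf{Step 2 (solving each block).} For $j=0$ both unknowns coincide, giving $z_\ell[0]=2\hat x[0]$, so $\hat x[0]$ is recovered directly. For $1\le j\le d-1$, stack the two equations coming from $\ell_1$ and $\ell_2$. The resulting $2\times 2$ determinant is
\[
\omega^{-\ell_1 j}\omega^{(1+\ell_2)j}-\omega^{-\ell_2 j}\omega^{(1+\ell_1)j} = \omega^{j}\bigl(\omega^{(\ell_2-\ell_1)j}-\omega^{(\ell_1-\ell_2)j}\bigr).
\]
It is nonzero iff $\omega^{2(\ell_1-\ell_2)j}\neq 1$, i.e.\ iff $d\nmid(\ell_1-\ell_2)j$.

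As a sanity check, one can instead use the realness constraint $\hat x[2d-j]=\overline{\hat x[j]}$. Multiplying by $\omega^{-j/2}$ shows that each projection gives the single real equation $\mathrm{Re}\bigl(\omega^{-(\ell+1/2)j}\hat x[j]\bigr)=\tfrac12\omega^{-j/2}z_\ell[j]$. The two real equations are independent exactly when $\sin(\pi(\ell_1-\ell_2)j/d)\neq 0$, which is the same criterion, so realness gives no extra leverage in degenerate cases. When the criterion holds for every $j$, we invert each block, assemble $\hat x$, and apply $\mathcal F^{-1}$ to recover $x$.

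\textbf{Step 3 (the main obstacle).} The delicate point is matching the determinant criterion to the stated hypothesis on $\ell_1-\ell_2$. We need $d\nmid(\ell_1-\ell_2)j$ for every $j=1,\dots,d-1$. This is equivalent to $\gcd(\ell_1-\ell_2,d)=1$: if $g=\gcd(\ell_1-\ell_2,d)>1$, then $j=d/g$ makes the block for $j=d/g$ singular. I would therefore read the hypothesis ``$\ell_1-\ell_2$ does not divide $d$'' as the coprimality condition $\gcd(\ell_1-\ell_2,d)=1$. In the power-of-two setting of the paper, this says $\ell_1-\ell_2$ is odd. I would also exhibit the failure for the bad $j$: the single real equation leaves $\hat x[d/g]$ undetermined along a one-dimensional real direction, giving a counterexample. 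This shows the condition cannot be weakened to the literal statement, and it pins down the precise form of the hypothesis under which the proposition holds.
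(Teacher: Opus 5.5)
Your proof is correct and is essentially the paper's argument: the paper stacks $\fprojOp\fshiftel_{\ell_1}$ and $\fprojOp\fshiftel_{\ell_2}$ into one $2d\times 2d$ matrix and row-reduces row $j+d$ against row $j$, which amounts to your $2\times 2$ determinant check on each frequency pair $\{j,2d-j\}$. Your criterion $d\nmid j(\ell_1-\ell_2)$ for $1\le j\le d-1$, i.e.\ $\gcd(\ell_1-\ell_2,d)=1$, is the correct one, and your reinterpretation of the hypothesis is needed. The paper's computation has slips in the exponents and ends with the condition $j(\ell_1-\ell_2)\equiv 0 \pmod{2d}$ and the assumption ``$\ell_1-\ell_2$ does not divide $2d$''. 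Neither is the right criterion. For example, $d=8$ and $\ell_1-\ell_2=6$ satisfy the literal hypothesis, yet the $j=4$ block is singular. Conversely, $\ell_1-\ell_2=1$ works but is excluded by the literal hypothesis.
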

See Section~\ref{subsec:fourier-basis} for the proof.
As we mentioned, strategies to label samples provably fail in the high-noise regime; however, Proposition~\ref{prop:two-clean-projections-sufficient} demonstrates the irony of the pMRA model: the third moment is an overcomplete tensor and we do not have an algorithm to decompose it; we also do not need all the factors from its decomposition, just two!

The careful reader may wonder why we give our dMRA invariant tensor in \eqref{eq:dMRA-invariant-tensor} for a signal of length $2d$ instead of $d$.
By defining both models with a common dimension we may give a relationship between the invariants:
\begin{lemma}
    \label{lemma:pMRA-invariants-from-dMRA-invariants}
    For $\hat x \in \Cx^{2d}$,
    \begin{equation*}
        (\fprojOp^{\oasterisk r}_{2d}) T_{\dMRA,2d}^{(r)}(\hat x) = T_{\pMRA,2d}^{(r)}(\hat x).
    \end{equation*}
\end{lemma}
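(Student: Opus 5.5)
Expand the left-hand side with the definition \eqref{eq:dMRA-invariant-tensor} and use linearity of $\fprojOp^{\oasterisk r}$. By Corollary~\ref{cor:lift-of-product} applied with $B$ a single column vector (a symmetric lift of a $p\times 1$ matrix is just $v^{\oasterisk r}$), we have $\fprojOp^{\oasterisk r} v^{\oasterisk r} = (\fprojOp v)^{\oasterisk r}$ for any $v\in\Cx^{2d}$. The left-hand side therefore becomes
\[
\frac{1}{4d}\left(\sum_{\ell=0}^{2d-1}(\fprojOp\fshiftel_\ell\hat x)^{\oasterisk r} + \sum_{\ell=0}^{2d-1}(\fprojOp\freflel\fshiftel_\ell\hat x)^{\oasterisk r}\right).
\]
It then suffices to show that the reflected sum equals the unreflected sum. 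That gives $\frac{2}{4d}\sum_\ell (\fprojOp\fshiftel_\ell\hat x)^{\oasterisk r} = T^{(r)}_{\pMRA}(\hat x)$.

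The key step is an identity of the form $\fprojOp\freflel\fshiftel_\ell = D_\ell\,\fprojOp\fshiftel_{\ell'}$ for some reindexing $\ell\mapsto\ell'$ that is a bijection of $[2d)$ and some matrix $D_\ell$. The lift $D_\ell^{\oasterisk r}$ should act trivially on the relevant vectors, or ideally $D_\ell = I$. I would compute this directly in Fourier coordinates from \eqref{eq:fourier-shift-operation}--\eqref{eq:fourier-projection-operation}. First, $(\freflel\fshiftel_\ell v)[j] = e^{2\pi i\ell j/(2d)}v[-j]$. Then
\[
(\fprojOp\freflel\fshiftel_\ell v)[j] = e^{2\pi i \ell j/2d}v[-j] + e^{2\pi i j/2d}e^{-2\pi i\ell j/2d}v[j].
\]
Factoring out $e^{2\pi i j/2d}$ gives $e^{2\pi i j/2d}\bigl(e^{-2\pi i (\ell-1)... }\bigr)$-type terms. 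Matching against $(\fprojOp\fshiftel_{\ell'}v)[j] = e^{-2\pi i\ell' j/2d}v[j] + e^{2\pi i j/2d}e^{2\pi i\ell' j/2d}v[-j]$ suggests taking $\ell' = \ell - 1$, since then $e^{2\pi i j/2d}e^{2\pi i(\ell-1)j/2d} = e^{2\pi i\ell j/2d}$ and $e^{-2\pi i(\ell-1)j/2d} = e^{2\pi i j/2d}e^{-2\pi i\ell j/2d}$. Both coefficients then agree exactly, so $D_\ell = I$ and $\fprojOp\freflel\fshiftel_\ell = \fprojOp\fshiftel_{\ell-1}$.

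An alternative, more conceptual route goes through Lemma~\ref{lemma:almost-diagonalize-groups}. There $\freflel\fshiftel_\ell$ corresponds to $\reflel\shiftel_{-1}\shiftel_\ell$ in the standard basis, and one checks that $\projOp\reflel = \projOp$ directly from $(\projOp x)[j] = x[j]+x[2d-1-j]$. I would mention this as a sanity check.

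The main obstacle is getting the index and sign conventions right: the phase factor $e^{2\pi i j/2d}$ in $\fprojOp$, and the modular reindexing $\ell\mapsto \ell-1 \pmod{2d}$. One must also confirm that $\fshiftel_\ell$ is periodic in $\ell$ modulo $2d$, which holds because the exponent $-2\pi i\ell j/(2d)$ is unchanged under $\ell\mapsto\ell+2d$. Given that periodicity, the reindexing is a bijection of $[2d)$ and the two sums coincide. With the explicit computation above this is routine.
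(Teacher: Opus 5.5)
Your proposal is correct and follows the paper's proof. Both arguments move $\fprojOp^{\oasterisk r}$ into the summands via Corollary~\ref{cor:lift-of-product}, establish $\fprojOp\freflel\fshiftel_\ell = \fprojOp\fshiftel_{\ell-1}$, and reindex the reflected sum modulo $2d$. The only difference is how that identity is verified: you do a direct entrywise Fourier computation, which checks out once the garbled ``factoring out'' line is deleted, while the paper derives $\fprojOp\freflel = \fprojOp\fshiftel_{-1}$ from Lemma~\ref{lemma:almost-diagonalize-groups} and $\projOp\reflel = \projOp$, the route you list as a sanity check.
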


\begin{proof}
    By linearity and Lemma~\ref{cor:lift-of-product} we may move the projection operator into the summands
    \[
    (\fprojOp^{\oasterisk r}) T_{\dMRA}^{(r)}(\hat x) = \frac1{4d} \left(\sum_{\ell=0}^{2d-1}  (\fprojOp\fshiftel_\ell \hat{x})^{\oasterisk r} + \sum_{\ell=0}^{2d-1}  (\fprojOp \freflel\fshiftel_\ell \hat{x})^{\oasterisk r}\right).
    \]
    Substituting the relations from Proposition~\ref{lemma:almost-diagonalize-groups} we have that
    \[
    \fprojOp\freflel = C\projOp \reflel \shiftel_{-1} \mathcal F^{-1} = C\projOp \shiftel_{-1} \mathcal F^{-1} = C\projOp\mathcal F^{-1}  \fshiftel_{-1} = \fprojOp\fshiftel_{-1}
    \]
    since $\projOp \reflel = \projOp$. If we re-index the second sum we may combine, yielding the result:
    \[
        (\fprojOp^{\oasterisk r}) T_{\dMRA}^{(r)}(\hat x) = \frac1{4d} \left(\sum_{\ell=0}^{2d-1}  (\fprojOp\fshiftel_\ell \hat{x})^{\oasterisk r} + \sum_{\ell=0}^{2d-1}  (\fprojOp\fshiftel_{\ell-1} \hat{x})^{\oasterisk r}\right) = \frac1{2d}\sum_{\ell=0}^{2d-1}  (\fprojOp\fshiftel_\ell \hat{x})^{\oasterisk r} = T^{(r)}_\pMRA(\hat x). \qedhere
    \]
\end{proof}

As a result, the $\dihgroup{2d}$-invariant polynomials that make up the entries of the $T_{\pMRA}^{(r)}(\hat x)$ tensor are linear combinations of the $\dihgroup{2d}$-invariant polynomials in $T_{\dMRA}^{(r)}(\hat x)$. In other words, the invariant dMRA tensor contains at least as much information as the pMRA tensor.
The advantage of our change of basis by the Fourier transform is apparent in computations giving each entry in the invariant tensors for both models.
\begin{proposition}
    \label{prop:invariant-tensor-structure}
    For $\hat{x} \in \Cx^{2d}$,
    \[
    T_\dMRA^{(r)}(\hat x)[\Lbag j_1,\dots, j_r\Rbag] = \frac{w_{\mset{j_1,\dots, j_r}}}2\left(\prod_{k=1}^r \hat x[j_k] + \prod_{k=1}^r \hat x[-j_k]\right) \bb1\left\{\sum_{k=1}^r j_k \equiv 0 \hspace{-2mm} \pmod{2d}\right\}
    \]
    for $\mset{j_1,\dots, j_r}$ a multiset of $[2d)$.
    
    For the same signal,
    \begin{align*}
        &T_\pMRA^{(r)}(\hat x)[\mset{j_1,\dots, j_r}] = \\
        &\qquad\frac{w_{\mset{j_1,\dots, j_r}}}2\sum_{s\in\{-1,1\}^r}  \phi(s,\mset{j_1,\dots, j_r}) \left(\prod_{k=1}^r  \hat x[s_k j_k] + \prod_{k=1}^r  \hat x[-s_k j_k]\right)\bb1\left\{\sum_{k=1}^r s_kj_k \equiv 0 \pmod{2d}\right\}
    \end{align*}
    for $\mset{j_1,\dots, j_r}$ a multiset of $[d)$ and where $\phi$ is a complex phase given by
    \[
     \phi(s, \mset{j_1,\dots, j_r}) := (-1)^{\frac1{2d}\sum_{k=1}^r s_k j_k}\exp\left(\frac{\pi i}{2d}\sum_{k=1}^r j_k \right).
    \]
\end{proposition}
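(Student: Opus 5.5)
The plan is to compute the entries directly from the definitions \eqref{eq:dMRA-invariant-tensor} and \eqref{eq:pMRA-invariant-tensor}, using the entrywise formula \eqref{eq:symmetrized-Kronecker-product} for symmetrized Kronecker powers. For a single vector $v$ and a multiset $M=\mset{j_1,\dots,j_r}$, the sum over $\sigma\in S_r$ in \eqref{eq:symmetrized-Kronecker-product} has $r!$ identical terms, so
\[
v^{\oasterisk r}[M] = w_M\prod_{k=1}^r v[j_k].
\]
Every entry of every summand is therefore $w_M$ times a monomial, and the proof reduces to averaging these monomials over the group and applying the orthogonality relation for roots of unity: $\frac1{2d}\sum_{\ell=0}^{2d-1}e^{-2\pi i \ell m/(2d)} = \bb1\{m\equiv 0 \pmod{2d}\}$.

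\emph{dMRA case.} By \eqref{eq:fourier-shift-operation}, in dimension $2d$ we have $(\fshiftel_\ell\hat x)[j]=e^{-2\pi i\ell j/(2d)}\hat x[j]$. The rotation sum therefore contributes
\[
w_M\prod_k \hat x[j_k]\cdot\frac1{2}\,\bb1\Bigl\{\textstyle\sum_k j_k\equiv 0\Bigr\},
\]
where the factor $\frac12$ comes from the prefactor $\frac1{4d}$. For the reflected terms, $(\freflel\fshiftel_\ell\hat x)[j]=e^{2\pi i \ell j/(2d)}\hat x[-j]$. Their average gives $\prod_k\hat x[-j_k]$ times the indicator of $\sum_k j_k\equiv 0$, because the condition $-\sum_k j_k\equiv 0$ is the same condition. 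Adding the two contributions gives the stated formula.

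\emph{pMRA case.} Write $v=\fshiftel_\ell\hat x$ and apply \eqref{eq:fourier-projection-operation}:
\[
(\fprojOp v)[j]=e^{-2\pi i\ell j/(2d)}\hat x[j]+e^{2\pi i j/(2d)}e^{2\pi i\ell j/(2d)}\hat x[-j].
\]
Take the product over $k$ and expand it as a sum over sign vectors $s\in\{-1,1\}^r$. Choose $s_k=+1$ for the first term and $s_k=-1$ for the second, writing each term as $\hat x[s_kj_k]$ times the phase $e^{-2\pi i\ell s_kj_k/(2d)}$ times the extra factor $e^{2\pi i j_k/(2d)}$, which appears only when $s_k=-1$. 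Averaging over $\ell$ then yields $\bb1\{\sum_k s_kj_k\equiv 0 \pmod{2d}\}$ together with the $\ell$-independent phase $\prod_{k:s_k=-1}e^{2\pi i j_k/(2d)}$.

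The remaining work is to reorganize this into the symmetric form of the statement. Pair each $s$ with $-s$; they satisfy the same indicator condition. Define the symmetric phase $\exp(\frac{\pi i}{2d}\sum_k j_k)$. The sign-dependent factor $\prod_{s_k=-1}e^{2\pi i j_k/(2d)}$ equals this symmetric phase times $\exp(-\frac{\pi i}{2d}\sum_k s_kj_k)$. On the support of the indicator we have $\sum_k s_kj_k=2dm$ for some integer $m$, so this last factor is $e^{-\pi i m}=(-1)^{m}$. This recovers $\phi(s,M)$, and $\phi(s,M)=\phi(-s,M)$ there.

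Finally, write the full sum over $s$ as $\frac12\sum_s(\text{term}(s)+\text{term}(-s))$. This produces the displayed $\frac{w_M}{2}\sum_s\phi(\cdots)(\prod\hat x[s_kj_k]+\prod\hat x[-s_kj_k])$.

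The main obstacle is this phase bookkeeping. The other steps are routine, but here one must check that $(-1)^{\frac1{2d}\sum s_kj_k}$ is well defined, which it is only on the support of the indicator, and that its sign is consistent under $s\mapsto -s$, since $(-1)^m=(-1)^{-m}$.
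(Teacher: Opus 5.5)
Your proposal is correct and follows the paper's proof essentially step for step. The paper likewise expands entries via $v^{\oasterisk r}[M]=w_M\prod_k v[j_k]$ and root-of-unity orthogonality, and it writes the dihedral average as half the cyclic average of $\hat x$ plus half that of $\freflel\hat x$, which is your rotation/reflection split. For pMRA it expands the product over sign vectors $s\in\{-1,1\}^r$ and then symmetrizes under $s\mapsto -s$, as you do. Your explicit check that $\prod_{k:s_k=-1}e^{2\pi i j_k/(2d)}=\exp(\frac{\pi i}{2d}\sum_k j_k)\,(-1)^{m}$ on the support $\sum_k s_kj_k=2dm$ fills in a step the paper asserts without justification.
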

See Section~\ref{subsec:invariant-tensor-structure} for the computations that verify these facts.
Observe that each entry of the $T_\dMRA^{(r)}(\hat x)$ tensor is either zero or simply the sum of two monomials up to scale.
Further, it is relatively simple to verify these polynomials are invariant under the actions given in \eqref{eq:fourier-shift-operation} and \eqref{eq:fourier-reflect-operation}.
In contrast, in the standard basis the entries of $T^{(r)}_{\dihgroup{2d}}(x)$ and $T^{(r)}_{\cycgroup{2d},\Pi}(x)$ will be the sum of many monomials.
We can tie sparsity directly to computational efficiency later and it also eases several future constructions.

\section{Method for Recovery}
\label{sec:method-for-recovery}
In this section, we first define generally the conditions which allow us to extract a subtensor of an invariant tensor that is also an invariant tensor for some model.
We demonstrate that this is possible for both the dMRA and pMRA invariant tensors and that the procedure produces smaller dMRA and pMRA invariant tensors such that an orbit recovery problem's dimension may be recursively halved.
First, we introduce some notation we use to indicate extracting subtensors.

\paragraph{Restriction and subtensors.}
Given a vector $x\in\K^n$ and a set of indices $S\subseteq [n]$ we define the restriction of $x$ to $S$ using either the notation $x|_S$ or $x[S] := (x[s] : s\in S) \in \K^{|S|}$.
For a matrix $A\in\K^{m\times n}$, $R\subseteq [m]$, and $S\subseteq[n]$ subsets.
We will use $A[R,S]$ to denote the submatrix of $A$ with rows in $R$ and columns in $S$.
When $R=[m]$ or $S=[n]$ we will use $A[:,S]$ or $A[R, :]$ respectively.

Similarly, for a tensor $ T \in \Rl^{d_1}\otimes \Rl^{d_2}\otimes \cdots \otimes \Rl^{d_r}$ and subsets $S_j\subseteq [d_j]$, $j=1,\dots,r$ we use the following to denote restriction along multiple modes:
\[
 T |_{S_1,S_2,\dots, S_r} \text{ or }  T[S_1, S_2,\dots, S_r] := [  T[s_1, s_2,\dots s_r] : s_1 \in S_1,s_2\in S_2, \dots, s_r\in S_r] \in \K^{|S_1|\times \cdots \times |S_r|}
\]
is a subtensor of $\mathcal T$.
For symmetric tensors $\mathcal T \in (\Rl^n)^{\oasterisk r}$ we may use a single restriction to indicate restriction on all modes by the same subset of indices:
\[
 T|_S :=  [T[\mset{s_1, s_2,\dots, s_r}] : s_1, s_2,\dots, s_r \in S] \in (\K^{|S|})^{\oasterisk r},
\]
or when the subsets $\{S_i\}_{i=1}^r$ are pairwise equal or disjoint, then the subtensor of $T$ defined by
\[
T[S_1, S_2,\dots, S_r] := [  T[\mset{s_1, s_2,\dots, s_r}] : s_1 \in S_1,s_2\in S_2, \dots, s_r\in S_r]
\]
is partially symmetric along modes where the subsets $S_j$ and $S_k$ are equal for $j,k\in[r]$.

We quickly give some implications for restriction on matrix-vector products and matrix-matrix products, that are handy.
\begin{lemma}
    \label{lemma:restriction-across-products}
    Let $A\in \K^{m\times n}$, $B\in \K^{m\times n}$ be matrices and $v\in\K^n$ be a vector.
    Let $R\subseteq [m]$, and $S\subseteq[n]$ be subsets of indices.
    Then
    \[
        (Av)[R] = A[R, :] v \qquad \text{and} \qquad (AB)[R,S]= A[ R, :] B[:, S].
    \]
\end{lemma}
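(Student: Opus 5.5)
Both identities follow from the definition of the matrix--vector and matrix--matrix products together with the definition of restriction. We compare entries directly.

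For the first identity, fix $r \in R$. By definition of restriction, $(Av)[R]$ is the vector indexed by $R$ whose $r$-th entry is $(Av)[r]$. By the definition of the matrix--vector product,
\[
(Av)[r] = \sum_{j=1}^n A[r,j]\, v[j].
\]
The row of $A[R,:]$ labelled $r$ is exactly $A[r,:]$, so the entry of $A[R,:]\,v$ labelled $r$ is the same sum. Both sides are vectors in $\K^{|R|}$ indexed by $R$ in the same (inherited) order, and their entries agree. This proves the first identity.

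For the second identity, we read $B$ as a matrix in $\K^{n\times p}$, since the statement's $B\in\K^{m\times n}$ is presumably a typo needed for $AB$ to be defined. Take $S \subseteq [p]$. Fix $r\in R$ and $s\in S$. Then
\[
(AB)[r,s] = \sum_{j=1}^n A[r,j]\,B[j,s].
\]
On the other side, $A[R,:]$ has row $A[r,:]$ at the position labelled $r$, and $B[:,S]$ has column $B[:,s]$ at the position labelled $s$. Their product therefore has entry $\sum_{j} A[r,j]B[j,s]$ at position $(r,s)$. Both sides are $|R|\times|S|$ matrices with matching entries, so they are equal.

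An alternative route writes restriction as multiplication by selection matrices. Let $P_R$ be the $|R|\times m$ matrix whose rows are the standard basis vectors $\vec e_r^\top$ for $r \in R$, and let $Q_S$ be the $p\times|S|$ matrix whose columns are $\vec e_s$ for $s \in S$. Then $X[R,S] = P_R X Q_S$, and both identities follow from associativity: $(Av)[R] = P_R A v = A[R,:]\,v$, and $(AB)[R,S] = (P_R A)(B Q_S) = A[R,:]\,B[:,S]$.

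There is no genuine obstacle. The only points needing care are the dimension typo for $B$ and consistent index ordering: restricted objects must inherit the order of $R$ and $S$ from the ambient index sets.
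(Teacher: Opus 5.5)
Your proposal is correct and takes the same approach as the paper, which simply says both identities follow directly from the definitions of restriction and matrix multiplication; your entrywise comparison writes that argument out. You are also right that, for the second identity to make sense, $B$ must be read as an element of $\K^{n\times p}$ with $S\subseteq[p]$.
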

This follows directly from our definition of restriction and matrix multiplication.

\subsection{Invariant Subtensors}
\label{subsec:invariant-subtensors}

\begin{lemma}[Moment subtensors from restriction]
    \label{lemma:invariant-subtensors}
    Let $G$ be a finite group with some action on $\K^m$, $h:\K^m \to \K^n$ be a linear transformation, and $R\subseteq [m]$, $S \subseteq[n]$ both be sets of indices. If $\tilde G$ is a finite group with some action on $\K^{|R|}$ and $
    \tilde h: \K^{|R|} \to \K^{|S|}$ a linear transformation such that for all $v\in \K^m$,
    \begin{equation}
        \label{eq:multiset-equality-condition}
        \mset{ (h (g\cdot v))|_S : g\in G} = k\times \mset{\tilde h (\tilde g\cdot v|_R) : \tilde g\in \tilde G},
     \end{equation}
    where for a multiset $M$, $k\times M$ indicates a multiset with each item in $M$ repeated $k$ times.
    Then restricting each mode of a $G$-invariant tensor to $S$ produces a $\tilde G$-invariant tensor under the projection $\tilde h$ up to scale,
    \[
        \left(T_{G,h}^{(r)}(v)\middle)\right|_S = k\frac{|\tilde G|}{|G|} T_{\tilde G, \tilde h}^{(r)}(v|_R).
    \]
\end{lemma}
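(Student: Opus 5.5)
Since $T^{(r)}_{G,h}(v)$ is an average of symmetric powers $(h(g\cdot v))^{\oasterisk r}$ over $g\in G$, and restriction to a coordinate set acts linearly and summand by summand, the statement should follow from the multiset identity \eqref{eq:multiset-equality-condition} by reindexing the sum. The plan is as follows.

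\textbf{Step 1: restriction commutes with symmetric powers.} For any $u\in\K^n$,
\[
(u^{\oasterisk r})|_S = (u|_S)^{\oasterisk r}.
\]
This is immediate from Definition~\ref{def:symmetrized-Kronecker-product}. The entry of $u^{\oasterisk r}$ at a multiset $\mset{j_1,\dots,j_r}$ with all $j_k\in S$ is $w_{\mset{j_1,\dots,j_r}}\prod_k u[j_k]$. The weight $w$ depends only on the multiplicity pattern of the multiset, so it is unchanged when we reindex the multiset as a multiset of $[|S|]$. One can equally phrase this through Lemma~\ref{lemma:restriction-across-products}, since restriction is a coordinate projection.

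\textbf{Step 2: apply linearity of restriction to the moment tensor.} Writing the moment as an average,
\[
\left(T^{(r)}_{G,h}(v)\right)\Big|_S=\frac1{|G|}\sum_{g\in G}\big((h(g\cdot v))|_S\big)^{\oasterisk r}.
\]

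\textbf{Step 3: reindex using the multiset hypothesis.} This sum depends only on the multiset $\mset{(h(g\cdot v))|_S : g\in G}$, counted with multiplicity. By \eqref{eq:multiset-equality-condition}, that multiset equals $k$ copies of $\mset{\tilde h(\tilde g\cdot v|_R):\tilde g\in\tilde G}$. The sum therefore becomes
\[
\frac{k}{|G|}\sum_{\tilde g\in\tilde G}\big(\tilde h(\tilde g\cdot v|_R)\big)^{\oasterisk r}=k\frac{|\tilde G|}{|G|}\,T^{(r)}_{\tilde G,\tilde h}(v|_R),
\]
using the definition of $T^{(r)}_{\tilde G,\tilde h}$ as the uniform average over $\tilde G$.

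\textbf{Main obstacle.} The only point needing care is the bookkeeping in Step 1. The weights $w$ and the multiset indexing must match after we identify the sub-multisets drawn from $S$ with multisets of $[|S|]$. If the tensor is stored in full tensor form instead, this is trivial. The identity also requires counting $|G|=k|\tilde G|$ elements with multiplicity; this is guaranteed because the two sides of \eqref{eq:multiset-equality-condition} have equal cardinality.
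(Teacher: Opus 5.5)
Your proposal is correct and follows the paper's proof essentially step for step. You commute the restriction through each symmetric power, reindex the sum over $G$ using the multiset identity, and pull out $|\tilde G|$ to recognize the uniform average over $\tilde G$. Your side remark is also right: comparing cardinalities forces $|G| = k|\tilde G|$, so the scale factor is actually $1$, though the identity does not depend on this.
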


\begin{proof}{Proof of Lemma~\ref{lemma:invariant-subtensors}.}
    First,
    \[
        \left[T_{G,h}^{(r)}(v)\middle]\right|_S = \frac1{|G|}\sum_{g\in\mathcal G} [(h (g\cdot v))^{\oasterisk r}]|_S = \frac1{|G|}\sum_{g\in\mathcal G} (h (g\cdot v)|_S)^{\oasterisk r}
    \]
    by our definition of restriction for a symmetric tensor.
    Next, applying the multiset condition we have
    \[
        \frac1{|G|}\sum_{g\in\mathcal G} (h (g\cdot v)|_S)^{\oasterisk r} = \frac{k}{|G|}\sum_{\tilde{g}\in\tilde{\mathcal G}} (\tilde h (\tilde g\cdot v|_R))^{\oasterisk r},
    \]
    so with some manipulation,
    \[
        \frac{k}{|G|}\sum_{\tilde{g}\in\tilde{\mathcal G}} (\tilde h (\tilde g\cdot v|_R))^{\oasterisk r} = \frac{k|\tilde G|}{|G|}\cdot \frac1{|\tilde G|}\sum_{\tilde{g}\in\tilde{\mathcal G}} (\tilde h (\tilde g\cdot v|_R))^{\oasterisk r} = \frac{k|\tilde G|}{|G|} T_{\tilde G, \tilde h}^{(r)}(v|_R),
    \]
    and $\left[T_{G,h}^{(r)}(v)\middle]\right|_S = k\frac{|\tilde G|}{|G|} T_{\tilde G, \tilde h}^{(r)}(v|_R)$.
\end{proof}
The property above is useful since when it holds, a particular subtensor is actually an invariant tensor in its own right and --- if $k>1$ --- is an invariant over a smaller group.

We now show that the invariant tensors from the dMRA and pMRA models both satisfy Lemma~\ref{lemma:invariant-subtensors} when restricting to even Fourier coefficients.
To denote the even numbers in $[d)$ we use
\begin{equation}
    \label{eq:even-numbers}
    E_d = \{j \in [d) : j\equiv 0 \pmod{2}\}.
\end{equation}
We will omit the subscript and simply write ``$E$'' if the dimension can be inferred from context.
We also use $\fprojOp_{2d} = \fprojOp$ to indicate the projection operation defined in \eqref{eq:fourier-projection-operation} and $\fprojOp_{4d}$ an analogous operator defined in the same manner but with the domain $\Cx^{4d}$.

\begin{proposition}
    \label{prop:recursive-invariants}
    For $d\ge 1$, the following invariant tensors have invariant subtensors found by restriction:
    \begin{enumerate}
        \item For $\hat{x} \in \Cx^{2d}$, \[
            \left( T^{(r)}_{\dMRA, 2d}(\hat{x})\middle)\right|_{E_{2d}} =  T^{(r)}_{\dMRA, d}(\hat{x}|_{E_{2d}}).
        \]
        \item For $\hat{x} \in \Cx^{4d}$, \[
            \left(T^{(r)}_{\pMRA,4d}(\hat{x})\middle)\right|_{E_{2d}} =  T^{(r)}_{\pMRA,2d}(\hat{x}|_{E_{4d}}).
        \]
    \end{enumerate}
\end{proposition}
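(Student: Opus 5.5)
We apply Lemma~\ref{lemma:invariant-subtensors} in both parts, with $R = S = E$ (the even indices), and verify the multiset condition \eqref{eq:multiset-equality-condition}. Since $T^{(r)}_{\dMRA,2d}$ averages over $|G|=4d$ elements and $T^{(r)}_{\dMRA,d}$ over $|\tilde G| = 2d$ elements, we need multiplicity $k=2$ so that $k|\tilde G|/|G| = 1$. The same count holds for pMRA ($4d$ versus $2d$ shifts). We identify $E_{2d}=\{0,2,\dots,2d-2\}$ with $[d)$ via $2j\mapsto j$; under this identification $\hat x|_{E_{2d}}\in\Cx^d$.

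\textbf{Part 1 (dMRA).} Take $G=\dihgroup{2d}$ acting through $\fshiftel_{\ell,2d}$ and $\freflel_{2d}$, $h=I$. Restrict to index $2j$:
\[
(\fshiftel_{\ell,2d}\hat x)[2j] = e^{-2\pi i \ell (2j)/(4d)}\hat x[2j] = e^{-2\pi i \ell j/(2d)}\hat x[2j] = (\fshiftel_{\ell,d}\,\hat x|_E)[j].
\]
Moreover $\fshiftel_{\ell,d}$ depends only on $\ell \bmod 2d$, and $\ell$ ranges over $[4d)$ in the coarse sum, so each $\fshiftel_{\ell',d}$ with $\ell'\in[2d)$ appears exactly twice.

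Reflection is compatible with restriction because negation preserves evenness: $(\freflel_{2d} v)[2j] = v[-2j] = (\freflel_d\, v|_E)[j]$, with indices taken modulo $4d$ and $2d$ respectively. Combining the two,
\[
(\freflel\fshiftel_\ell \hat x)|_E = \freflel_d\fshiftel_{\ell,d}\hat x|_E ,
\]
and again each $\ell'$ appears twice. This gives the multiset condition with $k=2$. (Note $\dMRA,2d$ is defined with $2d$ in the notation but the vector lives in $\Cx^{2d}$ and shifts run over $2d$; we must be careful that the indices match the stated dimensions. The same phase computation goes through once we check the $4d$ versus $2d$ denominators in \eqref{eq:fourier-shift-operation}.)

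\textbf{Part 2 (pMRA).} Take $h=\fprojOp_{4d}$ and $S=E_{2d}\subset[2d)$, $R=E_{4d}$. For $v=\fshiftel_{\ell,4d}\hat x$ we need
\[
(\fprojOp_{4d} v)[2j] = v[2j] + e^{2\pi i (2j)/(4d)} v[-2j] = v[2j] + e^{2\pi i j/(2d)}v[-2j].
\]
This is exactly $(\fprojOp_{2d}(v|_{E_{4d}}))[j]$, since $-2j$ is even. By Lemma~\ref{lemma:restriction-across-products} the restriction then passes through, and $v|_{E_{4d}} = \fshiftel_{\ell,2d}\hat x|_{E_{4d}}$ by the Part~1 phase computation. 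Each of the $2d$ shifts is hit twice, so $k=2$ and the scale factor is again $1$.

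The main obstacle is purely bookkeeping. The delicate points are the dimension conventions in $\fshiftel_{\ell,d}$, where the phase denominator is twice the length, and the modular reindexing of negatives in $\freflel$ and $\fprojOp$. A useful sanity check is to compare the result against the entry formula of Proposition~\ref{prop:invariant-tensor-structure}: restricted to even $j_k$, the constraint $\sum_k j_k\equiv 0 \pmod{2\cdot\text{length}}$ halves correctly, and the weights $w$ are unchanged.
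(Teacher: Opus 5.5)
Your overall strategy matches the paper's. You verify the multiset condition of Lemma~\ref{lemma:invariant-subtensors} with $R=S=E$ and multiplicity $k=2$. You use that negation preserves evenness (for $\freflel$), and that the even rows of $\fprojOp_{4d}$ only touch even columns and reproduce $\fprojOp_{2d}$; that projection computation is correct.

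The problem is the step that actually produces $k=2$. You read \eqref{eq:fourier-shift-operation} literally, with phase denominator twice the vector length, so that $(\fshiftel_{\ell,2d}\hat x)[2j]=e^{-2\pi i\ell j/(2d)}\hat x[2j]$. You then assert that the coarse shift sum runs over $\ell\in[4d)$ and the fine one over $\ell'\in[2d)$. Both ranges are wrong:
\begin{itemize}
\item By \eqref{eq:dMRA-invariant-tensor}, $T^{(r)}_{\dMRA,2d}$ contains only $2d$ shifts ($\ell\in[2d)$), and $T^{(r)}_{\dMRA,d}$ contains $d$. The number $4d$ is $|\dihgroup{2d}|$, i.e.\ shifts plus reflections. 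Your count would put $4d$ shifts, and as many reflections, into $G$, contradicting the $|G|=4d$ you state at the outset.
\item Under your phase convention with the true range $\ell\in[2d)$, the restricted shifts $e^{-2\pi i\ell j/(2d)}$ are pairwise distinct. So they do not form two copies of the $d$ fine shifts, and the multiset condition fails.
\end{itemize}
Under that literal reading the fine average would even be a partial geometric sum rather than the indicator of Proposition~\ref{prop:invariant-tensor-structure}. You flag the issue in a parenthetical but never resolve it, and Part~2 inherits it because it cites ``the Part~1 phase computation.''

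The repair is to use the convention that the paper's proof actually uses, which is also what the indicator $\sum_k j_k\equiv 0 \pmod{2d}$ in Proposition~\ref{prop:invariant-tensor-structure} presupposes: the phase denominator equals the vector length.
\begin{itemize}
\item In Part~1 this gives $(\fshiftel_{\ell,2d}\hat x)[2j]=e^{-2\pi i\ell(2j)/(2d)}\hat x[2j]=e^{-2\pi i\ell j/d}\hat x[2j]=(\fshiftel_{m,d}\,\hat x|_{E})[j]$, which depends only on $m=\ell \bmod d$. Writing $\ell=m+dk$ with $m\in[d)$ and $k\in\{0,1\}$ gives exactly two coarse shifts per fine shift. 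Composing with $\freflel$ likewise gives two coarse reflections per fine reflection.
\item In Part~2, $(\fshiftel_{\ell,4d}\hat x)[2j]=e^{-2\pi i\ell j/(2d)}\hat x[2j]$ depends on $\ell \bmod 2d$. Writing $\ell=m+2dk$ with $m\in[2d)$ gives the 2-to-1 correspondence.
\end{itemize}
Combined with your (correct) projection identity, this is exactly the paper's argument.
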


\begin{proof}
    For each case we show that the multiset condition \eqref{eq:multiset-equality-condition} holds so that Lemma~\ref{lemma:invariant-subtensors} applies.
    Since we deal with the cyclic group and dihedral group over vector spaces of different sizes we will include the dimension in the subscript on elements, i.e.\ $\cycgroup{d}=\{\fshiftel_{\ell,d}\}_{\ell=1}^{d-1}$ where
    \[
        (\fshiftel_{\ell,d} \cdot \hat{x})[j] = e^{-2\pi i j\ell/d} x[j]
    \]
    for $j\in[d)$. Similarly, we will use $\freflel_d$ for the reflection element on $\Cx^d$.
    \begin{enumerate}
        \item Let $\hat x\in \Cx^{2d}$ be any vector.
        Given $m \in [d)$ we define $\ell_k=m+dk \in [2d)$ for $k\in\{0,1\}$. 
        We claim that for each $k$, $(\fshiftel_{\ell_k,2d} \cdot \hat{x})|_{E_{2d}} = \fshiftel_{m,d} \cdot (\hat{x}|_{E_{2d}})$ and $(\freflel_{2d} \cdot \fshiftel_{\ell_k,2d} \cdot \hat{x})|_{E_{2d}} = \freflel_d \cdot\fshiftel_{m,d} \cdot (\hat{x}|_{E_{2d}})$ so the multiset condition holds.
        To this end, notice that since $\ell_k=m+kd$,
        \begin{align*}
            e^{-2\pi i \ell_k/d} &= e^{-2\pi i m/d} & \implies \\
            \text{for any } j=[d),\qquad e^{-2\pi i j\ell_k/d} &= e^{-2\pi ji m/d} & \implies \\
            e^{-2\pi i j\ell_k/d}\hat x[2j] &= e^{-2\pi ji m/d}\hat x[2j] & \implies \\
            e^{-2\pi i (2j)\ell_k/(2d)}\hat x[2j] &= e^{-2\pi ji m/d}(\hat x|_{E_{2d}})[j] & \implies \\
            (\fshiftel_{\ell_k,2d}\cdot\hat x)[2j] &= \left(\fshiftel_{m,d}\cdot  (\hat x|_{E_{2d}})\right)[j] & \implies \\
            \left((\fshiftel_{\ell_k,2d}\cdot\hat x)|_{E_{2d}}\right)[j] &= \left(\fshiftel_{m,d}\cdot  (\hat x|_{E_{2d}})\right)[j],
        \end{align*}
        and so $(\fshiftel_{\ell_k,2d} \cdot \hat{x})|_{E_d} = \fshiftel_{m,d} \cdot (\hat{x}|_{E_d})$.
        Similarly,
        \begin{align*}
            e^{2\pi i \ell_k/d} &= e^{2\pi i m/d} & \implies \\
            \text{for any } j=[d),\qquad e^{2\pi i j\ell_k/d} &= e^{2\pi ji m/d} & \implies \\
            e^{2\pi i j\ell_k/d}\hat x[-2j] &= e^{2\pi ji m/d}\hat x[-2j] & \implies \\
            e^{-2\pi i (-2j)\ell_k/(2d)}\hat x[-2j] &= e^{-2\pi (-j)i m/d}(\hat x|_{E_{2d}})[-j] & \implies \\
            (\freflel_{2d}\cdot \fshiftel_{\ell_k,2d}\cdot\hat x)[2j] &= \left(\fshiftel_{m,d}\cdot  (\hat x|_{E_{2d}})\right)[-j] & \implies \\
            \left((\freflel_{2d}\cdot\fshiftel_{\ell_k,2d}\cdot\hat x)|_{E_{2d}}\right)[j] &= \left(\freflel_d\cdot\fshiftel_{m,d}\cdot  (\hat x|_{E_{2d}})\right)[j],
        \end{align*}
        and so $(\freflel_{2d} \cdot \fshiftel_{\ell_k,2d} \cdot \hat{x})|_{E_d} = \freflel_d \cdot\fshiftel_{m,d} \cdot (\hat{x}|_{E_d})$.
        This completes the association of elements between the sets so that 
        \[
            \mset{(g\cdot \hat x)|_{E_d} : g\in \dihgroup{2d}} = 2\times \mset{(\tilde g \cdot \hat x|_{E_d}) : \tilde g \in \dihgroup{d}}.
        \]
        \item Let $\hat x\in \Cx^{4d}$. Set $m\in [2d)$ and let $\ell_k=m+(2d)k \in [4d)$ for $k\in\{0,1\}$.
        We aim to show that
        \[
            \left.\left(\fprojOp_{4d}(\fshiftel_{\ell_k,4d}\cdot \hat x)\right)\right|_{E_{2d}}= \fprojOp_{2d}\left(\fshiftel_{m,2d}\cdot (x|_{E_{4d}})\right).
        \]
        Reusing analysis from the first part, we know that $e^{-2\pi i j\ell_k/(2d)} \hat x[2j] = e^{-2\pi ijm/(2d)} \hat x[2j]$ for all $j\in[2d)$.
        Rewriting this, we have that $e^{-2\pi i j\ell_k/(2d)} \hat x[2j] = e^{-2\pi i (2j)\ell_k/(4d)} \hat x[2j] = (\fshiftel_{\ell_k,4d}\cdot \hat x)[2j]$ and that $e^{-2\pi ijm/(2d)} \hat x[2j] = e^{-2\pi ijm/(2d)} (\hat x|_{E_{4d}})[j] = \left(\fshiftel_{m,2d}\cdot (\hat x|_{E_{4d}})\right)[j]$ for all $j\in[2d)$.
        
        Now we take linear combinations to produce the $d$ different equations
        \[
            e^{-2\pi i j\ell_k/(2d)} \hat x[2j] + e^{2\pi ij/(2d)}e^{2\pi i j\ell_k/(2d)} \hat x[-2j] = e^{-2\pi i jm/(2d)} \hat x[2j] + e^{2\pi ij/(2d)}e^{2\pi i jm/(2d)} \hat x[-2j]
        \]
        with $j\in[d)$.
        Applying the rewritings gives
        \[
        (\fshiftel_{\ell_k,4d}\cdot \hat x)[2j] + \underbrace{e^{2\pi i j/(2d)}}_{=e^{2\pi i (2j)/(4d)}}(\fshiftel_{\ell_k,4d}\cdot \hat x)[-2j] = \left(\fshiftel_{m,2d}\cdot (\hat x|_{E_{4d}})\right)[j] + e^{2\pi i j/(2d)}\left(\fshiftel_{m,2d}\cdot (\hat x|_{E_{4d}})\right)[-j]. \]
        We then have that by the definition of the $\fprojOp$ operator that
        \begin{align*}
            \left(\fprojOp_{4d}(\fshiftel_{\ell_k,4d}\cdot \hat x)\right)[2j] &=  \left(\fprojOp_{2d}\left(\fshiftel_{m,2d}\cdot (\hat x|_{E_{4d}})\right)\right)[j] & \implies \\
            \left(\left(\fprojOp_{4d}(\fshiftel_{\ell_k,4d}\cdot \hat x)\right)\middle|_{E_{2d}}\right)[j] &= \left(\fprojOp_{2d}\left(\fshiftel_{m,2d}\cdot (\hat x|_{E_{4d}})\right)\right)[j].
        \end{align*}
        Then
        \[
            \mset{(\fprojOp_{4d}(g\cdot \hat x))|_{E_d} : g\in \cycgroup{4d}} = 2\times \mset{\fprojOp_{2d}\left(\tilde g\cdot (\hat x|_{E_{2d}})\right): \tilde g \in \cycgroup{2d}}. \qedhere
        \]
    \end{enumerate}
\end{proof}

Proposition~\ref{prop:recursive-invariants} shows that we can recover part of the signal by extracting a subtensor and then solving (recursively) the orbit recovery problem associated to it.
Once we recover an element $y$ in the $\dihgroup{d}$-orbit of $\hat x[E]$, our next goal will be to extend $y$ to a vector double its length that lies in the $\dihgroup{2d}$-orbit of $\hat x$.

\subsection{Variety-Constrained Linear Systems}
\label{subsec:variety-constrained-linear-systems}
In this section, we introduce a broad class of problems we call \emph{variety-constrained linear systems}.
The construction of such systems, an algorithm for solving them, and conditions that guarantee uniqueness of solutions, will allow us to complete the second stage of recovery.
First, we introduce some definitions.

A \emph{variety}  $\mathcal X\subseteq \K^n$ is the zero set of a collection of polynomials $p_1, p_2,\dots, p_k$, in the entries of a vector $x\in \K^n$ i.e.,
\[
    \mathcal X = \{ x\in \K^n : p_1(x) = p_2(x) = \cdots= p_k(x) = 0\}.
\]
The polynomials $p_1, p_2,\dots, p_k$ are said to \emph{cut out} $\mathcal X$.
A \emph{conic variety} is a variety which is closed under scalar multiplication and is cut out by homogeneous polynomials which can be chosen to have the same degree.
A polynomial $p_j$ that is homogeneous of degree $b$ is determined by the coefficients on each degree-$b$ monomial of $x$ and we may view each one as a vector in $(\K^n)^{\oasterisk b}$ such that $p_j(y) = \langle p_j, y^{\oasterisk b} \rangle$ for any $y\in \K^n$.
A variety is said to be \emph{irreducible} if it cannot be written as the union of smaller varieties.

Since our symmetric product will be unfamiliar, we will gives some examples of varieties, the polynomials that cut them out, and how to express these polynomials as coordinate vectors in $\mathcal S$.
\begin{example}
    \label{ex:variety-rank-1 matrices}
    Define the variety of $m\times n$ matrices of rank at most 1 as
    \[
        \mathcal X_1 = \{B\in \K^{mn} : \rank(B) \leq 1\}
    \]
    where $B$ is considered a matrix when finding $\rank(B)$.
    It is a well known that the $2\times 2$ minors of a matrix cut out $\mathcal X_1$.
    So if $B\in \mathcal X_1$, for $1\leq a < c \leq m$ and $1\leq b < d \leq n$, $B[a,b] B[c,d] - B[a,d]B[c,b] = 0$.
    We may write this as
    \[
        \langle \vec e_{(a,b)} \otimes \vec e_{(c,d)} - \vec e_{(a,d)} \otimes \vec e_{(c,b)} , B^{\otimes 2}\rangle =B[(a,b)] B[(c,d)] - B[(a,d)]B[(c,b)]= 0
    \]
    where to take the inner product we now treat $B$ as a vector with extra structure on its entries.
    However, observe that $B^{\otimes 2} \in S^2(\K^{mn})$ so 
    \begin{align*}
        \langle \vec e_{(a,b)} \otimes \vec e_{(c,d)} - \vec e_{(a,d)} \otimes \vec e_{(c,)} , B^{\otimes 2}\rangle &= \langle \Sym_2(\vec e_{(a,b)} \otimes \vec e_{(c,d)} - \vec e_{(a,c)} \otimes \vec e_{(b,d)}) , \Sym_2(B^{\otimes 2})\rangle \\
        &=\left\langle \frac{\vec e_{\mset{(a,b), (c,d)}}}{w_{\mset{(a,b), (c,d)}}} - \frac{\vec e_{\mset{(a,d), (c,b)}}}{w_{\mset{(a,d), (c,b)}}}, B^{\oasterisk 2}\right\rangle
    \end{align*}
    where the second step follows from the fact that $w_{\mset{X,Y}}\Sym_2(\vec{e}_X \otimes \vec{e}_Y) = \vec e_{\mset{X,Y}}$ and Lemma~\ref{lemma:symmetric-product-inner-product}.
    So each minor may be identified with a vector of the form
    \[
        \frac{\vec e_{\mset{(a,b), (c,d)}}}{w_{\mset{(a,b), (c,d)}}} - \frac{\vec e_{\mset{(a,d), (c,b)}}}{w_{\mset{(a,d), (c,b)}}} \in (\K^{mn})^{\oasterisk 2}.
    \]
\end{example}
\begin{example}
    \label{ex:variety-symmetric-rank-1 matrices}
    Define the variety of symmetric matrices of rank at most 1 as
    \[
        \mathcal X^\vee_1 = \{B\in (\K^n)^{\oasterisk 2} : \rank(B) \leq 1\}
    \]
    where since $B$ is given in the coordinate space of symmetric matrices, we define $\rank(B) := \rank(\mathrm{matricize}(B))$ with the matricization operation defined in \eqref{eq:matricization}.
    If $B\in (\K^n)^{\oasterisk 2}$, $\mathrm{matricize}(B)[a,b] = \frac1{w_{\mset{a,b}}}B[\mset{a,b}]$ so we may write a minor directly as
    \[
        \frac1{w_{\mset{a,b}}}B[\mset{a,b}]\frac1{w_{\mset{c,d}}} B[\mset{c,d}] - \frac1{w_{\mset{a,d}}}B[\mset{a,d}]\frac1{w_{\mset{c,b}}}B[\mset{c,b}]
    \]
    for $a,b,c,d\in[n]$.
    Following the same process as in Example~\ref{ex:variety-rank-1 matrices}, we may write a vector of the form
    \begin{equation}
        \label{eq:symmetric-minors-as-vectors}
        \frac{\vec e_{\mset{\mset{a,b}, \mset{c,d}}}}{w_{\mset{\mset{a,b}, \mset{c,d}}}w_{\mset{a,b}} w_{\mset{c,d}}} - \frac{\vec e_{\mset{\mset{a,d}, \mset{c,b}}}}{w_{\mset{\mset{a,d}, \mset{c,b}}}w_{\mset{a,d}} w_{\mset{c,b}}} \in ((\K^n)^{\oasterisk 2})^{\oasterisk 2}.
    \end{equation}
    When this is not the zero vector (i.e.\ when $\mset{\mset{a,b}, \mset{c,d}} \neq \mset{\mset{a,d}, \mset{c,b}}$) the inner product against $B^{\oasterisk 2}$ computes a minor of $\mathrm{matricize}(B)$.
\end{example}

\begin{definition}[Variety-constrained linear system]
    We define a variety-constrained linear system (VC-LS) to be
    \[
    Au = b \quad \text{subject to }\quad  u \in \mathcal X
    \]
    where $A\in \K^{m\times n}$ is a coefficient matrix, $u\in \K^n$ is the solution vector, $b\in \K^m$ is another vector and $\mathcal X \subseteq\K^n$ is an irreducible conic variety.
\end{definition}
Notably these systems may have unique solutions even when $m < n$.
The condition $u\in \mathcal X$ could be considered a form of ``sparsity'' and as such, algorithms to solve VC-LSs can be considered a form of compressive sensing~\cite{bousse_linear_2018}.
Problems of this form include the \emph{canonical polyadic decomposition (CPD)-constrained linear systems} studied in~\cite{bousse_linear_2018}, where the authors give an algebraic algorithm when $\mathcal X$ is a variety of rank-1 tensors and the constrained system has a single unique solution.

In a different direction, recent work by Johnston, Lovitz, and Vijayaraghavan develops an algorithm~\cite[Algorithm~1]{johnston_computing_2023} that can be used to find multiple intersections between a linear subspace $\mathcal U$ and a conic variety $\mathcal X$, corresponding to a VC-LS with $b=0$.
We observe that the machinery of this intersection algorithm greatly clarifies the general principles in Bouss\'e et al.'s algebraic algorithm and provides a method for dealing with multiple solutions.
In Algorithm~\ref{alg:algorithm1-prime} below, we give a method for solving VC-LSs, synthesizing~\cite{bousse_linear_2018} and~\cite{johnston_computing_2023}. This can potentially handle multiple solutions, although there will only be a single solution in our use cases. We note that the algorithm is not always guaranteed to succeed, and this will be discussed later.

\begin{algorithm}[H]
\caption{
Solve a variety-constrained linear system. \\
\textbf{Input:} Matrix $A \in \K^{m\times n}$, vector $b\in\K^m$, and irreducible conic variety $\mathcal X \subseteq \K^n$. \\
\textbf{Output:} Solution(s) to variety-constrained linear system $Au = b$ subject to $u\in \mathcal X$, or ``Fail.''
}\label{alg:algorithm1-prime}
\begin{algorithmic}[1]
\Procedure{FindVarietyConstrainedSolutions}{$A, b, \mathcal X$}
\State Find the subspace $\mathcal U = \{ u\in \K^n : Au = \gamma b, \gamma \in \K\}$ of solutions to $Au=b$ up to scale.
\State \label{step:run-JLV} Run the subspace conic-variety intersection algorithm (\cite[Algorithm 1]{johnston_computing_2023}) to find the vectors \[
v_1,v_2,\dots, v_{s'} \in \mathcal U \cap \mathcal X
\] (or report `Fail' if the algorithm fails).
\ForAll{$v_k$} 
    \State $t \gets \langle b, Av_k\rangle/\|b\|^{2}_2$ 
    \If{$t\neq 0$} 
    \State $u_k \gets \frac1{t}v_k$
    \Else
    \State discard $u_k$
    \EndIf
\EndFor
\State \Return $\{u_1,u_2,\dots, u_s\}$
\EndProcedure
\end{algorithmic}
\end{algorithm}

Johnston, Lovitz, and Vijayaraghavan prove that their intersection algorithm succeeds on a certain class of generic inputs with planted solutions, provided the dimension of $\mathcal U$ is not too large.
In the following sections we will be constructing specialized VC-LS problems, which are not covered by the existing analysis. Towards analyzing these, we next formulate more explicit sufficient conditions for Algorithm~\ref{alg:algorithm1-prime} to recover a unique solution to a VC-LS.
We ask the following:
\begin{quote}
    Given a particular $A\in\K^{m\times n}$ and $u\in \mathcal X \subseteq \K^m$, and defining $b := Au$, what are explicit conditions on $A$, $u$, $\mathcal X$ to guarantee that Algorithm~\ref{alg:algorithm1-prime} recovers $u$?
\end{quote}

In this context, we will call $u\in\mathcal X$ the \emph{planted solution} to the VC-LS.
First, observe that if $b=0$ then we are finding $\ker A \cap \mathcal X$ so the intersection algorithm of JLV may be used without modification and solutions are only unique up to scaling.
We will then only consider the case where $u\notin \ker A$ so that $b\neq 0$ and it is possible for a solution to be unique, not just unique up to scale.
We will also restrict ourselves here to varieties $\mathcal X \subseteq \K^n$ cut out by degree-2 homogeneous polynomials and no linear functions so we may consider $\Span\{p_1,p_2,\dots,p_k\} \subseteq (\K^n)^{\oasterisk 2}$.
Varieties of this type include the variety of rank-at-most-1 tensors of any shape and include the the variety of symmetric matrices of rank-at-most-1 which is our primary use case.
We now define a construction which may be used to certify unique recovery.
To do this, we first extend our definition for the symmetric product from matrices and vectors to subspaces.
\begin{definition}
    For subspaces $\mathcal A, \mathcal B \subseteq \K^n$, define
    \[
        \mathcal A^{\oasterisk k} = \Span\{ u^{\oasterisk k} : u\in \mathcal A\} \subseteq (\K^n)^{\oasterisk k}
    \]
    and 
    \[
        \mathcal A \oasterisk \mathcal B = \Span\{ u\oasterisk v : u\in \mathcal A, v\in\mathcal B\} \subseteq (\K^n)^{\oasterisk 2}.
    \]
\end{definition}

Next, we generalize a construction used in prior work by the authors with Dastidar \cite{dastidar_improving_2025} to improve a bound the maximum dimension of a subspace in which one can find planted rank-1 matrices.

\begin{definition}
    \label{def:equiv-M-matrix-set}
    Define a set of matrices denoted by $\mathcal M(A,u, \mathcal X)$ associated to the variety-constrained linear system with planted solution $u$ where
    \[
        M \in \mathcal M(A,u, \mathcal X)
    \]
    if we may write $M = \Phi U$ for $\Phi$ a matrix whose rows form a basis for $\Span\{p_1,p_2,\dots,p_k\} \subseteq (\K^n)^{\oasterisk 2}$ where $p_1,p_2,\dots,p_k$ cut out the irreducible conic variety $\mathcal X$, and $U$ a matrix whose columns form a basis for $(\Span(u) \oasterisk \ker A) + (\ker A)^{\oasterisk 2} \subseteq (\K^n)^{\oasterisk 2}$.
\end{definition}

Note that each matrix in $\mathcal M(A,u, \mathcal X)$ is related by \emph{matrix equivalence} in that if $M,M' \in \mathcal M(A,u, \mathcal X)$ then there exist invertible matrices $P$ and $Q$ such that $P^{-1}M'Q = M$.

\begin{proposition}
    \label{prop:full-rank-unique-recovery}
    If $M \in \mathcal M(A,u, \mathcal X)$ has full column-rank then Algorithm~\ref{alg:algorithm1-prime} recovers the planted vector $u$ given inputs $A,b,\mathcal X$, where $b = Au$.
\end{proposition}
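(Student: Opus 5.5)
The plan is to reduce the statement to the claim that $\mathcal U\cap\mathcal X=\Span(u)$, and then check that the JLV step and the rescaling loop of Algorithm~\ref{alg:algorithm1-prime} return exactly $u$.

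First, since $b=Au\neq 0$, the subspace computed in step 2 is $\mathcal U=\Span(u)\oplus\ker A$. Any $v\in\mathcal U$ can be written as $v=\alpha u+w$ with $w\in\ker A$, and then
\[
v^{\oasterisk 2}=\alpha^2 u^{\oasterisk 2}+2\alpha\, u\oasterisk w+w^{\oasterisk 2}.
\]
The last two terms lie in $\mathcal W:=(\Span(u)\oasterisk\ker A)+(\ker A)^{\oasterisk 2}$. Moreover $\Phi u^{\oasterisk 2}=0$, because $u\in\mathcal X$ and the rows of $\Phi$ are the $p_j$, with $p_j(u)=\langle p_j,u^{\oasterisk 2}\rangle$. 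If $v\in\mathcal X$, then $\Phi v^{\oasterisk 2}=0$ as well, so $\Phi\bigl(2\alpha\, u\oasterisk w+w^{\oasterisk 2}\bigr)=0$. Write this element as $Uc$. Full column-rank of $M=\Phi U$ gives $c=0$, hence $2\alpha\,u\oasterisk w+w^{\oasterisk 2}=0$ as a symmetric tensor.

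Next I show $w=0$. By Corollary~\ref{cor:symmetric-product-inner-product-expansion}, the symmetric product corresponds to the symmetric matrix $\tfrac12(uw^\top+wu^\top)$, up to the fixed coordinate weights. The identity above therefore says $\alpha(uw^\top+wu^\top)+ww^\top=0$. If $w\neq 0$, the matrix $ww^\top$ has rank one with column space $\Span(w)$, while the left-hand part has column space in $\Span(u,w)$. Since $u\notin\ker A$, $u$ and $w$ are linearly independent, so applying the matrix to a vector $z$ with $\langle z,w\rangle=0\neq\langle z,u\rangle$ gives $\alpha\langle z,u\rangle w=0$. This forces $\alpha=0$, and then $ww^\top=0$, so $w=0$, a contradiction. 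Hence $w=0$ and $\mathcal U\cap\mathcal X=\Span(u)$.

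It remains to verify the algorithm. The same full-rank condition is what certifies success of \cite[Algorithm~1]{johnston_computing_2023}. Their method computes the degree-2 lift: the space $\mathcal U^{\oasterisk 2}\cap\ker\Phi$, which by the argument above equals $\Span(u^{\oasterisk 2})$. Indeed $\mathcal U^{\oasterisk 2}=\Span(u^{\oasterisk 2})\oplus\mathcal W$, and $\Phi$ is injective on $\mathcal W$ modulo $u^{\oasterisk 2}$; a general element $\beta u^{\oasterisk 2}+Uc$ is killed by $\Phi$ only if $Mc=0$, i.e.\ $c=0$. The JLV procedure then reads off $v_1=\lambda u$ from a one-dimensional such space, which is a rank-one symmetric tensor. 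The main obstacle is matching the hypotheses of their Algorithm~1 exactly, namely that its success criterion in the one-solution case is precisely that $\dim(\mathcal U^{\oasterisk 2}\cap\ker\Phi)$ equals the number of intersection points. I would cite that lemma and verify the directness of $\Span(u^{\oasterisk 2})+\mathcal W$ (true since $\mathcal W\subseteq$ tensors with a factor in $\ker A$ after pairing with a functional vanishing on $\ker A$ but not on $u$).

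Finally, for the returned $v_1=\lambda u$ with $\lambda\neq 0$, we get $t=\langle b,\lambda b\rangle/\|b\|_2^2=\lambda$, which is nonzero. Hence $u_1=v_1/t=u$, as claimed.
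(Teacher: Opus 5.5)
Your proof is correct and follows essentially the same route as the paper. Its decisive step writes a general element of $\mathcal U^{\oasterisk 2}$ as $\beta u^{\oasterisk 2}+Uc$, uses $\Phi u^{\oasterisk 2}=0$ to reduce $\Phi(\beta u^{\oasterisk 2}+Uc)=0$ to $Mc=0$, and concludes $\mathcal U^{\oasterisk 2}\cap\ker\Phi=\Span(u^{\oasterisk 2})$; this is the paper's argument (stated there by contraposition), followed by the same degenerate rank-one simultaneous-diagonalization step of JLV and the same rescaling by $t$. Your preliminary derivation of $\mathcal U\cap\mathcal X=\Span(u)$ is valid but redundant, since it follows at once from the lifted identity.
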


Before we prove this, observe that for a representative $M$ to have full column-rank, it must have at least as many rows as columns.
This gives a necessary condition that may be used to determine when the kernel of $A$ is too large. 
If $L=\dim(\ker A)$ and $k=\dim(\Span\{p_1,\dots, p_k\})$, then $M$ has $k$ rows and  $\mchoose{L}{2} - 1$ columns.
It is then necessary that
\begin{equation}
    \label{eq:M-matrix-dimension-bound}
    L \leq \sqrt{2(k+1) + \frac14} - \frac12,
\end{equation}
or else $M$ has fewer rows than columns and necessarily has a non-trivial nullspace.
The bound we give in~\eqref{eq:M-matrix-dimension-bound} is a tightened version of the necessary condition given in~\cite[Section 3.1.2]{bousse_linear_2018} which is based off counting the number of entries in second-compound matrices (a matrix containing all $2\times 2$ minors).
The use of second-compound matrices is a convenient tool but will produce linearly dependent rows in certain cases.
See Section~\ref{sec:analysis-of-aided-recovery} for more discussion.

In general, the bound~\eqref{eq:M-matrix-dimension-bound} indicates that the more degree-2 homogeneous polynomials that cut out $\mathcal X$, the larger the kernel of $A$ may be while allowing for the possibility of unique recovery.
Now we give a proof of the proposition.

\begin{proof}[Proof of Proposition~\ref{prop:full-rank-unique-recovery}]
    First, the bulk of our work will be showing that the intersection algorithm in Step~\ref{step:run-JLV} will return a single vector $v_1$ in the direction of $u$.
    Then we show that our rescaling produces a vector $u_1$ such that $Au_1=b$ ensuring that $u_1 = u$.
    
    We will show by way of contraposition that if $M\in\mathcal M(A,u,\mathcal X)$ has full column-rank then for the subspace $\mathcal U = \{ u\in \K^n : Au = \gamma b, \gamma \in \K\}$,
    \begin{equation*}
        \Span\{p_1,p_2,\dots,p_k\}^\perp \cap \U^{\oasterisk 2} = \Span(u^{\oasterisk 2}).
    \end{equation*}

    Suppose that $\Span\{p_1,p_2,\dots,p_k\}^\perp \cap \U^{\oasterisk 2} \neq \Span(u^{\oasterisk 2})$.
    Now since $u$ was planted, $u\in \mathcal X$ and $u \in \U$, $\Span\{p_1,p_2,\dots,p_k\}^\perp \cap \U^{\oasterisk 2} \supseteq \Span(u^{\oasterisk 2})$
    so it must be that $\Span\{p_1,p_2,\dots,p_k\}^\perp \cap \U^{\oasterisk 2} \supsetneq \Span(u^{\oasterisk 2})$.
    Suppose then there is some $w\in (\K^n)^{\oasterisk 2}$ such that $w\in \Span\{p_1,p_2,\dots,p_k\}^\perp$, $w\in \U^{\oasterisk 2}$ and $w\notin \Span(u^{\oasterisk 2})$.
    First,  if $\{n_1,\dots, n_L\}$ is a basis for $\ker A$, then $\{u, n_1,\dots, n_L\}$ is a basis for $\mathcal U$.
    
    Then $\{u^{\oasterisk 2}\} \cup \{u\oasterisk n_j: j\in[L]\} \cup \{n_j\oasterisk n_k: 1\leq j\leq k \leq L\}$ is a basis for $\mathcal U^{\oasterisk 2}$ and we may write 
    \[
    w = c_{\{0,0\}} u^{\oasterisk 2} + \sum_{j=1}^L c_{\{0,j\}} (u\oasterisk n_j) + \sum_{1\leq j\leq k \leq L} c_{\{j,k\}} (n_j\oasterisk n_k)
    \]
    for coefficients $c_{\{j,k\}}\in\K$, $0\leq j\leq k\leq L$.
    Now it must be that $c_{\{j,k\}} \neq 0$ for some $\{j,k\}\neq \{0,0\}$, otherwise $w = c_{\{0,0\}} (u^{\oasterisk 2})^{\oasterisk 2} \in \Span((u^{\oasterisk 2})^{\oasterisk2})$.
    Next, observe that if $\Phi$ is a matrix with rows a basis for $\Span\{p_1,p_2,\dots,p_k\}$ then $\Phi w=\vec 0$.
    Since $\Phi u^{\oasterisk 2} = \vec 0$ we have
    \begin{align*}
         \vec 0 = \Phi w &= c_{\{0,0\}} \Phi u^{\oasterisk 2} + \Phi\left(\sum_{j=1}^L c_{\{0,j\}} (u\oasterisk n_j) + \sum_{1\leq j\leq k \leq L} c_{\{j,k\}}(n_j\oasterisk n_k) \right) \\
        &= \Phi\left(\sum_{j=1}^L c_{\{0,j\}} (u\oasterisk n_j) + \sum_{1\leq j\leq k \leq L} c_{\{j,k\}} (n_j\oasterisk n_k)\right).
    \end{align*}
    The set $\{u\oasterisk n_j: j\in[L]\} \cup \{n_j\oasterisk n_k: 1\leq j\leq k \leq L\}$ is a basis for the subspace $(\Span(u)\oasterisk \ker A) \cup (\ker A)^{\oasterisk 2}$.
    Collecting these vectors as columns of a matrix $N$ and collecting coefficients into a column vector $c = \left(c_{\{j,k\}}:  0\leq j\leq k\leq L, \{j,k\}\neq \{0,0\}\right)$, we have 
    \[
    \vec 0 = \Phi Nc.
    \]
    Observe that $\Phi N \in \mathcal M(A,u,\mathcal X)$ and by the previous observation that at least one $c_{\{j,k\}} \neq 0$, the vector $c \neq \vec 0$ so $\Phi N$ does not have full column-rank.
    Since column-rank is preserved by matrix equivalence, every matrix $M\in \mathcal M(A,u,\mathcal X)$ lacks full column-rank.

    In its second step, the intersection algorithm of JLV runs simultaneous diagonalization on a basis for the subspace $\Span\{p_1,p_2,\dots,p_k\}^\perp \cap \U^{\oasterisk 2}$ which we have shown is equal to $\Span(u^{\oasterisk 2})$ under the assumption $M$ has full column-rank.
    However, reshaping any $w\in\Span(u^{\oasterisk 2})$ to an $n\times n\times 1$ tensor and
    running simultaneous diagonalization in this special case degenerates to the rank-1 factorization of the $n\times n$ matrix which has the form $\alpha uu^\top$ for some $\alpha\in\K$.
    It follows that Step~\ref{step:run-JLV} recovers a vector in the direction of $u$.

    So the intersection algorithm of JLV in step~\ref{step:run-JLV} will output a vector $v_1$ that is a nonzero scalar multiple of $u$, i.e.\ $v_1 = \gamma u$ for some $\gamma\in\K$.
    Finally, we can check that $u_1=v_1/t$ for $t=\langle b, Au_1\rangle /\|b\|_2^2$ obeys $Av_1=b$:
    \[
        Av_1 = \frac1 tAu_1 = \frac{\|b\|_2^2}{\langle b, Au_1\rangle}Au_1 = \frac{\|b\|_2^2}{\langle b, \gamma Au\rangle}\gamma Au = \frac{\|b\|_2^2}{\langle b, b\rangle}b = b.
    \]
    It follows that $u_1 = u$, meaning Algorithm~\ref{alg:algorithm1-prime} outputs the correct planted solution.
\end{proof}

\subsection{Aided Recovery}
\label{subsec:aided-recovery}
In this section, we will show how one may set up a variety-constrained linear system using either dMRA or pMRA invariant tensors such that the symmetric lift of the odd Fourier coefficients $(\hat x[E^c])^{\oasterisk 2}$ is the planted solution (or more precisely these coefficients after some group action).
First, we note that second order invariants are insufficient to recover the signal in the dMRA model and hence the pMRA model as well.
\begin{proposition}[\cite{bendory_dihedral_2022, edidin_reflection-invariant_2026}]
    \label{prop:power-spectrum-dMRA}
    For $x\in \Rl^d$, the second order invariant $T^{(2)}_\dMRA(\hat x)$ only determines $|\hat x[j]|$ for $j\in [d)$ (the power spectrum of $x$).
\end{proposition}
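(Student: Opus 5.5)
We would prove this directly from the entrywise formula in Proposition~\ref{prop:invariant-tensor-structure} with $r=2$, working with a length-$d$ signal (the formula is stated for length $2d$, but nothing there depends on the length being even). There are two things to show. First, every entry of $T^{(2)}_\dMRA(\hat x)$ is a function of the power spectrum. Second, the power spectrum is itself recoverable from the tensor. Together these say the tensor carries exactly the information in $(|\hat x[j]|)_{j\in[d)}$ and nothing more.

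For an index multiset $\mset{j_1,j_2}$, the indicator in the formula forces $j_1+j_2\equiv 0 \pmod d$, so $j_2=-j_1$. The entry is therefore
\[
\frac{w_{\mset{j,-j}}}{2}\left(\hat x[j]\hat x[-j]+\hat x[-j]\hat x[j]\right)=w_{\mset{j,-j}}\,\hat x[j]\hat x[-j].
\]
Since $x$ is real, $\hat x[-j]=\overline{\hat x[j]}$, and the entry equals $w_{\mset{j,-j}}|\hat x[j]|^2$. All other entries vanish identically. This gives the second point: each $|\hat x[j]|$ is read off by dividing the $\mset{j,-j}$ entry by the known weight and taking a square root. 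It also gives the first point: the tensor is a function of the power spectrum alone.

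To make ``only determines'' concrete, we would also note the consequence. Any real $y$ with $|\hat y[j]|=|\hat x[j]|$ for all $j$ has $T^{(2)}_\dMRA(\hat y)=T^{(2)}_\dMRA(\hat x)$. For $d$ large enough, we can choose the phases of such a $y$, subject only to the conjugate-symmetry constraint, so that $y$ lies outside the $\dihgroup{d}$-orbit of $x$. One way is to perturb a single phase pair $\hat y[j],\hat y[-j]$ by a generic angle. The dihedral action only multiplies phases by $e^{-2\pi i\ell j/d}$ and possibly conjugates them, so it cannot produce an arbitrary independent phase change in one coordinate when $d\ge 3$.

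The computation itself is routine. The only points needing care are the index conventions (the shift operator in Proposition~\ref{prop:invariant-tensor-structure} is written for length $2d$) and the real-signal symmetry that turns $\hat x[j]\hat x[-j]$ into a modulus squared.
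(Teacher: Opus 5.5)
Your proposal is correct and follows the same route as the paper, which simply says the claim follows from the $r=2$ case of Proposition~\ref{prop:invariant-tensor-structure}: the indicator forces $j_2\equiv -j_1$, so every nonzero entry equals $w_{\mset{j,-j}}\,\hat x[j]\hat x[-j]=w_{\mset{j,-j}}|\hat x[j]|^2$. Your extra non-identifiability paragraph is not needed for the statement but is fine; its cleanest justification is that a $\dihgroup{d}$-orbit is finite, while the phase of any nonzero $\hat x[j]$ with $j\notin\{0,d/2\}$ can be varied continuously without changing $T^{(2)}_{\dMRA}$.
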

Proof of this follows almost directly from Proposition~\ref{prop:invariant-tensor-structure}.
We show an analogous result holds for the pMRA model.
\begin{proposition}
    \label{prop:power-spectrum-pMRA}
    For $x\in \Rl^{2d}$, the second order invariant $T^{(2)}_\pMRA(\hat x)$ only determines $|\hat x[j]|$ for $j\in [2d)\setminus\{d\}$ (the power spectrum of $x$ minus the Nyquist component).
\end{proposition}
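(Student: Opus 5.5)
The proof is a direct computation of every entry of $T^{(2)}_\pMRA(\hat x)$ from the formula in Proposition~\ref{prop:invariant-tensor-structure} with $r=2$. We then read off which monomials in $\hat x$ survive. Fix a multiset $\mset{j_1,j_2}$ of $[d)$ and a sign vector $s=(s_1,s_2)\in\{-1,1\}^2$. The indicator forces $s_1j_1+s_2j_2\equiv 0 \pmod{2d}$. Since $0\le j_1,j_2<d$, we have $|s_1j_1+s_2j_2|<2d$, so the condition is exactly $s_1j_1=-s_2j_2$ as integers. This gives two cases.

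\emph{Case $s_1=-s_2$.} Then $j_1=j_2=j$. The two monomials are $\hat x[s_1j]\hat x[-s_1j]$ and its reflected copy. Because $x$ is real, $\hat x[-j]=\overline{\hat x[j]}$, so each monomial equals $|\hat x[j]|^2$.

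\emph{Case $s_1=s_2$.} Then $j_1=-j_2$ as integers, which forces $j_1=j_2=0$ and gives $\hat x[0]^2$. This also equals $|\hat x[0]|^2$, since $\hat x[0]$ is real.

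So every nonzero entry is $T^{(2)}_\pMRA(\hat x)[\mset{j_1,j_2}]$ with $j_1=j_2=j$. Summing over the admissible signs, it equals $c_j\bigl(|\hat x[j]|^2+\eta_j|\hat x[2d-j]|^2\bigr)$ for explicit constants $c_j$ and phases $\eta_j$ coming from $\phi$ and $w$. The second term appears because the projection $\fprojOp$ mixes indices $j$ and $-j\equiv 2d-j$. I will compute the phases explicitly. For $s=(1,-1)$ and $s=(-1,1)$ the sum $\sum s_kj_k$ is $0$, so the sign factor is $1$, and $\exp(\pi i\cdot 2j/(2d))=e^{\pi i j/d}$. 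Both sign choices have $\sum s_kj_k=0$. So I expect each diagonal entry to be, up to a common phase, $|\hat x[j]|^2+|\hat x[-j]|^2 = 2|\hat x[j]|^2$. Equivalently, I would expand $(\fprojOp\fshiftel_\ell\hat x)[j]\,(\fprojOp\fshiftel_\ell\hat x)[j]$ directly and average over $\ell$, as a cross-check.

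The conclusion therefore has two parts. First, the diagonal entries $j\in[d)$ determine $|\hat x[j]|^2$, hence $|\hat x[j]|$. Conjugate symmetry $|\hat x[2d-j]|=|\hat x[j]|$ then gives all $j\in[2d)\setminus\{d\}$. Second, all off-diagonal entries vanish and the Nyquist coefficient $\hat x[d]$ never appears. So $T^{(2)}_\pMRA(\hat x)$ is a function of the power spectrum alone. Any phase modification $\hat x[j]\mapsto e^{i\theta_j}\hat x[j]$ that keeps conjugate symmetry, and any change of $\hat x[d]$, leaves it unchanged. This is the ``only'' part of the statement. An alternative shortcut for the ``only'' part is Lemma~\ref{lemma:pMRA-invariants-from-dMRA-invariants} with $r=2$ combined with Proposition~\ref{prop:power-spectrum-dMRA}. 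These show $T^{(2)}_\pMRA$ is a linear image of $T^{(2)}_\dMRA$, which depends only on the power spectrum.

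The main obstacle is the bookkeeping of the phase $\phi$ and the weights $w_{\mset{j,j}}$ in the diagonal entries. A cancellation there would make $|\hat x[j]|$ unrecoverable. I will check that for $j\neq 0$ the two contributing sign patterns add with equal phase rather than cancel. The entry $j=0$ is handled separately.
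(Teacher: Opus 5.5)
Your proposal is correct and follows essentially the same route as the paper: specialize Proposition~\ref{prop:invariant-tensor-structure} to $r=2$, use $0\le j_1,j_2<d$ to see that the indicator fires only on the diagonal (with both sign types firing only at $j_1=j_2=0$), and read off $|\hat x[j]|^2$ from diagonal entries that carry the unit-modulus phase $e^{\pi i j/d}$ and never involve $\hat x[d]$. Your bookkeeping has the sign patterns $(1,-1)$ and $(-1,1)$ contributing equally, which gives the diagonal entry as $2e^{\pi i j/d}|\hat x[j]|^2$ for $0<j<d$; this is the right constant (the paper's display omits the factor of $2$), though the constant does not affect the conclusion.
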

\noindent See Section~\ref{subsec:invariant-tensor-structure} for the proof, which follows almost directly from Proposition~\ref{prop:invariant-tensor-structure}.

Since in neither model can we recover the phases of the Fourier coefficients from the second order invariants, we can examine third order invariants.
Consider for $T=T^{(3)}_{\dMRA}(\hat x)$ or $T=T^{(3)}_{\dMRA}(\hat x)$ the following four (partially) symmetric subtensors, recalling that $E$ denotes the even Fourier coefficients:
\begin{equation*}
    \begin{array}{cccc}
       T[E,E,E], & T[E^c,E,E], & T[E^c,E^c,E], & \text{and }T[E^c,E^c,E^c].
    \end{array}
\end{equation*}
Since $T$ is symmetric and $E$ and $E^c$ are a partition of $[2d)$, these subtensors represent a disjoint partition of the invariant polynomials that are entries of $T$.
The first subtensor is the subject of Proposition~\ref{prop:recursive-invariants} where we give the conditions under which it is itself an invariant subtensor of the same type as $T$.
One might hope that the fourth subtensor might yield some useful information about $\hat x[E^c]$ in a similar manner but this is not the case.
Instead, $T[E^c,E^c,E]$ contains the remaining non-zero invariant polynomials.

\begin{proposition}
    \label{prop:zero-subtensors}
    For $x\in\Rl^{2d}$ and $T=T^{(3)}_{\dMRA}(\hat x)$ or $T=T^{(3)}_{\pMRA}(\hat x)$, 
    \[
    T[E^c,E,E]=0\qquad \text{and} \qquad T[E^c,E^c,E^c]=0.
    \]
\end{proposition}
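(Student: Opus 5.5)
The plan is to read off the entries directly from the explicit formulas of Proposition~\ref{prop:invariant-tensor-structure} and argue by parity that every monomial in them is killed by its indicator function. For $T^{(3)}_{\dMRA}(\hat x)$ with $\hat x\in\Cx^{2d}$, the entry indexed by $\mset{j_1,j_2,j_3}$ is a multiple of $\bb1\{j_1+j_2+j_3\equiv 0 \pmod{2d}\}$. Since $2d$ is even, this congruence forces $j_1+j_2+j_3$ to be even. In $T[E^c,E,E]$ exactly one index is odd, and in $T[E^c,E^c,E^c]$ all three indices are odd. In both cases the sum of the three indices is odd, so the indicator vanishes and the entry is $0$. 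Here we use that reduction modulo $2d$ preserves parity, so parity of indices in $[2d)$ is well defined.

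For $T^{(3)}_{\pMRA}(\hat x)$ I will use the second formula of Proposition~\ref{prop:invariant-tensor-structure}. Each entry indexed by $\mset{j_1,j_2,j_3}$, with $j_k\in[d)$, is a sum over sign vectors $s\in\{-1,1\}^3$. Each summand carries the factor $\bb1\{s_1j_1+s_2j_2+s_3j_3\equiv 0\pmod{2d}\}$. Flipping signs does not change parity, so $s_1j_1+s_2j_2+s_3j_3\equiv j_1+j_2+j_3 \pmod 2$. When an odd number of the $j_k$ are odd, which covers both subtensors in question, every summand's indicator is zero. The phase $\phi$ and the bracketed monomials are finite, so the whole entry vanishes.

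One bookkeeping point needs checking. In the pMRA case the restriction sets $E$ and $E^c$ should be read inside the index set $[d)$ of the projected tensor rather than $[2d)$. The argument only uses parity of integer representatives, so it is unaffected; I would remark on this rather than treat it as an obstacle.

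The main point of care is making sure the indicator in the pMRA formula genuinely multiplies both monomials in each summand. The explicit form in Proposition~\ref{prop:invariant-tensor-structure} shows that it does. If one instead worked from definitions, the same conclusion follows from invariance under the shift $\fshiftel_{d,2d}$: it multiplies coordinate $j$ by $(-1)^j$, so an odd-degree-in-odd-indices monomial changes sign and must average to zero.
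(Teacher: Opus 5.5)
Your proof is correct and is essentially the paper's argument. The paper likewise reads off the indicator conditions $j_1+j_2+j_3\equiv 0 \pmod{2d}$ and $s_1j_1+s_2j_2+s_3j_3\equiv 0 \pmod{2d}$ from Proposition~\ref{prop:invariant-tensor-structure}, and notes that an odd number of odd indices makes every signed sum odd, so both indicators vanish. (Your closing remark via the half-shift $\widehat{R}_{d,2d}$, which scales coordinate $j$ by $(-1)^j$ both before and after applying $\widehat{\Pi}_{2d}$, is a valid alternative that avoids the explicit entry formulas, but it is not needed.)
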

\begin{proof}
    This follows relatively directly from Proposition~\ref{prop:invariant-tensor-structure} as well.
    Observe that entries in $T[E,E,E^c]$ are indexed by two even indices and one odd, so any sum or difference of the three will be odd. 
    Similarly, $T[E^c,E^c,E^c]$ is indexed by three odd indices, so any sum or difference of the three will be odd.
    Consider the indicator variables in the formulas found in Proposition~\ref{prop:invariant-tensor-structure} for $r=3$, namely
    \[
        \bb1\left\{ j_1 + j_2 +j_3 \equiv 0 \hspace{-2mm} \pmod{2d}\right\}\qquad \text{and} \qquad \bb1\left\{ s_1j_1 + s_2j_2 +s_3j_3 \equiv 0 \hspace{-2mm} \pmod{2d}\right\}.
    \]
    By the parity argument, the first can never be satisfied, and similarly the second cannot since the sign variables $s_1,s_2,s_3\in\{-1,1\}$ do not alter parity.
    Hence, these subtensors must be identically zero.
\end{proof}
This raises the question: 
\begin{quote}
    If we can recover $y\in\Cx^d$ in the $\dihgroup{d}$-orbit of $\hat x[E]$ from the subtensor $T[E,E,E]$, is the remaining non-zero subtensor $T[E^c,E^c, E]$ enough to extend $y$ to a vector $x'$ in the $\dihgroup{2d}$-orbit of $\hat x$?
\end{quote}

We will see the answer is yes. Observe that by similar parity arguments to those used in Proposition~\ref{prop:zero-subtensors}, the entries of $T[E^c, E^c, E]$ are bihomogeneous polynomials, being degree-2 in the entries of $\hat x[E^c]$ and degree-1 in the entries of $\hat x[E]$.
Following this observation, we may set up a VC-LS problem with a planted solution providing such an extension.

\paragraph{Overarching Strategy.} Suppose that $x\in \K^m$ is some signal, $G$ is a finite group with some action on $\K^m$, $h:\K^m \to \K^n$ is a linear transformation, $R\subseteq [m]$, $S \subseteq[n]$ are sets of indices, $\tilde G$ is a finite group with some action on $\K^{|R|}$, $\tilde h: \K^{|R|} \to \K^{|S|}$ is also a linear transformation such that the conditions of Lemma~\ref{lemma:invariant-subtensors} hold and we may extract an invariant subtensor
\[
    \left(T_{G,h}^{(3)}(x)\right)[S,S,S] = \frac{k|\tilde G|}{|G|} T_{\tilde G, \tilde h}^{(3)}(x[R]).
\]
Next, define
\[
\mathcal X_1^\vee = \{x^{\oasterisk 2} : x\in \K^{|R^c|}\} \subseteq \left(\K^{|R^c|}\right)^{\oasterisk 2}
\]
to be the variety of symmetric $|R^c| \times |R^c|$ rank-1 matrices.

The following strategy may be applied recursively to the orbit recovery problem:
\begin{enumerate}
    \item Find a representative $y$ of the orbit of $x[R]$ under $\tilde G$ by solving the orbit recovery problem associated with $T_{\tilde G, \tilde h}^{(3)}(x)$.
    \item \label{item:partial-operator-condition} 
    Define \[
    b:= \left(T_{G,h}^{(3)}(x)\right)[S^c, S^c, S]\in \K^{\mchoose{|S^c|}{2}|S|}
    \] to be the vectorization of a particular subtensor of $T_{G,h}^{(3)}(x)$.
    Next, construct a linear operator using $y$ so that $A:=A(y)\in \K^{\mchoose{|R^c|}{2}\times \mchoose{|S^c|}{2}|S|}$ and for some $x'$ in the $G$-orbit of $x$ with component $z=x'[R^c]$,
    \begin{equation}
        \label{eq:step2-variety-contrained-system}
    A (z^{\oasterisk 2}) =  b.
    \end{equation}
    \item \label{item:unique-solution-condition} Solve the variety-constrained linear system constructed in \eqref{eq:step2-variety-contrained-system}, given generally as
    \begin{equation}
        \label{eq:step3-abstraction}
        Au= b \qquad \text{subject to} \qquad u\in \mathcal X^\vee_1,
    \end{equation}
    using Algorithm~\ref{alg:algorithm1-prime}.
\end{enumerate}
We call this procedure \emph{aided recovery} since a specific representative $y$ is used to construct $A$ and aids in the recovery of the rest of the signal.
This ensures that only particular representatives of $x$'s orbit satisfy \eqref{eq:step3-abstraction}.
The subsequent portions of this section show how to construct a linear operator as suggested in Step~\ref{item:partial-operator-condition} for the tensors $ T^{(3)}_\dMRA(\hat{x})$ and $T^{(3)}_\pMRA(\hat{x})$ with a planted solution as given in~$\eqref{eq:step2-variety-contrained-system}$.

\paragraph{Aided Recovery for Dihedral MRA.}

\begin{definition}
    \label{def:dMRA-recursive-linear-system}
    For $y \in \Cx^d$, define the linear operator $A_\dMRA(y):(\Cx^{d})^{\oasterisk 2} \to (\Cx^{d})^{\oasterisk 2} \otimes \Cx^d$ column-wise by
    \begin{equation}
        \label{eq:columns-A-dMRA}
        A_\dMRA[:,\mset{j,k}] = \frac{\sqrt{3}}{2} y[-(j+k+1)] ( \vec{e}_j\oasterisk \vec{e}_k \otimes \vec{e}_{-j-k-1}  +  \vec{e}_{-j-1}\oasterisk \vec{e}_{-k-1} \otimes \vec{e}_{j+k+1})
    \end{equation}
    for $\mset{j,k}$ a multiset of $[d)$. \\
    For $\hat{x}\in\Cx^{2d}$, define a vector $b_\dMRA(\hat x) := b_\dMRA\in (\Cx^{d})^{\oasterisk 2} \otimes \Cx^d$ given entry-wise by
    \begin{equation}
        \label{eq:define-b-dMRA}
        b_\dMRA[(\mset{j,k},\mset{m})] = \left(T_\dMRA^{(3)}(\hat x)\right)[\mset{2j+1,2k+1,2m}]
    \end{equation}
    where $0\leq j\leq k< d$ and $0\leq m<d$.
\end{definition}
The vector $b_\dMRA$ is a particular vectorization of $\left(T_\dMRA^{(3)}(\hat x)\right)[E^c, E^c, E]$, which is a partially symmetric tensor with symmetry along the first two modes.
\begin{proposition}
    \label{prop:dMRA-variety-constrained-linear-system}
    Let $x\in \Rl^{2d}$ with Fourier transform $\hat x \in \Cx^{2d}$, and let $y\in \Cx^d$ belong to the $\dihgroup{d}$-orbit of $\hat x[E]$.
    For $A=A_\dMRA(y)$ and $b=b_\dMRA(\hat x)$, there is some $g\in \dihgroup{2d}$ such that $(g\cdot \hat{x})|_{E} = y$ and the vector $z = (g\cdot \hat x)|_{E^c}$ satisfies $A z^{\oasterisk 2} = b$.
\end{proposition}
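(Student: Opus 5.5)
The proof has two parts. First we produce the group element $g$. Then we verify the identity $Az^{\oasterisk 2}=b$ entrywise, using the explicit formula of Proposition~\ref{prop:invariant-tensor-structure}.

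\textbf{Step 1: choosing $g$.} Since $y$ lies in the $\dihgroup{d}$-orbit of $\hat x[E]$, write $y=\tilde g\cdot \hat x|_E$ with $\tilde g\in\dihgroup{d}$, in the Fourier representation by $\fshiftel_{m,d}$ and $\freflel_d$. The proof of Proposition~\ref{prop:recursive-invariants}(1) shows the following. If $\tilde g=\fshiftel_{m,d}$ (resp.\ $\freflel_d\fshiftel_{m,d}$), then $g=\fshiftel_{m,2d}$ (resp.\ $\freflel_{2d}\fshiftel_{m,2d}$) satisfies $(g\cdot\hat x)|_E=\tilde g\cdot(\hat x|_E)=y$. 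This gives the required $g$.

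Set $\hat x' := g\cdot\hat x$. Then $T^{(3)}_\dMRA(\hat x')=T^{(3)}_\dMRA(\hat x)$, because \eqref{eq:dMRA-invariant-tensor} is an average over the whole group $\dihgroup{2d}$, which absorbs $g$ upon reindexing. Hence $b_\dMRA(\hat x)=b_\dMRA(\hat x')$, and it suffices to prove $Az^{\oasterisk 2}=b$ with $\hat x$ replaced by $\hat x'$, where $\hat x'[2m]=y[m]$ and $\hat x'[2j+1]=z[j]$.

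\textbf{Step 2: computing $b$.} By Proposition~\ref{prop:invariant-tensor-structure} with $r=3$,
\[
b[(\mset{j,k},\mset{m})]=\tfrac{w_{\mset{2j+1,2k+1,2m}}}{2}\Bigl(\hat x'[2j+1]\hat x'[2k+1]\hat x'[2m]+\hat x'[-2j-1]\hat x'[-2k-1]\hat x'[-2m]\Bigr)\bb1\{2(j+k+1+m)\equiv 0 \pmod{2d}\}.
\]
The indicator is equivalent to $m\equiv -(j+k+1)\pmod d$. Now rewrite the coordinates in terms of $y$ and $z$:
\begin{itemize}
\item $\hat x'[-2m]=y[-m]$.
\item $\hat x'[-(2j+1)]=\hat x'[2(-j-1)+1]=z[-j-1]$.
\end{itemize}
So the nonzero entries are
\[
\tfrac{w}{2}\bigl(z[j]z[k]\,y[-(j+k+1)]+z[-j-1]z[-k-1]\,y[j+k+1]\bigr).
\]
For the weight, note that $2m$ is even while $2j+1,2k+1$ are odd, so the multiplicities give $w_{\mset{2j+1,2k+1,2m}}=\sqrt3\,w_{\mset{j,k}}$.

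\textbf{Step 3: computing $Az^{\oasterisk2}$.} Expand $Az^{\oasterisk 2}=\sum_{\mset{j,k}} z^{\oasterisk 2}[\mset{j,k}]\,A[:,\mset{j,k}]$, with $z^{\oasterisk 2}[\mset{j,k}]=w_{\mset{j,k}}z[j]z[k]$. From \eqref{eq:symmetrized-Kronecker-product}, $\vec e_j\oasterisk\vec e_k=\vec e_{\mset{j,k}}/w_{\mset{j,k}}$, so each column splits into two basis directions:
\begin{itemize}
\item the first term contributes $\tfrac{\sqrt3}{2}\,y[-(j+k+1)]\,z[j]z[k]$ (times a weight factor) to coordinate $(\mset{j,k},\mset{-j-k-1})$;
\item the second term contributes $\tfrac{\sqrt3}{2}\,y[-(j+k+1)]\,z[j]z[k]$ (times a weight factor) to coordinate $(\mset{-j-1,-k-1},\mset{j+k+1})$.
\end{itemize}
The map $\mset{j,k}\mapsto\mset{-j-1,-k-1}$ is an involution on multisets of $[d)$. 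It preserves $w_{\mset{j,k}}$, and it sends the paired even index $-(j+k+1)$ to $j+k+1$.

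Reindex the second family of contributions by this involution. A fixed target coordinate $(\mset{j,k},\mset{m})$ with $m\equiv-(j+k+1)$ then receives exactly two terms: $y[m]z[j]z[k]$ from column $\mset{j,k}$, and $y[-m]z[-j-1]z[-k-1]$ from column $\mset{-j-1,-k-1}$. These are precisely the two monomials appearing in $b$. Coordinates with $m\not\equiv-(j+k+1)$ receive nothing, matching the indicator.

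\textbf{Main obstacle.} I expect the hardest part to be the normalization bookkeeping. One must check that the factor $\sqrt3/2$, together with $w_{\mset{j,k}}$ coming from $z^{\oasterisk 2}$ and the $1/w_{\mset{j,k}}$ in $\vec e_j\oasterisk \vec e_k$, reproduces exactly $\tfrac{\sqrt3}{2}w_{\mset{j,k}}$. This requires fixing precisely how the partially symmetric vectorization in Definition~\ref{def:dMRA-recursive-linear-system} weights the coordinate $(\mset{j,k},\mset{m})$.

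Two further points need checking. When the involution fixes $\mset{j,k}$, i.e.\ $k=-j-1$ so that $m=0$, one must confirm that the two contributions both land in the single column and both monomials still appear exactly once. The cyclic index reductions mod $d$ versus mod $2d$ should also be checked. I would handle these cases explicitly after the generic computation.
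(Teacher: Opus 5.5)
Your argument is correct. Steps 2--3 are the same entrywise comparison the paper carries out for the special case $y=\hat x[E]$, $g=\Id$. You genuinely differ in how you reach a general $y=\tilde g\cdot\hat x[E]$.

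\textbf{The paper's route.} It keeps $\hat x$ fixed and tracks how both sides move. It factors $A_\dMRA(\hat x[E])=BX$ with $X$ diagonal and shows $A_\dMRA(\tilde g\cdot\hat x[E])=AD_1$, or $AD_1\reflel_d^{\oasterisk 2}$ for reflections. Meanwhile $z^{\oasterisk 2}=D_2(\hat x[E^c])^{\oasterisk 2}$, or $\reflel_d^{\oasterisk 2}D_2(\hat x[E^c])^{\oasterisk 2}$. It then uses $D_1D_2=I$ and $(\reflel_d^{\oasterisk 2})^2=I$.

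\textbf{Your route.} You move the signal instead. You have $b_\dMRA(g\cdot\hat x)=b_\dMRA(\hat x)$, because $\{\fshiftel_\ell,\freflel\fshiftel_\ell\}$ is closed under composition, so right-multiplication by $g$ only reindexes the average defining $T^{(3)}_\dMRA$. The special-case identity is purely algebraic in $\hat x'\in\Cx^{2d}$ (Proposition~\ref{prop:invariant-tensor-structure} needs no reality), so it applies directly to $\hat x'=g\cdot\hat x$. This is shorter and avoids all the diagonal and permutation bookkeeping. What the paper's route buys is an explicit transformation law for $A_\dMRA(y)$ under $\dihgroup{d}$, which this statement does not need.

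Your worry about the fixed points of $\mset{j,k}\mapsto\mset{-j-1,-k-1}$ (that is, $j+k+1=d$, $m=0$) is harmless. Both terms of that column hit $(\mset{j,k},\mset{0})$, and the two monomials of $b$ coincide, so both sides pick up the same factor $2$.

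\textbf{The normalization does not close as you wrote it.} With $\vec e_j\oasterisk\vec e_k=\vec e_{\mset{j,k}}/w_{\mset{j,k}}$ and $z^{\oasterisk 2}[\mset{j,k}]=w_{\mset{j,k}}z[j]z[k]$, the weights cancel. So $Az^{\oasterisk 2}$ carries the prefactor $\frac{\sqrt3}{2}$, while $b$ carries $\frac{\sqrt3}{2}w_{\mset{j,k}}$; for $j\neq k$ this is off by $\sqrt2$. The vectorization of $b$ cannot absorb this: \eqref{eq:define-b-dMRA} pins $b[(\mset{j,k},\mset{m})]$ to the $\mathcal S_3$-coordinate $T[\mset{2j+1,2k+1,2m}]$.

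The fix is on the $A$ side, and it is what the paper's proof implicitly does. It works from the entrywise form \eqref{eq:entries-of-A-dMRA}. There, column $\mset{j',k'}$ has entry exactly $\frac{\sqrt3}{2}\,y[-(j'+k'+1)]$ at rows $(\mset{j',k'},\mset{-j'-k'-1})$ and $(\mset{-j'-1,-k'-1},\mset{j'+k'+1})$. In other words, it reads $\vec e_j\oasterisk\vec e_k$ in \eqref{eq:columns-A-dMRA} as the unit coordinate vector $\vec e_{\mset{j,k}}$; equivalently, the literal columns are rescaled by $w_{\mset{j',k'}}$. State that reading explicitly. Your Step 3 then yields $\frac{\sqrt3}{2}w_{\mset{j,k}}\bigl(y[m]z[j]z[k]+y[-m]z[-j-1]z[-k-1]\bigr)$, which matches your Step 2.
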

\begin{remark}
    There are generically two $g_1,g_2\in\dihgroup{2d}$ that meet these conditions and produce vectors $z$ that are negations of each other.
\end{remark}
The crux is that the VC-LS
\[
    (A_\dMRA) u =b_\dMRA \quad \text{subject to} \quad u \in \mathcal X_1^\vee
\]
contains a planted solution which may be used to extend $y$ to an element in the $\dihgroup{2d}$-orbit of $\hat x$.
Applying such extensions recursively gives us our strategy for orbit recovery of signals $x\in\Rl^{2^{k+1}}$ for $k\ge 2$.

Since much of our indexing is modulo $d$ we will also find it helpful to say that two multisets of the same size, $M$ and $M'$, are equivalent modulo $d$ denoted by $M \equiv M' \pmod{d}$ when there exists a bijection $\pi : M \to M'$ such that for every $m \in M$, $m \equiv \pi(m) \pmod{d}$.

\begin{proof}[Proof of Proposition~\ref{prop:dMRA-variety-constrained-linear-system}]
    To begin let $x \in \Rl^{2d}$, $\hat x \in \Cx^{2d}$ be $x$'s Fourier coefficients. We first consider the special case $y = \hat x[E]$. Using the shorthand $A=A_\dMRA(\hat x[E])$ and $b=b_\dMRA(\hat x)$, we will show that the choice $g = \Id$, i.e.\ $z=\hat x[E^c]$, satisfies 
    \begin{equation}
        \label{eq:check-A-dMRA-planted-solution}
        Az^{\oasterisk 2} = b.
    \end{equation}
    Then, more generally, we will show that this holds over the orbit of $\hat x[E]$ for a corresponding transformation on $\hat x[E^c]$.

    First, observe that by Definition~\ref{def:symmetrized-Kronecker-product}, for $j',k'\in[d)$,
    \begin{equation}
        \label{eq:entries-of-symmetric-product-odd-coefs}
        (z^{\oasterisk 2})[\mset{j',k'}] = w_{\mset{j',k'}} \hat{x}[2j'+1]\hat{x}[2k'+1]
    \end{equation}
    and by Definition~\ref{def:dMRA-recursive-linear-system}, for $j,k,m\in[d)$ we may find the entries in column of $A$ using indicator functions,
    \begin{equation}
        \label{eq:entries-of-A-dMRA}
        \begin{split}
            A[(\mset{j,k}, \mset{m}), \mset{j',k'}] &= \frac{\sqrt{3}}{2} x[-2(j'+k'+1)] \times \\
        (&\bb1\{ \mset{j,k} \equiv \mset{j',k'}\pmod{d}\}\cdot \bb1\{m\equiv -j'-k'-1\pmod{d}\}
        \\
        &+ \bb1\{ \mset{-j-1,-k-1} \equiv \mset{j',k'} \pmod{d}\} \cdot \bb1\{ m\equiv j'+k'+1\pmod{d}\}).
        \end{split}
    \end{equation}
    So then entries of $Az^{\oasterisk 2}$ --- the left-hand side of \eqref{eq:check-A-dMRA-planted-solution} --- have the form
    \begin{align*}
        (Az^{\oasterisk 2})[(\mset{j,k}, \mset{m})] &= A[(\mset{j,k}, \mset{m}), :] z^{\oasterisk 2} \\
        &= \sum_{0\leq j'\leq k'< d} A[(\mset{j,k}, \mset{m}), \mset{j',k'}] (z^{\oasterisk 2})[\mset{j',k'}].
    \end{align*}
    Using \eqref{eq:entries-of-symmetric-product-odd-coefs} and \eqref{eq:entries-of-A-dMRA}, we simplify this to
    \begin{align*}
        (Az^{\oasterisk 2})[(\mset{j,k}, \mset{m})] =& \bb1\{m\equiv -j-k-1 \pmod{d}\} \cdot \left(\frac{\sqrt{3}w_{\mset{j,k}}}{2}  \hat{x}[2j+1]\hat{x}[2k+1]\hat{x}[2(-j-k-1)] \right.\\
        & \left.+\frac{\sqrt{3} w_{\mset{-j-1,-k-1}} }{2} \hat{x}[-2j-1]\hat{x}[-2k-1]\hat{x}[2(j+k+1)]\right).
    \end{align*}
    Now $w_{\mset{j,k}} = w_{\mset{-j-1,-k-1}}$ since if $j=k$ then $-j-1=-k-1$.
    We may further simplify to get
    \begin{equation}
        \label{eq:entries-of-LHS}
        \begin{split}
            (A(\hat x[E^c])^{\oasterisk 2})[(\mset{j,k},\mset{m})] =& \frac{\sqrt{3} w_{\mset{j,k}}}{2}  \bb1{\{m\equiv -j-k-1 \pmod{d}\}} \times \\ 
        & \hspace{-25pt} (\hat{x}[2j+1]\hat{x}[2k+1]\hat{x}[-2(j+k+1)] + \hat{x}[-2j-1]\hat{x}[-2k-1]\hat{x}[2(j+k+1)]).
        \end{split}
    \end{equation}
    Next, we examine the entries of $b$ --- the right-hand side of \eqref{eq:check-A-dMRA-planted-solution} --- using Proposition~\ref{prop:invariant-tensor-structure}.
    We can expand the entries of the third order invariant indexed as
    \begin{equation}
        \label{eq:raw-entries-of-RHS}
        \begin{split}
            b[(\mset{j,k},\mset{m})] =& \left(T_\dMRA^{(3)}(\hat x)\right)[\mset{2j+1,2k+1,2m}] \\
        =&  \frac{w_{\mset{2j+1,2k+1,2m}}}2\bb1\{2j+1+2k+1+2m \equiv 0 \pmod{2d}\} \times\\
        &\ (\hat x[2j+1] \hat x[2k+1]\hat x[2m] + \hat x[-2j-1] \hat x[-2k-1]\hat x[-2m]).
        \end{split}
    \end{equation}
    The indicator conditions in \eqref{eq:entries-of-LHS} and \eqref{eq:raw-entries-of-RHS} are the same, i.e.,
    \[
    \bb1\{2j+1+2k+1+2m \equiv 0 \pmod{2d}\} =\bb1\{m \equiv -j-k-1 \pmod{d}\}.
    \]
    Next, by parity, $2m\not\equiv 2j+1\pmod{2d}$ and $2m\not\equiv 2k+1 \pmod{2d}$ for any assignment of indices so we may simplify $w_{\mset{2j+1,2k+1,2m}} = \sqrt{3}w_{\mset{2j+1,2k+1}}$.
    Similar to our observations before, $w_{\mset{2j+1,2k+1}} = w_{\mset{j,k}}$ so we get
    \begin{equation}
        \label{eq:entries-of-RHS}
        \begin{split}
            b[(\mset{j,k},\mset{m})] =& \frac{\sqrt{3} w_{\mset{j,k}}}{2}  \bb1{\{m\equiv -j-k-1 \pmod{d}\}} \times \\ 
        & (\hat{x}[2j+1]\hat{x}[2k+1]\hat{x}[-2(j+k+1)] + \hat{x}[-2j-1]\hat{x}[-2k-1]\hat{x}[2(j+k+1)]).
        \end{split}
    \end{equation}
    Comparing \eqref{eq:entries-of-LHS} to \eqref{eq:entries-of-RHS}, they are equal and so $A(\hat x[E^c])^{\oasterisk 2} = b$.

    Next, suppose that $y=\tilde g\cdot \hat x[E]$ for $\tilde g\in \dihgroup{d}$ and set $A'= A_\pMRA(y)$.
    The group element $\tilde g$ is one of $\{\fshiftel_{\ell,d}, \freflel\fshiftel_{\ell,d}\}_{\ell=0}^{d-1}$.
    If $\tilde g= \fshiftel_{\ell,d}$, set $g= \fshiftel_{\ell,2d}$ or $\fshiftel_{\ell+d,2d}$ and
    if $\tilde g= \freflel\fshiftel_{\ell,d}$, set $g= \freflel\fshiftel_{\ell,2d}$ or $\freflel\fshiftel_{\ell+d,2d}$.
    From the proof of Lemma~\ref{lemma:invariant-subtensors}, one can verify that these choices of group elements will ensure that $(\tilde g\cdot \hat x)|_{E} = y$.
    Set $z= (g\cdot \hat x)|_{E^c}$.
    We will show that $A'(z^{\oasterisk 2}) = A(\hat x[E^c])^{\oasterisk 2}$.
    First, observe that we may factor coefficients out of the columns of $A$ by defining a diagonal matrix $X \in \Cx^{\mchoose{d}{2} \times \mchoose{d}{2}}$,
    \[
        X = \diag\left(\hat x[-2(j+k+1)] : 0\leq j \leq k <d\right),
    \]
    and a matrix $B$ such that $B X = A$.
    Rows and columns of $X$ are indexed by multisets $\mset{j,k}$ of $[d)$ under the DM order, and columns of $B$ are
    \[
        B[:, \mset{j,k}] = \frac{\sqrt{3}}{2} ( \vec{e}_j\oasterisk \vec{e}_k \otimes \vec{e}_{-j-k-1}  +  \vec{e}_{-j-1}\oasterisk \vec{e}_{-k-1} \otimes \vec{e}_{j+k+1}).
    \]
    Observe that in the first case $\tilde g = \fshiftel_{\ell,d}$ for some $\ell\in[d)$,
    \begin{align*}
        A_\dMRA(\tilde g \cdot (\hat x[E])) = B  X \underbrace{\diag\left(e^{2\pi i(2(j+k+1))\ell/(2d)} : 0\leq j \leq k <d\right)}_{=: D_1} = A D_1.
    \end{align*}
    With a small simplification we have $D_1 = \diag\left(e^{2\pi i(j+k+1)\ell/d} : 0\leq j \leq k <d\right)$.
    Next, consider that for $\ell'\in[2d)$, if $z=(\fshiftel_{\ell',2d} \cdot \hat x)|_{E^c}$, then for $j,k\in[d)$,
    \begin{align*}
        z^{\oasterisk 2}[\mset{j,k}] &= w_{\mset{j,k}} e^{-2\pi i (2j+1)\ell'/(2d)} \hat x[2j+1] e^{-2\pi i (2k+1)\ell'/(2d)}\hat x[2k+1] \\
        &= e^{-2\pi i(2j+1 + 2k+1)\ell'/(2d)} w_{\mset{j,k}} \hat x[2j+1]\hat x[2k+1].
    \end{align*}
    So when $\ell'=\ell$ or $\ell' = \ell + d$,
    \[
        z^{\oasterisk 2} = \underbrace{\diag\left(e^{-2\pi i (j+k+1)\ell/d} : 0\leq j \leq k <d\right)}_{=: D_2}(x[E^c])^{\oasterisk 2}.
    \]
    Observe that the phases in $D_1$ and $D_2$ are conjugates so $D_1 D_2 = I$ and
    \[
    A'z^{\oasterisk 2} = A D_1 D_2 (x[E^c])^{\oasterisk 2} = A (x[E^c])^{\oasterisk 2} = b.
    \]
    In the second case, we set $\tilde g = \freflel_d\fshiftel_{\ell,d}$ for some $\ell\in[d)$.
    Let us consider the effect of the exchange matrix without phase shifting ($\ell=0$) first.
    Observe that while $(\freflel_{2d} \cdot \hat x)[E] = \freflel_d \cdot (\hat x[E])$,  $(\freflel_{2d} \cdot \hat x)[E^c] = \reflel_d \cdot (\hat x[E^c])$ where
    \[
        \reflel_d[j,j'] = \begin{cases}
        1 & j\equiv -j'-1 \pmod{d} \\
        0 & \text{otherwise}.
    \end{cases}
    \]
    In other words, the action of reflection around the zero index for a vector of length $2d$ (the action of $\freflel_{2d})$ has the effect of reversal $(\reflel_d)$ on the odd coefficients.
    Then $(J_d \cdot (\hat x[E^c]))^{\oasterisk 2}[\mset{j,k}] = (\hat x[E^c])^{\oasterisk 2}[\mset{-j-1,-k-1}]$ so we may write
    \[
        \reflel_d^{\oasterisk 2}[\mset{j,k}, \mset{j',k'}] = \begin{cases}
        1 & \mset{j,k}\equiv \mset{-j'-1,-k'-1} \pmod{d} \\
        0 & \text{otherwise}.
        \end{cases}
    \]
    Since $\reflel_d^{\oasterisk 2}$ swaps entries in the vector, it is an involution, i.e.\ $(\reflel_d^{\oasterisk 2})^2=I$.
    Next notice that
    \[
        \reflel_d^{\oasterisk 2}  X \reflel_d^{\oasterisk 2} = \diag\left(\hat x[-2((-j-1)+(-k-1)+1)] : 0\leq j \leq k <d\right) = \diag\left(\hat x[2(j+k+1)] : 0\leq j \leq k <d\right)
    \]
    so we have that $A_\dMRA( \freflel_d \cdot \hat x[E]) = B \reflel_d^{\oasterisk 2}  X \reflel_d^{\oasterisk 2}$.
    Also note that $B[:, \mset{j,k}] = B[:, \mset{-j-1,-k-1}]$ so $B \reflel_d^{\oasterisk 2} = B$.
    Then for any $\ell\in[d)$ with $\tilde g = \freflel_d\fshiftel_{\ell,d}$,
    \[
        A_\dMRA( \tilde g \cdot \hat x[E]) = B \reflel_d^{\oasterisk 2} X D_1 \reflel_d^{\oasterisk 2} = B X D_1(\reflel_d)^{\oasterisk 2} = A D_1(\reflel_d)^{\oasterisk 2}.
    \]
    Next, consider that for $\ell'= \ell$ or $\ell' = \ell + d$, if $g=\freflel_{2d}\fshiftel_{\ell',2d}$ and $z=(g \cdot \hat x)|_{E^c}$,
    \[
        z^{\oasterisk 2} = ((g \cdot \hat x)[E^c])^{\oasterisk 2} = \reflel_d^{\oasterisk 2} D_2 (\hat x[E^c])^{\oasterisk 2}.
    \]
    Combining these two expressions we have 
    \[
    A'z^{\oasterisk 2} = A D_1\reflel_d^{\oasterisk 2} \reflel_d^{\oasterisk 2} D_2(\hat x[E^c])^{\oasterisk 2} = A D_1 D_2(\hat x[E^c])^{\oasterisk 2} = A (\hat x[E^c])^{\oasterisk 2} = b. \qedhere
    \]
\end{proof}

\paragraph{Aided Recovery for MRA with Projection.} 

The relationship between the MRA invariants given in Lemma~\ref{lemma:pMRA-invariants-from-dMRA-invariants} implies something similar for subtensors of $T^{(3)}_\pMRA$.
We define the following shorthand for a particular restriction of $\fprojOp^{\oasterisk 3}$ along each mode of its domain and codomain:
\begin{equation}
    \label{eq:fourier-projection-suboperator}
\fprojOp^{(3)}_{E^c,E^c,E} := (\fprojOp_{2d}[E_d^c, E_{2d}^c])^{\oasterisk 2} \otimes \fprojOp_{2d}[E_{d}, E_{2d}].
\end{equation}
By construction this is a linear transformation from a tensor with partial symmetry in $(\Cx^d)^{\oasterisk 2} \otimes \Cx^d$ to a tensor with partial symmetry in $(\Cx^{\lfloor (d+1)/2 \rfloor})^{\oasterisk 2} \otimes \Cx^{\lfloor d/2 \rfloor}$.
\begin{definition}
    \label{def:pMRA-recursive-linear-system}
    For $y \in \Cx^d$, define the linear operator $A_\pMRA(y):(\Cx^{2d})^{\oasterisk 2} \to (\Cx^{\lfloor d/2 \rfloor})^{\oasterisk 2} \otimes \Cx^{\lfloor d/2 \rfloor}$ by
    \begin{equation}
        \label{eq:columns-A-pMRA}
        A_\pMRA(y) = \fprojOp^{(3)}_{E^c,E^c,E} A_\dMRA(y),
    \end{equation}
    with $\fprojOp^{(3)}_{E^c,E^c,E}$ given in~\eqref{eq:fourier-projection-suboperator} and $A_\dMRA$ given in Definition~\ref{def:dMRA-recursive-linear-system}. \\
    For $x\in\Cx^{2d}$, define a vector $b_\pMRA(\hat x) := b_\pMRA\in (\Cx^{\lfloor (d+1)/2 \rfloor})^{\oasterisk 2} \otimes \Cx^{\lfloor d/2 \rfloor}$ given entry-wise by
    \begin{equation}
        \label{eq:define-b-pMRA}
        b_\pMRA[(\mset{j,k},\mset{m})] = \left(T_\pMRA^{(3)}( x)\right)[\mset{2j+1,2k+1,2m}]
    \end{equation}
    where $0\leq j\leq k < d$ and $0\leq m<d$.
\end{definition}

\begin{proposition}
    \label{prop:pMRA-variety-constrained-linear-system}
    Let $x\in \Rl^{2d}\cap \Span(1,-1,\dots, 1,-1)^{\perp}$ with Fourier transform $\hat x \in \Cx^{2d}$, and let $y\in \Cx^d$ belong to the $\dihgroup{d}$-orbit of $\hat x[E]$.
    For $A=A_\pMRA(y)$ and $b=b_\pMRA(\hat x)$, there is some $g\in \dihgroup{2d}$ such that $(g\cdot \hat{x})|_{E} = y$ and the vector $z = (g\cdot \hat x)|_{E^c}$ satisfies $A z^{\oasterisk 2} = b$.
\end{proposition}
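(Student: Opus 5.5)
The plan is to reduce the claim to Proposition~\ref{prop:dMRA-variety-constrained-linear-system} via the intertwining relation of Lemma~\ref{lemma:pMRA-invariants-from-dMRA-invariants}. Take $y$ in the $\dihgroup{d}$-orbit of $\hat x[E]$. Proposition~\ref{prop:dMRA-variety-constrained-linear-system} applied to the same $x$ (its proof never uses the Nyquist constraint) produces $g\in\dihgroup{2d}$ with $(g\cdot\hat x)|_E=y$ and $z=(g\cdot\hat x)|_{E^c}$ satisfying $A_\dMRA(y)z^{\oasterisk 2}=b_\dMRA(\hat x)$. We keep the same $g$ and $z$. Since $A_\pMRA(y)=\fprojOp^{(3)}_{E^c,E^c,E}A_\dMRA(y)$ by definition,
\[
A_\pMRA(y)\,z^{\oasterisk 2}=\fprojOp^{(3)}_{E^c,E^c,E}\,b_\dMRA(\hat x).
\]
Everything therefore reduces to the identity $\fprojOp^{(3)}_{E^c,E^c,E}\,b_\dMRA(\hat x)=b_\pMRA(\hat x)$.

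To prove this identity, start from Lemma~\ref{lemma:pMRA-invariants-from-dMRA-invariants}, $\fprojOp^{\oasterisk 3}T^{(3)}_\dMRA(\hat x)=T^{(3)}_\pMRA(\hat x)$, and restrict the output modes to $(E^c,E^c,E)$, where $E,E^c$ are now taken inside $[d)$. The key structural fact comes from \eqref{eq:fourier-projection-operation}: $(\fprojOp v)[j]=v[j]+e^{2\pi ij/2d}v[-j]$, and $-j$ has the same parity as $j$ modulo $2d$, which is even. So $\fprojOp$ is block diagonal with respect to parity. Output index $j$ reads only input indices of the same parity, i.e.\ $\fprojOp[E_d,E^c_{2d}]=0$ and $\fprojOp[E^c_d,E_{2d}]=0$. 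Consequently, in the restricted lift, each output mode of type $E^c$ (resp.\ $E$) draws only from input modes of type $E^c$ (resp.\ $E$).

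I would make this precise using the row formula \eqref{eq:symmetric-matrix-product-row} of Lemma~\ref{lemma:symmetric-product-equivalences}. The rows of $\fprojOp^{\oasterisk 3}$ indexed by $\mset{j_1,j_2,m}$ with $j_1,j_2$ odd and $m$ even are symmetrized products of rows of $\fprojOp$. Only permutations that send the odd output indices to odd-supported rows survive when paired against tensor entries. Alternatively, compute directly on rank-one terms: for each summand $(g\cdot\hat x)^{\oasterisk 3}$, restricting $(\fprojOp w)^{\oasterisk 3}$ to $(E^c,E^c,E)$ gives
\[
(\fprojOp[E^c_d,E^c_{2d}]\,w_{E^c})^{\oasterisk 2}\otimes \fprojOp[E_d,E_{2d}]\,w_E .
\]
This equals $\fprojOp^{(3)}_{E^c,E^c,E}$ applied to $w_{E^c}^{\oasterisk 2}\otimes w_E$, up to the normalization converting the partially symmetric block $(w^{\oasterisk3})[E^c,E^c,E]$ into $w_{E^c}^{\oasterisk2}\otimes w_E$. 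Summing over the group elements with the same averaging weight then yields the identity, by linearity.

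The main obstacle I expect is bookkeeping of the multiset weights. An entry $\mset{2j+1,2k+1,2m}$ of the fully symmetric tensor carries weight $w_{\mset{2j+1,2k+1,2m}}=\sqrt3\,w_{\mset{j,k}}$, because the even index never coincides with an odd one. The partially symmetric vectorization, by contrast, carries $w_{\mset{j,k}}\cdot 1$. The same factor $\sqrt3$ arises on both the dMRA side and the pMRA side, the latter because in $[d)$ the even output index also never equals an odd one. So the scaling between the restricted symmetric lift and $\fprojOp^{(3)}_{E^c,E^c,E}$ is uniform and cancels. I would verify this by checking the rank-one computation entrywise, in the same way the proof of Proposition~\ref{prop:dMRA-variety-constrained-linear-system} matched $w_{\mset{2j+1,2k+1,2m}}$ against $\sqrt3 w_{\mset{j,k}}$. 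With that identity established, $A_\pMRA(y)z^{\oasterisk2}=b_\pMRA(\hat x)$ follows immediately.
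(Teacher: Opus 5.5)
Your proposal is correct and follows the paper's route. You keep the same $g$ and $z$ supplied by the dihedral proposition, apply $\fprojOp^{(3)}_{E^c,E^c,E}$ to both sides, and identify $\fprojOp^{(3)}_{E^c,E^c,E}\,b_{\dMRA}$ with $b_{\pMRA}$ using Lemma~\ref{lemma:pMRA-invariants-from-dMRA-invariants} together with the parity block structure of $\fprojOp$; the paper packages this identification as Lemma~\ref{lemma:inner-restriction-valid}. Your rank-one check that the $\sqrt{3}$ weight factors cancel is a useful piece of care that the paper's proof of that lemma leaves implicit.
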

Again, the main point is that
\[
    (A_\pMRA) u =b_\pMRA \quad \text{subject to} \quad u \in \mathcal X_1^\vee
\]
also contains a planted solution which may be used to extend $y$ to an element in the $\dihgroup{2d}$-orbit of $\hat x$ and so solving this VC-LS can help us achieve orbit recovery.

The proposition follows almost directly from Proposition~\ref{prop:dMRA-variety-constrained-linear-system}, though we clarify some of the more complicated steps through smaller lemmas.
\begin{lemma}
    \label{lemma:inner-restriction-valid}
    For any $x\in \Cx^{2d}$,
    \[
        b_\pMRA = \fprojOp^{(3)}_{E^c,E^c,E} \left(T_\dMRA^{(3)}(\hat x)\right)[E_{2d}^c, E_{2d}^c, E_{2d}].
    \]
\end{lemma}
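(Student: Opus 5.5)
The plan is to start from Lemma~\ref{lemma:pMRA-invariants-from-dMRA-invariants}, which gives $T_\pMRA^{(3)}(\hat x) = \fprojOp^{\oasterisk 3} T_\dMRA^{(3)}(\hat x)$. By definition, $b_\pMRA$ is the restriction of $T_\pMRA^{(3)}(\hat x)$ to the index triples $(2j+1,2k+1,2m)$. Here $2j+1$ and $2k+1$ range over the odd indices $E_d^c$ of $[d)$, and $2m$ ranges over the even indices $E_d$. So by Lemma~\ref{lemma:restriction-across-products}, $b_\pMRA$ equals the rows of $\fprojOp^{\oasterisk 3}$ indexed by $(E_d^c,E_d^c,E_d)$ applied to the full dMRA tensor. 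The goal is to show that this row block only sees the columns of the dMRA tensor indexed by $(E^c_{2d},E^c_{2d},E_{2d})$, and that on those columns it coincides with $\fprojOp^{(3)}_{E^c,E^c,E}$.

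The key observation is a parity one. By \eqref{eq:fourier-projection-operation}, row $j$ of $\fprojOp_{2d}$ is supported only on columns $j$ and $-j \bmod 2d$. Both of these have the same parity as $j$, because $2d$ is even. Hence $\fprojOp_{2d}$ is block diagonal with respect to the parity splits $[d) = E_d\sqcup E_d^c$ and $[2d)=E_{2d}\sqcup E_{2d}^c$. Concretely, $\fprojOp_{2d}[E_d,E_{2d}^c]=0$ and $\fprojOp_{2d}[E_d^c,E_{2d}]=0$.

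I will then expand the row $\mset{a,b,c}$ of $\fprojOp^{\oasterisk 3}$, with $a,b$ odd and $c$ even, using the row formula \eqref{eq:symmetric-matrix-product-row} of Lemma~\ref{lemma:symmetric-product-equivalences}. This turns it into a symmetrization of $\fprojOp[a,:]\otimes\fprojOp[b,:]\otimes\fprojOp[c,:]$. By block diagonality, each factor is supported in the matching parity class of $[2d)$. So the row is supported on multisets of $[2d)$ containing exactly two odd elements and one even element, and these index precisely the partially symmetric block $T_\dMRA^{(3)}(\hat x)[E^c_{2d},E^c_{2d},E_{2d}]$.

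The main obstacle is the final identification: checking that the coefficients obtained this way match the entries of $(\fprojOp[E_d^c,E_{2d}^c])^{\oasterisk 2}\otimes \fprojOp[E_d,E_{2d}]$, including the weights $w$. I would handle this by comparing symmetric weights directly. Since an odd index never equals an even one, $w_{\mset{a,b,c}}=\sqrt3\,w_{\mset{a,b}}$ on both the row and the column side, as in the proof of Proposition~\ref{prop:dMRA-variety-constrained-linear-system}. The only surviving permutations in the sum over $S_3$ are those that send the even slot to the even slot. These contribute $2$ out of $6$ terms, and the resulting factor $2/3$ times the weight ratio $\sqrt3\cdot\sqrt3/(3\cdot\ldots)$ should cancel. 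What remains is exactly the $2$-fold symmetric lift on the odd modes, tensored with the plain map on the even mode.

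If the constants do not cancel exactly, the factor is a global scalar. In that case I would verify the normalization by testing on a rank-one input $v^{\oasterisk 3}$. On such an input both sides reduce, via Corollary~\ref{cor:lift-of-product}, to $(\fprojOp v)^{\oasterisk 3}$ restricted to $(E^c,E^c,E)$, and this equals $(\fprojOp[E^c_d,E^c_{2d}]v_{E^c})^{\oasterisk2}\otimes\fprojOp[E_d,E_{2d}]v_E$ up to the same convention used in defining $b$. Since symmetric rank-one tensors span the symmetric space, linearity then extends the identity to $T_\dMRA^{(3)}(\hat x)$.
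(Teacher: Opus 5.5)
Your proposal is correct and follows the paper's route: apply Lemma~\ref{lemma:pMRA-invariants-from-dMRA-invariants}, move the restriction onto the rows of $\fprojOp^{\oasterisk 3}$ via Lemma~\ref{lemma:restriction-across-products}, then use the parity block-diagonality of $\fprojOp_{2d}$ to restrict the columns to $(E^c_{2d},E^c_{2d},E_{2d})$. Your explicit weight check, which the paper skips, closes exactly: for odd $a,b,a',b'$ and even $c,c'$ the two surviving permutations come with coefficient $\frac{w_{\mset{a,b,c}}w_{\mset{a',b',c'}}}{3!}=\frac{w_{\mset{a,b}}w_{\mset{a',b'}}}{2!}$, which is precisely the entry of $(\fprojOp[E_d^c,E_{2d}^c])^{\oasterisk 2}\otimes\fprojOp[E_d,E_{2d}]$, so the rank-one fallback is not needed.
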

\begin{proof}
    First, observe that
    \[
    b_\pMRA=\left(T_\pMRA^{(3)}(\hat x)\right)[E_d^c, E_d^c, E_d] = \left(\fprojOp^{\oasterisk 3}T_\pMRA^{(3)}(\hat x)\right)[E_d^c, E_d^c, E_d]
    \]
    by Lemma~\ref{lemma:pMRA-invariants-from-dMRA-invariants}.
    Then by Lemma~\ref{lemma:restriction-across-products}, we may move the restriction to the rows of $\fprojOp^{\oasterisk 3}$ yielding
    \[
        b_\pMRA = \left((\fprojOp[E_d^c, :])^{\oasterisk 2}\otimes \fprojOp[E_d, :]\right)T_\pMRA^{(3)}(\hat x).
    \]
    Finally, notice that for any vector $v\in\Cx^{2d}$, $\fprojOp[E_d, :] v = \fprojOp[E_d, E_{2d}] v[E_{2d}]$ since by the sparsity structure of $\fprojOp$ in $\eqref{eq:fourier-projection-operation}$, restriction to even rows $\fprojOp[E_d, :]$ eliminates all non-zero entries in odd-indexed columns.
    The odd columns are all then zero so restriction to non-zero columns has no effect on the matrix-vector product.
    For the analogous reasons, $\fprojOp[E_d^c, :] v = \fprojOp[E_d^c, E_{2d}^c] v[E_{2d}^c]$.
    Finally, applying this gives
    \begin{align*}
        \left(T_\pMRA^{(3)}(\hat x)\right)[E_d^c, E_d^c, E_d] &= \left((\fprojOp[E_d^c, E_{2d}^c])^{\oasterisk 2}\otimes \fprojOp[E_d, E_{2d}]\right)\left(T_\pMRA^{(3)}(\hat x)\right)[E_{2d}^c, E_{2d}^c, E_{2d}] \\
        &= \fprojOp^{(3)}_{E^c,E^c,E}\left(T_\pMRA^{(3)}(\hat x)\right)[E_{2d}^c, E_{2d}^c, E_{2d}]. \qedhere
    \end{align*}
\end{proof}

\begin{proof}[Proof of Proposition~\ref{prop:pMRA-variety-constrained-linear-system}]
Suppose $x\in \Rl^{2d}$, let $\hat x \in \Cx^{2d}$ be its Fourier transform, and fix some $y\in \Cx^d$ in the $\dihgroup{d}$-orbit of $\hat x[E]$.
Then $y = \tilde g\cdot \hat x[E]$ for some $\tilde g\in \dihgroup{d}$.
Pick $g \in \dihgroup{2d}$ as in Proposition~\ref{prop:pMRA-variety-constrained-linear-system} and set $z=(g\cdot x)[E^c]$.
Then we know that $z$ satisfies
\[
A_\dMRA z^{\oasterisk 2} = b_\dMRA.
\]
But then we have
\begin{align*}
(\fprojOp^{(3)}_{E^c,E^c,E}A_\dMRA )z^{\oasterisk 2} &= \fprojOp^{(3)}_{E^c,E^c,E}b_\dMRA, \\
(A_\pMRA) z^{\oasterisk 2} &=b_\pMRA,
\end{align*}
where the left-hand side follows from the definition of $A_\pMRA$ and the right-hand side from Lemma~\ref{lemma:inner-restriction-valid}.
Hence, $z^{\oasterisk 2}$ is still a planted solution for the variety constrained linear system
\[
    (A_\pMRA) u =b_\pMRA \quad \text{subject to} \quad u \in \mathcal X_1^\vee. \qedhere
\]
\end{proof}
Careful readers may notice what appears to be an inconsistency: in the pMRA model we cannot recover the Nyquist component $\hat x[d]$ when $x\in\Rl^{2d}$.
It appears that we need this component in the construction of $A_\pMRA$.
We clarify this apparent issue.

\begin{lemma}
    \label{lemma:pMRA-zero-columns}
    When $d$ is even and $j,k\in[d)$ such that $j+k+1\equiv d/2 \equiv -d/2 \pmod{d}$,
    \[
        A_\pMRA[:,\mset{j,k}] = \vec{0},
    \]
    so $A_\pMRA(y)$ does not depend on the value of $y[d/2]$.
\end{lemma}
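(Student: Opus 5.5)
The plan is to compute the column $A_\pMRA[:,\mset{j,k}]$ directly from Definition~\ref{def:pMRA-recursive-linear-system}. I will show that the factor $\fprojOp[E_d,E_{2d}]$ acting on the third (non-symmetric) mode annihilates the single basis vector that appears there. Throughout I index the even coefficients by $m\in[d)$, with $m$ standing for $2m\in E_{2d}$, as in Definition~\ref{def:dMRA-recursive-linear-system}.

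First, by \eqref{eq:columns-A-dMRA} and linearity,
\[
A_\pMRA[:,\mset{j,k}] = \tfrac{\sqrt3}{2}\, y[-(j+k+1)]\, \fprojOp^{(3)}_{E^c,E^c,E}\bigl(\vec e_j\oasterisk\vec e_k\otimes\vec e_{-j-k-1} + \vec e_{-j-1}\oasterisk\vec e_{-k-1}\otimes\vec e_{j+k+1}\bigr).
\]
Under the hypothesis $j+k+1\equiv d/2\equiv -d/2 \pmod d$, both third-mode vectors equal $\vec e_{d/2}$. Next I apply $\fprojOp^{(3)}_{E^c,E^c,E}=P^{\oasterisk 2}\otimes Q$, where $P=\fprojOp[E^c_d,E^c_{2d}]$ and $Q=\fprojOp[E_d,E_{2d}]$. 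Using Lemma~\ref{lemma:symmetric-matrix-products} with $k=2$ (viewing $u\oasterisk v$ as the product of the matrices $u,v$ and averaging over $\rho\in S_2$), I get $P^{\oasterisk 2}(u\oasterisk v)=(Pu)\oasterisk(Pv)$, together with the Kronecker mixed-product rule on the last mode. Each term therefore becomes $(P\vec e_\cdot\oasterisk P\vec e_\cdot)\otimes Q\vec e_{d/2}$.

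The key step is to verify that $Q\vec e_{d/2}=\vec 0$, i.e.\ that column $d/2$ of $Q$ vanishes; this column corresponds to the Fourier index $d$ in $\Cx^{2d}$. By \eqref{eq:fourier-projection-operation}, row $2m$ of $\fprojOp_{2d}$, for $2m\in E_d$ (so $0\le 2m<d$), has nonzero entries only in columns $2m$ and $-2m \pmod{2d}$. Neither can equal $d$: $2m=d$ is excluded since $2m<d$, and $-2m\equiv d\pmod{2d}$ would force $2m\equiv d$, which is also impossible in this range. Hence the column is zero and both terms vanish, giving $A_\pMRA[:,\mset{j,k}]=\vec 0$. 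The main care needed here is keeping the two indexings (halved versus original) straight; I expect this bookkeeping, rather than any genuine difficulty, to be the obstacle.

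Finally, for independence from $y[d/2]$: $y$ enters $A_\dMRA(y)$, and hence $A_\pMRA(y)$, only through the column scalars $y[-(j+k+1)]$. The entry $y[d/2]$ appears only in columns with $-(j+k+1)\equiv d/2$, equivalently $j+k+1\equiv d/2 \pmod d$ since $d/2\equiv -d/2$. We have just shown these columns are zero, so $A_\pMRA(y)$ does not depend on $y[d/2]$, which corresponds to the Nyquist coefficient $\hat x[d]$.
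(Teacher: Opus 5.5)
Your proposal is correct and follows essentially the same route as the paper: both write out the column via \eqref{eq:columns-A-dMRA}, observe that both terms carry $\vec e_{d/2}$ in the third mode, and show that the factor $\fprojOp_{2d}[E_d,E_{2d}]$ annihilates it, which also gives the independence from $y[d/2]$. The only cosmetic difference is that the paper identifies $\fprojOp_{2d}[E_d,E_{2d}]$ with $\fprojOp_d$ and cites its zero (Nyquist) column at index $d/2$, whereas you check directly from the row sparsity of $\fprojOp_{2d}$ that column $d$ vanishes on the even rows.
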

\begin{proof}
    When $d$ is even, the matrix $\fprojOp_{2d}[E_{d}, E_{2d}]$ used in the construction of  $\fprojOp^{(3)}_{E^c,E^c,E}$ in \eqref{eq:fourier-projection-suboperator} is simply the matrix $\fprojOp_{d}$.
From \eqref{eq:fourier-projection-operation} one can determine that $\fprojOp_{d}$ has a zero-column at index $d/2$, so $\fprojOp_{d}\vec e_{d/2} = \vec{0}$.
In Definition~\ref{def:dMRA-recursive-linear-system}, it would appear we need to know 
$y[d/2]$ to construction a column $A_\dMRA[:,\mset{j,k}]$ when $j+k+1\equiv d/2\equiv -d/2\pmod{d}$, which we use in the construction of $A_\pMRA$.
However, in this case the column has the form
\[
    A_\dMRA[:,\mset{j,k}] = \frac{\sqrt{3}}{2} y[d/2] ( \vec{e}_j\oasterisk \vec{e}_k \otimes \vec{e}_{d/2}  +  \vec{e}_{-j-1}\oasterisk \vec{e}_{-k-1} \otimes \vec{e}_{d/2}).
\]
Consider the partially symmetric standard basis tensors appearing above. If follows that
\begin{align*}
    \fprojOp^{(3)}_{E^c,E^c,E}( \vec{e}_j\oasterisk \vec{e}_k \otimes \vec{e}_{d/2}) &= (\fprojOp_{2d}[E_d^c, E_{2d}^c]^{\oasterisk 2}(\vec{e}_j\oasterisk \vec{e}_k))   \otimes (\fprojOp_{2d}[E_{d}, E_{2d}]\vec{e}_{d/2}) \\
    &= (\fprojOp_{2d}[E_d^c, E_{2d}^c]^{\oasterisk 2}(\vec{e}_j\oasterisk \vec{e}_k))   \otimes (\fprojOp_d\vec{e}_{d/2}) \\
    &= (\fprojOp_{2d}[E_d^c, E_{2d}^c]^{\oasterisk 2}(\vec{e}_j\oasterisk \vec{e}_k))   \otimes \vec{0} \\
    &= \vec{0}.
\end{align*}
Similarly $\fprojOp^{(3)}_{E^c,E^c,E}( \vec{e}_{-j-1}\oasterisk \vec{e}_{-k-1} \otimes \vec{e}_{d/2}) =\vec{0}$.
Thus, the corresponding columns in $A_\pMRA$ that would appear to use $y[d/2]$ are zero-columns.
\end{proof}
Observe that since $A_\pMRA$ has some zero-columns it will certainly have a non-trivial nullspace.
However, this does not mean the VC-LS is impossible to solve.

\subsection{Extending Representatives}
\label{subsec:extending-representatives}
We have now shown how the subtensor $T[E^c, E^c, E]$ can be used to produce variety-constrained linear systems that contain the remaining information about $\hat x$. We now state the full routine for extracting this information, in both the dihedral and projected MRA settings.

\begin{algorithm}[H]
\caption{
Aided recovery for dihedral MRA. \\
\textbf{Input:} A vector $y$ in the $\dihgroup{d}$-orbit of $\hat{x}[E]$ (even Fourier coeffs.\ of $x$) and the tensor $T_\dMRA^{(3)}(\hat x)$ from \eqref{eq:dMRA-invariant-tensor}. \\
\textbf{Output:} Either reports an element $w$ in the $\dihgroup{2d}$-orbit of $\hat x$ or ``Fail'' if no element was found.
}
\label{alg:extend-representative-dMRA}
\begin{algorithmic}[1]
\Procedure{ExtendRepresentativeDihedralMRA}{$y, T$}
\State $A \gets A_\dMRA(y)$ from Definition~\ref{def:dMRA-recursive-linear-system}
\State $b \gets T[E^c,E^c,E]$
\State $u \gets \Call{FindVarietyConstrainedSolutions}{A, b, \mathcal X^\vee_1}$ (see Algorithm~\ref{alg:algorithm1-prime})
\State Reshape $u$ to a complex symmetric matrix using \eqref{eq:matricization}: $U \gets \mathrm{matricize}(u)$
\State Find any $z$ such that $zz^\top=U$
\State $w[E] \gets y$
\State $w[E^c] \gets z$
\State \Return $w$
\EndProcedure
\end{algorithmic}
\end{algorithm}

\begin{algorithm}[H]
\caption{
Aided recovery for projected MRA. \\
\textbf{Input:} A vector $y$ in the $\dihgroup{d}$-orbit of $\hat{x}[E]$ and the tensor $T_\pMRA^{(3)}(\hat x)$ from \eqref{eq:pMRA-invariant-tensor}. \\
\textbf{Output:} Either reports an element $w$ in the $\dihgroup{2d}$-orbit of $\hat x$ or ``Fail'' if no element was found.
}
\label{alg:extend-representative-pMRA}
\begin{algorithmic}[1]
\Procedure{ExtendRepresentativeProjectedMRA}{$y, T$}
\State $A \gets A_\pMRA(y)$ from Definition~\ref{def:pMRA-recursive-linear-system}
\State $b \gets T[E^c,E^c,E]$
\State $u \gets \Call{FindVarietyConstrainedSolutions}{A, b, \mathcal X^\vee_1}$ (see Algorithm~\ref{alg:algorithm1-prime})
\State Reshape $u$ to a complex symmetric matrix using \eqref{eq:matricization}: $U \gets \mathrm{matricize}(u)$
\State Find any $z$ such that $zz^\top=U$
\State $w[E] \gets y$
\State $w[E^c] \gets z$
\State \Return $w$
\EndProcedure
\end{algorithmic}
\end{algorithm}

We note that matricizing $u$ produces a complex symmetric matrix, not a Hermitian one. To find $z$, one can simply take the first column of $U$ scaled by the appropriate complex scalar. There is a unique choice of $z$, up to negation.

The success of these algorithms hinge on the success of the \textsc{FindVarietyConstrainedSolutions} subroutine applied in this case.
In Section~\ref{sec:analysis-of-aided-recovery}, we analyze the two sets of equivalent matrices $\mathcal M(A, u, \mathcal X)$ corresponding to these two problems, to determine when these methods succeed.

\section{Analysis of the Aided Recovery Procedure}
\label{sec:analysis-of-aided-recovery}

In Section~\ref{subsec:variety-constrained-linear-systems} we repurposed the subspace--conic variety intersection algorithm from~\cite{johnston_computing_2023} via our modification in Algorithm~\ref{alg:algorithm1-prime} to solve VC-LS's.
We now apply our general construction from Definition~\ref{def:equiv-M-matrix-set} for the dMRA and pMRA models.

In Example~\ref{ex:variety-symmetric-rank-1 matrices} we introduced the variety of symmetric matrices of rank-at-most 1.
We now formally describe

\begin{definition}
    For the variety of symmetric rank-1 matrices $\mathcal X_1^\vee \subseteq (\Cx^d)^{\oasterisk 2}$, define 
    \[
    I(\mathcal X_1^\vee)_2 := \Span(v^{\oasterisk 2} : v\in\mathcal X_1^\vee)^\perp \subseteq ((\Cx^d)^{\oasterisk 2})^{\oasterisk 2}.
    \]
\end{definition}
$I(\mathcal X_1^\vee)_2$ is the subspace spanned by the degree-2 homogeneous polynomials cutting out $\mathcal X_1^\vee$.

\begin{proposition}[\cite{johnston_computing_2023}]
    $\dim(I(\mathcal X_1^\vee)_2) = \frac1{12}(d+1)d^2(d-1)$.
\end{proposition}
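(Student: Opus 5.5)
The plan is a dimension count: compute the dimension of $\Span(v^{\oasterisk 2} : v\in\mathcal X_1^\vee)$ and subtract it from the dimension of the ambient space $((\Cx^d)^{\oasterisk 2})^{\oasterisk 2}$. Taking the orthogonal complement preserves the complementary dimension whether the inner product is treated as bilinear or sesquilinear, so no care is needed there. Set $N=\mchoose{d}{2}=d(d+1)/2$. The ambient space has dimension $\mchoose{N}{2}=N(N+1)/2$.

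The main step is to show $\dim \Span((x^{\oasterisk 2})^{\oasterisk 2} : x\in\Cx^d) = \mchoose{d}{4}=\binom{d+3}{4}$. I would argue through the unsymmetrized picture.

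\emph{Identifying the span.} By Lemma~\ref{lemma:symmetric-product-inner-product} and Lemma~\ref{lemma:canonical-orthonormal-basis-symmetric-space}, taking coordinates in the bases $\mathcal S_k$ is an isometry onto the symmetric subspaces. Hence $(x^{\oasterisk 2})^{\oasterisk 2}$ is the coordinate vector of $(x^{\otimes 2})\otimes(x^{\otimes 2})$, viewed inside $S^2(S^2(\Cx^d))\subseteq ((\Cx^d)^{\otimes 2})^{\otimes 2}$. Regrouping tensor factors, $((\Cx^d)^{\otimes 2})^{\otimes 2}\cong(\Cx^d)^{\otimes 4}$ is a linear isometry, and it carries $(x^{\otimes 2})^{\otimes 2}$ to $x^{\otimes 4}$. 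It therefore suffices to show that $\{x^{\otimes 4}: x\in\Cx^d\}$ spans exactly $S^4(\Cx^d)$.

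\emph{Spanning $S^4(\Cx^d)$.} Containment in $S^4(\Cx^d)$ is clear. For equality, I would use polarization. Expand $(\sum_i t_i x_i)^{\otimes 4}$ as a polynomial in the scalars $t_i$. Extracting the coefficient of $t_1t_2t_3t_4$ (a finite linear combination of evaluations at distinct $t$) yields $24\,\Sym_4(x_1\otimes x_2\otimes x_3\otimes x_4)$, which lies in the span. These symmetrizations of basis tensors give every basis vector $\vec e_M$ of $\mathcal S_4$. So the span has dimension $\binom{d+3}{4}$.

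\emph{Arithmetic.} It remains to check
\[
\frac{N(N+1)}{2}-\binom{d+3}{4}=\frac{d(d+1)(d^2+d+2)}{8}-\frac{d(d+1)(d+2)(d+3)}{24}=\frac{d(d+1)}{24}\,(2d^2-2d)=\frac{(d+1)d^2(d-1)}{12}.
\]

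The step needing the most care is the identification of the span. One must confirm that the nested symmetric coordinates $((\cdot)^{\oasterisk 2})^{\oasterisk 2}$ embed linearly and injectively, so that dimensions are preserved. The weights $w_M$ could in principle obscure this, but the isometry statements in Lemma~\ref{lemma:symmetric-product-inner-product} and Lemma~\ref{lemma:canonical-orthonormal-basis-symmetric-space} handle them.
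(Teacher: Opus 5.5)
Your proof is correct and follows the paper's argument. Both compute the ambient dimension $\binom{\binom{d+1}{2}+1}{2}$, identify $\Span(v^{\oasterisk 2} : v\in\mathcal X_1^\vee)$ with $(\Cx^d)^{\oasterisk 4}$ of dimension $\binom{d+3}{4}$, and subtract. The paper only asserts that identification, whereas you justify it through the isometric regrouping into $(\Cx^d)^{\otimes 4}$ and polarization, and you also carry out the arithmetic explicitly.
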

\begin{proof}
    This follows from the observations that $\Span(v^{\oasterisk 2} : v\in\mathcal X_1^\vee) \cong (\Cx^d)^{\oasterisk 4}$.
    Then
    \[
    \dim(I(\mathcal X_1^\vee)_2) = \binom{\binom{d+1}{2}+1}{2} - \binom{d+3}{4} = \frac1{12}(d+1)d^2(d-1).
    \]
\end{proof}

\begin{definition}
    \label{def:basis-of-symmetric-minors}
    Let $\mathcal P$ be the set of vectors
    \[
        p_{j_1j_2k_1k_2} =\frac{\vec e_{\mset{\mset{j_1,k_1},\mset{j_2,k_2}}}}{ w_{\mset{\mset{j_1,k_1},\mset{j_2,k_2}}} w_{\mset{j_1,k_1}}  w_{\mset{j_2,k_2}}} - \frac{\vec e_{\mset{\mset{j_1,k_2},\mset{j_2,k_1}}}}{ w_{\mset{\mset{j_1,k_2},\mset{j_2,k_1}}} w_{\mset{j_1,k_2}}  w_{\mset{j_2,k_1}}} \in ((\Cx^d)^{\oasterisk 2})^{\oasterisk 2}
    \]
    for $j_1,j_2,k_1,k_2 \in [d)$ such that $j_1\leq k_1$, $j_2\leq k_2$, $j_1 < j_2$ and $k_1 < k_2$, and $w_K$ the weight of a multiset $K$ defined in \eqref{eq:multiset-weight}.
\end{definition}
As discussion in Section~\ref{subsec:variety-constrained-linear-systems}, these basis vectors can be identified with the $2\times 2$ minors of a $d\times d$ symmetric matrix by taking $\mset{j_1,k_1}$ to be coordinates for the upper left entry and $\mset{j_2,k_2}$ be coordinates for the lower right entry of a $2\times 2$ submatrix whose determinant is being computed.
The weights on each partially symmetric standard basis tensor ensure correct normalization given our definition of symmetric lift in Definition~\ref{def:symmetrized-Kronecker-product}.
We show that the conditions on the indices ensure this is a set of linearly independent vectors.

\begin{proposition}
    \label{prop:basis-for-cutout}
    $\mathcal P$ is a basis for $I(\mathcal X_1^\vee)_2$.
\end{proposition}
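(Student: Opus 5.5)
We prove the statement in three steps: every $p\in\mathcal P$ lies in $I(\mathcal X_1^\vee)_2$, the vectors in $\mathcal P$ are linearly independent, and $|\mathcal P|$ equals $\dim I(\mathcal X_1^\vee)_2=\frac1{12}(d+1)d^2(d-1)$.

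\textbf{Containment.} By the computation in Example~\ref{ex:variety-symmetric-rank-1 matrices}, for any $B=v^{\oasterisk 2}$ we have
\[
\langle p_{j_1j_2k_1k_2}, B^{\oasterisk 2}\rangle
= M[j_1,k_1]\,M[j_2,k_2]-M[j_1,k_2]\,M[j_2,k_1],
\]
where $M=\mathrm{matricize}(B)=vv^\top$. This $2\times 2$ minor vanishes because $\mathrm{rank}(vv^\top)\le 1$. Since $I(\mathcal X_1^\vee)_2$ is the orthogonal complement of the span of such $B^{\oasterisk 2}$, every $p\in\mathcal P$ lies in it.

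\textbf{Linear independence.} Each $p_{j_1j_2k_1k_2}$ is a combination of at most two standard basis vectors, indexed by the pairs $\mset{\mset{j_1,k_1},\mset{j_2,k_2}}$ and $\mset{\mset{j_1,k_2},\mset{j_2,k_1}}$. Both pairs are attached to the same underlying $4$-element multiset $\mset{j_1,j_2,k_1,k_2}$.

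So we group the vectors of $\mathcal P$ by this $4$-multiset $Q$. Vectors in different groups have disjoint supports, so it suffices to prove independence within each group.

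Fix $Q$. Its standard basis vectors correspond to the ways of splitting $Q$ into two pairs; there are at most three such pairings. Sort $Q=\mset{a\le b\le c\le e}$. The constraints $j_1<j_2$, $k_1<k_2$, $j_1\le k_1$, $j_2\le k_2$ force $j_1=a$ and $k_2=e$. Hence the only admissible minors for $Q$ come from the assignments $(j_2,k_1)=(c,b)$ or $(b,c)$, which together cover at most two of the pairings.

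We then check by cases on the coincidence pattern of $a,b,c,e$ (all distinct, one repeat, two repeats, and so on) that the surviving vectors are nonzero and independent. In the all-distinct case there are three pairings and two vectors; these are independent because the pairing $\mset{\mset{a,e},\mset{b,c}}$ appears in only one of them. Degenerate cases either give a zero vector, which is excluded by the strict inequalities, or a single vector.

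\textbf{Counting.} Summing over the types of $Q$ gives
\[
|\mathcal P| = 2\binom d4 + (\text{contributions from } Q \text{ with repeats}).
\]
The repeated cases are: pattern $\mset{a,a,b,c}$, pattern $\mset{a,b,b,c}$, pattern $\mset{a,b,c,c}$, and pattern $\mset{a,a,b,b}$. These contribute respectively $\binom d3$, $\binom d3$, $\binom d3$, and $\binom d2$ per their case analysis. For example, $\mset{a,a,b,b}$ gives the single minor $M_{ab}^2-M_{aa}M_{bb}$.

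We then verify the identity
\[
2\binom d4+3\binom d3+\binom d2=\frac{(d+1)d^2(d-1)}{12}.
\]
Combined with containment and independence, this shows $\mathcal P$ is a basis.

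The main obstacle is the per-$Q$ case analysis of which index patterns survive the strict and non-strict inequalities. If the counts do not match the dimension, the fallback is to show that $\mathcal P$ spans the span of all $2\times2$ minors of a symmetric matrix. One would use the three-term Plücker-type relation among the pairings of a $4$-set, together with the standard fact that these minors span the degree-$2$ part of the ideal of the rank-one locus.
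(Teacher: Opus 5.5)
Your proof is correct and follows the same three steps as the paper. Containment comes from the vanishing $2\times 2$ minors. Independence comes from grouping $\mathcal P$ by the underlying $4$-multiset, so different groups have disjoint supports. Finally, the count matches $\frac1{12}(d+1)d^2(d-1)$.

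The only real difference is in the count. You count per $4$-multiset pattern and get $2\binom{d}{4}+3\binom{d}{3}+\binom{d}{2}$, while the paper sums the sizes of the regions $\mathrm{TR}(j_1,k_1)$. Both totals are correct, and your version has the small advantage that one case check gives both the count and the independence.

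One slip needs fixing in the all-distinct case. The pairing $\mset{\mset{a,e},\mset{b,c}}$ is exactly the one shared by both vectors, not one appearing in only one of them. Since $j_1=a$ and $k_2=e$ are forced, it is the second term of both $p_{a\,b\,c\,e}$ and $p_{a\,c\,b\,e}$. Independence should instead be justified by the pairings that each vector has alone, namely $\mset{\mset{a,c},\mset{b,e}}$ and $\mset{\mset{a,b},\mset{c,e}}$.
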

See Section~\ref{subsec:polynomial-subspaces} for the proof.

\subsection{Sufficient Conditions for Aided Recovery in the dMRA Model}
\label{subsec:dMRA-sufficient-conditions}
Since this section and the next concern aided recovery in the context of recursion we will assume that $d$ is even.

Let
\begin{equation}
    \label{eq:dMRA-column-labels}
    C := \{\mset{j,k} : j,k\in[d), j + k + 1 < d\}
\end{equation}
be a set of multisets.
Treated as coordinates in the plane $(j,k)$ with $j\leq k$ for even $d$ these make up one fourth of the square $[d)\times[d)$, so $|C|=d^2/4$.
We use the Dershowitz--Manna order to construct a map $\beta':C\to [d^2/4]$ which maps a multiset in $C$ to its index when all are arranged in increasing order.
Next, we will define
\[
-C := \{\mset{d-1-j-1,d-1-k} : \{j,k\}\in C\}
\]
which we may think of as the reflection of the original multisets-as-coordinates in $C$ over the line $k = d-j$.
Define the set
\[
L_1 := \{\mset{j,k} : j,k\in[d), j+k+1=d\}.
\]
These three sets form a partition of the set of all two-element multisets of $[d)$, i.e.\ $C \sqcup -C \sqcup L_1 = \mchoose{[d)}{2}$.
Using this partition we extend our original ordering to a map from all multisets $\beta: \mchoose{[d)}{2} \to [d^2/4] \cup\{0\}$ by $\beta(C)=\beta(-C)=\beta'(C)$ and $\beta(L_1) = 0$ for any $j\in[d)$.
Finally define a function $\sgn : \mchoose{[d)}{2} \to \{-1,0,1\}$ by $\sgn(C) = 1$, $\sgn(-C)=-1$ and $\sgn(L_1) = 0$.

\begin{definition}
    \label{def:dMRA-M-matrix}
    For $\hat x\in \Cx^{2d}$ the matrix $M_\dMRA(\hat x)$ is defined to have the rows
    \begin{equation}
        \begin{split}
        (M_\dMRA)&[\mset{\mset{j_1, k_1},\mset{j_2,k_2}},:] =  \\
            &\begin{array}{rll}
               \quad \hat{x}[2j_1+1]\cdot\hat{x}[2k_1+1] \cdot\hat{x}[2(j_2 + k_2 + 1)] \cdot \sgn(\mset{j_2, k_2}) & \vec{e}_0 \oasterisk \vec{e}_{\beta(\mset{j_2, k_2})} \\
              -\: \hat{x}[2j_1+1]\cdot\hat{x}[2k_2+1]\cdot\hat{x}[2(j_2 + k_1 + 1)] \cdot \sgn(\mset{j_2, k_1}) & \vec{e}_0 \oasterisk \vec{e}_{\beta(\mset{j_2, k_1})} \\
              +\: \hat{x}[2j_2+1]\cdot\hat{x}[2k_2+1]\cdot\hat{x}[2(j_1 + k_1 + 1)] \cdot \sgn(\mset{j_1, k_1})& \vec{e}_0 \oasterisk \vec{e}_{\beta(\mset{j_1, k_1})} \\
              -\: \hat{x}[2j_2+1]\cdot\hat{x}[2k_1+1]\cdot\hat{x}[2(j_1 + k_2 + 1)] \cdot \sgn(\mset{j_1, k_2}) & \vec{e}_0 \oasterisk \vec{e}_{\beta(\mset{j_1, k_2})} \\
              +\: \hat{x}[2(j_1 + k_1 + 1)]\cdot\hat{x}[2(j_2 + k_2 + 1)] \cdot \sgn(\mset{j_1, k_1})\cdot \sgn(\mset{j_2, k_2})& \vec{e}_{\beta(\mset{j_1, k_1})} \oasterisk \vec{e}_{\beta(\mset{j_2, k_2})} \\
              -\: \hat{x}[2(j_1 + k_2 + 1)]\cdot\hat{x}[2(j_2 + k_1 + 1)]\cdot \sgn(\mset{j_1, k_2})\cdot \sgn(\mset{j_2, k_1}) & \vec{e}_{\beta(\mset{j_1, k_2})} \oasterisk \vec{e}_{\beta(\mset{j_2, k_1})}
        \end{array}
        \end{split}
    \end{equation}
    with $0\leq j_1 \leq k_1 < d$, $0\leq j_2 \leq k_2 < d$, $j_1 < j_2$ and $k_1 < k_2$.
    Indices are taken modulo $2d$ when accessing entries of $\hat{x}$ and the rowspace of $M$ should be considered a subspace of $(\Cx^{d^2/4+1})^{\oasterisk 2} \cap \Span(\vec e_0 \oasterisk \vec e_0)^\perp$.
\end{definition}

For $\hat x\in \Cx^{2d}$, $M_\dMRA(\hat x)$ has $\frac1{12}(d+1)d^2(d-1)$ rows and $\binom{ d^2/4 + 2}{2} - 1$ columns.
As a result, it has more rows than columns when $d \ge 4$.
It is also a very sparse matrix, having at most 6 non-zero entries per row.

\begin{proposition}
    \label{prop:M-dMRA-is-instance}
    For $\hat x\in \Cx^{2d}$ and any $g\in \dihgroup{2d}$,
    \[
        M_\dMRA(g\cdot \hat x) \in \mathcal M\left(A_\dMRA(\hat x[E]), (\hat x[E^c])^{\oasterisk 2}, \mathcal X_1^\vee\right)
    \]
    where $\mathcal M(A, u,\mathcal X)$ is the set of equivalent matrices in Definition~\ref{def:equiv-M-matrix-set} for a variety-constrained linear system $Au=b$ with planted solution $u\in\mathcal X$.
\end{proposition}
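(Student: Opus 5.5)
The plan is to exhibit $M_\dMRA(\hat x)$ directly as a product $\Phi U$ of the form required by Definition~\ref{def:equiv-M-matrix-set}. Then I will show that replacing $\hat x$ by $g\cdot\hat x$ changes $M_\dMRA$ only by invertible row and column operations. Since $\mathcal M(A,u,\mathcal X)$ is closed under matrix equivalence, this suffices. For $\Phi$ I take the matrix whose rows are the vectors $p_{j_1j_2k_1k_2}\in\mathcal P$; by Proposition~\ref{prop:basis-for-cutout} these rows form a basis of $I(\mathcal X_1^\vee)_2$, so $\Phi$ is admissible. The real work is to find an explicit basis of $\ker A_\dMRA(\hat x[E])$ and to use it to build $U$.

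\textbf{Step 1: the kernel.} Write $A=A_\dMRA(\hat x[E])=BX$ as in the proof of Proposition~\ref{prop:dMRA-variety-constrained-linear-system}. By \eqref{eq:columns-A-dMRA}, the column of $B$ indexed by $\mset{j,k}$ equals the column indexed by $\mset{-j-1,-k-1}$. Columns whose labels are not related by $\mset{j,k}\mapsto\mset{-j-1,-k-1}$ have disjoint supports. The involution $\mset{j,k}\mapsto\mset{-j-1,-k-1}$ exchanges $C$ and $-C$ and fixes every element of $L_1$.
\begin{itemize}
\item For $\mset{j,k}\in L_1$ the column is $\sqrt3\,\hat x[0]\,\vec e_j\oasterisk\vec e_k\otimes\vec e_0$ (since $-(j+k+1)\equiv 0 \pmod d$). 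This is nonzero for generic $\hat x$ and contributes nothing to the kernel.
\item For each $c=\mset{j,k}\in C$, the pair of equal $B$-columns yields exactly one kernel vector,
\[
n_c := \hat x[2(j+k+1)]\,\vec e_{\mset{j,k}} - \hat x[-2(j+k+1)]\,\vec e_{\mset{-j-1,-k-1}} .
\]
\end{itemize}
Hence $\{n_c\}_{c\in C}$ is a basis of $\ker A$ and $\dim\ker A=d^2/4$. I will label $n_c$ by $\beta(c)\in[d^2/4]$ and set $w_0:=u=(\hat x[E^c])^{\oasterisk2}$, $w_{\beta(c)}:=n_c$. By construction, the matricized entry of $w_{\beta(c)}$ at any position $\mset{a,b}\in C\sqcup -C$ with $\beta(\mset{a,b})=\beta(c)$ is, up to the weight $1/w_{\mset{a,b}}$, equal to $\sgn(\mset{a,b})\,\hat x[2(a+b+1)]$. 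This uses the identity $-2(j+k+1)\equiv 2((-j-1)+(-k-1)+1)\pmod{2d}$.

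\textbf{Step 2: the product $\Phi U$.} Let $W$ have columns $w_0,\dots,w_{d^2/4}$. Then $\{w_a\oasterisk w_b\}$ with $\mset{a,b}\neq\mset{0,0}$ is a basis of $(\Span(u)\oasterisk\ker A)+(\ker A)^{\oasterisk2}$. So $U:=W^{\oasterisk2}$ with the $\vec e_0\oasterisk\vec e_0$ column deleted is admissible. By Corollary~\ref{cor:symmetric-product-inner-product-expansion}, the entry of $\Phi U$ in row $p_{j_1j_2k_1k_2}$ and column $\vec e_a\oasterisk\vec e_b$ is (a weight times) the polarized $2\times2$ minor
\[
\tfrac12\big(V_a[j_1,k_1]V_b[j_2,k_2]+V_b[j_1,k_1]V_a[j_2,k_2]-V_a[j_1,k_2]V_b[j_2,k_1]-V_b[j_1,k_2]V_a[j_2,k_1]\big),
\]
where $V_a=\mathrm{matricize}(w_a)$. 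Each $V_{\beta(c)}$ has nonzero entries only at the two positions $c$ and $-c$, and $V_0[a,b]=\hat x[2a+1]\hat x[2b+1]$. Therefore only the columns $\vec e_0\oasterisk\vec e_{\beta(\cdot)}$ and $\vec e_{\beta(\cdot)}\oasterisk\vec e_{\beta(\cdot)}$ listed in Definition~\ref{def:dMRA-M-matrix} receive contributions. Substituting Step~1 then reproduces the six terms of that row.

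I expect the main obstacle to be the bookkeeping in this step. The factors $1/2$ from polarization, the weights $w$ in $\mathcal P$ and in $\oasterisk$, and coincidences of positions (e.g.\ $\beta(\mset{j_1,k_1})=\beta(\mset{j_2,k_2})$, or positions in $L_1$ where $\sgn=0$) must all match the stated row up to an invertible diagonal rescaling of rows and columns. I would first verify the generic case where all four positions are distinct and lie in $C\sqcup -C$, and then treat the degenerate coincidences separately.

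\textbf{Step 3: the group action.} Using the matrices $D_1,D_2,\reflel_d^{\oasterisk2}$ from the proof of Proposition~\ref{prop:dMRA-variety-constrained-linear-system}, I will check how the entries of $M_\dMRA$ transform when $\hat x$ is replaced by $g\cdot\hat x$.
\begin{itemize}
\item A shift by $\ell$ multiplies each entry by a phase. This phase factors as a row factor (determined by $j_1,k_1,j_2,k_2$) times a column factor: the phases of $\hat x[2(j+k+1)]$ are exactly compensated by those of the odd coefficients. So the change is a diagonal row scaling times a diagonal column scaling.
\item Reflection permutes the rows via $\mset{j,k}\mapsto\mset{-j-1,-k-1}$, maps $C$ to $-C$, and flips $\sgn$. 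Its effect is therefore a permutation of rows together with sign changes on columns.
\end{itemize}
In both cases $M_\dMRA(g\cdot\hat x)=P^{-1}M_\dMRA(\hat x)Q$ for invertible $P,Q$, which gives membership in $\mathcal M$ for every $g\in\dihgroup{2d}$.
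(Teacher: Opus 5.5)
Your proposal is correct and follows essentially the same route as the paper's proof. Both use the explicit sparse basis of $\ker A_\dMRA(\hat x[E])$ (the paper's Lemma~\ref{lemma:characterize-nullspace-dMRA}), take $U$ to be the symmetric square of $[\,u\ \ N\,]$ with the $\vec e_0\oasterisk\vec e_0$ column removed, build $\Phi$ from $\mathcal P$, and then use invariance of $\mathcal M$ under matrix equivalence to handle the group action. The differences are cosmetic: your unweighted null vectors cost a diagonal column rescaling, which the paper avoids by scaling $N$ by the weights $w$. Your remark that reflection also negates the $\vec e_0\oasterisk\vec e_\beta$ columns, rather than merely permuting rows as the paper states, is in fact correct and slightly more careful than the paper.
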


\begin{conjecture}
    \label{conj:full-rank-dMRA-M-matrix} For $d=2^k$ with $k \ge 2$, $M_\dMRA(\hat x)$ has full column-rank as a symbolic matrix of polynomials where the variables are the $d+1$ real and $d-1$ imaginary components of the Fourier coefficients of $x\in\Rl^{2d}$.
\end{conjecture}
See Section~\ref{subsec:numerical-evidence-conjectures} for numerical evidence that these matrices have full column-rank symbolically.
We will also drop the dependence on $\hat x$, replacing it with the parameter $d$, and writing $M_\dMRA(d)$, to emphasize when we are considering the construction as a symbolic matrix.

\begin{remark}[Symbolic Matrices]
    \label{rem:symbolic-real-and-imaginary-components}
    While we define $M_\dMRA(\hat x)$ in terms of Fourier coefficients of $x$, the signal $x$ is real so $\hat x[j]=\overline{\hat x[2d-j]}$ for $j\in[2d)$.
    Due to this conjugate relationship, $\hat x[j]$ and $\hat x[2d-j]$ cannot be considered distinct complex variables symbolically.
    Instead we consider symbolic variables $r_0,r_1,\dots,r_d$ and $c_1,c_2,\dots,c_{d-1}$ and define $\hat x$ by
    \begin{equation*}
    \begin{split}
        \Re(\hat x) &=: (r_0, r_1, r_2,\dots, r_{d-1}, r_d, r_{d-1},\dots,r_2, r_1) \quad \text{and}\\
        \Im(\hat x) &=: (0, c_1,c_2,\dots, c_{d-1}, 0, -c_{d-1}, -c_{d-2},\dots, -c_2, -c_1).
    \end{split}
    \end{equation*}
    Now $M_\dMRA(\hat x)$ is a matrix of symbolic polynomials in these underlying symbolic variables. Such a matrix is said to have full column-rank if some maximal minor evaluates to a nonzero polynomial.
\end{remark}

If $M_\dMRA(d)$ has full column-rank symbolically, this ensure the success of Algorithm~\ref{alg:extend-representative-dMRA} in retrieving the odd Fourier coefficients for generic $x\in\Rl^{2d}$.

\begin{theorem}[Aided recovery for dihedral MRA]
    \label{thm:dMRA-correct-extension}
    Fix even $d \ge 4$ and suppose $M_\dMRA(d)$ has full column-rank as a symbolic matrix.
    Then for generic $x\in\Rl^{2d}$, if $y$ is an arbitrary element in the $\dihgroup{d}$-orbit of $\hat x[E]$, Algorithm~\ref{alg:extend-representative-dMRA} runs in polynomial time on inputs $y$ and $T^{(3)}_\dMRA(\hat x)$ and recovers some $\hat x'$ in the $\dihgroup{2d}$-orbit of $\hat x$.
\end{theorem}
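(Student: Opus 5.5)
The proof chains together results already established in the paper. I would first fix $y=\tilde g\cdot \hat x[E]$ for some $\tilde g\in\dihgroup{d}$. By Proposition~\ref{prop:dMRA-variety-constrained-linear-system} there is a $g\in\dihgroup{2d}$ with $(g\cdot\hat x)|_E=y$, and $z=(g\cdot \hat x)|_{E^c}$ satisfies $A_\dMRA(y)z^{\oasterisk 2}=b_\dMRA(\hat x)$. So $u=z^{\oasterisk 2}\in\mathcal X_1^\vee$ is a planted solution of the VC-LS that Algorithm~\ref{alg:extend-representative-dMRA} solves. Note that $b_\dMRA$ is $T[E^c,E^c,E]$, which is invariant, so $b_\dMRA(\hat x)=b_\dMRA(g\cdot\hat x)$. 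The system is therefore literally the one attached to the vector $\hat x':=g\cdot\hat x$, with $\hat x'[E]=y$ and planted solution $(\hat x'[E^c])^{\oasterisk 2}$.

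By Proposition~\ref{prop:full-rank-unique-recovery}, it then suffices to show that some matrix in $\mathcal M(A_\dMRA(\hat x'[E]),(\hat x'[E^c])^{\oasterisk 2},\mathcal X_1^\vee)$ has full column rank. Applying Proposition~\ref{prop:M-dMRA-is-instance} to the vector $\hat x'$ with the identity group element shows that $M_\dMRA(\hat x')$ lies in this set. Also, $\hat x'=g\cdot\hat x$, and Proposition~\ref{prop:M-dMRA-is-instance} allows any group element. So full column rank of $M_\dMRA(\hat x)$ gives full column rank of $M_\dMRA(g\cdot\hat x)$, because all members of $\mathcal M$ are matrix-equivalent. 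In either formulation the requirement is only that $M_\dMRA(\hat x)$ has full column rank at the actual signal.

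For genericity, the symbolic hypothesis gives a maximal minor $q$ of $M_\dMRA(d)$ that is a nonzero polynomial in the real variables $r_0,\dots,r_d,c_1,\dots,c_{d-1}$ of Remark~\ref{rem:symbolic-real-and-imaginary-components}. These variables are invertible real-linear functions of $x\in\Rl^{2d}$ (the real DFT). Composing $q$ with this linear map gives a nonzero polynomial $p$ on $\Rl^{2d}$, and whenever $p(x)\neq 0$ the matrix $M_\dMRA(\hat x)$ has full column rank. This is exactly the genericity of Definition~\ref{def:generic}. For such $x$, Algorithm~\ref{alg:algorithm1-prime} returns exactly $u=(\hat x'[E^c])^{\oasterisk 2}$. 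Matricizing gives $U=zz^\top$ with $z=\hat x'[E^c]$, and the factorization step returns $\pm z$. The algorithm outputs $w$ with $w[E]=y$ and $w[E^c]=\pm\hat x'[E^c]$.

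The sign ambiguity, which I expect to be the main subtlety, is resolved by the Remark after Proposition~\ref{prop:dMRA-variety-constrained-linear-system}. The two group elements $g$ and $g'$ (shifts by $\ell$ and $\ell+d$) both fix $y$ on $E$ and give negated odd parts, since $e^{-2\pi i(2j+1)d/(2d)}=-1$. Hence $-\hat x'[E^c]=(g'\cdot\hat x)[E^c]$, and either sign lies in the $\dihgroup{2d}$-orbit of $\hat x$. For running time, $A$ and $b$ have polynomially many entries in $d$. Algorithm~\ref{alg:algorithm1-prime} solves a linear system and runs the JLV procedure, and here that procedure reduces to computing a null space, of polynomial size by the row and column counts of $M_\dMRA$, followed by a rank-one factorization. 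The final extraction of $z$ is trivial. All of these steps are polynomial time.
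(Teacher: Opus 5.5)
Your proposal is correct and follows essentially the same route as the paper's proof: the planted solution comes from Proposition~\ref{prop:dMRA-variety-constrained-linear-system}, full column rank of $M_\dMRA$ is fed through Propositions~\ref{prop:M-dMRA-is-instance} and~\ref{prop:full-rank-unique-recovery}, the rank-one factorization is determined up to sign, and the sign is absorbed by composing with $\fshiftel_d$; you are only more explicit than the paper about re-basing the system at $g\cdot\hat x$ and about the genericity polynomial. One small tightening, which the paper also leaves implicit: $M_\dMRA(\hat x)$ never involves $\hat x[0]$ (it appears only multiplied by $\sgn(L_1)=0$), yet Proposition~\ref{prop:M-dMRA-is-instance} rests on Lemma~\ref{lemma:characterize-nullspace-dMRA}, which needs $y[0]=\hat x[0]\neq 0$, so your polynomial $p$ should also carry the factor $\hat x[0]$.
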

Note that the algorithm is guaranteed to succeed for a specific $\hat x$ whenever the instance $M_\dMRA(\hat{x})$ has full column-rank.

As one can see in Definition~\ref{def:equiv-M-matrix-set}, the construction of $M_\dMRA(\hat x)$ requires a basis for the kernel of $A_\dMRA(\hat x[E])$.
To prove Proposition~\ref{prop:M-dMRA-is-instance} (and subsequently Theorem~\ref{thm:dMRA-correct-extension}), we give an explicit and sparse basis for this subspace.
For $y\in \Cx^d$, define the vectors
\begin{equation}
    \label{eq:general-null-vectors}
    n_{\mset{j,k}}(y) := y[j+k+1]  \vec{e}_j \oasterisk \vec{e}_k - y[-j-k-1]  \vec{e}_{-k-1} \oasterisk \vec{e}_{-j-1}
\end{equation}
where indices on the standard basis vectors and indexing entries of $y$ are taken modulo $d$.
Define
\begin{equation}
        \label{eq:dMRA-nullset}
        N_\dMRA(y) := \{n_{m}(y): m\in C\} \subseteq (\Cx^d)^{\oasterisk 2}
\end{equation}
where $C$ is the set defined in \eqref{eq:dMRA-column-labels}.
\begin{lemma}
    \label{lemma:characterize-nullspace-dMRA}
    When $y\in\Cx^d$ with each $y[j]\neq 0$ for $j\in[d)$, $N_\dMRA(y)$ is a basis for $\ker A_\dMRA(y)$
\end{lemma}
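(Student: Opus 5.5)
Write $A=A_\dMRA(y)$. The plan has three parts: show each $n_m(y)$ with $m\in C$ lies in $\ker A$, show these vectors are linearly independent, and show $\dim\ker A = |C| = d^2/4$.

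\emph{Kernel membership.} From \eqref{eq:columns-A-dMRA}, the column indexed by $\mset{j,k}$ is $\frac{\sqrt3}{2}y[-(j+k+1)]\,v_{\mset{j,k}}$, where
\[
v_{\mset{j,k}} := \vec e_j\oasterisk\vec e_k\otimes\vec e_{-j-k-1}+\vec e_{-j-1}\oasterisk\vec e_{-k-1}\otimes\vec e_{j+k+1}.
\]
Let $\tau(\mset{j,k})=\mset{-j-1,-k-1}$. Then $\tau$ is an involution, it preserves $w$, and it sends $s=j+k+1$ to $-s \pmod d$. Hence $v_{\tau(m)}=v_m$, so columns $m$ and $\tau(m)$ are parallel. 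Their coefficients are $y[-s]$ and $y[s]$ respectively. Applying $A$ to $n_m(y)=y[s]\vec e_m - y[-s]\vec e_{\tau(m)}$ therefore gives $\frac{\sqrt3}{2}\bigl(y[s]y[-s]-y[-s]y[s]\bigr)v_m=0$. (One should check that $\vec e_{-k-1}\oasterisk\vec e_{-j-1}$ is exactly $\vec e_{\tau(m)}$ in the canonical basis.)

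\emph{Linear independence.} The map $\tau$ sends $s=j+k+1$, taken in $[1,d)$, to $2d-s$ in representative sum, and it fixes membership in $L_1=\{s=d\}$. I would verify that $\tau$ maps $C$ bijectively onto $-C$, i.e.\ it swaps sums below $d$ with sums above $d$ (using $j+k+1\in[1,2d-1]$), while $L_1$ is $\tau$-invariant. Thus each $n_m$ with $m\in C$ is supported on the pair $\{m,\tau(m)\}$, and these pairs are disjoint for distinct $m\in C$. Since $y[s]\ne0$, each $n_m$ is nonzero, so the family is linearly independent and has $|C|=d^2/4$ elements.

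\emph{Dimension count.} To get $\dim\ker A\le d^2/4$, I show the columns indexed by $C\sqcup L_1$ are linearly independent. The vectors $v_m$ for distinct $\tau$-orbits have disjoint supports: the term $\vec e_j\oasterisk\vec e_k\otimes\vec e_{-s}$ determines $\mset{j,k}$, and the two terms of one $v_m$ correspond to $m$ and $\tau(m)$. So for different orbits the supports in the standard basis of $(\Cx^d)^{\oasterisk2}\otimes\Cx^d$ do not overlap, and each $v_m$ is nonzero. For $m\in L_1$ with $\tau(m)=m$, the two summands coincide and give $2\vec e_m\otimes\vec e_0$, which is nonzero. With all $y[\cdot]\ne0$, this yields $\rank A\ge |C|+|\{\tau\text{-orbits in }L_1\}|$. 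Since the number of columns is $|C|+|{-C}|+|L_1|$, we get $\dim\ker A\le |{-C}|+|L_1|-\#\text{orbits}(L_1)$.

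The main obstacle is showing that $L_1$ contributes no extra kernel, i.e.\ that $\tau$ fixes every element of $L_1$, which makes $|L_1|=\#\text{orbits}(L_1)$. If $j+k+1=d$, then $-j-1\equiv k$ and $-k-1\equiv j$, so $\tau(\mset{j,k})=\mset{k,j}=\mset{j,k}$, as needed. Combining, $\dim\ker A\le|{-C}|=|C|=d^2/4$. Together with the first two steps, this shows $N_\dMRA(y)$ is a basis. The remaining details are careful modular bookkeeping: treating representatives with $j\le k$, and confirming that sums $j+k+1<d$ and $>d$ really are exchanged by $\tau$, including the edge cases $j=k$.
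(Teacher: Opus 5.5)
Your proof is correct and uses the same mechanism as the paper's: columns $m$ and $\tau(m)=\mset{-j-1,-k-1}$ are parallel, columns from different $\tau$-orbits have disjoint support (the paper phrases this via pairwise inner products of columns), so the kernel is spanned by the within-pair combinations $n_m$; your explicit rank bound over $C\sqcup L_1$, including the check that $\tau$ fixes $L_1$, makes rigorous the completeness step the paper asserts in one line. The check you flag comes out as $\vec e_j\oasterisk\vec e_k=w_m^{-1}\vec e_m$ rather than $\vec e_m$, which is a harmless common factor since $w_m=w_{\tau(m)}$.
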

The non-zero condition prevents $A_\dMRA(y)$ from having zero-columns.
Zero-columns would increase the dimension of the nullspace of $A_\dMRA$ which --- as we have noted --- does not necessarily break our method unless the bound in~\eqref{eq:M-matrix-dimension-bound} is exceeded.
However, generic signals will have non-zero Fourier coefficients so we do not consider these cases.

As suggested by the lemma, $C$, $\beta$ and $\sgn$ are helpful construct this specific basis.
Note that for any $j\in[d)$, $n_{\mset{j,d-j-1}}(y)$ is $\vec{0}$, hence it is an edge-case we handle along this anti-diagonal and the reason we split our set of multisets into $C$, $-C$ and $L_1$.

\begin{proof}[Proof of Lemma~\ref{lemma:characterize-nullspace-dMRA}]
    For some $y\in\Cx^d$ set $A=A_\dMRA(y)$.
    Consider the inner product of two columns of $A$ given by Definition~\ref{def:dMRA-recursive-linear-system}:
    \begin{align*}
        \langle A[:,\mset{j,k}], A[:,\mset{j',k'}] \rangle &= \frac{3}4 w^2_{\mset{j,k}} y[-(j+k+1)] \overline{ y[-(j'+k'+1)]} \\
        &\quad \langle ( \vec{e}_j\oasterisk \vec{e}_k \otimes \vec{e}_{-j-k-1}  +  \vec{e}_{-j-1}\oasterisk \vec{e}_{-k-1} \otimes \vec{e}_{j+k+1}), \\
        &\qquad ( \vec{e}_{j'}\oasterisk \vec{e}_{k'} \otimes \vec{e}_{-j'-k'-1}  +  \vec{e}_{-j'-1}\oasterisk \vec{e}_{-k'-1} \otimes \vec{e}_{j'+k'+1})\rangle.
    \end{align*}
    The columns will be orthogonal unless they have overlapping support.
    Observe that by Corollary~\ref{cor:symmetric-product-inner-product-expansion},
    \begin{align*}
        \langle \vec e_a \oasterisk \vec e_b \otimes \vec e_c, \vec e_{a'} \oasterisk \vec e_{b'} \otimes \vec e_{c'}\rangle &= \langle\vec e_a \oasterisk \vec e_b,  \vec e_{a'} \oasterisk \vec e_{b'} \rangle \langle\vec e_c, e_{c'} \rangle \\
        &= \frac12\left(\langle\vec e_a, e_{a'} \rangle \langle\vec e_b, e_{b'} \rangle + \langle\vec e_a, e_{b'} \rangle\langle\vec e_b, e_{a'} \rangle\right) \langle\vec e_c, e_{c'} \rangle \\
        &= \frac{|\mset{a,b}|}2  \bb1( \mset{a,b} = \mset{a',b'}) \bb1(c = c').
    \end{align*}
    By linearity of the inner product and since are indices are taken modulo $d$ we have
    \begin{align*}
        &\langle ( \vec{e}_j\oasterisk \vec{e}_k \otimes \vec{e}_{-j-k-1}  +  \vec{e}_{-j-1}\oasterisk \vec{e}_{-k-1} \otimes \vec{e}_{j+k+1}), (\vec{e}_{j'}\oasterisk \vec{e}_{k'} \otimes \vec{e}_{-j'-k'-1}  +  \vec{e}_{-j'-1}\oasterisk \vec{e}_{-k'-1} \otimes \vec{e}_{j'+k'+1})\rangle \\
        &= \left(\frac{|\mset{j,k}|}2  \bb1( \mset{j,k} \equiv \mset{j',k'} \pmod{d}) \bb1(-j-k-1 \equiv -j'-k'-1 \pmod{d})\right. \\
         &\quad + \frac{|\mset{j,k}|}2  \bb1( \mset{j,k} \equiv \mset{-j'-1,-k'-1} \pmod{d}) \bb1(-j-k-1 \equiv j'+k' +1 \pmod{d}) \\
        & \quad +\frac{|\mset{-j-1,-k-1}|}2  \bb1( \mset{-j-1,-k-1} \equiv \mset{j',k'}\pmod{d}) \bb1(j+k+1 \equiv -j'-k'-1\pmod{d}) \\
        & \left.\quad + \frac{|\mset{-j-1,-k-1}|}2  \bb1( \mset{-j-1,-k-1} \equiv \mset{-j'-1,-k'-1}\pmod{d}) \bb1(-j-k-1 \equiv j'+k' +1\pmod{d}) \right)\\
        &= |\mset{j,k}|  \bb1( \mset{j,k} \equiv \mset{j',k'} \pmod{d}) + |\mset{j,k}|  \bb1( \mset{j,k} \equiv \mset{-j'-1,-k'-1} \pmod{d}).
    \end{align*}
    Hence, columns are orthogonal unless $\mset{j,k} \equiv \mset{j',k'}$ (i.e.\ they are the same column) or $\mset{j,k} \equiv \mset{-j'-1,-k'-1}$.
    Notice that in the second case, these columns are scalar multiples of each other.
    We then have that
    \[
        y[j+k+1] \cdot A[:,\mset{j,k}] - y[-(j+k+1)] \cdot A[:,\mset{-j-1,-k-1}] = \vec 0.
    \]
    Hence, all null vectors take the form given in \eqref{eq:general-null-vectors}.
    Further, for $m_1,m_2\in C$ with $m_1\neq m_2$, $\langle n_{m_1}, n_{m_2}\rangle = 0$ since the null vectors have non-overlapping support.
    Hence, $N_\dMRA(y)$ is a orthogonal basis for $\ker A$.
\end{proof}

Having determined a sparse basis for the nullspace of $A_\dMRA(y)$ in terms of the entries of $y$ we may now show that $M_\dMRA(\hat x)$ lies in the equivalence class of matrices that determine whether the variety-constrained linear system has a unique planted solution recoverable by Algorithm~\ref{alg:extend-representative-dMRA}.

\begin{proof}[Proof of Proposition~\ref{prop:M-dMRA-is-instance}]
    We will proceed in two stages: first we will show that the matrix $M_\dMRA(\hat x) \in \mathcal M\left(A_\dMRA(\hat x[E]), (\hat x[E^c])^{\oasterisk 2}, \mathcal X_1^\vee\right)$, and then we will show that for $g\in\dihgroup{2d}$, $M_\dMRA(g\cdot \hat x) = W_g M_\dMRA(\hat x)$ where $W_g$ is some change-of-basis matrix acting on the rows of $M_\dMRA(\hat x)$ depending on $g$.
    
    Consider a matrix $N\in \Cx^{\mchoose{d}{2}\times |C|}$ with rows
    \[
        N[\mset{j,k}, :] = \sgn(\mset{j,k}) w_{\mset{j,k}} \hat{x}[2(j+k+1)] \vec{e}_{\beta(\mset{j,k})}.
    \]
    The columns of $N$ are the null vectors of $N_\dMRA$ scaled by the weight $w_{\mset{j,k}}$ so by Lemma~\ref{lemma:characterize-nullspace-dMRA}, $\operatorname{colspan} N = \ker A_\dMRA$.
    Next, we can expand the solution vector $\hat x[E^c]^{\oasterisk 2}$ by examining its entries,
    \[
        (\hat x[E^c]^{\oasterisk 2})[\mset{j,k}] = w_{\mset{2j+1,2k+1}}\hat x[2j+1]\hat x[2k+1] = w_{\mset{j,k}}\hat x[2j+1]\hat x[2k+1].
    \]
    If we form the $\Cx^{\mchoose{d}{2}\times (|C|+1)}$ block matrix
    \[
        U = \begin{bmatrix}
            (\hat x[E^c])^{\oasterisk 2} & N
        \end{bmatrix},
    \]
    where columns are indexed from $0$ to $|C|$, the rows have the form
    \[
        U[\mset{j,k}, :] = w_{\mset{j,k}}\left(\hat x[2j+1]\hat x[2k+1] \vec{e}_0 + s(\{j,k\}) \hat{x}[2(j+k+1)] \vec{e}_{\beta(\mset{j,k})}\right).
    \]
    Define the matrix $\tilde U$ to be $U^{\oasterisk 2}$ with the first column dropped.
    Using Lemma~\ref{lemma:symmetric-product-equivalences} we expand along the rows but exclude vectors in the direction $\vec{e}_0 \oasterisk \vec{e}_0$.
    This gives the three terms
    \begin{equation}
        \label{eq:lift-minus-planted}
        \begin{split}
            \tilde U[\mset{\mset{j_1,k_1},&\mset{j_2,k_2}}, :] = w_{\mset{\mset{j_1,k_1},\mset{j_2,k_2}}} w_{\mset{j_1,k_1}} w_{\mset{j_2,k_2}} \times \\
            &(\hat x[2j_1+1]\hat x[2k_1+1]  \hat{x}[2(j_2+k_2+1)] s(\mset{j_2,k_2})  \vec{e}_0 \oasterisk\vec{e}_{\beta(\mset{j_2,k_2})} \\
            &+ \:\hat x[2j_2+1]\hat x[2k_2+1]  \hat{x}[2(j_1+k_1+1)] s(\mset{j_1,k_1}) \vec{e}_0 \oasterisk\vec{e}_{\beta(\mset{j_1,k_1})} \\
            &+\:  \hat{x}[2(j_1+k_1+1)]  \hat{x}[2(j_2+k_2+1)]s(\mset{j_1,k_1})s(\mset{j_2,k_2})  \vec{e}_{\beta(\mset{j_1,k_1})} \oasterisk\vec{e}_{\beta(\mset{j_2,k_2})}).
        \end{split}
    \end{equation}
    Finally, define $\Phi$ to have rows given by the vectors in $\mathcal P$ from Definition~\ref{def:basis-of-symmetric-minors}. Left-multiplication by $\Phi$ has the effect of taking a linear combination of exactly two rows of $\tilde U$.
    The weights cancel and result in the final expression given in Definition~\ref{def:dMRA-M-matrix} which by construction is in $\mathcal M\left(A_\dMRA(\hat x[E]), (\hat x[E^c])^{\oasterisk 2}, \mathcal X_1^\vee\right)$.

    Next, we will show that $M_\dMRA(\hat x)$ transforms equivariantly in that for $g\in\dihgroup{2d}$, $M_\dMRA(g\cdot \hat x) = g\cdot M_\dMRA(\hat x)$ for some group action on the column space.
    First, consider the action of $\fshiftel_\ell \in \dihgroup{2d}$.
    Observe that for each monomial in the row, the sum of indices used to access entries is always $2(j_1 + k_1 + j_2 + k_2 + 2)$.
    Then
    \[
        M_\dMRA(\fshiftel_\ell \cdot \hat x)[\mset{\mset{j_1, k_1},\mset{j_2,k_2}},:] = e^{-2\pi i\ell (j_1 + k_1 + j_2 + k_2 + 2)/d} M_\dMRA(\hat x)[\mset{\mset{j_1, k_1},\mset{j_2,k_2}},:].
    \]
    If we define a matrix $D_\ell := \diag(e^{-2\pi i\ell (j_1 + k_1 + j_2 + k_2 + 2)/d} : \mset{\mset{j_1, k_1},\mset{j_2,k_2}})$,
    where $j_1, j_2, k_1, k_2\in[d]$ and $j_1\leq k_1$, $j_2\leq k_2$, $j_1 < j_2$ and $k_1 < k_2$,
    then
    \[
        M_\dMRA(\fshiftel_\ell\hat x) = D_\ell M_\dMRA(\hat x) \in \mathcal M\left(A_\dMRA(\hat x[E]), (\hat x[E^c])^{\oasterisk 2}, \mathcal X_1^\vee\right).
    \]
    since $D_\ell$ is an invertible matrix.
    It remains to show that exchange also causes a simple change of basis.
    Recall that $(\freflel\cdot  \hat x)[j] = \hat x[-j]$ for $j\in[2d)$ and with indices take modulo $d$.
    Then 
    \[
    M_\dMRA(\freflel \cdot \hat x)[\mset{\mset{j_1, k_1},\mset{j_2,k_2}},:] = M_\dMRA(\hat x)[\mset{\mset{d-j_2-1, d-k_2-1},\mset{d-j_1-1,d-k_1-1}},:].
    \]
    Observe that if $j_1 \leq k_1$, $j_2 \leq k_2$ and $k_1 < k_2$, then the mapping $j \mapsto d-j-1$ flips each inequality.
    Thus, the action of $\freflel$ simply permutes the rows of $M_\dMRA$ by swapping many pairs of rows.
    Hence, $\freflel$ induces some change of basis so in general
    \[
        M_\dMRA(g\cdot \hat x) \in \mathcal M\left(A_\dMRA(\hat x[E]), (\hat x[E^c])^{\oasterisk 2}, \mathcal X_1^\vee\right)
    \]
    for any $g\in\dihgroup{2d}$.
\end{proof}

Next, we will briefly give an elementary result in invariant theory about the decomposition of a symmetric rank-1 matrix.
\begin{lemma}
    \label{lemma:rank-1-up-to-sign}
    If $z\in \K^d$ and $Z= zz^\top$ then $Z$ determines $z$ up to sign.
\end{lemma}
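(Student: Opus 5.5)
The plan is to show two things: if $zz^\top = ww^\top$ then $w = \pm z$, and the sign of $z$ is genuinely undetermined. Note that the statement uses the transpose, not the conjugate transpose. So for $\K=\Cx$ we are working with a complex symmetric matrix, and the argument must avoid inner products and norms. It will use only entries and scalar comparisons.

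First the trivial case. If $Z=0$, then every diagonal entry $Z[j,j]=z[j]^2$ vanishes, so $z=0$, and $z$ is determined (with $-z=z$). Now assume $Z\neq 0$. Then $z\neq 0$, and we pick an index $j$ with $z[j]\neq 0$. Since $Z[j,j]=z[j]^2\neq 0$, this choice of $j$ is readable off $Z$ alone: it is any index with a nonzero diagonal entry.

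Next, suppose $w\in\K^d$ also satisfies $ww^\top = Z$. Comparing the $(j,j)$ entries gives $w[j]^2 = z[j]^2$. Hence $w[j]=\epsilon z[j]$ for some $\epsilon\in\{1,-1\}$, which holds over $\Rl$ or $\Cx$ because $t^2=a^2$ factors as $(t-a)(t+a)=0$ in a field. Comparing the $j$-th columns gives $w[j]\,w = Z[:,j] = z[j]\,z$. Dividing by $w[j]=\epsilon z[j]\neq 0$ gives $w = \epsilon z$.

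This argument is also constructive, which matches the remark after Algorithm~\ref{alg:extend-representative-pMRA}. One takes a square root $s$ of the nonzero diagonal entry $Z[j,j]$ and sets $z = Z[:,j]/s$. The two choices of square root give exactly $\pm z$.

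Finally, both signs genuinely occur, since $(-z)(-z)^\top = zz^\top$. So $Z$ determines $z$ exactly up to sign and no further. There is no real obstacle here. The only point needing care is the one already noted: for complex $z$ the symmetric (non-Hermitian) factorization fixes $z$ up to $\pm1$, not up to an arbitrary unit phase, and the entrywise argument above handles this without using any positivity.
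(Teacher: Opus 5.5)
Your proof is correct and takes essentially the same route as the paper: both pick an index $j$ with $z[j]\neq 0$ and compare the $j$-th columns of $Z$. The only difference is the order of steps. You fix the sign first from the diagonal entry $w[j]^2=z[j]^2$ and then divide, whereas the paper first gets $z_1=\alpha z_2$ from the column and then deduces $\alpha^2=1$ from $Z=\alpha^2 z_2z_2^\top$ with $Z\neq 0$.
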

\begin{proof}
    If $Z$ is the zero matrix then $z=\vec{0}$.
    Suppose $Z$ is not the zero matrix and both $Z=z_1z_1^\top$ and $Z=z_2z_2^\top$ for $z_1,z_2\in\K^d$ are rank-1 decompositions.
    Then there exists some $j\in[d]$ so that $z[j]\neq 0$.
    The $j$-th column of $Z$ is $Z[:,j] = z_1[j] z_1 = z_2[j]z_2$ Since $z_1[j]\ne 0$ we know $z_1 = \alpha z_2$ for some $\alpha \in \K$. But then $z_1 z_1^\top = Z = \alpha^2 z_2 z_2^\top$ so $1=\alpha^2$ and $\alpha=\pm1$. We can then conclude that $z_2 = \pm z_1$.
\end{proof}

We can now prove the correctness of the recovery step in Algorithm~\ref{alg:extend-representative-dMRA}.

\begin{proof}[Proof of Theorem~\ref{thm:dMRA-correct-extension}]
    Since $y$ is in the $\dihgroup{d}$-orbit of $\hat x[E]$, by Lemma~\ref{prop:dMRA-variety-constrained-linear-system} the linear system
    \[
        Au=b \quad \text{with}\quad u\in \mathcal X^\vee_1
    \]
    with $A=A_\dMRA(\hat y)$ and $b = b_\dMRA(\hat x)$ has a planted solution of the form $u= z^{\oasterisk 2}$ such that $z = (g\cdot \hat x)|_{E^c}$ and $(g\cdot \hat x)|_E = \hat y$ for some $g\in\dihgroup{2d}$.
    Define $M = M_\dMRA(\hat x)$ to be the specific instance of $M_\dMRA(d)$ using the coefficients of $\hat x$.
    Since symbolically $M_\dMRA(d)$ has full column-rank and $\hat x$ is generic, the specific instance $M$ has full column-rank.
    Then Proposition~\ref{prop:M-dMRA-is-instance} and Proposition~\ref{prop:full-rank-unique-recovery} together imply the subroutine \textsc{FindVarietyConstrainedSolutions} returns $u$.
    Now $u = z^{\oasterisk 2}$ so the matricization of $u$ may be written $U = zz^\top$.
    By Lemma~\ref{lemma:rank-1-up-to-sign} there is a $z'$ such that $U=z'z'^\top$ and $z'=z$ or $z'= -z$.
    We will show either choice of sign in the rank-1 decomposition is valid.
    If the sign is positive and $z'=z$, by assumption $z$ extends $y$ to some $w\in\Cx^{2d}$ in the $\dihgroup{2d}$-orbit of $\hat x$.
    If $z' = -z$ observe that $\fshiftel_d = \diag((-1)^j : j\in[2d))$ so $(\fshiftel_d g\cdot z)[E^c] = -(g\cdot z)[E^c]$ and $(\fshiftel_d g\cdot z)[E] = (g\cdot z)[E] = y$. Since $\fshiftel_d g \in \dihgroup{2d}$, either sign results in an element in the $\dihgroup{2d}$-orbit of $\hat x$.
\end{proof}

\subsection{Sufficient Conditions for Aided Recovery in the pMRA Model}
\label{subsec:pMRA-sufficient-conditions}

As in the previous section, we assume $d$ is even throughout.
The construction of $M_\pMRA$ is very similar to that of $M_\dMRA$ with a small modification to the column indexing function $\beta$.
Define the sets of multisets
\begin{align*}
    L_2 &:= \{\mset{j,k} : j,k\in[d), j+k+1\equiv d/2 \pmod{d}\}, \\
    C' &:= C\setminus L_2, \text{ and} \\
    -C' & := \{\mset{d-1-j,d-1-k} :\mset{j,k}\in C'\},
\end{align*}
where $C$ is defined in \eqref{eq:dMRA-column-labels}.
The accounting is somewhat trickier but $|L_2\cap C|=\lceil d/4\rceil$ (see Figure~\ref{fig:nullspace-diagram} for a visualization).
We can then determine that $|C' \sqcup L_2 | = \lceil d(d+1)/4 \rceil$.
Let $\gamma' : C' \cap L_2 \to  [\lceil d(d+1)/4 \rceil]$ give the DM order on $C' \cap L_2$.
We extend this using the partition given by four sets now, $\mchoose{[d)}{2} = L_1\sqcup C'\sqcup-C'\sqcup L_2$, so that
$\gamma : \mchoose{[d)}{2} \to [\lceil d(d+1)/4 \rceil]\cup\{0\}$ is a map such that $\gamma(C') = \gamma(-C') = \gamma'(C)$, $\gamma(L_2) = \gamma'(L_2)$ and $\gamma(L_1) = 0$.
We may reuse the original sign function defined at the start of Section~\ref{subsec:dMRA-sufficient-conditions}.

Finally, recall that $\hat x[d]$ is indeterminate in the pMRA model.
However, in the following we will set $\hat x[d] = 1$ to avoid dealing with a special case in construction.
Lemma~\ref{lemma:characterize-nullspace-pMRA} and the proof of Proposition~\ref{prop:M-pMRA-is-instance} will clarify why this is used.
With these modification, our $M$-matrix representative for the pMRA model is effectively the same as in Definition~\ref{def:dMRA-M-matrix}.

\begin{definition}
    \label{def:pMRA-M-matrix}
    For $\hat x\in \Cx^{2d}$ the matrix $M_\pMRA(\hat x)$ has rows
    \begin{equation}
        \label{eq:pMRA-M-matrix}
        \begin{split}
        (M_\pMRA)&[\mset{\mset{j_1, k_1},\mset{j_2,k_2}},:] =  \\
            &\begin{array}{rll}
               \quad \hat{x}[2j_1+1]\cdot\hat{x}[2k_1+1] \cdot\hat{x}[2(j_2 + k_2 + 1)] \cdot \sgn(\mset{j_2, k_2}) & \vec{e}_0 \oasterisk \vec{e}_{\gamma(\mset{j_2, k_2})} \\
              -\: \hat{x}[2j_1+1]\cdot\hat{x}[2k_2+1]\cdot\hat{x}[2(j_2 + k_1 + 1)] \cdot \sgn(\mset{j_2, k_1}) & \vec{e}_0 \oasterisk \vec{e}_{\gamma(\mset{j_2, k_1})} \\
              +\: \hat{x}[2j_2+1]\cdot\hat{x}[2k_2+1]\cdot\hat{x}[2(j_1 + k_1 + 1)] \cdot \sgn(\mset{j_1, k_1})& \vec{e}_0 \oasterisk \vec{e}_{\gamma(\mset{j_1, k_1})} \\
              -\: \hat{x}[2j_2+1]\cdot\hat{x}[2k_1+1]\cdot\hat{x}[2(j_1 + k_2 + 1)] \cdot \sgn(\mset{j_1, k_2}) & \vec{e}_0 \oasterisk \vec{e}_{\gamma(\mset{j_1, k_2})} \\
              +\: \hat{x}[2(j_1 + k_1 + 1)]\cdot\hat{x}[2(j_2 + k_2 + 1)] \cdot \sgn(\mset{j_1, k_1})\cdot \sgn(\mset{j_2, k_2})& \vec{e}_{\gamma(\mset{j_1, k_1})} \oasterisk \vec{e}_{\gamma(\mset{j_2, k_2})} \\
              -\: \hat{x}[2(j_1 + k_2 + 1)]\cdot\hat{x}[2(j_2 + k_1 + 1)]\cdot \sgn(\mset{j_1, k_2})\cdot \sgn(\mset{j_2, k_1}) & \vec{e}_{\gamma(\mset{j_1, k_2})} \oasterisk \vec{e}_{\gamma(\mset{j_2, k_1})}
        \end{array}
        \end{split}
    \end{equation}
    with $0\leq j_1 \leq k_1 < d$, $0\leq j_2 \leq k_2 < d$, $j_1 < j_2$ and $k_1 < k_2$.
    Indices are taken modulo $2d$ when accessing entries of $\hat{x}$ and the rowspace of $M_\pMRA$ should be considered a subspace of $(\Cx^{\lceil d(d+1)/4 \rceil+1})^{\oasterisk 2} \cap \Span(\vec e_0 \oasterisk \vec e_0)^\perp$.
    Further, one should replace $\hat x[d]$ with the constant 1 anywhere it appears as the expression above as $M_\pMRA$ does not depend on $\hat x[d]$'s value but the definition requires a non-zero constant in these positions to be correct.
\end{definition}

For $\hat x\in \Cx^{2d}$, $M_\pMRA(\hat x)$ has $\frac1{12}(d+1)d^2(d-1)$ rows and $\binom{\lceil d(d+1)/4 \rceil + 2}{2} - 1$ columns.
As a result, it is square when $d=4$ and has more rows than columns when $d > 4$.
It is also a very sparse matrix, having at most 6 non-zero entries per row.

We have an analogue to Proposition~\ref{prop:M-dMRA-is-instance}.
\begin{proposition}
    \label{prop:M-pMRA-is-instance}
    For $\hat x\in \Cx^{2d}$ and any $g\in \dihgroup{2d}$,
    \[
        M_\pMRA(g\cdot \hat x) \in \mathcal M\left(A_\pMRA(\hat x[E]), (\hat x[E^c])^{\oasterisk 2}, \mathcal X_1^\vee\right)
    \]
    where $\mathcal M(A, u,\mathcal X)$ is the set of equivalent matrices in Definition~\ref{def:equiv-M-matrix-set} for a variety-constrained linear system $Au=b$ with planted solution $u\in\mathcal X$.
\end{proposition}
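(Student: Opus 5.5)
The plan mirrors the proof of Proposition~\ref{prop:M-dMRA-is-instance}. The only real new ingredient is an explicit sparse basis for $\ker A_\pMRA(\hat x[E])$; this is the role of Lemma~\ref{lemma:characterize-nullspace-pMRA}, which I would prove first.

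\textbf{Kernel of $A_\pMRA$.} Since $A_\pMRA = \fprojOp^{(3)}_{E^c,E^c,E} A_\dMRA$, we have $\ker A_\dMRA \subseteq \ker A_\pMRA$. So every null vector $n_{\mset{j,k}}$ from \eqref{eq:general-null-vectors} with $\mset{j,k}\in C'$ still lies in the kernel. By Lemma~\ref{lemma:pMRA-zero-columns}, the columns indexed by $L_2$ are zero, which contributes the standard vectors $\vec e_{\mset{j,k}}$ for $\mset{j,k}\in L_2$. The identity $\hat x[d]=1$ is exactly what lets me write these uniformly as $\hat x[2(j+k+1)]\,\vec e_{\mset{j,k}}$-type columns. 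The $L_1$ columns are still zero, as in the dMRA case.

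It remains to show there is nothing else. I would study the restricted operator $\fprojOp_{2d}[E^c_d,E^c_{2d}]$ together with $\fprojOp_{2d}[E_d,E_{2d}]=\fprojOp_d$. By \eqref{eq:fourier-projection-operation}, each row of these combines only the coordinates $j$ and $-j$ (suitably reindexed). Consequently $\fprojOp^{(3)}_{E^c,E^c,E}$ identifies the two summands $\vec e_j\oasterisk\vec e_k\otimes \vec e_{-j-k-1}$ and $\vec e_{-j-1}\oasterisk\vec e_{-k-1}\otimes \vec e_{j+k+1}$ of each column of $A_\dMRA$ only up to a phase. Its kernel, restricted to the column space of $A_\dMRA$, is therefore spanned by the images of the $L_2$ columns, where $\fprojOp_d\vec e_{d/2}=0$. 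I would verify injectivity on the span of the remaining column pairs by exhibiting, for each pair class, a row of $\fprojOp^{(3)}_{E^c,E^c,E}$ whose support hits it and no other class. This gives $\dim\ker A_\pMRA = |C'| + |L_2| + |L_1|$ for generic nonzero $y$.

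\textbf{Assembling $M_\pMRA$.} Form $N$ with rows $\sgn(\mset{j,k})\,w_{\mset{j,k}}\,\hat x[2(j+k+1)]\,\vec e_{\gamma(\mset{j,k})}$. On $L_2$, replace $\sgn$ by $1$; this is consistent with the convention used there, and I must check the definition of $\sgn$ handles it. Then set $U=[(\hat x[E^c])^{\oasterisk 2}\ N]$ and expand $U^{\oasterisk 2}$ row-wise by Lemma~\ref{lemma:symmetric-product-equivalences}, dropping the $\vec e_0\oasterisk\vec e_0$ column. Finally, left-multiply by $\Phi$ built from $\mathcal P$ (Proposition~\ref{prop:basis-for-cutout}). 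Each row of $\Phi$ subtracts two rows of the lifted matrix, the weights cancel, and the result is \eqref{eq:pMRA-M-matrix}. This shows $M_\pMRA(\hat x)\in\mathcal M(A_\pMRA(\hat x[E]),(\hat x[E^c])^{\oasterisk 2},\mathcal X_1^\vee)$.

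\textbf{Equivariance.} This step is verbatim from the dMRA case. Under $\fshiftel_\ell$, each monomial in a row carries total index $2(j_1+k_1+j_2+k_2+2)$; the exception is $\hat x[d]$, which is now the constant $1$. Since $\fshiftel_\ell$ multiplies $\hat x[d]$ only by $(-1)^\ell$, I must confirm the phase is still uniform across a row, possibly after rescaling the $L_2$ columns by $(-1)^\ell$. That rescaling is a right change of basis, which keeps the matrix in $\mathcal M$. Under $\freflel$, the rows are permuted. In both cases the matrix stays in the equivalence class.

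I expect the main obstacle to be the completeness of the kernel basis: showing $\fprojOp^{(3)}_{E^c,E^c,E}$ kills nothing beyond the $L_2$ columns. The phase bookkeeping for $\hat x[d]$ under shifts is a secondary nuisance.
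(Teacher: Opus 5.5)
Your overall route matches the paper's. You characterize $\ker A_\pMRA(\hat x[E])$ (the content of Lemma~\ref{lemma:characterize-nullspace-pMRA}), assemble $M_\pMRA$ from $U=[(\hat x[E^c])^{\oasterisk 2}\ N]$ as in the dMRA case, and absorb the sign of $\hat x[d]$ under $\fshiftel_\ell$ by rescaling the $L_2$ columns of $N$.

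\textbf{The gap: $L_1$ columns are not zero.} The kernel step, which you rightly flag as the crux, rests on a false claim. For $\mset{j,k}\in L_1$ one has $-j-1\equiv k$ and $-k-1\equiv j \pmod d$. So the two summands in \eqref{eq:columns-A-dMRA} coincide, and $A_\dMRA[:,\mset{j,k}]=\sqrt3\,y[0]\,\vec e_j\oasterisk\vec e_k\otimes\vec e_0\neq\vec 0$. What vanishes on $L_1$ is the would-be null vector $n_{\mset{j,d-1-j}}(y)$, not the column.

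The column stays nonzero after projection. Indeed $\fprojOp_d\vec e_0=2\vec e_0$, and the odd block sends $\vec e_j\oasterisk\vec e_{d-1-j}$ to a nonzero multiple of $\vec e_m\oasterisk\vec e_m$ with $m=\min(j,d-1-j)$. No other column shares this support, because only $L_1$ columns have third index $0$. So each $L_1$ multiset is a singleton class contributing to the rank, and the correct count is
$\dim\ker A_\pMRA=\binom{d+1}{2}-|C'|-|L_1|=|C'|+|L_2|=\lceil d(d+1)/4\rceil$.
That is exactly the number of columns of your $N$.

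Under your count $|C'|+|L_2|+|L_1|$, the columns of $N$ could not span $\ker A_\pMRA$, since its $L_1$ rows vanish because $\sgn(L_1)=0$. Then $U$ would not be a basis of $\Span(u)\oasterisk\ker A+(\ker A)^{\oasterisk 2}$, and membership in $\mathcal M$ would fail. As written, your kernel step and your assembly step contradict each other, and it is the kernel step that must be corrected.

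\textbf{Two smaller points.}
\begin{enumerate}
\item $\fprojOp^{(3)}_{E^c,E^c,E}$ sends both summands of a column to the same basis vector. The surviving coefficient is therefore a sum of two phase products, and your injectivity check needs it nonzero off $L_2$. It is: write the indices $2j+1$, $2k+1$, $-2(j+k+1)$ as signed representatives in $(-d,d)$, which is possible off $L_2$. They sum to $0$ or $\pm 2d$, which forces the two phase products to coincide. Hence the coefficient is a nonzero multiple of $y[-(j+k+1)]$.
\item Under $\freflel$, the map $\mset{j,k}\mapsto\mset{d-1-j,d-1-k}$ exchanges $C$ and $-C$. It therefore flips $\sgn$ and swaps $\gamma(m)$ with $\gamma(-m)$ for $m\in L_2$. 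So $M_\pMRA(\freflel\hat x)$ is a row permutation of $M_\pMRA(\hat x)$ composed with a column change of basis: negating the planted column of $U$ and permuting the $L_2$ columns of $N$. It is not a pure row permutation. This is harmless for membership in $\mathcal M$, but it should be stated.
\end{enumerate}
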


\begin{conjecture}
    \label{conj:full-rank-pMRA-M-matrix} For $d=2^k$ with $k \ge 3$, $M_\pMRA(\hat x)$ has full column-rank as a symbolic matrix in the $d$ real and $d-1$ imaginary components of the Fourier coefficients of $x\in\Rl^{2d}\cap\Span(1,-1,\dots,1,-1)^\perp$.
\end{conjecture}
See Section~\ref{subsec:numerical-evidence-conjectures} for evidence toward this conjecture.
Similar to the notation $M_\dMRA(d)$, we will write $M_\pMRA(d)$ and drop the dependence on $\hat x$ and reference the parameter $d$ when considering this matrix as a symbolical construct.

\begin{remark}
    \label{rem:symbolic-real-and-imaginary-components-no-nyq}
    See Remark~\ref{rem:symbolic-real-and-imaginary-components} for how these symbolic variables are formally described in the case of the $M_\dMRA$ matrix in Definition~\ref{def:dMRA-M-matrix}.
    The condition that $x\in\Rl^{2d}\cap\Span(1,-1,\dots,1,-1)^\perp$ results in one less variable since now
    \[
    \Re(\hat x) =: (r_0, r_1, r_2,\dots, r_{d-1}, 0, r_{d-1},\dots,r_2, r_1)
    \]
    in the underlying symbolic variables.
    However, in Definition~\ref{def:pMRA-M-matrix}, to ensure that Proposition~\ref{prop:M-pMRA-is-instance} holds, the dependence on $\hat x[d] = r_d$ should formally be removed by replacing it with the constant 1 (or any non-zero constant).
\end{remark}

\begin{theorem}[Aided recovery for projected MRA]
    \label{thm:pMRA-correct-extension}
    Fix even $d \ge 4$ and suppose $M_\pMRA(d)$ has full column-rank as a symbolic matrix.
    Then for generic $x\in\Rl^{2d}\cap\Span(1,-1,\dots,1,-1)^\perp$, if $y$ is an arbitrary element in the $\dihgroup{d}$-orbit of $\hat x[E]$, Algorithm~\ref{alg:extend-representative-pMRA} runs in polynomial time on inputs $y$ and $T^{(3)}_\pMRA(\hat x)$ and recovers some $\hat x'$ in the $\dihgroup{2d}$-orbit of $\hat x$.
\end{theorem}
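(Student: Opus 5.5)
The argument will follow the proof of Theorem~\ref{thm:dMRA-correct-extension}, with the dihedral ingredients replaced by their projected counterparts.

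\emph{Step 1 (planted solution).} Since $y$ lies in the $\dihgroup{d}$-orbit of $\hat x[E]$, Proposition~\ref{prop:pMRA-variety-constrained-linear-system} gives some $g\in\dihgroup{2d}$ with $(g\cdot\hat x)|_E=y$. It also shows that $u=z^{\oasterisk 2}$, with $z=(g\cdot\hat x)|_{E^c}$, is a planted solution of
\[
A_\pMRA(y)\,u=b_\pMRA(\hat x),\qquad u\in\mathcal X_1^\vee .
\]
The left-hand side does not need $\hat x[d]$, by Lemma~\ref{lemma:pMRA-zero-columns}. The vector $b_\pMRA$ is read off from the input tensor.

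\emph{Step 2 (certificate).} Let $M=M_\pMRA(\hat x)$, and substitute the constant $1$ for $\hat x[d]$ as in Definition~\ref{def:pMRA-M-matrix}. Since $M_\pMRA(d)$ has full column-rank symbolically, some maximal minor is a nonzero polynomial $p$ in $r_0,\dots,r_{d-1},c_1,\dots,c_{d-1}$. These variables are real-linear coordinates of $x\in\Rl^{2d}\cap\Span(1,-1,\dots,1,-1)^\perp$. For every $x$ with $p\neq 0$ the instance $M$ has full column-rank. We add to the genericity polynomial the product of all $\hat x[j]$ with $j\neq d$ (as real polynomials, $|\hat x[j]|^2$), which ensures the non-vanishing hypotheses needed for the nullspace characterization.

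By Proposition~\ref{prop:M-pMRA-is-instance}, $M_\pMRA(g\cdot\hat x)$ lies in $\mathcal M(A_\pMRA(\hat x[E]),(\hat x[E^c])^{\oasterisk 2},\mathcal X_1^\vee)$. Applying that proposition to $g\cdot\hat x$ instead of $\hat x$ shows that full column-rank of some member of the relevant class $\mathcal M(A_\pMRA(y),z^{\oasterisk2},\mathcal X_1^\vee)$ follows from full rank of $M$. Here we use that the class is closed under matrix equivalence and that $M_\pMRA(g\cdot\hat x)$ equals $M_\pMRA(\hat x)$ up to an invertible row transformation, exactly as in the dMRA proof. Proposition~\ref{prop:full-rank-unique-recovery} then says \textsc{FindVarietyConstrainedSolutions} returns exactly $u=z^{\oasterisk2}$.

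\emph{Step 3 (conclusion).} Matricizing gives $U=zz^\top$. By Lemma~\ref{lemma:rank-1-up-to-sign} we recover $z'=\pm z$. If $z'=-z$, replace $g$ by $\fshiftel_d g$: this fixes the even coordinates and negates the odd ones. Either way the output $w$ lies in the $\dihgroup{2d}$-orbit of $\hat x$. In the pMRA setting the Nyquist entry is at an even index and is simply carried along inside $y$, so nothing is lost.

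\emph{Running time.} Everything is polynomial time: building $A_\pMRA$ is a product of explicit sparse matrices, the subspace step is linear algebra, the JLV step reduces to linear algebra on a subspace of dimension polynomial in $d$, and the rank-1 factorization is trivial.

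\emph{Main obstacle.} The hard part is the part of Step 2 hidden in Proposition~\ref{prop:M-pMRA-is-instance}: the kernel of $A_\pMRA(y)$ must be exactly spanned by the modified null vectors indexed by $C'\sqcup L_2$. Two new features appear relative to dMRA. The zero columns along $L_2$ enlarge the kernel by the standard basis vectors $\vec e_{\mset{j,k}}$ with $\mset{j,k}\in L_2$, which is why $\hat x[d]$ is replaced by $1$ in $M_\pMRA$. The projection $\fprojOp^{(3)}_{E^c,E^c,E}$ could also collapse further columns. Taking that proposition as given, the remaining genericity bookkeeping is routine. The one subtlety left is that the symbolic variables must genuinely parametrize the constrained real signal space, so that a nonzero polynomial in them defines a valid genericity condition in the sense of Definition~\ref{def:generic}.
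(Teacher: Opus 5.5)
Your proposal is correct and follows the paper's proof, which reruns the dMRA argument with $A_\pMRA$, $b_\pMRA$, $M_\pMRA(\hat x)$ and Proposition~\ref{prop:pMRA-variety-constrained-linear-system} in place of their dihedral counterparts; your extra care tightens it without changing the route. That care consists of requiring $\hat x[j]\neq 0$ for $j\neq d$, checking that the symbolic variables parametrize the constrained signal space, and applying Proposition~\ref{prop:M-pMRA-is-instance} to $g\cdot\hat x$ so that $M_\pMRA(\hat x)$ itself lies in the class for $A_\pMRA(y)$. One remark is slightly off but harmless: in the pMRA case $M_\pMRA(g\cdot\hat x)$ is only matrix-equivalent, not row-equivalent, to $M_\pMRA(\hat x)$ (a column sign flip is needed because $\hat x[d]$ is fixed to $1$), and matrix equivalence still preserves column rank.
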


The construction requires a basis for the kernel of $A_\pMRA(\hat x[E])$, which we now characterize.
Define the set of vectors 
\begin{equation}
    \label{eq:pMRA-nullset}
    N_\pMRA(y) := \{n_m(y): m\in C'\} \cup \{\vec{e}_j\oasterisk \vec{e}_{d/2-j} : j\in[d)\}
\end{equation}
where $n_m(y)$ is a null-vector of the form define in \eqref{eq:general-null-vectors} where indices are taken modulo $d$ on the standard basis vectors.

\begin{lemma}
    \label{lemma:characterize-nullspace-pMRA}
    When $d$ is even and $y\in\Cx^d$ with each $y[j]\neq 0$ for $j\in[d)$, $N_\pMRA(y)$ is a basis for $\ker A_\pMRA(y)$.
\end{lemma}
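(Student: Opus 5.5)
The plan is to factor the operator as $A_\pMRA(y)=\fprojOp^{(3)}_{E^c,E^c,E}A_\dMRA(y)$ and compute the kernel column by column, in the same spirit as the proof of Lemma~\ref{lemma:characterize-nullspace-dMRA}. Throughout, write $P:=\fprojOp^{(3)}_{E^c,E^c,E}=(\fprojOp_{2d}[E_d^c,E_{2d}^c])^{\oasterisk 2}\otimes \fprojOp_d$, where, as in the proof of Lemma~\ref{lemma:pMRA-zero-columns}, $\fprojOp_{2d}[E_d,E_{2d}]=\fprojOp_d$ for even $d$.

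\textbf{Step 1: the action of $P$ on standard basis tensors.} From \eqref{eq:fourier-projection-operation}, relabel odd coefficient $2j+1$ as $j\in[d)$. Then the odd block $F:=\fprojOp_{2d}[E_d^c,E_{2d}^c]$ sends $\vec e_j$ and $\vec e_{-j-1}$ to the same standard basis vector $\vec e_a$, up to nonzero phases, where $a$ is the representative of $\{j,-j-1\}$. The even block $\fprojOp_d$ sends $\vec e_s$ and $\vec e_{-s}$ to one basis vector $\vec e_{m}$, $m$ the representative of $\{s,-s\}$, again up to nonzero phases, except that $\fprojOp_d\vec e_{d/2}=\vec 0$. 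Consequently $P(\vec e_j\oasterisk\vec e_k\otimes\vec e_c)$ is a nonzero multiple of a single standard basis tensor indexed by the triple (fold of $\mset{j,k}$, fold of $c$), unless $c\equiv d/2$, in which case it is zero.

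\textbf{Step 2: column images.} Column $\mset{j,k}$ of $A_\dMRA(y)$, with $s=j+k+1$, is
\[
y[-s]\bigl(\vec e_j\oasterisk\vec e_k\otimes \vec e_{-s}+\vec e_{-j-1}\oasterisk\vec e_{-k-1}\otimes\vec e_s\bigr),
\]
and both summands fold to the same target tensor. Hence $A_\pMRA[:,\mset{j,k}]$ is a scalar multiple of one standard basis tensor $t(\mset{j,k})$. The columns with $s\equiv d/2$ (the set $L_2$) are zero by Lemma~\ref{lemma:pMRA-zero-columns}; these contribute the coordinate null vectors in \eqref{eq:pMRA-nullset}, and I would check that the indexing $\vec e_j\oasterisk\vec e_{d/2-j}$ matches exactly the multisets of $L_2$ under the paper's conventions. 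For the remaining columns I must show two things.

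(a) The scalar is nonzero. It is $y[-s]$ times a sum of two phases, and I need that sum not to cancel. For this I would compute the phases explicitly from \eqref{eq:fourier-projection-operation}. If cancellation did occur for some pairs, the corresponding column would simply join the zero columns, so this computation must be done honestly.

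(b) Columns outside the pairing $\mset{j,k}\leftrightarrow\mset{-j-1,-k-1}$ have distinct targets. Folding could a priori identify $\mset{j,k}$ with $\mset{j,-k-1}$ or $\mset{-j-1,k}$. But then the even indices would be $j-k$ versus $\pm(j+k+1)$, and equality modulo $d$ forces $2k+1\equiv0$ or $2j+1\equiv 0 \pmod d$, which is impossible for even $d$. So distinct pairing classes map to disjoint supports.

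\textbf{Step 3: kernel assembly and counting.} By Step 2, a vector lies in $\ker A_\pMRA$ iff its coefficients on $L_2$ are free and, on each pairing class $\{\mset{j,k},\mset{-j-1,-k-1}\}$ with $\mset{j,k}\in C'$, they satisfy the single relation already satisfied by $n_{\mset{j,k}}(y)$. That $n_{\mset{j,k}}(y)$ satisfies it follows because $n_{\mset{j,k}}(y)\in\ker A_\dMRA$ by Lemma~\ref{lemma:characterize-nullspace-dMRA}, hence lies in $\ker PA_\dMRA$. The $L_1$ columns have $s\equiv0$ and are self-paired, so they give nonzero independent columns and no kernel vectors. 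The $n_m(y)$ for $m\in C'$ and the $L_2$ coordinate vectors have pairwise disjoint supports, so they are linearly independent. The kernel dimension, computed class by class, equals their number, which gives the basis claim. The hypothesis $y[j]\neq0$ ensures each $n_m(y)$ is nonzero and that no extra zero columns appear.

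\textbf{Main obstacle.} I expect the difficult point to be (a): verifying that the two phase contributions in each folded column never cancel, and carefully matching the edge cases ($L_1$, the self-paired $\mset{j,k}$, and the precise set $L_2$) with the index conventions of \eqref{eq:pMRA-nullset}.
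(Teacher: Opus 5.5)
Your plan is the paper's proof. You factor $A_\pMRA=\fprojOp^{(3)}_{E^c,E^c,E}A_\dMRA$, observe that $\fprojOp^{(3)}_{E^c,E^c,E}$ sends each standard basis tensor to a phase times a single folded basis tensor (zero when the even index is $d/2$), and use that $d$ is even to rule out the mixed foldings $\mset{j,-k-1},\mset{-j-1,k}$, so only the pairs $\mset{j,k}\leftrightarrow\mset{-j-1,-k-1}$ share support. The paper also only asserts your open step (a), and it does go through, and your indexing suspicion is right: each column of $\fprojOp_{2d}$ with centered index $c'\in(-d,d)\setminus\{0\}$ is $e^{\pi i(|c'|-c')/(2d)}\vec e_{|c'|}$, so the two contributions to column $\mset{j,k}$ sum to $2e^{\pi i S/(2d)}\cos(\pi T/(2d))$ (times $2$ when the even index is $0$), where $S$ and $T$ are the sum of absolute values and the sum of the centered representatives of $2j+1,\,2k+1,\,-2(j+k+1)$, and $T\equiv 0 \pmod{2d}$ with $|T|<3d$ forces $\cos(\pi T/(2d))=\pm 1$; the coordinate null vectors must be $\vec e_j\oasterisk\vec e_{d/2-1-j}$, not $\vec e_j\oasterisk\vec e_{d/2-j}$, to match $L_2$.
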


Comparing constructions in Lemma~\ref{lemma:characterize-nullspace-dMRA} and Lemma~\ref{lemma:characterize-nullspace-pMRA}, $\ker A_\dMRA \subseteq \ker A_\pMRA$ which can also be inferred from the definition of $A_\pMRA$.
In the pMRA model, without the Nyquist component a small set of the dMRA null vectors (specifically the vectors indexed by multisets in the set $L_2$) to split into two orthogonal standard basis vectors in $(\Cx^d)^{\oasterisk 2}$.

\begin{proof}[Proof of Lemma~\ref{lemma:characterize-nullspace-pMRA}]
    Set $A=A_\pMRA(y)$ for $y\in \Cx^d$.
    By Definition~\ref{def:pMRA-recursive-linear-system},
    $A = \fprojOp^{(3)}_{E^c,E^c,E} A_\dMRA(y)$, and in Lemma~\ref{lemma:pMRA-zero-columns} we determined that $A$ has zero-columns in the $\mset{j,k}$-th position when $j+k\equiv d/2$.
    By these observations we may conclude that $\Span(N_\pMRA(y)) \subseteq \ker A_\pMRA$.
    Our main goal is to show that these are the only such vectors.
    First, recall that our indices are taken modulo $2d$ or $d$ depending on the dimension of the matrices involved. If we use $j\in\{-d+1,\dots,0,\dots,d-1, d\}$ to index columns of $\fprojOp$ then observe that the $d$-th column is $\vec 0$ and every other column may be written as
    \[
        \fprojOp[:,j] = e^{\pi i (j-|j|)/2d}\vec e_{|j|}.
    \]
    For $j,k,\ell\in\{-d/2+1,\dots,d-1,d/2\}$ the $(\mset{j,k},\mset{\ell})$-th column of $\fprojOp^{(3)}_{E^c, E^c, E}$ is given by
    \[
        \fprojOp^{(3)}_{E^c, E^c, E}[:, (\mset{j,k},\mset{\ell})] = \begin{cases}
            \phi(j,k)\sqrt{3} w_{\mset{j,k}} \left(\vec{e}_{|j+\frac12| -\frac12 }\oasterisk \vec{e}_{|k+\frac12| -\frac12}\otimes \vec{e}_{|\ell|}\right) & \ell \not\equiv d/2 \pmod{d} \\
            \vec{0} & \ell \equiv d/2
        \end{cases}
    \]
    where $\phi(j,k) =\exp\left(\frac{\pi i (2j+1-|2j+1| + 2k+1-|2k+1| - 2(j+k+1) + 2|j+k+1|)}{2d}\right) = \exp\left(\frac{\pi i (2|j+k+1| -|2j+1|-|2k+1|)}{2d}\right).$
    It then follows that the columns of $A$ have the form 
    \[
        A[:, \mset{j,k}] = \left(\phi(j,k)+\phi(-j-1,-k-1)\right)\frac{\sqrt{3}w_{\mset{j,k}}}2 \left(\vec{e}_{|j+\frac12| -\frac12 }\oasterisk \vec{e}_{|k+\frac12| -\frac12} \otimes\vec{e}_{|j+k+1|}\right)
    \]
    except when $j+k+1\equiv d/2$, in which case the phase factors cancel to form a zero-column, producing the second set of null vectors in~\eqref{eq:pMRA-nullset}.
    Next, for a column of $A$ with index $\mset{j',k'}$ to have the same support as the $\mset{j,k}$-th column of $A$, it must be that
    \begin{align*}
        |j+1/2|-1/2 &\equiv |j'+1/2|-1/2 \pmod{d}, \\
        |j+1/2|-1/2 &\equiv |j'+1/2|-1/2 \pmod{d}, \\
        |j+k+1| &\equiv |j'+k'+1| \pmod{d}.
    \end{align*}
    The first two equations force $j'\equiv j$ or $-j-1 \pmod{d}$ and $k'=k$ or $-k-1\pmod{d}$.
    We know from Lemma~\ref{lemma:characterize-nullspace-dMRA} that when $\mset{j',k'} = \mset{-j-1,-k-1}$, this produces the set of vectors we index by $C'$ in \eqref{eq:pMRA-nullset}.
    For the other two pairs, if $\{j',k'\} = \{-j-1,k\}$ or if $\{j',k'\} = \{j,k-1\}$ then the third modular equation requires either $2j\equiv 1\pmod{d}$ or $2k\equiv 1\pmod{d}$, both of which are impossible since $d$ is even.
    Since the columns have non-overlapping support, they are orthogonal and there can be no additional linearly independent null vectors beyond the ones in $N_\pMRA(y)$. 
\end{proof}

\begin{figure}
    \centering
    \includegraphics[width=0.33\linewidth]{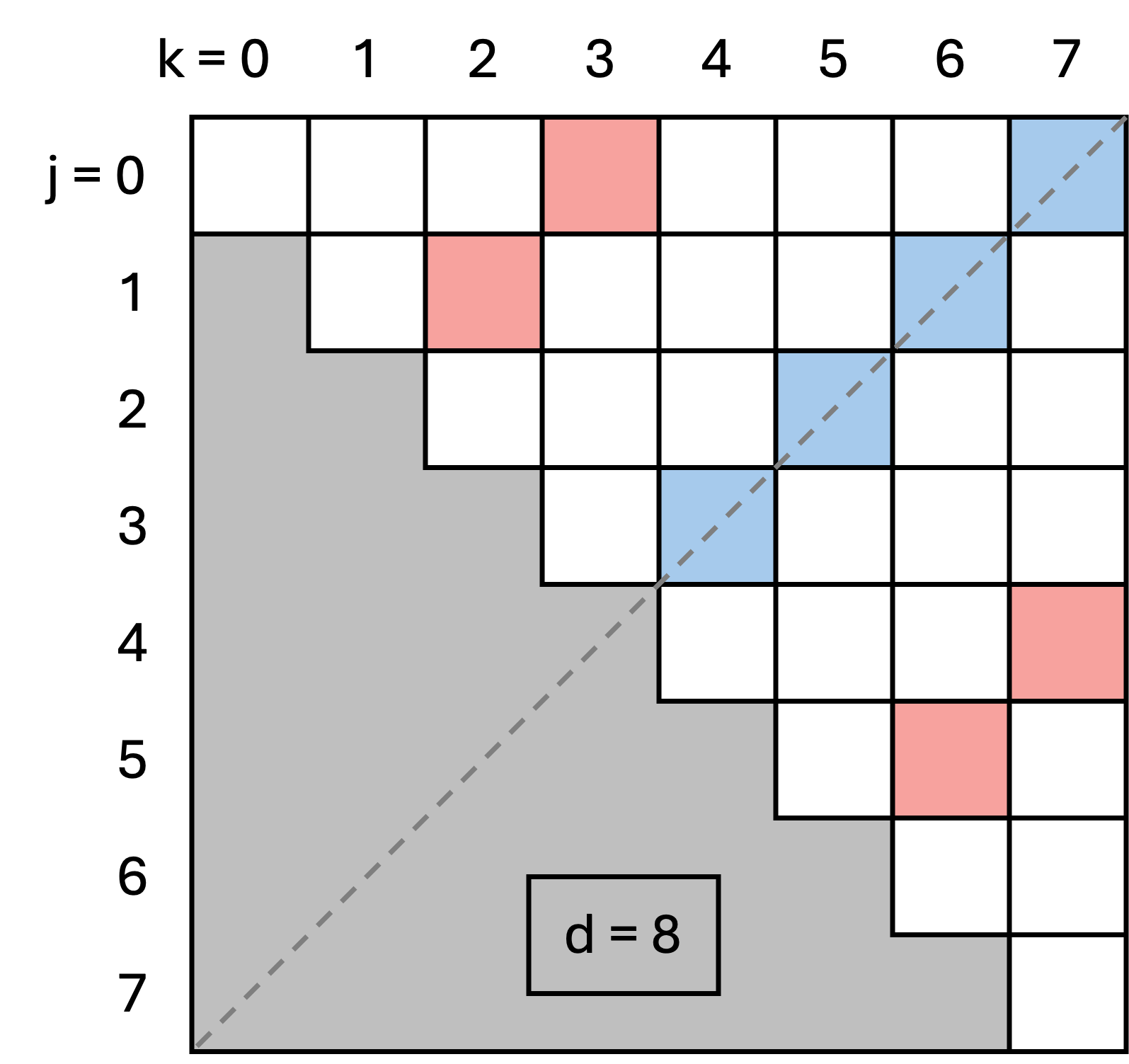}
    \caption{
    Illustration of the combinatorics involved in computing nullspace dimensions for $d=8$.
    Columns of $A_\dMRA$ and $A_\pMRA$ correspond to boxes in the upper triangular region, as they are indexed by pairs $\mset{j,k}$ with $0\leq j \leq k < d$.
    Boxes that are reflections across the diagonal dashed line correspond to columns that are scalar multiples of each other.
    Each pair gives an element in the null space.
    Boxes in blue along the dashed line indicate columns that have a unique support and produce no null vectors.
    Boxes in red indicate columns that are zero-columns in $A_\pMRA$ but not $A_{\dMRA}$; these produce additional null vectors which are tensor products of standard basis elements.
    }
    \label{fig:nullspace-diagram}
\end{figure}

Having verified our choice of basis spans the nullspace of $A_\pMRA$, we will now sketch the proof of Proposition~\ref{prop:M-pMRA-is-instance}, highlighting some subtle differences to Proposition~\ref{prop:M-dMRA-is-instance} though the same technique is used.
\begin{proof}[Proof sketch of Proposition~\ref{prop:M-pMRA-is-instance}]
    In Definition~\ref{def:pMRA-M-matrix}, the function $\gamma$ serves the same role as $\beta$ in Definition~\ref{def:pMRA-M-matrix}, being a column-index lookup function to place the vectors of $N_\dMRA(\hat x[E])$ into a matrix $N$.
    The construction of $M_\pMRA(\hat x)$ from this matrix then follows along the same lines as that in Proposition~\ref{prop:M-dMRA-is-instance}.

    However, $M_\pMRA(g\cdot \hat x)$ is no longer row-equivalent to $M_\pMRA(\hat x)$ since we canonically set $\hat x[d]=1$ in our formula.
    Indeed, when $g\in \{\fshiftel_\ell\}_{\ell=0}^{2d-1} \subseteq \dihgroup{2d}$, multiplying by $D_\ell$ (defined in the proof of Proposition~\ref{prop:M-dMRA-is-instance}) may change the sign of $\hat x[d]$.
    However, observe that this coefficient is used to weight one of the basis vectors in $N_\pMRA(\hat x[E])$ which is just a standard symmetric tensor product of two basis vectors.
    Multiplying $N$ on the right by a diagonal matrix with a sign in the correct place will fix the sign issue and we can conclude that $M_\pMRA(g\cdot \hat x)$ is still matrix-equivalent to $M_\pMRA(\hat x)$.
\end{proof}

Finally, we can verify that the recursive step succeeds.

\begin{proof}[Proof of Theorem~\ref{thm:pMRA-correct-extension}]
    This proof follows from identical logic to that in the proof of Theorem~\ref{thm:dMRA-correct-extension} with the replacements $A= A_\pMRA$, $b= b_\pMRA$, $M= M_\pMRA(\hat x)$ and the reference to Proposition~\ref{prop:dMRA-variety-constrained-linear-system} should be replaced with Proposition~\ref{prop:pMRA-variety-constrained-linear-system}.
\end{proof}

\subsection{Numerical Evidence for Conjectures}
\label{subsec:numerical-evidence-conjectures}
For $x\in\Rl^{2d}$ with Fourier coefficients $\hat x = \mathcal Fx$,
one can produce strong numerical evidence that $M_\dMRA(\hat x)$ and $M_\pMRA(\hat x)$ have full column-rank symbolically by picking $x$ at random from some absolutely continuous distribution in $\Rl^{2d}$ and verifying that the smallest singular value of $M_\dMRA(\hat x)$ or $M_\pMRA(\hat x)$ are each significantly far from zero. Note that demonstrating a \emph{single} choice of $x$ for which $M$ has full column-rank implies that $M$ has full column-rank symbolically, and therefore implies success of aided recovery for generic inputs of this dimension.
The condition number of a rectangular matrix $M$ is defined to be
\begin{equation}
    \kappa(M) = \frac{\sigma_{\max}(M)}{\sigma_{\min}(M)}.
\end{equation}
Since $\kappa(M) = \kappa(\alpha M)$ for $\alpha\neq 0$, the condition number is a scale-invariant measure of how close $M$ is to having a rank-deficit where if $M$ has linearly-dependent columns, we canonically define $\kappa(M) := \infty$.
In the \href{https://github.com/lemniscate8/MRA-variant-orbit-recovery}{following repository}\footnote{Repository is found at https://github.com/lemniscate8/MRA-variant-orbit-recovery for those reading in print.}, we furnish the experiments for the values $d=4,8$ and $16$ where $x$ is chosen to have standard real Gaussian entries.
Out of 500 random trials, we report the smallest condition number obtained for each of $M_\dMRA(\hat x)$ and $M_\pMRA(\hat x)$ in Table~\ref{tab:confidence-intervals}. The results support Conjectures~\ref{conj:full-rank-dMRA-M-matrix} and \ref{conj:full-rank-pMRA-M-matrix}, respectively.
\begin{table}[h]
    \centering
    \begin{tabular}{c|cc}
        $d$ & $\kappa(M_\dMRA)$ & $\kappa(M_\pMRA)$ \\ \hline
         4 & 6.07 & 74.96\\
         8 & 13.36 & 129.78 \\
        16 & 135.81 & 1464.38
    \end{tabular}
    \caption{The smallest condition numbers of both $M_\dMRA(\hat x)$ and $M_\pMRA(\hat x)$ out of 100 trials where each entry of $x\in\Rl^{2d}$ is chosen independently from a standard normal distribution.}
    \label{tab:confidence-intervals}
\end{table}
When $d=32$, we were not able to get our numerical method to converge when computing the smallest singular value for instances of $M_\dMRA(\hat x)$ and $M_\pMRA(\hat x)$.
However, we note that the implementation given in Section~\ref{sec:numerical-experiments} and our numerical experiments there give indirect evidence that $M_\dMRA$ and $M_\pMRA$ still have full column-rank symbolically in this case.

\section{Analysis of the Base Case}
\label{sec:analysis-base-case}
The recursive procedure described above can reduce the problem to an orbit recovery problem of constant size.
In these cases, the theoretical computational complexity is not an issue, but we we do need a proof that the orbit is uniquely identifiable and an explicit procedure to recover it.

\subsection{Frequency Marching in Dihedral MRA}
By the work of Edidin and Katz \cite{edidin_reflection-invariant_2026} the question of identifiability has been solved for the dMRA model: namely, the first three moments separate generic orbits.

Counting the number of rows and columns of $M_\dMRA$ shows that our recursive method cannot work starting from $d=1$ ($x\in \Rl^2$), which is unfortunate since this case is trivial to achieve orbit recovery in as the zeroth and first Fourier coefficients are real and can be estimated from the first and second moments, respectively. 
We then consider the case $d=2$ ($x\in \Rl^4$) as our base case.

The technique of frequency marching arises from the observation that the degree-2 cyclic or dihedral invariants give the magnitudes of Fourier coefficients.
One can then use the degree-3 invariants to set up a system of equations modulo $2\pi$ to solve for the phases of Fourier coefficients.
Edidin and Katz considered frequency marching as a strategy for recovery, but primarily considered it as a proof strategy, since it does not give an efficient algorithm.
However, since we only care about signals of small dimension for our base case, we demonstrate that frequency marching recovers the orbit of a signal of length four.
This example also helps elucidate why this method is inefficient for larger signals (although the concurrent work~\cite{proj-mra} provides a fix to this).

\begin{theorem}
    \label{thm:dMRA-base-case-unique-recovery}
    When $x \in\Rl^4$ with $\hat x[j]\neq 0$ for $j\in[4)$, the third moment $T_\dMRA^{(3)}(\hat x)$ may be used to recover the $\dihgroup{4}$-orbit of $\hat x$.
\end{theorem}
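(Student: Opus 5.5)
The plan is to carry out frequency marching by hand for $2d=4$, using only the explicit entry formula of Proposition~\ref{prop:invariant-tensor-structure} with $r=3$.

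\textbf{Setup.} Since $x$ is real, $\hat x[0]$ and $\hat x[2]$ are real and $\hat x[3]=\overline{\hat x[1]}$. The unknowns are therefore
\[
a=\hat x[0]\in\Rl,\qquad c=\hat x[2]\in\Rl,\qquad \hat x[1]=\rho e^{i\theta}.
\]
The nonzero entries of $T^{(3)}_{\dMRA}(\hat x)$ are indexed by multisets of $[4)$ whose sum is $\equiv 0 \pmod 4$, namely
\[
\mset{0,0,0},\quad \mset{0,1,3},\quad \mset{0,2,2},\quad \mset{1,1,2},\quad \mset{2,3,3}.
\]
Up to the known weights $w$, these entries equal respectively
\[
a^3,\qquad a\rho^2,\qquad ac^2,\qquad c\,\Re(\hat x[1]^2)=c\rho^2\cos 2\theta,
\]
with the last two multisets giving the same value.

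\textbf{Step 1: moduli.} From the $\mset{0,0,0}$ entry, take the real cube root to get $a$; note $a\neq 0$ by hypothesis. Dividing the next two entries by $a$ gives $\rho^2$ and $c^2$. So the power spectrum, and hence $\rho$ and $|c|$, is determined from the third moment alone.

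\textbf{Step 2: phases.} The only remaining information is $c\rho^2\cos 2\theta$, which I will use to fix the phase $\theta$ and the sign of $c$ modulo the group. The key computation is the action of $\dihgroup{4}$ in the Fourier basis, via \eqref{eq:fourier-shift-operation}--\eqref{eq:fourier-reflect-operation}:
\begin{itemize}
\item[(i)] The shift by one multiplies $\hat x[j]$ by $(-i)^j$. It sends $c\mapsto -c$ and $\theta\mapsto \theta-\pi/2$, so $\cos 2\theta$ also changes sign and the product $c\cos 2\theta$ is preserved, as it must be.
\item[(ii)] The shift by two fixes $a$ and $c$ and sends $\theta\mapsto\theta+\pi$.
\item[(iii)] The reflection $\freflel$ conjugates, sending $\theta\mapsto-\theta$.
\end{itemize}

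Using (i), I may choose the representative with $c=+\sqrt{c^2}>0$. Since $c\neq0$ and $\rho\neq 0$, the known product then yields $\cos 2\theta$ exactly. This determines $2\theta$ modulo $2\pi$ up to sign, so $\theta$ is determined up to $\theta\mapsto -\theta$ and $\theta\mapsto\theta+\pi$. These ambiguities are exactly (iii) and (ii), which fix $c$. Hence every $\hat x'$ with the same third moment and $\hat x'[2]>0$ lies in the $\dihgroup{4}$-orbit of the chosen representative.

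\textbf{Output and main obstacle.} The procedure outputs $\hat x'=(a,\rho e^{i\theta_0},\sqrt{c^2},\rho e^{-i\theta_0})$ with $\theta_0=\tfrac12\arccos\bigl(\cdot\bigr)$, and inverse Fourier transforms if $x$ itself is desired. The main point needing care is checking that the sign choice for $c$ and the residual ambiguities in $\theta$ are all absorbed by group elements. In particular, I must confirm that (i) genuinely couples the sign of $c$ to a quarter-turn of $\theta$, rather than leaving the flip $c\mapsto -c$ with $\theta$ fixed as an extra, non-orbit solution. That flip is excluded because it would change the sign of the invariant $c\rho^2\cos 2\theta$, which is the same observation that makes the argument close.
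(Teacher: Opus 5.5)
Your proposal is correct and follows essentially the same frequency-marching argument as the paper: you take the cube root of the $\mset{0,0,0}$ entry, read off the moduli from the $\mset{0,j,-j}$ entries, and obtain $\cos 2\theta$ from the $\mset{1,1,2}$ entry. The difference is that you first use the quarter-shift to normalize $\hat x[2]>0$ and then check that the residual ambiguity $\theta\mapsto\pm\theta+k\pi$ is exactly a group action, whereas the paper enumerates all eight choices of $(s_1,s_2,k)$ and asserts that they form the orbit.

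One minor caveat: your closing claim that the flip $c\mapsto -c$ with $\theta$ fixed changes the invariant fails when $\cos 2\theta=0$. This does no harm, because your Step~2 normalization already covers that case; the flipped vector is then obtained from $\hat x$ by a reflection composed with an odd shift.
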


\begin{proof}[Proof of Theorem~\ref{thm:dMRA-base-case-unique-recovery}]
    First, observe that since $x\in\Rl^4$, $\hat x[0]$ and $\hat x[2]$ are real and $\hat x[1] = \overline{\hat x[3]}$.
    While $T_\dMRA^{(1)}(\hat x)[j] = \hat x[0]$ for every $j=0,1,2,3$, we can also recover this from the third moment via 
    \[
    \hat x[0] = \sqrt[3]{T^{(3)}_\dMRA(\hat x)[\mset{0,0,0}] / w_{\mset{0,0,0}}}.
    \]
    Similarly, for each $j=1,2,3$,
    \[
    T^{(3)}_\dMRA[\mset{0,j,-j}] = w_{\mset{0,j,-j}} \hat x[0]\hat x[j]\hat x[-j] =  w_{\mset{0,j,-j}} \hat x[0]|\hat x[j]|^2
    \]  so we can recover the magnitudes of Fourier coefficients from the third moment.
    Using the Fourier coefficient magnitudes we may define the phases of Fourier coefficients to be $\phi\in[0,2\pi)^4$ where $\hat x[j] = |\hat x[j]| e^{i\phi[j]}$ for $j=0,1,2,3$.
    However, $\phi[0]$ is known, and $\phi[3]\equiv-\phi[1]\pmod{2\pi}$ so we need only determine $\phi[1]$ and $\phi[2]$.
    Since $\hat x[2]$ is real, for some choice of sign $s_1\in\{-1,1\}$ we have that $\hat x[2] = s_1 |\hat x[2]|$ and $\phi[2] = \frac\pi 2(1-s_1)$.
    Next, by Proposition~\ref{prop:invariant-tensor-structure},
\begin{align*}
    T^{(3)}_\dMRA[\mset{1,1,2}] &= \frac{w_{\mset{1,1,2}}}2 \left(\hat x[1]^2\hat x[2] + \hat x[3]^2 \hat x[2]\right) \\
    &=  \frac{w_{\mset{1,1,2}}}2 |\hat x[1]|^2 \hat x[2]\left(e^{2i\phi[1]} +  e^{2i\phi[3]}\right) \\
    &=  \frac{w_{\mset{1,1,2}}}2 |\hat x[1]|^2 \hat x[2]\left(e^{2i\phi[1]} +  e^{-2i\phi[1])}\right) \\
    &=  w_{\mset{1,1,2}} |\hat x[1]|^2 \hat x[2] \cos(2\phi[1]).
\end{align*}
Using the expression for $\phi[2]$, we know $\phi[1]$ must satisfy
\begin{equation}
    \label{eq:frequency-march-equation-example}
    \cos(2\phi[1]) = s_1\frac{T^{(3)}_\dMRA[\mset{1,1,2}]}{w_{\mset{1,1,2}} |\hat x[1]|^2 |\hat x[2]|}.
\end{equation}
Treating $\phi[1]$ as an unknown value in $[0,2\pi)$, there are at most four solutions for this value $\phi[1]$.
Further, if $\phi[1]$ is a solution, so are $-\phi[1]$ and $\phi[1]+\pi$.
It follows that for some choice of sign $s_2\in\{-1,1\}$ and $k\in\{0,1\}$,
\[
    \phi[1] = \frac12s_2\cos^{-1}\left(s_1\frac{T^{(3)}_\dMRA[\mset{1,1,2}]}{w_{\mset{1,1,2}} |\hat x[1]|^2 |\hat x[2]|} \right)  + k\pi.
\]
Finally, for each choice of $s_1,s_2\in\{-1,1\}$ and $k\in\{0,1\}$ the assignments
\begin{align*}
    \phi[1] &\gets \frac12s_2\cos^{-1}\left(s_1\frac{T^{(3)}_\dMRA[\mset{1,1,2}]}{w_{\mset{1,1,2}} |\hat x[1]|^2 |\hat x[2]|} \right) + k\pi \\ 
    \phi[2] &\gets \frac{\pi}2(1-s_1)
\end{align*}
and $\phi[3]\gets-\phi[1]\pmod{2\pi}$ enumerate (by varying $s_1,s_2,k$) all 8 elements of a $\dihgroup{4}$-orbit for some signal. 
Since at least one choice of $s_1, s_2$ and $k$ must produce the true phases for $\hat{x}$, this procedure must recover all elements of the $\dihgroup{4}$-orbit of $\hat x$.
\end{proof}

Frequency marching succeeds because we only need to use one relationship between the phases $\phi[1]$ and $\phi[2]$ given by $\eqref{eq:frequency-march-equation-example}$.
In general, for $x\in \Rl^{2d}$, relationships between phases take the form
\[
    \cos(\phi[j_1] + \phi[j_2] + \phi[j_3]) = \frac{T_\dMRA^{(3)}[\mset{j_1, j_2, j_3}]}{w_{\mset{j_1, j_2, j_3}}|\hat x[j_1]||\hat x[j_2]||\hat x[j_3]|} =: \alpha_{j_1, j_2, j_3}
\]
with $j_1,j_2,j_3 \in[2d)$, $j_1 + j_2 + j_3 \equiv 0\pmod{2d}$ and $\phi[j] = -\phi[2d-j\pmod{2d}]$ for $j\in[2d)$.
As noted by Edidin and Katz, we can only determine a linear equation for the phases modulo $2\pi$ up to sign, i.e.\
\[
    \phi[j_1] + \phi[j_2] + \phi[j_3] \equiv \pm \cos^{-1}(\alpha_{j_1, j_2, j_3}) \pmod{2d}.
\]
In this case $d=2$, the number of signs was sufficiently small that each assignment of sign variables produced a consistent system of equations, each giving an element of the orbit of $\hat x$.
For larger $d$ one must resort to an exhaustive search over all choices of signs, most of which will produce inconsistent linear systems.
As a result, we use frequency marching in four dimensions as demonstrated in Theorem~\ref{thm:dMRA-base-case-unique-recovery} as a base case for our recursive method to recover a signal from the dMRA third moment.

\paragraph{Further discussion.} Comparing to the frequency marching implementation by Bendory, Boumal, Ma, Zhao and Singer~\cite[Algorithm 4]{bendory_bispectrum_2018} for regular MRA, our method differs in that we first estimate $\phi[2]$ and then $\phi[1]$.
Their method requires finding the exact value of $\phi[1]$ to begin estimating further frequencies.
The method of doing this exactly given at the beginning of~\cite[Section~IV]{bendory_bispectrum_2018} requires the cyclic invariants and there is not an immediate adaptation wherein we can determine $\phi[1]$ exactly using dihedral invariants.
Instead, we observe that when the dimension is $d=2^k$, a consequence of Lemma~\ref{prop:recursive-invariants} is that one may use the phase $\phi[2^m]$ and the entry $T^{(3)}_\dMRA[\mset{2^{m-1}, 2^{m-1}, d-2^m}]$ to recover the phase $\phi[2^{m-1}]$ by solving
\[
    2\phi[2^{m-1}] + \phi[d-2^m] \equiv 2\phi[2^{m-1}] - \phi[2^m]  \equiv \pm \cos^{-1}(\alpha_{2^{m-1}, 2^{m-1}, 2^m}) \pmod{2d}
\]
for any choice of sign and for $m=k-1, k-2,\dots, 1$ until $\phi[1]$ is found.
Our proof uses this observation for $d=2^2$ in a single step.

\subsection{Frequency Marching in Projected MRA}
In this section, we will demonstrate that frequency marching can still be used in the projected MRA model as an algorithm for the base case.

First, we will explain why the orbits of signals of small dimension cannot be uniquely identified.
Next, we show in Theorem~\ref{thm:pMRA-base-case-list-recovery} that for $x\in\Rl^8\cap \Span(1,-1,\dots,1,-1)^\perp$ one may narrow down the orbit of $\hat x$ to one of four possibilities using a frequency marching-like procedure.
Finally, using results of Edidin and Katz~\cite{edidin_reflection-invariant_2026} but applied to the pMRA model, we show in Theorem~\ref{thm:pMRA-base-case-unique-recovery} that for $x\in\Rl^{16}\cap \Span(1,-1,\dots,1,-1)^\perp$ the third moment separates generic dihedral orbits.

\paragraph{Unidentifiability for small dimensions.} For $d=2$ ($x\in\Rl^4\cap \Span(1,-1,1,-1)^\perp$), examination of the entries of $T^{(3)}_\pMRA$ reveals that the phase of $\phi[1]$ (and $\phi[3]$) cannot be determined.
This occurs because the only two non-zero entries are $T_\pMRA^{(3)}[\mset{0,0,0}]$ and $T_\pMRA^{(3)}[\mset{0,1,1}]$.
Each multiset contains a 0 index and as a result these elements are invariants of the form $\hat x[0]\hat x[j]\hat x [-j]$ and contain no information about the phase of $\hat x[1]$.

\paragraph{Modified frequency marching.}
Next, for $d=4$ ($x\in\Rl^8\cap \Span(1,-1,\dots,1,-1)^\perp$), numerical tests we performed indicated that for a randomly generated real signal there were three additional orbits which shared the first three moments with the true signal. We show that this is indeed the case.
\begin{theorem}
    \label{thm:pMRA-base-case-list-recovery}
    For $x\in \Rl^8 \cap \Span(1,-1,1,\ldots,-1)^\perp$ with $\hat x[j]\neq 0$ for all $j\neq 4$, the moment $T_\pMRA^{(3)}(\hat x)$ determines a set of at most four $\dihgroup{8}$-orbits, one of which is the orbit of $\hat x$.
\end{theorem}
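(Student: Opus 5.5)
Here $2d = 8$, so $d = 4$. The unknowns are $\hat x[0],\hat x[4]$, which are real with $\hat x[4]=0$ by the Nyquist constraint, together with the complex $\hat x[1],\hat x[2],\hat x[3]$; the rest follow from $\hat x[8-j]=\overline{\hat x[j]}$. The plan is a frequency-marching argument modelled on the proof of Theorem~\ref{thm:dMRA-base-case-unique-recovery}, but built on the pMRA entries from Proposition~\ref{prop:invariant-tensor-structure}. Those entries are indexed by multisets of $[4)$ and are sums over sign patterns $s\in\{\pm1\}^3$ weighted by the phases $\phi(s,\cdot)$.

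\textbf{Step 1: magnitudes.} List the nonzero entries of $T^{(3)}_\pMRA$. The entry $T[\mset{0,0,0}]$ gives $\hat x[0]^3$ up to a known factor, hence $\hat x[0]$. The entries $T[\mset{0,j,j}]$ for $j=1,2,3$ are combinations of $\hat x[0]|\hat x[j]|^2$ and $\hat x[0]|\hat x[8-j]|^2$, which are the same thing. So they give $|\hat x[1]|,|\hat x[2]|,|\hat x[3]|$, in the same way as Proposition~\ref{prop:power-spectrum-pMRA}. Since $\hat x[0]\neq0$, this step is valid.

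\textbf{Step 2: phase relations.} Write $\phi_j=\arg\hat x[j]$. The remaining nonzero entries are indexed by multisets from $[4)$ with signed sum $\equiv 0 \pmod 8$. These are $\mset{1,1,2}$, $\mset{2,3,3}$, $\mset{1,2,3}$, $\mset{1,3,0}$-type terms (which vanish unless the sum condition holds) and $\mset{2,2,0}$. For each one, I will expand the sum over signs and collect conjugate pairs. Each entry then becomes a real multiple of a cosine of a combination of $2\phi_1-\phi_2$, $2\phi_3+\phi_2$ and $\phi_1+\phi_2-\phi_3$ (mod $2\pi$), shifted by a known phase coming from $\phi(s,\cdot)$. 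The key fact is that only two of these three combinations are independent: $(2\phi_3+\phi_2)-(2\phi_1-\phi_2)=2(\phi_3-\phi_1+\phi_2)$. Dividing by the known magnitudes gives equations of the form $\cos(\theta_a+c_a)=\alpha_a$, and each of these fixes $\theta_a$ up to a sign.

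\textbf{Step 3: count orbits.} Quotient by the $\dihgroup{8}$ action. A shift $\fshiftel_\ell$ sends $\phi_j\mapsto\phi_j-\pi j\ell/4$, so it can be used to normalize one phase, say to fix $\phi_1$ up to the residual ambiguity. The reflection $\freflel$ negates all phases, so it can absorb one global sign. After this normalization, the unknowns are two angles, each constrained by cosine equations. Each sign ambiguity contributes a factor of 2, and the consistency relation from Step~2 removes some choices. I expect $2\times2=4$ candidate solutions to remain modulo the group, and these are the at most four orbits, one of which is the orbit of $\hat x$. For generic data the compatibility condition should eliminate nothing further; matching this with the numerics is a sanity check.

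The main obstacle is the bookkeeping in Step~2. I need the exact phases $\phi(s,\cdot)$, and I must confirm that each entry reduces to a single cosine rather than a sum of two unrelated cosines. If some entry mixes two independent combinations, I would combine entries linearly, or treat the system as two real equations in $(\cos,\sin)$ of the unknown angles and count solutions via B\'ezout. That would still give a finite bound, and I would then check that it is exactly four.
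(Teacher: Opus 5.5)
\textbf{There is a genuine gap in the orbit count.} Your Steps 1 and 2 match the paper, and your worry about mixing is unfounded. For each of $\mset{1,1,2}$, $\mset{1,2,3}$, $\mset{2,3,3}$ only the sign vectors $\pm s$ satisfy the congruence. So each entry is a known complex prefactor times a single real $\cos\theta_a$; the phase $\phi(s,\cdot)$ is a prefactor, not a shift inside the cosine. Also, $\mset{0,2,2}$ is a magnitude entry, not a phase relation. The counting in Step 3 rests on two incorrect claims:
\begin{itemize}
\item Your claimed dependency is false. $(2\phi_3+\phi_2)-(2\phi_1-\phi_2)=2(\phi_2+\phi_3-\phi_1)$ is not a multiple of $\phi_1+\phi_2-\phi_3$. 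The coefficient matrix
\[
A=\begin{bmatrix}2&-1&0\\ 1&1&-1\\ 0&1&2\end{bmatrix}
\]
of $(\theta_1,\theta_2,\theta_3)=(2\phi_1-\phi_2,\ \phi_1+\phi_2-\phi_3,\ \phi_2+2\phi_3)$ has $\det A=8$.
\item The shifts form a finite group. $\fshiftel_\ell$ moves $\phi_1$ by $\pi\ell/4$, so it only reduces $\phi_1$ modulo $\pi/4$; it does not remove it as an unknown. You still have three continuous unknowns, not two.
\end{itemize}
The two errors happen to cancel in your heuristic ``$2\times 2=4$'', but each is fatal on its own. If the three forms really had rank two, the set of $\phi$ with $\cos\theta_a=\alpha_a$ would be positive-dimensional. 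It could not then be a finite union of orbits of a finite group, which would contradict the statement.

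\textbf{The missing idea} is to use $\det A=8$ on the torus $(\Rl/2\pi\mathbb{Z})^3$. Here $\alpha_a$ denotes $\arccos$ of the $a$-th normalized entry, as in the paper. The map $\phi\mapsto A\phi$ is a surjective homomorphism of the torus with kernel of order $|\det A|=8$. So for each sign vector $\sigma\in\{\pm1\}^3$, the congruence $A\phi\equiv(\sigma_a\alpha_a)_a \pmod{2\pi}$ has exactly $8$ solutions. Explicitly, they are $\phi=A^{-1}(\sigma\alpha+2\pi k)$ with $A^{-1}=\frac{1}{8}\operatorname{adj}A$, and any two differ by an element of the kernel. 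That kernel is generated by $\frac{2\pi}{8}(1,2,3)$, which is exactly the $\cycgroup{8}$-action $\phi_j\mapsto\phi_j-2\pi j\ell/8$. Hence each sign pattern gives exactly one $\cycgroup{8}$-orbit. Since $\freflel$ negates every phase, it sends the pattern $\sigma$ to $-\sigma$, so the eight patterns give at most four $\dihgroup{8}$-orbits, one of which contains $\hat x$. This is the paper's argument: a global sign $s_g$, free signs $s_1,s_2$, and integers $k_a$ that enumerate the shift $\fshiftel_{3k_1+2k_2+k_3}$.
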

\begin{proof}
    By similar observations to those in the proof of Theorem~\ref{thm:dMRA-base-case-unique-recovery} using Proposition~\ref{prop:invariant-tensor-structure}, we can see that
    \[
    \begin{split}
        T^{(3)}_\dMRA[\mset{0,0,0}] &= 8 w_{\mset{0,0,0}} \hat x[0]^3 \quad \text{and} \\
        T^{(3)}_\dMRA[\mset{0,j,j}] &= 4 w_{\mset{0,j,j}} e^{\pi i (0+j+j)/8}\hat x[0]\hat x[j]\hat x[-j],
    \end{split}
    \]
    so $\hat x[0]$ and the Fourier coefficient magnitudes are given by
    \[
    \begin{split}
        \hat x[0] &= \sqrt[3]{T^{(3)}_\dMRA[\mset{0,0,0}]/(8w_{\mset{0,0,0}})} \quad \text{and} \\
        |\hat x[j]|^2 &= \frac{|T^{(3)}_\dMRA[\mset{0,j,j}]|}{4 w_{\mset{0,j,j}}\hat x[0]}, \quad j=1,2,3.
    \end{split}
    \]
    Next, there are only three non-zero entries of the third moments which do not contain a zero index:
    \begin{align*}
        T_\pMRA^{(3)}[\mset{1,1,2}] &= e^{i\pi/2} w_{\mset{1,1,2}}\left(\hat x[1]^2\hat x[6] + \hat x[7]^2\hat x[2]\right), \\
        T_\pMRA^{(3)}[\mset{1,2,3}] &= e^{3i\pi/4}w_{\mset{1,2,3}}\left(\hat x[1]\hat x[2]\hat x[5] + \hat x[7]\hat x[6]\hat x[3]\right),\\
        T_\pMRA^{(3)}[\mset{2,3,3}] &= -w_{\mset{2,2,3}} \left(x[2]\hat x[3]^2 + \hat x[6]\hat x[5]^2\right).
    \end{align*}
    Using $\hat x[j] = e^{i\phi[j]} |\hat x[j]|$ with $\phi[j]\in[0,2\pi)$ and $\phi[8-j] =-\phi[j]$ for $j=1,2,3$, these three entries allow us to determine the following three relationships between phases:
    \begin{align*}
        a_1 &:= \frac{T_\pMRA^{(3)}[\mset{1,1,2}]}{2 e^{i\pi/2}w_{\mset{1,1,2}} |\hat x[1]|^2 |\hat x[2]|} = \cos(2\phi[1]-\phi[2]), \\
        a_2 &:= \frac{T_\pMRA^{(3)}[\mset{1,2,3}]}{2 e^{3i\pi/4} w_{\mset{1,1,2}} |\hat x[1]| |\hat x[2]||\hat x[3]|} =\cos(\phi[1]+\phi[2]-\phi[3]), \\
        a_3 &:= -\frac{T_\pMRA^{(3)}[\mset{2,2,3}]}{2w_{\mset{2,2,3}} |\hat x[2]|^2 |\hat x[3]|} = \cos(\phi[2]+2\phi[3]).
    \end{align*}
    Setting $\alpha_j := \arccos(a_j)$ for $j=1,2,3$, this determines a system of three equations up to a sign and multiple of $2\pi$:
    \begin{align*}
        2\phi[1]-\phi[2] &= \pm\alpha_1 + 2\pi k_1,  \\
        \phi[1]+\phi[2]-\phi[3] &= \pm\alpha_2 + 2\pi k_2, \\
        \phi[2]+2\phi[3] &= \pm\alpha_3  + 2\pi k_3.
    \end{align*}
    A global change of sign $s_g \in \{-1,1\}$ for these equations corresponds to the action of $\freflel$ on $\hat x$ so we need two additional sign variables, $s_1, s_2\in \{-1,1\} $ to produce all possible systems.
    Hence, all solutions are given via the assignments
    \begin{align*}
        \phi[1] &\gets s_g\frac18\left(3(s_1\alpha_1+2\pi k_1) + 2(s_2\alpha_2+2\pi k_2) + (\alpha_3+2\pi k_3)\right ), \\
        \phi[2] &\gets s_g\frac14\left(-(s_1\alpha_1+2\pi k_1) + 2(s_2\alpha_2+2\pi k_2) + (\alpha_3+2\pi k_3)\right), \\
        \phi[3] &\gets s_g\frac18\left((s_1\alpha_1+2\pi k_1) -2(s_2\alpha_2+2\pi k_2) + 3(\alpha_3+2\pi k_3)\right ),
    \end{align*}
    and $\phi[8-j]\gets-\phi[j] \pmod{2\pi}$ for $j=1,2,3$.
    Observe that if we fix the sign variables $s_1,s_2$ and set $s_g=1$, $k_1=k_2=k_3=0$ to produce a solution $y(s_1, s_2) \in \Cx^{8}$, then each choice of $s_g, k_1, k_2$ and $k_3$ produces a solution $z=\freflel^{(1-s_g)/2}\fshiftel_{(3k_1 + 2k_2 + k_3)} \cdot y(s_1, s_2)$. 
    It follows that each choice $(s_1,s_2)$ produces an orbit, while choices of $(s_g, k_1, k_2, k_3)$ enumerate the 16 elements of the orbit.
    Since these are all the possible assignments to produce signals $y$ such that $T^{(3)}_\pMRA(y) =  T^{(3)}_\pMRA(\hat x)$, at least one choice of $s_1$ and $s_2$ must produce the orbit of $\hat x$.
\end{proof}

\paragraph{Identifiability for larger signals.}
Next, we will show that when $d=8$ ($x\in \Rl^{16} \cap \Span(1,-1,1,\ldots,-1)^\perp$), a generic signal's orbit is determined by the exact projected MRA third moment.
To show this we will need a few constructions and results from Edidin and Katz~\cite{edidin_reflection-invariant_2026}.

Define an $O(2)$-group action on a real vector $x\in\Rl^{2d}$ in the Fourier basis  generated by the reflection element $\freflel$ defined in \eqref{eq:fourier-reflect-operation} and by a now \emph{continuous} cyclic shift $\fshiftel_\theta$ for $\theta\in[0,2\pi)$ acting on a signal $\hat x\in\Cx^{2d}$ in the Fourier basis entry-wise by
\begin{equation}
    \label{eq:continuous-Fourier-shift}
    (\fshiftel_\theta\cdot \hat x)[j]= \begin{cases}
        e^{-i j\theta}\hat x[j] & j=0,1,\dots, d-1, \\
        e^{i (2d-j)\theta}\hat x[j] & j=d,d+1,\dots, 2d-1. \\
    \end{cases}
\end{equation}
Observe that when $\theta$ a multiple of $\frac{2\pi}{2d}$, the action is that of $\dihgroup{2d}$ as defined in \eqref{eq:fourier-shift-operation}.
Define the \emph{restricted action} of $O(2)$ to be the action on $\hat x$ where we have excluded the zeroth Fourier coefficient $\hat x[0]$ and Nyquist component $\hat x[d]$, i.e., the $O(2)$-action restricted to a subspace spanned by standard basis vectors $\vec{e}_1, \vec{e}_2,\dots, \vec{e}_{d-1}, \vec{e}_{d+1}, \dots, \vec{e}_{2d-1}$.
\begin{remark}
    Our definition of the action of $O(2)$ in \eqref{eq:continuous-Fourier-shift} differs from the one given by Edidin and Katz in \cite[Section 3]{edidin_reflection-invariant_2026} since they index the Fourier basis elements using negative indices.
    In addition, a difference in sign in the phase produces a rotation by $\theta$ that is in the opposite direction as Edidin and Katz's definition. These differences in convention will not affect the results.
\end{remark}

Drawing from Edidin and Katz's calculations in \cite[Section~3.1.2]{edidin_reflection-invariant_2026}, in the context of our setup the degree-2 $O(2)$-invariant polynomials are
\[
\hat x[j]\hat x[-j] = |\hat x[j]|^2, \quad 0\leq j \leq d
\]
(with indices taken modulo $2d$), and the degree-3 $O(2)$-invariant polynomials are 
\begin{align*}
    &\hat x[j_1]\hat x[j_2]\hat x[-j_1-j_2] + \hat x[-j_1]\hat x[-j_2]\hat x[j_1+j_2] \\
    &= \hat x[j_1]\hat x[j_2]\hat x[-j_1-j_2] + \overline{\hat x[j_1]\hat x[j_2]\hat x[-j_1-j_2]}, \quad 0\leq j_1\leq j_2 \leq d,\quad  0\leq j_1+ j_2\leq d
\end{align*}
(with indices taken modulo $2d$).
The invariant polynomials for the restricted $O(2)$-action exclude the polynomials containing the terms $\hat x[0]$ and $\hat x[d]$.
\begin{theorem}[{\cite[Theorem 3.2]{edidin_reflection-invariant_2026}}]
    \label{thm:generic-O(2)-separation}
    When $x\in\Rl^{2d}$ for $d\neq 4$, the invariants of degree two and three for the restricted action of $O(2)$ separate generic orbits.
\end{theorem}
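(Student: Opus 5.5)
The plan is a frequency-marching argument in the phases, with a genericity argument to rule out the spurious sign branches.

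\textbf{Reduction to phases.} Write $\hat x[j]=|\hat x[j]|e^{i\phi[j]}$ for $1\le j\le d-1$, with $\phi[2d-j]=-\phi[j]$. The degree-two restricted invariants give all magnitudes $|\hat x[j]|$, $1\le j\le d-1$. Generically these are nonzero, so each degree-three invariant with $j_1,j_2\ge1$ and $j_1+j_2\le d-1$ yields
\[
c_{j_1,j_2}:=\cos(\phi[j_1]+\phi[j_2]-\phi[j_1+j_2]).
\]
Separating generic orbits then amounts to showing the following. Suppose $\psi\in(\Rl/2\pi\Zz)^{d-1}$ reproduces all the $c_{j_1,j_2}$ of a generic $\phi$. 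Then $\psi$ lies in the $O(2)$-orbit of $\phi$, where $\fshiftel_\theta$ acts by $\phi[j]\mapsto\phi[j]-j\theta$ and $\freflel$ acts by $\phi\mapsto-\phi$. Using the continuous rotation we normalize $\phi[1]=\psi[1]=0$, and the reflection will absorb one global sign.

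\textbf{Marching step.} For each $k$, the invariant $c_{1,k-1}$ determines $\phi[k]$ from $\phi[1],\phi[k-1]$ up to a binary choice $\phi[k]=\phi[k-1]+\phi[1]\mp\arccos(c_{1,k-1})$. For $k\ge4$ there is a second relation, $c_{2,k-2}$, which gives another two-element candidate set for $\phi[k]$. Generically the two candidate sets meet in exactly one point. Indeed, a second common point forces a polynomial identity among the (algebraically independent) magnitudes and phases, and this identity fails at an explicit point.

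So, by induction, once $\phi[1],\phi[2],\phi[3]$ are pinned down, all later phases are forced. At $k=2$ the sign of $\phi[2]=\mp\arccos(c_{1,1})$ is absorbed by $\freflel$. At $k=3$ the equation $c_{1,2}$ leaves a genuine two-fold ambiguity in $\phi[3]$.

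\textbf{The obstacle.} The main step is killing this residual ambiguity at $k=3$. Each of the two branches for $\phi[3]$ propagates, through the $c_{1,k-1}$ equations alone, to a full candidate phase vector. I would show that for $d\ge5$ the wrong branch violates one of the overdetermined constraints among the indices $\le4$, namely $c_{1,3}$, $c_{2,2}$ and (when $d\ge6$) $c_{2,3}$. Concretely, I would eliminate $\phi[4]$ between $c_{1,3}$ and $c_{2,2}$ on the wrong branch to obtain a trigonometric polynomial identity in the true $\phi[2],\phi[3]$ and the magnitudes. Exhibiting a single point where it fails shows it is nonzero, so it fails generically. There are finitely many branch choices, so a finite union of proper subvarieties is excluded.

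\textbf{Exceptional and small cases.} For $d=4$ only $c_{1,1}$ and $c_{1,2}$ exist, so the ambiguity persists and several orbits share the invariants, consistent with the exclusion. For $d\le3$ the count of unknowns after the $O(2)$ normalization is at most one, and I would check directly that $c_{1,1}$ determines it up to reflection. The step I expect to be hardest is verifying nonvanishing of the branch-discriminating polynomials uniformly in $d$. This is why the argument uses only indices $\le4$ (so the check is a fixed finite computation) and then marches with the two-equation uniqueness for $k\ge5$.
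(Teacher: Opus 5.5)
The paper gives no proof of this statement. It quotes Edidin--Katz and uses the result as a black box in Theorem~\ref{thm:pMRA-base-case-unique-recovery}. Your argument is a correct, self-contained proof, and it follows the frequency-marching strategy that the paper attributes to Edidin and Katz.

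Both genericity claims can be made fully explicit, which lets you drop the loose phrase about ``algebraically independent magnitudes and phases.''

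\emph{Marching step.} Suppose $\psi$ agrees with $\phi$ on indices below $k$. Set $u=\phi[1]+\phi[k-1]-\phi[k]$ and $v=\phi[2]+\phi[k-2]-\phi[k]$. Then $c_{1,k-1}$ allows $\psi[k]\in\{\phi[k],\phi[k]+2u\}$ and $c_{2,k-2}$ allows $\psi[k]\in\{\phi[k],\phi[k]+2v\}$. A second common point requires $2(\phi[1]+\phi[k-1]-\phi[2]-\phi[k-2])\equiv 0 \pmod{2\pi}$. Equivalently, $\hat x[1]\hat x[k-1]\overline{\hat x[2]\hat x[k-2]}$ must be real, which is a nonzero polynomial condition in $x$ for $k\ge 4$.

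\emph{Wrong branch at index 3.} Normalize $\phi[1]=\psi[1]=0$ and $\psi[2]=\phi[2]$, and take the wrong branch $\psi[3]=2\phi[2]-\phi[3]$. Then $c_{2,2}$ forces $\psi[4]\in\{\phi[4],\,4\phi[2]-\phi[4]\}$, while $c_{1,3}$ forces $\psi[4]\in\{2\phi[2]-2\phi[3]+\phi[4],\,2\phi[2]-\phi[4]\}$. These two sets meet only if one of four products is real: $\hat x[1]\hat x[2]\overline{\hat x[3]}$, $\hat x[1]^2\overline{\hat x[2]}$, $\hat x[1]^2\hat x[2]\overline{\hat x[4]}$, or $\hat x[2]\hat x[3]\overline{\hat x[1]\hat x[4]}$. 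This confirms that the check is a fixed finite computation independent of $d$, as you anticipated.

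One sentence in your write-up is inaccurate. The $c_{1,k-1}$ equations alone do not propagate a branch to a single candidate phase vector, since each step carries its own binary choice. You never use that claim, though: the pair $c_{1,3},c_{2,2}$ already kills the wrong branch at index $4$ regardless of the later entries.

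Compared with simply citing the theorem, your route gives an explicit list of excluded degenerate conditions. It also gives a transparent reason why $d=4$ is exceptional: there only $c_{1,1}$ and $c_{1,2}$ are available, so the two-fold ambiguity at index $3$ cannot be resolved.
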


To prove that the invariants from the pMRA model separate generic orbits when $d=8$ (meaning the signal has length $16$) we will first show in Lemma~\ref{lemma:pMRA-has-almost-all-invariants} that we actually have access to the dMRA invariants, except for those involving the term $\hat x[d]$.
If we had all the invariants, we could simply use Edidin and Katz's results as presented.

\begin{lemma}
    \label{lemma:pMRA-has-almost-all-invariants}
    Let $x\in \Rl^{16}\cap\Span(1,-1,\dots, 1,-1)^\perp$ with Fourier transform $\hat x$, and let $\mset{j_1,j_2,j_3}$ be a multiset of $[16)\setminus\{8\}$ such that $j_1 + j_2 + j_3 \equiv 0\pmod{16}$. Then there is a multiset $\mset{j'_1, j'_2, j'_3}$ of $[8)$ such that
    \begin{equation}
        \label{eq:invariant-relationships}
        T^{(3)}_\pMRA(\hat x)[\mset{j'_1, j'_2, j'_3}] = K(\{j'_1, j'_2, j'_3\}) (-1)^{(j_1 + j_2 + j_3)/16} e^{\pi i (j'_1 + j'_2 + j'_3)/16} \, T^{(3)}_\dMRA(\hat x)[\mset{j_1, j_2, j_3}],
    \end{equation}
    where $K(\mset{j'_1, j'_2, j'_3})$ is 8 when $\mset{j'_1, j'_2, j'_3}=\mset{0,0,0}$, 4 when $\mset{j'_1, j'_2, j'_3}$ has the form $\mset{0,j,j}$, and 2 otherwise.
\end{lemma}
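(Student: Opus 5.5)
The approach is to compare the two entry formulas of Proposition~\ref{prop:invariant-tensor-structure} with $r=3$ and $2d=16$, and to choose $\mset{j'_1,j'_2,j'_3}$ as the ``folded'' version of $\mset{j_1,j_2,j_3}$. For each $j\in[16)\setminus\{8\}$, let $j'\in[8)$ be the representative of $\pm j \pmod{16}$ lying in $[8)$, and record the sign $s$ with $j\equiv s j' \pmod{16}$. The excluded index $8$ is exactly the one with no representative in $[8)$. Then $\mset{j'_1,j'_2,j'_3}$ is a multiset of $[8)$, and the sign vector $s^*=(s_1,s_2,s_3)$ satisfies $\sum_k s_kj'_k\equiv 0\pmod{16}$. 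So at least one term of the pMRA sum for the entry $\mset{j'_1,j'_2,j'_3}$ is nonzero, and that term equals a phase times
\[
\prod_k \hat x[j_k]+\prod_k\hat x[-j_k].
\]
This is, up to the weight, the dMRA entry at $\mset{j_1,j_2,j_3}$.

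The key step is to show that every other surviving sign pattern contributes a term proportional to the same dMRA polynomial, with the same phase, and to count how many such patterns there are.

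\textbf{Sign patterns.} The map $s\mapsto -s$ preserves both the indicator and the polynomial $\prod \hat x[s_kj'_k]+\prod\hat x[-s_kj'_k]$. For the phase, note that
\[
(-1)^{\frac1{16}\sum s_kj'_k} \quad\text{and}\quad (-1)^{-\frac1{16}\sum s_kj'_k}
\]
are equal because $\frac1{16}\sum s_kj'_k$ is an integer. Hence the patterns $s^*$ and $-s^*$ contribute equally, which produces the generic factor $2$. I then have to rule out any other pattern $s$ with $\sum s_kj'_k\equiv0$. Flipping a single sign, say on index $k$, changes the sum by $2j'_k$, so we would need $2j'_k\equiv 0\pmod{16}$. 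With $0\le j'_k<8$ this forces $j'_k=0$, and then the flip changes nothing. Flipping two signs is equivalent to flipping the third and negating globally. So extra patterns arise only from zero entries: when $j'$ has the form $\mset{0,j,j}$ the sign on the $0$ is free, doubling the count to $4$, and for $\mset{0,0,0}$ all signs are free, giving $8$.

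\textbf{Constants.} I have to check that $w_{\mset{j'}}=w_{\mset{j}}$. Folding preserves coincidences of indices, except that $j$ and $-j$ both fold to $j'$; in that case $j_1+j_2+j_3\equiv0$ forces the third index to be $0$, which again is the $\mset{0,j,j}$ case. The identity therefore needs care, and I expect this to be the main obstacle: the weights may differ there, and that difference would then have to be absorbed by the stated value of $K$. I would verify this by directly expanding the $\mset{0,j,j}$ and $\mset{0,0,0}$ entries.

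\textbf{Phases.} Finally I match the phase factors. In the formula of Proposition~\ref{prop:invariant-tensor-structure}, $\exp(\pi i\sum j'_k/16)$ appears directly. The factor $(-1)^{\frac1{16}\sum s_kj'_k}$ has to be rewritten as $(-1)^{(j_1+j_2+j_3)/16}$, interpreting $j_k$ as the representative in $[16)$. This holds because $s_kj'_k$ and $j_k$ differ by $0$ or $16$ exactly when $s_k=-1$ and $j'_k\neq0$, so each such index shifts the exponent by $1$. I would account for this parity carefully, and if necessary absorb the resulting global sign using the symmetry under $s\mapsto -s$ established above.
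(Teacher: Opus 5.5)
Your route is the paper's: fold each index by $j\mapsto j$ (for $j<8$) or $j\mapsto 16-j$, show that only $\pm s^*$ survive in Proposition~\ref{prop:invariant-tensor-structure} (plus free signs on zero indices), then match weights and phases. Your sign-pattern count is correct, and cleaner than the paper's preimage-by-contradiction argument. The gap is in the two checks you defer. Neither can be closed the way you propose, because the displayed identity is false as stated.

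\emph{Weights.} We have $w_{\mset{0,j,j}}=\sqrt3$ but $w_{\mset{0,j,16-j}}=\sqrt6$, and all four surviving patterns have $\sum_k s_kj'_k=0$. Hence
\[
T^{(3)}_\pMRA(\hat x)[\mset{0,j,j}]=4\sqrt3\,e^{\pi i j/8}\,\hat x[0]|\hat x[j]|^2=2\sqrt2\,e^{\pi i j/8}\,T^{(3)}_\dMRA(\hat x)[\mset{0,j,16-j}],
\]
whereas the statement gives $4\cdot(-1)^{16/16}e^{\pi i j/8}=-4e^{\pi i j/8}$. The stated $K=4$ cannot absorb this.

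\emph{Phases.} You correctly found $\sum_k s^*_kj'_k=(j_1+j_2+j_3)-16m$ with $m=\#\{k:j_k>8\}$. But $s\mapsto -s$ leaves $(-1)^{\frac1{16}\sum_k s_kj'_k}$ unchanged, as you showed yourself, so it cannot remove $(-1)^m$. The discrepancy is real. Expanding directly gives $T^{(3)}_\pMRA(\hat x)[\mset{1,6,7}]=2e^{7\pi i/8}\,T^{(3)}_\dMRA(\hat x)[\mset{1,6,9}]$, while the statement asserts $-2e^{7\pi i/8}$. There is also a structural reason. For zero-free $\mset{j_1,j_2,j_3}$, this multiset and $\mset{16-j_1,16-j_2,16-j_3}$ have the same dMRA entry and the same fold. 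Yet $(-1)^{(j_1+j_2+j_3)/16}$ changes sign between them, since the index sum goes from $t$ to $48-t$ with $t\in\{16,32\}$. So the identity cannot hold for both unless the entry vanishes.

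What your argument actually proves is a corrected statement. The sign factor should be $(-1)^{\frac1{16}\sum_k s^*_kj'_k}$, that is, $(-1)^{(j_1+j_2+j_3)/16}$ computed with the signed representatives of the $j_k$ in $(-8,8)$. And $K(\mset{0,j,j})$ should be $2\sqrt2$.

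The paper's proof makes the same two slips:
\begin{itemize}
\item It attaches the weight $w_{\mset{0,j,j}}$ to $T^{(3)}_\dMRA[\mset{0,j,-j}]$.
\item Its claim (b), that the summand is invariant under $\mset{j_k}\mapsto\mset{16-j_k}$, fails for the phase it writes down. With that phase the summand is anti-invariant, so the two-term sum would vanish.
\end{itemize}
None of this affects Theorem~\ref{thm:pMRA-base-case-unique-recovery}. That theorem only needs each dMRA entry avoiding index $8$ to be a fixed, nonzero, $x$-independent multiple of a pMRA entry, which your argument establishes. So rather than trying to absorb the discrepancies, compute the constants explicitly and state the corrected relation.
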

\begin{proof}
    We split into cases based on the number of zero indices in the multi-set $\mset{j'_1,j'_2,j'_3}$.
    Due to the modular condition, non-zero entries in $T^{(3)}_\pMRA$ appear when $\mset{j'_1,j'_2,j'_3} = \mset{0,0,0}$, when $\mset{j'_1,j'_2,j'_3}=\mset{0, j, j}$ for some $j\in[16]\setminus\{0,8\}$, or when every entry is non-zero.
    The first two cases follow by inspection.
    By the first expression from Proposition~\ref{prop:invariant-tensor-structure}, if $\mset{j_1,j_2,j_3}$ is a multiset of $[16)$ such that $j_1 + j_2 + j_3 \equiv 0\pmod{16}$ then
    \[
    T^{(3)}_\dMRA(\hat x)[\mset{j_1,j_2,j_3}] = \frac{w_{\mset{j_1,j_2,j_3}}}{2}(\hat x[j_1]\hat x[j_2]\hat x[j_3] + \hat x[-j_1]\hat x[-j_2]\hat x[-j_3]).
    \]
    First, for $\mset{j_1,j_2,j_3} = \mset{0,0,0}$, using the second expression from Proposition~\ref{prop:invariant-tensor-structure}, we may simplify to get
    \[
        T^{(3)}_\pMRA(\hat x)[\mset{0,0,0}] = 8w_{\mset{0,0,0}} \hat x[0]^3 = 8\cdot T^{(3)}_\dMRA(\hat x)[\mset{0,0,0}]
    \]
    and so \eqref{eq:invariant-relationships} holds for this multiset index.
    Second, for $j\in\{1,2,\dots,7\}$,
    \[
        T^{(3)}_\pMRA(\hat x)[\mset{0,j,j}] = 4 e^{\pi i (0+j+j)/16}w_{\mset{0,j,j}} \hat x[0]\hat x[j] \hat x[-j] = 4 e^{\pi i (0+j+j)/16}\cdot T^{(3)}_\dMRA(\hat x)[\mset{0,j,-j}]
    \]
    with indices taken modulo 16, so \eqref{eq:invariant-relationships} holds for these multiset indices.
    Third, with the bulk of cases remaining, we will show that for any multiset $\mset{j_1,j_2,j_3}$ of $[16)\setminus\{0,8\}$ there is a multiset $\mset{j'_1,j'_2,j'_3}$ of $[8)\setminus\{0\}$ so that \eqref{eq:invariant-relationships} holds.
    To this end, define a map on indices $f:[16)\setminus\{0,8\} \to [8)\setminus\{0\}$ by
    \[
    f(j) = 
    \begin{cases}
        j & \text{if } j<8, \\
        16-j & \text{otherwise}.
    \end{cases}
    \]
    We extend $f$ to multisets $\mset{j_1,j_2,j_3}$ of $[16)\setminus\{0,8\}$ elementwise by $f(M) = \mset{f(m) : m \in M}$.
For indices $j_1,j_2,j_3\in[8)\setminus\{0\}$, we may then rewrite the second expression from Proposition~\ref{prop:invariant-tensor-structure} using a summation over the pre-image of $f$,
\begin{align*}
    T^{(3)}_\pMRA(\hat x)[\mset{j'_1,j'_2,j'_3}] = \frac{w_{\mset{j'_1,j'_2,j'_3}}}2  &\sum_{\mset{j_1,j_2,j_3}\in f^{-1}(\mset{j'_1,j'_2,j'_3})} \Biggl(\phi(\mset{j_1,j_2,j_3},\mset{j'_1,j'_2,j'_3}) \Biggl.\\
    &  \left.\cdot \left(\hat x[j_1]\hat x[j_2]\hat x[j_3] + \hat x[-j_1]\hat x[-j_2]\hat x[-j_3]\right)\bb1\left\{\sum_{k=1}^r j_k \equiv 0 \pmod{16}\right\}\right)
\end{align*}
where
\[
\phi(\mset{j_1,j_2,j_3},\mset{j'_1,j'_2,j'_3}) = (-1)^{(j_1 + j_2 + j_3)/16} e^{\pi i (j'_1 + j'_2 + j'_3)/16}.
\]
Defining the set
\[
Z=\{\mset{j_1,j_2,j_3} : j_1,j_2,j_3\in [16)\setminus\{0,8\},\ j_1 + j_2 + j_3 \equiv 0\pmod{16}\},
\]
we may absorb the indicator function into the sum condition to get
\[
    T^{(3)}_\pMRA(\hat x)[\mset{j'_1,j'_2,j'_3}] = \frac{w_{\mset{j'_1,j'_2,j'_3}}}2\sum_{\mset{j_1,j_2,j_3}\in f^{-1}(\mset{j'_1,j'_2,j'_3})\cap Z} \phi(\,\cdots)\left(\hat x[j_1]\hat x[j_2]\hat x[j_3] + \hat x[-j_1]\hat x[-j_2]\hat x[-j_3]\right).
\]
However, note that (a) $f^{-1}(\mset{j'_1,j'_2,j'_3})\cap Z = (f|_Z)^{-1}(\mset{j'_1,j'_2,j'_3})$, and (b) the summand is invariant under the substitution $\mset{j_1,j_2,j_3}\mapsto\mset{16-j_1,16-j_2,16-j_3}$.
We will show that, in fact, 
\[
f|_Z^{-1}(\mset{j'_1,j'_2,j'_3}) = \{\mset{j_1,j_2,j_3}, \mset{16-j_1,16-j_2,16-j_3}\}
\]
so that summation simplifies to
\[
    T^{(3)}_\pMRA(\hat x)[\mset{j'_1,j'_2,j'_3}] = 2 (-1)^{(j_1 + j_2 + j_3)/16} e^{\pi i (j'_1 + j'_2 + j'_3)/16} \, T^{(3)}_\pMRA(\hat x)[\mset{j_1,j_2,j_3}].
\]
First, if $f|_Z(\mset{j_1,j_2,j_3}) = \mset{j'_1,j'_2,j'_3}$ then it is simple to see that $f|_Z^{-1}(\mset{j'_1,j'_2,j'_3}) \supseteq \{\mset{j_1,j_2,j_3}, \mset{16-j_1,16-j_2,16-j_3}\}$.
We will show by contradiction these are the only two multisets in the pre-image.
Suppose there is some third multiset $K=\mset{k_1,k_2,k_3}$ so that $K\in Z$ and $f(K)=\mset{j'_1,j'_2,j'_3}$.
Then by the definition of $f$, there will be some permutation $\pi \in S_3$ such that either $k_{\pi(i)} = j_i$ or $k_{\pi(i)} = 16-j_i$ for each $i=1,2,3$. 
Without loss of generality we may assume that $\pi=\mathrm{id}$.
It must be that either $k_1 = 16-j_1$, $k_2 = j_2$ and $k_{\pi(3)} = j_3$, or $k_1 = j_1$, $k_{\pi(2)} = 16-j_2$ and $k_3 = 16-j_3$ for some $\pi\in S_3$.
We may assume the first since if $K\in f|_Z^{-1}(\mset{j'_1,j'_2,j'_3})$ then $\mset{16-k_1, 16-k_2, 16-k_3} \in f|_Z^{-1}(\mset{j'_1,j'_2,j'_3})$.
But since $K\in Z$, $k_1 + k_2 + k_3 = 16-j_1 + j_2 + j_3\equiv 0 \pmod{16}$ and $\{j_1,j_2,j_3\} \in Z$ gives $j_1 + j_2 + j_3 \equiv 0 \pmod{16}$, we find that $(j_1 + j_2 + j_3) - (16-j_1 + j_2 + j_3)  \equiv 0 \pmod{16}$ which implies $2j_1 \equiv0\pmod{16}$ but then $j_1 \equiv0\pmod{8}$ and $j_1=0$ or $j_1=8$ so $\mset{j_1,j_2,j_3} \not\in Z$ which contradicts our initial assumptions.
\end{proof}
Lemma~\ref{lemma:pMRA-has-almost-all-invariants} shows that we can recover (most of) the dihedral MRA 3rd moment from the projected MRA 3rd moment.
Using this and Theorem~\ref{thm:generic-O(2)-separation} we will verify generic identifiability in the pMRA model in the base case $d=8$.
\begin{theorem}
    \label{thm:pMRA-base-case-unique-recovery}
    For generic $x\in \Rl^{16}\cap \Span(1,-1,\dots,1,-1)^\perp$, the moment $T_\pMRA^{(3)}(\hat x)$ uniquely determines the $\dihgroup{16}$-orbit of $x$.
\end{theorem}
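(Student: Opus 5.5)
The plan is to reduce the statement to Theorem~\ref{thm:generic-O(2)-separation} (with $d=8\neq 4$) via Lemma~\ref{lemma:pMRA-has-almost-all-invariants}. Let $x,x'\in\Rl^{16}\cap\Span(1,-1,\dots,1,-1)^\perp$ with $T^{(3)}_\pMRA(\hat x)=T^{(3)}_\pMRA(\hat x')$ and $x$ generic. The goal is to show $\hat x'\in\dihgroup{16}\cdot\hat x$.

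The first step recovers the data needed by Edidin--Katz. The entry at $\mset{0,0,0}$ gives $\hat x[0]^3$, and $\hat x[0]$ is real, so $\hat x[0]=\hat x'[0]$. Generically $\hat x[0]\neq 0$, so the entries at $\mset{0,j,j}$ give $|\hat x[j]|^2=|\hat x'[j]|^2$ for $j=1,\dots,7$. These are all the degree-two restricted $O(2)$-invariants; the one at $j=8$ is absent, but $\hat x[8]=0$ on our subspace. For a degree-three invariant indexed by $\mset{j_1,j_2,j_3}$ with all $j_i\notin\{0,8\}$ and $\sum j_i\equiv 0 \pmod{16}$, Lemma~\ref{lemma:pMRA-has-almost-all-invariants} expresses it as a known nonzero multiple of the pMRA entry at $\mset{j'_1,j'_2,j'_3}$. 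The multiplier is a power of two times a phase depending only on the indices. Hence all degree-three restricted $O(2)$-invariants of $\hat x$ and $\hat x'$ agree, and their degree-two invariants agree as well.

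The second step invokes Theorem~\ref{thm:generic-O(2)-separation}. This gives some $h\in O(2)$ with $\hat x'$ and $h\cdot\hat x$ equal on the coordinates $\{1,\dots,15\}\setminus\{8\}$. Coordinate $0$ already agrees, and coordinate $8$ is zero for both, so $\hat x'=h\cdot\hat x$ on all of $\Cx^{16}$.

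The third step, which is the main obstacle, is to upgrade $h\in O(2)$ to $\dihgroup{16}$. A continuous rotation $\fshiftel_\theta$ matches a dihedral element only when $\theta\in\frac{2\pi}{16}\mathbb Z$. The difficulty is that the continuous rotation acts on coordinate $8$ by $e^{-8i\theta}$, and since $\hat x[8]=0$ that coordinate gives no constraint on $\theta$. I would handle this as follows.

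For generic $x$, $\hat x'=\fshiftel_\theta\hat x$ (possibly composed with $\freflel$) must itself be the Fourier transform of a real signal with a zero Nyquist coefficient. It must also reproduce the remaining pMRA entries, namely those where $f$ identifies the two preimages in Lemma~\ref{lemma:pMRA-has-almost-all-invariants}.

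The key check is the phase factor $(-1)^{(j_1+j_2+j_3)/16}$ in that lemma. For two triples with the same $\mset{j'_1,j'_2,j'_3}$, the values of $(j_1+j_2+j_3)/16$ can differ in parity. Consequently the pMRA entries are invariant under $\fshiftel_\theta$ only if $e^{-i\theta(j_1+j_2+j_3)}$, with the signed representatives in $(-8,8)$, is compatible with that sign. Those signed sums lie in $\{0,\pm16\}$, so compatibility forces $e^{16i\theta}=1$.

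I would verify this on one explicit monomial, for instance the pMRA entry $\mset{1,7,6}$, whose signed representative triple $(1,7,-8)$ is excluded. A better choice is an entry like $\mset{3,6,7}$ with signed sum $16$, requiring $e^{-16 i\theta}=1$. Generically the corresponding monomial is nonzero, so $\theta\in\frac{2\pi}{16}\mathbb Z$.

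With $\theta$ restricted this way, $h\in\dihgroup{16}$ under the identification in \eqref{eq:continuous-Fourier-shift}, which completes the proof.
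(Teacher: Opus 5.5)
Your route is the same as the paper's. You use Lemma~\ref{lemma:pMRA-has-almost-all-invariants} to read off $\hat x[0]$, the power spectrum and the restricted degree-three $O(2)$-invariants. You then apply Theorem~\ref{thm:generic-O(2)-separation} with $d=8$ and pin down $\theta$ with the entry $\mset{3,6,7}$. Your first two steps are fine, and you rightly note that $\hat x[0]\neq 0$ is needed to extract $|\hat x[j]|^2$.

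The gap is in the third step: matching the $\mset{3,6,7}$ entry does \emph{not} force $e^{-16i\theta}=1$. By Proposition~\ref{prop:invariant-tensor-structure}, only $s=\pm(1,1,1)$ contribute to that entry, each with phase $+1$. So the entry equals $w_{\mset{3,6,7}}(z+\bar z)$ with $z=\hat x[3]\hat x[6]\hat x[7]$, and it sees only $\mathrm{Re}(z)$, not the monomial itself. If $\hat x'=\fshiftel_\theta\hat x$ (or $\freflel\fshiftel_\theta\hat x$), the matching condition is
\[
\mathrm{Re}(z)=\mathrm{Re}\bigl(e^{-16i\theta}z\bigr), \quad\text{i.e.}\quad e^{-16i\theta}\in\{1,\ \bar z/z\}.
\]
The second root exists for every $x$ with $z\neq 0$, so ``the monomial is generically nonzero'' does not exclude it. Your reasoning would be valid for cyclic-MRA invariants, where $z$ itself is invariant, but not here, where only $z+\bar z$ is available. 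The remarks about the sign $(-1)^{(j_1+j_2+j_3)/16}$ do not help either: those are fixed constants and do not involve $\theta$.

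The missing idea is to use a second entry with $j_1+j_2+j_3=16$ and $0<j_i<8$. For example, take $\mset{2,7,7}$; $\mset{4,5,7}$, $\mset{4,6,6}$ and $\mset{5,5,6}$ also work, and for each of these again only $s=\pm(1,1,1)$ contribute. With $z'=\hat x[2]\hat x[7]^2$, this entry gives $e^{-16i\theta}\in\{1,\bar z'/z'\}$. The two spurious roots coincide only when $\bar z/z=\bar z'/z'$, i.e.\ when $\mathrm{Im}(z\bar z')=0$, which is a nonzero real polynomial condition on $x$. Hence for generic $x$ (also requiring $z,z'\neq 0$), the only common solution is $e^{16i\theta}=1$. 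Then $\theta\in\frac{2\pi}{16}\mathbb Z$ and $h\in\dihgroup{16}$.

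The paper's own write-up reaches the correct equation $\cos\phi=\cos(\phi+16\theta)$, but it then concludes from genericity of $\phi$ alone. That leaves the same spurious solution $16\theta\equiv-2\phi$, so the second-triple argument is needed to close that step there as well.
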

\begin{proof}
Let $x\in\Rl^{16}\cap\Span(1,-1,\dots,1,-1)^\perp$ be generic.
Suppose that there is a $y\in \Rl^{16}\cap\Span(1,\dots,-1)^\perp$ such that $T^{(3)}_\pMRA(\hat x) = T^{(3)}_\pMRA(\hat y)$.
We will show that $y$ is in the $\dihgroup{16}$-orbit of $x$.
First, by Lemma~\ref{lemma:pMRA-has-almost-all-invariants}, we have that
\[
    T^{(3)}_\pMRA(\hat x) = T^{(3)}_\pMRA(\hat y) \implies  
    T^{(3)}_\dMRA(\hat x)[\mset{j_1,j_2,j_3}] = T^{(3)}_\dMRA(\hat y)[\mset{j_1,j_2,j_3}] \text{ for all } j_1,j_2,j_3\in [16)\setminus\{8\}.
\]
Observe that since
\[
    T^{(3)}_\dMRA(\hat x)[\mset{0,0,0}]=w_{\mset{0,0,0}} \hat x[0]^3
\]
we can infer that $\hat x[0] = \hat y[0]$, i.e.\ $x$ and $y$ have the same mean value.
Next, for $j\in[16)\setminus\{8\}$ since $x$ is real, $\hat x[-j] = \overline{\hat x[j]}$ and since
\[
    T^{(3)}_\dMRA(\hat x)[\mset{0,j,-j}] = w_{\mset{0,j,-j}} \hat x[0] \hat x[j]\hat x[-j] = w_{\mset{0,j,-j}} \hat x[0] |\hat x[j]|^2,
\]
$|\hat x[j]|=|\hat y[j]|$ for $j\in[8)$, i.e.\ $x$ and $y$ have the same power spectrum.
It remains to show that the phases of $\hat x[j]$ and $\hat y[j]$ must be related by a dihedral group action.
We will first show that $\hat x$ and $\hat y$ must be in the same $O(2)$-orbit.
Observe that for $j_1, j_2, j_3 \in [16)\setminus\{0,8\}$ with $j_1 + j_2 - j_3 =0$,
\[
    T^{(3)}_\dMRA(\hat x)[\mset{j_1,j_2,j_3}] = \frac{w_{\mset{j_1,j_2,j_3}}}2\left(\hat x[j_1]\hat x[j_2]\hat x[j_3] + \hat x[-j_1]\hat x[-j_2]\hat x[-j_3]\right),
\]
so the $O(2)$-invariant polynomials for $\hat x$ and $\hat y$ have the same values.
By Theorem~\ref{thm:generic-O(2)-separation}, these invariants separate generic $O(2)$-orbits, so we know that $\hat x$ and $\hat{y}$ are in the same $O(2)$-orbit, i.e.\ $\hat y = \fshiftel_\theta \hat x$ or $\hat y = \freflel\fshiftel_\theta \hat x$ for some $\theta \in [0,2\pi)$.
Without loss of generality assume the first occurs.
In our final step, we will resolve $\theta = \frac{2\pi k}{16}$ for some $k\in [16)$ using the same argument as in \cite[Section 3.3]{edidin_reflection-invariant_2026}, which implies $\hat x$ and $\hat y$ are in the same $\dihgroup{16}$-orbit.
Pick $j_1+j_2+j_3 = 16$ with $0<j_1,j_2,j_3<8$ (e.g.\ $j_1=3$, $j_2=6$ and $j_3=7$). Then
\begin{align*}
    T^{(3)}_\dMRA(\hat x)[\mset{j_1,j_2,j_3}] &= T^{(3)}_\dMRA(\hat y)[\mset{j_1,j_2,j_3}]  \\
    \implies \hat x[j_1]\hat x[j_2]\hat x[j_3] + \hat x[-j_1]\hat x[-j_2]\hat x[-j_3] &= \hat y[j_1]\hat y[j_2]\hat y[j_3] + \hat y[-j_1]\hat y[-j_2]\hat y[-j_3]  \\
    \implies\hat x[j_1]\hat x[j_2]\hat x[j_3] + \hat x[-j_1]\hat x[-j_2]\hat x[-j_3] &= e^{-i\theta(j_1+j_2+j_3)} \hat x[j_1]\hat x[j_2]\hat x[j_3] + e^{i\theta(j_1+j_2+j_3)}\hat x[-j_1]\hat x[-j_2]\hat x[-j_3]  \\
    \implies \hat x[j_1]\hat x[j_2]\hat x[j_3] + \hat x[-j_1]\hat x[-j_2]\hat x[-j_3] &= e^{-16i\theta} \hat x[j_1]\hat x[j_2]\hat x[j_3] + e^{16i\theta}\hat x[-j_1]\hat x[-j_2]\hat x[-j_3].
\end{align*}
Define $\phi\in[0,2\pi)$ to be such that $|\hat x[j_1]\hat x[j_2]\hat x[j_3]|e^{-i\phi} = \hat x[j_1]\hat x[j_2]\hat x[j_3]$.
Then
\begin{align*}
    e^{-i\phi} +e^{i\phi} &= e^{-i\phi}e^{-16i\theta} + e^{-i\phi}e^{-16i} \\
    \implies \quad \cos(\phi) &= \cos(\phi + 16\theta).
\end{align*}
Since $\phi$ is generic, this equality only holds when $16\theta =2\pi k$ for some $k\in [16)$, so $\theta = \frac{2\pi k}{16}$.
Thus, $\hat y$ is in the $\dihgroup{16}$-orbit of $\hat x$.
Hence, the third order invariants from projected MRA separate generic orbits.
\end{proof}

\paragraph{Algorithmic implications.}
While unique recovery of an orbit is not possible for signals of length 8, we use frequency marching in 8 dimensions ($d=4$) as demonstrated in Theorem~\ref{thm:pMRA-base-case-list-recovery} as a base case for our recursive method to recover a set of four orbit representatives in $\Cx^8$ from the dMRA third moment.
If each of these is used as the input to Algorithm~\ref{alg:extend-representative-pMRA} to recover some larger signal in $\Cx^{16}$,
Theorem~\ref{thm:dMRA-base-case-unique-recovery} then ensures that any solutions to the variety-constrained linear system belong to the same orbit.
Alternatively if any of the other three orbit representatives fail to extend to an orbit representative in $\Cx^{16}$, they may simply be discarded.
See the next section and proof of Theorem~\ref{thm:pMRA-main} for a rigorous explanation.

\section{Putting it Together: Proofs of Main Results}
\label{sec:main-results}

First, we define the matrix $M=M(d)$ referenced in Theorem~\ref{thm:dMRA-main}.
For $d=2^k$ and $x\in\Rl^{2d}$ we give this matrix as the direct sum of matrices
\begin{equation}
\label{eq:master-M-dMRA}
M(\hat x) := M_\dMRA(\hat x[K_3]) \oplus M_\dMRA(\hat x[K_4])\oplus \dots \oplus M_\dMRA(\hat x[K_k])\oplus M_\dMRA(\hat x),
\end{equation}
where $K_{m} := \{ 2^{k+1-m} j : j\in [2^m)\}$ so that each $\hat x[K_m]$ is a subsequence of $2^m$ Fourier coefficients of $x$.
We omit the dependence on $\hat x$ (as in the statement of Theorem~\ref{thm:dMRA-main}) when considering $M$ to be a symbolic matrix in the $2^k+1$ real and $2^k-1$ imaginary components of $\hat x$ in the same sense as Remark~\ref{rem:symbolic-real-and-imaginary-components}.

\begin{proof}[Proof of Theorem~\ref{thm:dMRA-main}]
First, observe that the effect of extracting even Fourier coefficients repeatedly has the same effect as indexing by directly by the sets $K_m$ in that $\hat x[K_k] = \hat x[E_{2\cdot 2^k}]$ and $(\hat x[K_m])[E_{2^m}] = \hat x[K_{m-1}]$ where $E_d$ is defined in \eqref{eq:even-numbers}.
By Proposition~\ref{prop:recursive-invariants}, one may extract the subtensor of $T_\dMRA(\hat x)$ which will have the form $T^{(3)}_\dMRA(\hat x[E_{2\cdot2^k}]) = T^{(3)}_\dMRA(\hat x[K_k])$.
Applying this repeatedly one may then extract sequence of sub-tensors of the form
\[
   T^{(3)}_\dMRA(\hat x[K_2]), T^{(3)}_\dMRA(\hat x[K_3]), \dots, T^{(3)}_\dMRA(\hat x[K_k]),T^{(3)}_\dMRA(\hat x).
\]
For the sub-tensor $T^{(3)}_\dMRA(\hat x[K_2])\in(\Cx^4)^{\oasterisk 3}$, by Theorem~\ref{thm:dMRA-base-case-unique-recovery} there is an efficient method for finding a representative $y_2$ in the $\dihgroup{4}$-orbit of $\hat x[K_2]$.
Since $x$ is generic and under the assumption that the  matrix $M$ has full column-rank symbolically, the instance $M(\hat x)$ has full column-rank and as a result each of
\[
    M_\dMRA(\hat x[K_3]), M_\dMRA(\hat x[K_4]), \dots, M_\dMRA(\hat x[K_k]),\text{ and } M_\dMRA(\hat x)
\]
has full column-rank.
Further, we note that for each $g\in\dihgroup{2\cdot 2^k}$, the instances $M(g\cdot \hat x)$ will have full column-rank  as a consequence of Proposition~\ref{prop:M-dMRA-is-instance}. Theorem~\ref{thm:dMRA-correct-extension} then guarantees that Algorithm~\ref{alg:extend-representative-dMRA} run on inputs $y_2$ and $T^{(3)}_\dMRA(\hat x[K_3])$ produces $y_3$ in the $\dihgroup{8}$-orbit of $\hat x[K_3]$.
At each stage, this process may be repeated to recover $y_m$ the $\dihgroup{2^m}$-orbit of $\hat x[K_m]$ until the final step recovers $y$ in the $\dihgroup{2d}$-orbit of $\hat x$.

Finally, since the method for initializing $y_2$ as described in the proof of Theorem~\ref{thm:dMRA-base-case-unique-recovery} takes constant time and Algorithm~\ref{alg:extend-representative-dMRA} runs in polynomial time, the recursive procedure as a whole runs in polynomial time.
\end{proof}

Similar to the dihedral MRA case, we define the matrix $M'=M'(d)$ referenced in Theorem~\ref{thm:pMRA-main}.
For $d=2^k$ and $x\in\Rl^{2d}\cap\Span(1,-1,\dots,1,-1)$, we give this matrix as the direct sum of matrices
\begin{equation}
    \label{eq:master-M-pMRA}
    M'(\hat x) := M_\pMRA(\hat x[K_4]) \oplus M_\pMRA(\hat x[K_5])\oplus\dots \oplus M_\pMRA(\hat x[K_k])\oplus M_\pMRA(\hat x)
\end{equation}
where $K_m$ is defined as before.
We will also omit the dependence on $\hat x$ (as in the statement of Theorem~\ref{thm:pMRA-main}) when considering $M'$ to be a symbolic matrix in the $2^k$ real and $2^k-1$ imaginary components of $\hat x$ in the same sense as in Remark~\ref{rem:symbolic-real-and-imaginary-components-no-nyq}.

\begin{proof}[Proof of Theorem~\ref{thm:pMRA-main}]
    In a similar manner to that in the proof of Theorem~\ref{thm:dMRA-main}, by Proposition~\ref{prop:recursive-invariants}, we may extract a sequence of sub-tensors of the form
    \[
        T^{(3)}_\pMRA(\hat x[K_3]), T^{(3)}_\pMRA(\hat x[K_4]), \dots, T^{(3)}_\pMRA(\hat x[K_k]),T^{(3)}_\pMRA(\hat x).
    \]
    For the sub-tensor $T^{(3)}_\dMRA(\hat x[K_3])\in(\Cx^8)^{\oasterisk 3}$, by Theorem~\ref{thm:pMRA-base-case-list-recovery} there is an efficient method for finding a set of four vectors $y_{3,1}, y_{3,2}, y_{3,3}$ and $y_{3,4}\in\Cx^8$ at least one of which (say $y_{3,1}$ without loss of generality) is in the $\dihgroup{4}$-orbit of $\hat x[K_2]$.
    Since $x$ is generic and under the assumption that the  matrix $M'$ has full column-rank symbolically, the instance $M(\hat x)$ has full column-rank and as a result each of
    \[
        M_\pMRA(\hat x[K_4]), M_\pMRA(\hat x[K_5]), \dots, M_\pMRA(\hat x[K_k]),\text{ and } M_\pMRA(\hat x)
    \]
    has full column-rank. 
    Theorem~\ref{thm:pMRA-correct-extension} then guarantees that Algorithm~\ref{alg:extend-representative-pMRA} run on inputs $y_{3,1}$ and $T^{(3)}_\dMRA(\hat x[K_4])$ to produces $y_4\in\Cx^{16}$ in the $\dihgroup{16}$-orbit of $\hat x[K_4]$.
    For the remaining vectors, we have no guarantees that Algorithm~\ref{alg:extend-representative-pMRA} will succeed: if it fails on each of $y_{3,j}$ for any $j\in\{2,3,4\}$ before succeeding on $y_{3,1}$, there is no issue.
    However, if Algorithm~\ref{alg:extend-representative-pMRA} also succeeds for inputs $y_{3,j}$ and $T^{(3)}_\pMRA(\hat x[K_4])$ for some $j\in \{2,3,4\}$ then it produces some $y_4'$ such that $b_\pMRA(y_4') = b_\pMRA(\hat x[K_4])$.
    But then $T_\pMRA(y_4') = T_\pMRA(\hat x[K_4])$.
    Since $x$ is generic, so is the subset of Fourier coefficients $\hat x[K_4]$, and by Theorem~\ref{thm:pMRA-base-case-unique-recovery} we can conclude that $y_4'$ is also in $\dihgroup{16}$-orbit of $\hat x[K_4]$.
    Hence, if Algorithm~\ref{alg:extend-representative-pMRA} does not output ``Fail'' for any inputs $y_{3,j}$, $j\in[4]$ and $T^{(3)}_\pMRA(\hat x[K_4])$, the output is guaranteed to lie in the in the $\dihgroup{16}$-orbit of $\hat x[K_4]$.
    From here the process is then the same as the dihedral case, where Algorithm~\ref{alg:extend-representative-pMRA} may be repeatedly applied to recover $y_m$ the $\dihgroup{2^m}$-orbit of $\hat x[K_m]$ until the final step recovers $y$ in the $\dihgroup{2d}$-orbit of $\hat x$.
    
    Finally, since the method for finding the candidates $y_{3,1}, y_{3,2}, y_{3,3}$ and $y_{3,4}$ given in Theorem~\ref{thm:pMRA-base-case-list-recovery} takes constant time and Algorithm~\ref{alg:extend-representative-dMRA} runs in polynomial time, the recursive procedure as a whole runs in polynomial time.
\end{proof}

\section{Efficient Implementation}
\label{sec:efficient-implementation}
Our algorithms obscure some computational details in the \textsc{FindVarietyConstrainedSolutions} subroutine (our Algorithm~\ref{alg:algorithm1-prime}) by using Algorithm 1 from \cite{johnston_computing_2023}.
However, we feel the process deserves more coverage since the statement of Algorithm 1 in \cite{johnston_computing_2023} is given at a high level of abstraction.

To this end, first we briefly introduce some standard notation for tensor decompositions that is helpful in a concise description.
\begin{definition}[Factor matrices]
    Given matrices $A\in\K^{n_1\times R}$, $B\in\K^{n_2\times R}$, and $C\in\K^{n_3\times R}$ and a rank-$R$ tensor defined by
    \[
        T := \sum_{r=1}^R A[:,r]\otimes B[:,r]\otimes C[:,r],
    \]
    we say that $A$, $B$ and $C$ are the \emph{factor matrices} of $T$ and may denote a tensor constructed in this manner concisely by $T:=[\![A,B,C]\!]$.
\end{definition}
If $\Pi\in\Rl^{R\times R}$ is a permutation matrix and $D_1,D_2, D_3 \in \K^{R\times R}$ are diagonal matrices where $D_1 D_2 D_3 = I$ then $[\![A\Pi,B\Pi,C\Pi]\!] = [\![A,B,C]\!]$ and $[\![AD_1,BD_2,CD_3]\!] = [\![A,B,C]\!]$
so the factor matrices are determined only up to permutation and re-scalings of columns of each factor in a way that preserve the Frobenius norm of each rank-1 summand.

We synthesize the application of Bouss\'e et al.\ \cite{bousse_linear_2018} to give a method for finding the solution(s) to $Au=b$ subject to $u\in\mathcal X_1^\vee$ where $A\in\K^{m\times n}$, $u\in \K^n$, using Algorithm 1 from \cite{johnston_computing_2023}. 

\begin{framed}
    \begin{center}
        \textbf{Algorithm~\ref{alg:algorithm1-prime} (detailed procedure).}
    \end{center}
\begin{enumerate}
    \item \label{step:find-basis} Find any particular solution $x_p$ to $Au=b$ and $\{n_j\}_{j=1}^L$ a basis for the nullspace of $A$. Concatenate as columns the matrix
    \[
    N = \begin{bmatrix}
        x_p & n_1 & n_2 & \cdots & n_L
    \end{bmatrix} \in \K^{n\times (L+1)}.
    \]
    \item \label{step:find-kernel} Compute a basis $\{c_j\}_{j=1}^s\subseteq (\K^{(L+1)})^{\oasterisk 2}$ for $\ker (\Phi N^{\oasterisk 2})$ where $\Phi$ has rows a basis for $I(\mathcal X_1^\vee)_2$.

    If $\Phi N^{\oasterisk 2}$ has full rank, report ``No solutions.''
    \item Reshape each vector to a symmetric matrix $c_j \mapsto C_j\in \K^{(L+1)\times (L+1)}$.
    Use $\{C_j\}_{j=1}^s$ as slabs of a partially symmetric tensor $\mathcal T\in \K^{(L+1)\times (L+1)\times s}$.
    \item Run simultaneous diagonalization\textsuperscript{\ref{footnote:sim-diag}} on $\mathcal T$ so that $\mathcal T = [\![W,W,V]\!]$ or report ``Fail'' if simultaneous diagonalization fails.
    \item \label{step:normalize-weights} Find the normalized weight matrix $\tilde W = W \diag(W[1,:])^{-1}$ with all 1's in the first row.
    \item Solution(s) are the columns of $X = U\tilde W$.
    \item (Optional) Matricize each column of $X$ and compute its symmetric rank-1 decomposition.
\end{enumerate}
\end{framed}

\footnotetext{\label{footnote:sim-diag} If $s=1$, the tensor degenerates to a single matrix $C_1=[\![W,W,V]\!] = [\![w,w,\lambda]\!]  = \lambda ww^\top$ for $w\in\K^n$ and $\lambda\in\Rl^+$ so simultaneous diagonalization becomes a symmetric rank-1 factorization in this special case.}

In comparison with the original algorithm  \cite[Algorithm 1]{johnston_computing_2023}, in addition to our specific application which necessitates the normalization step (Step~\ref{step:normalize-weights}), we note that it is possible to run simultaneous diagonalization directly on the reshaping of the \emph{coefficient} vectors $\{c_j\}_{j=1}^s$ and save space and time running simultaneous diagonalization on $(L+1)\times(L+1)$ matrices rather than $n\times n$ matrices so that the complexity of this step scales with the dimension of the subspace, not the ambient space.
Only in the final step do we change back to the standard basis after which one may compute decompositions of the planted solutions.

Unfortunately, in general $\Phi N^{\oasterisk 2}$ may be a very large matrix with a small kernel (one-dimensional in the case of unique recovery).
In our applications we also can produce a very sparse basis for our nullspace and $\Phi$ is also sparse by construction.
To take better advantage of these structural properties, we will give an alternative method which converts the problem into a top right singular vector computation.

First, define the set of all possible splittings of a multiset $\mset{a,b,c,d}$ of $[d)$ into two multisets of two elements as
\[
    \operatorname{split}(\mset{a,b,c,d}) = \left\{\mset{\mset{j,k}, \mset{\ell,m}} : \mset{j,k,\ell,m}=\mset{a,b,c,f}\right\}.
\]

\begin{definition}
    \label{def:orthogonal-basis-of-symmetric-minors}
    Let $\mathcal Q$ be the set of vectors
    \[
    q_{abcf} = \sum_{\mset{\mset{j,k}, \mset{\ell,m}}\in \operatorname{split}({\mset{a,b,c,f}})} (\vec{e}_j\oasterisk \vec{e}_k) \oasterisk (\vec{e}_\ell \oasterisk \vec{e}_m).
    \]
\end{definition}
\begin{proposition}
    \label{prop:basis-for-orthogonal-subspace}
    $\mathcal Q$ is an orthogonal basis for $I(\mathcal X_1^\vee)_2^\perp$ which has dimension $\mchoose{d}{4}$.
\end{proposition}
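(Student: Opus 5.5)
The plan is to identify $I(\mathcal X_1^\vee)_2^\perp$ with the span of the lifts $(v^{\oasterisk 2})^{\oasterisk 2}$, $v\in\Cx^d$, which by definition equals $\Span(v^{\oasterisk 2}: v\in\mathcal X_1^\vee)$. We then show three things: this span is contained in $\Span\mathcal Q$, the vectors of $\mathcal Q$ are pairwise orthogonal and nonzero, and there are exactly $\mchoose{d}{4}$ of them (one for each multiset $\mset{a,b,c,f}$ of $[d)$). Since the proof of the dimension proposition for $I(\mathcal X_1^\vee)_2$ already shows $\dim I(\mathcal X_1^\vee)_2^\perp=\binom{d+3}{4}=\mchoose{d}{4}$, these three facts give both the basis claim and the dimension claim.

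\textbf{Containment.} Expand $(v^{\oasterisk 2})^{\oasterisk 2}$ in the basis $\{(\vec e_j\oasterisk\vec e_k)\oasterisk(\vec e_\ell\oasterisk\vec e_m)\}$. Its coefficient on the basis vector indexed by $\mset{\mset{j,k},\mset{\ell,m}}$ is
\[
w_{\mset{\mset{j,k},\mset{\ell,m}}}\, w_{\mset{j,k}}\, w_{\mset{\ell,m}}\; v[j]v[k]v[\ell]v[m].
\]
The monomial depends only on the combined multiset $\mset{j,k,\ell,m}$. Grouping terms by this multiset writes $(v^{\oasterisk 2})^{\oasterisk 2}$ as $\sum_{\mset{a,b,c,f}} v^{\mset{a,b,c,f}}$ times a weighted sum over $\operatorname{split}(\mset{a,b,c,f})$.

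Here is the main obstacle. With the normalized basis vectors $\vec e_M$, the weights $w\,w\,w$ are not constant across splittings, so the grouped vector is a weighted analogue of $q_{abcf}$, not $q_{abcf}$ itself. The intended reading of Definition~\ref{def:orthogonal-basis-of-symmetric-minors} must be that $(\vec e_j\oasterisk\vec e_k)\oasterisk(\vec e_\ell\oasterisk\vec e_m)$ denotes the symmetrized Kronecker product of the vectors $\vec e_j\oasterisk\vec e_k$ and $\vec e_\ell\oasterisk\vec e_m$. These carry exactly the factors $w_{\mset{j,k}}/2!$ and so on from \eqref{eq:symmetrized-Kronecker-product}. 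I would check carefully, using Definition~\ref{def:symmetrized-Kronecker-product} twice, that under this reading $(v^{\oasterisk 2})^{\oasterisk 2}=\sum_{\mset{a,b,c,f}} c_{abcf}\,v^{\mset{a,b,c,f}}\,q_{abcf}$ with explicit positive constants $c_{abcf}$. This amounts to verifying that the multinomial counts coming from expanding $v=\sum v[j]\vec e_j$ bilinearly in $v\oasterisk v$ match across splittings.

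\textbf{Orthogonality, nonvanishing and spanning.} Distinct multisets $\mset{a,b,c,f}$ give $q$'s supported on disjoint sets of basis elements, since every splitting recovers its combined multiset. Hence the $q_{abcf}$ are pairwise orthogonal. Each is nonzero because $\operatorname{split}$ is nonempty and its terms are distinct nonnegative basis combinations with no cancellation. For spanning, the monomials $v^{\mset{a,b,c,f}}$ are linearly independent polynomials in $v$, so varying $v$ shows $\Span\{(v^{\oasterisk 2})^{\oasterisk 2}\}$ contains every $q_{abcf}$; combined with the containment, the two spans are equal.
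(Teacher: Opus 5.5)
Your proof fails at the step you yourself flagged as the main obstacle: the symmetrized-Kronecker reading does not fix the weights. It inverts them.

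Under that reading, $\vec e_j\oasterisk\vec e_k=\vec e_{\mset{j,k}}/w_{\mset{j,k}}$. So for a splitting $s=\mset{X,Y}$ you get $(\vec e_j\oasterisk\vec e_k)\oasterisk(\vec e_\ell\oasterisk\vec e_m)=W_s^{-1}\vec e_s$, where $W_s:=w_{\mset{X,Y}}w_Xw_Y$. Hence $q_{abcf}=\sum_{s}W_s^{-1}\vec e_s$.

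The $\mset{a,b,c,f}$-block of $(v^{\oasterisk 2})^{\oasterisk 2}$, however, is $v^{\mset{a,b,c,f}}\sum_s W_s\vec e_s$. In your bilinear expansion, split $s$ is hit by exactly $W_s^2$ ordered tuples $(j,k,\ell,m)$. Proportionality would therefore need $W_s$ to be constant on $\operatorname{split}(\mset{a,b,c,f})$. That holds when the four indices are distinct (all $W_s=2\sqrt2$) or when there is only one splitting. It fails in two cases:
\begin{itemize}
\item $\mset{a,a,b,b}$: $W_s=\sqrt2$ versus $2$, with tuple counts $2$ versus $4$.
\item $\mset{a,a,b,c}$: $W_s=2$ versus $2\sqrt2$, with counts $4$ versus $8$.
\end{itemize}
So the multinomial counts you plan to match do not match, and the containment $\Span\{(v^{\oasterisk2})^{\oasterisk2}\}\subseteq\Span\mathcal Q$ is false.

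This is not an artifact of the reading. The element of $\mathcal P$ computing the principal $2\times2$ minor on indices $a<b$ of $\mathrm{matricize}(B)$ is $p=\tfrac{1}{\sqrt2}\vec e_{\mset{\mset{a,a},\mset{b,b}}}-\tfrac12\vec e_{\mset{\mset{a,b},\mset{a,b}}}$. Under your reading $\langle p,q_{aabb}\rangle=\tfrac12-\tfrac14=\tfrac14$. If the summands are read as unweighted basis vectors, it is $\tfrac1{\sqrt2}-\tfrac12$. Either way $q_{aabb}\notin I(\mathcal X_1^\vee)_2^\perp$, so the proposition with $\mathcal Q$ as literally defined is false on such blocks.

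The paper's proof has the same blind spot. It checks $\mathcal Q\perp\mathcal P$ and then counts dimensions, and its ``canceling weights'' step that yields $1-1=0$ is only valid when all splittings carry equal weight.

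What is true is the statement with the reweighted vectors $q_{abcf}:=\sum_{s\in\operatorname{split}(\mset{a,b,c,f})}W_s\,\vec e_s$. Equivalently, sum the $\oasterisk$-products over all distinct orderings of $(a,b,c,f)$ rather than over the set $\operatorname{split}$. With that correction your argument goes through verbatim:
\begin{itemize}
\item grouping gives exactly $(v^{\oasterisk2})^{\oasterisk2}=\sum v^{\mset{a,b,c,f}}q_{abcf}$;
\item disjoint supports give orthogonality;
\item linear independence of monomials gives spanning and the dimension $\mchoose{d}{4}$ at once.
\end{itemize}
This corrected route is more self-contained than the paper's, since it does not depend on $\mathcal P$ being a basis of $I(\mathcal X_1^\vee)_2$.
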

See Section~\ref{subsec:polynomial-subspaces} for the proof.
Using this, we may replace Steps \ref{step:find-basis} and \ref{step:find-kernel} with Steps \ref{step:find-orthonormal-basis} and \ref{step:find-right-singular-vectors}:
\begin{framed}
    \begin{enumerate}[label=\arabic*${}^*$. ,ref=\arabic*${}^*$]
    \item \label{step:find-orthonormal-basis}  Find $\tilde x_p$, the least-squares solution to $Au=b$ (given by $x_p = A^+b$ where $A^+$ is the psuedoinverse of $A$) and find $\{\tilde n_j\}_{j=1}^L$ an \emph{orthonormal} basis for the nullspace of $A$.
    Concatenate as columns the matrix
    \[
    \tilde N = \begin{bmatrix}
        \frac{\tilde x_p}{\|\tilde x_p\|_2} & \tilde n_1 & \tilde n_2 & \cdots & \tilde n_L
    \end{bmatrix} \in \K^{d\times (L+1)}.
    \]
    \item \label{step:find-right-singular-vectors} Compute a basis $\{c_j\}_{j=1}^s \subseteq (\K^{(L+1)})^{\oasterisk 2}$ of the \emph{right singular subspace} of $\Phi_\perp \tilde N^{\oasterisk 2}$ associated with the top singular value of 1 where $\Phi_\perp$ has rows an orthonormal basis for $I(\mathcal X_1^\vee)_2^\perp$.

    If each singular value of $\Phi_\perp \tilde N^{\oasterisk 2}$ is less than 1, report ``No solutions.''
\end{enumerate}
\end{framed}
Most importantly, we claim that these two methods are equivalent.
\begin{proposition}
    The basis produced by\label{prop:equivalent-intersection-computations}
    steps~\ref{step:find-basis} and \ref{step:find-kernel} and the basis produced by steps~\ref{step:find-orthonormal-basis} and \ref{step:find-right-singular-vectors} span the same subspace.
\end{proposition}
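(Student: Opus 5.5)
We must show that the kernel of $\Phi N^{\oasterisk 2}$ and the top (singular value $1$) right singular subspace of $\Phi_\perp \tilde N^{\oasterisk 2}$ describe the same subspace of $(\K^n)^{\oasterisk 2}$. Two things change between the procedures: the coefficient bases differ, and one uses $\Phi$ while the other uses $\Phi_\perp$. So the plan is to map both outputs into $(\K^n)^{\oasterisk 2}$, via $c\mapsto N^{\oasterisk 2}c$ and $c\mapsto \tilde N^{\oasterisk 2}c$ respectively. In both cases the target is the subspace $\Span\{p_1,\dots,p_k\}^\perp\cap\U^{\oasterisk 2} = I(\mathcal X_1^\vee)_2^\perp\cap \U^{\oasterisk 2}$, where $\U=\{u : Au=\gamma b\}$.

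\textbf{Step 1: both matrices have columns spanning $\U$.} The columns of $N$ and of $\tilde N$ are each bases of $\U$. The particular solution $x_p$ differs from $\tilde x_p = A^+b$ only by an element of $\ker A$, and both are nonzero when $b\neq 0$. Hence $\tilde N = NQ$ for some invertible $Q$. By Corollary~\ref{cor:lift-of-product}, $\tilde N^{\oasterisk 2}=N^{\oasterisk 2}Q^{\oasterisk 2}$, and $Q^{\oasterisk 2}$ is invertible (its inverse is $(Q^{-1})^{\oasterisk 2}$). Also $N^{\oasterisk 2}$ is injective, because the lifts $v\oasterisk w$ of a basis form a basis of $\U^{\oasterisk 2}$. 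Consequently $c\mapsto N^{\oasterisk 2}c$ identifies $\ker(\Phi N^{\oasterisk 2})$ with $\ker\Phi\cap\U^{\oasterisk 2}=I(\mathcal X_1^\vee)_2^\perp\cap\U^{\oasterisk 2}$.

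\textbf{Step 2: $\tilde N^{\oasterisk 2}$ is an isometry.} Since $\tilde N$ has orthonormal columns, $\tilde N^*\tilde N=I$. Using Corollary~\ref{cor:symmetric-product-inner-product-expansion} (or Lemma~\ref{lemma:symmetric-matrix-products} together with the appropriate conjugation), $(\tilde N^{\oasterisk 2})^*\tilde N^{\oasterisk 2} = (\tilde N^*\tilde N)^{\oasterisk 2}=I$. This step needs care: the lift must commute with conjugate transpose, which holds because the defining weights are real.

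\textbf{Step 3: the top singular subspace.} Let $P=\Phi_\perp^*\Phi_\perp$, the orthogonal projector onto $I(\mathcal X_1^\vee)_2^\perp$. For unit $c$, Step~2 gives $\|\tilde N^{\oasterisk 2}c\|=1$, so
\[
\|\Phi_\perp\tilde N^{\oasterisk 2}c\|=\|P\tilde N^{\oasterisk 2}c\|\le 1,
\]
with equality iff $\tilde N^{\oasterisk 2}c\in I(\mathcal X_1^\vee)_2^\perp$. All singular values are therefore at most $1$. The right singular subspace for singular value $1$ is exactly $\{c : \tilde N^{\oasterisk 2}c\in I(\mathcal X_1^\vee)_2^\perp\}$. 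This set is a subspace, since it is the eigenspace of the Hermitian matrix $(\tilde N^{\oasterisk 2})^*P\tilde N^{\oasterisk 2}$ for eigenvalue $1$. It also equals $\ker(\Phi\tilde N^{\oasterisk 2})$.

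\textbf{Step 4: conclude.} By Step~3, under $c\mapsto\tilde N^{\oasterisk 2}c$ the top singular subspace corresponds to $I(\mathcal X_1^\vee)_2^\perp\cap\U^{\oasterisk 2}$, the same subspace as in Step~1. In coefficient coordinates the two bases are related by $c\mapsto (Q^{-1})^{\oasterisk 2}c$. This is exactly the change-of-basis relation between the two tensors fed to simultaneous diagonalization, so the recovered vectors $\U\cap\mathcal X$ agree. The "No solutions" branches also match: $\Phi N^{\oasterisk 2}$ has full column rank iff the intersection is trivial, iff no singular value equals $1$.

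The main obstacle is the isometry claim in Step~2 over $\Cx$, where the normalization weights $w_M$ and conjugation must be checked carefully against Corollary~\ref{cor:symmetric-product-inner-product-expansion}. The rest is linear algebra.
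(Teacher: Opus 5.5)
Your proof is correct. It follows the paper's overall plan: map both coefficient subspaces into $(\K^n)^{\oasterisk 2}$ via $N^{\oasterisk 2}$ and $\tilde N^{\oasterisk 2}$, and show that each lands on $\colspan(N^{\oasterisk 2})\cap I(\mathcal X_1^\vee)_2^\perp$.

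The difference is the key step.
\begin{itemize}
\item \textbf{The paper} identifies the unit singular values of $\Phi_\perp\tilde N^{\oasterisk 2}$ using the Bj\"orck--Golub principal-angle theorem (Theorem~\ref{thm:principle-subspaces}). A singular value equal to $1$ is a zero principal angle, so the two principal vectors coincide and lie in the intersection.
\item \textbf{You} use only $\|\Phi_\perp v\|=\|Pv\|\le\|v\|$, with equality iff $v\in I(\mathcal X_1^\vee)_2^\perp$, combined with the fact that $\tilde N^{\oasterisk 2}$ is an isometry.
\end{itemize}

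Your route is more elementary and proves more than the paper writes down:
\begin{itemize}
\item all singular values are at most $1$, which justifies the phrase ``top singular value of $1$'';
\item the unit singular subspace is exactly the preimage of the intersection, not merely contained in it (the paper shows one inclusion and then asserts a basis);
\item both coefficient maps are injective;
\item the two coefficient bases are related by $(Q^{-1})^{\oasterisk 2}$ rather than being literally equal, which is the precise content of ``span the same subspace.''
\end{itemize}
What the paper's formulation buys is a geometric reading: the remaining singular values are cosines of principal angles. That fits the power-iteration viewpoint and any future robustness analysis.

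Two small remarks.
\begin{itemize}
\item The isometry you flag as the main obstacle is exactly the orthonormal case of Lemma~\ref{lemma:symmetric-lift-independence}, so you can simply cite it.
\item Over $\Cx$, identifying $\Phi_\perp^*\Phi_\perp$ with the orthogonal projector onto $I(\mathcal X_1^\vee)_2^\perp$, and $\ker\Phi$ with $I(\mathcal X_1^\vee)_2^\perp$, uses that these subspaces are spanned by real vectors. The $q_{abcf}$ and the minor vectors in $\mathcal P$ are all real, and a sentence noting this would close that point.
\end{itemize}
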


The orthogonal basis $\mathcal Q$ can be easily normalized.
The advantage of Steps~\ref{step:find-orthonormal-basis} and \ref{step:find-right-singular-vectors} is that we can find top singular vectors --- a method that can be done efficiently using power iteration --- instead of finding the kernel of a massive matrix.
By Lemmas~\ref{lemma:characterize-nullspace-dMRA} and \ref{lemma:characterize-nullspace-pMRA} bases for the nullspaces needed for the variety-constrained linear systems may be made very sparse (a constant number of entries per row) and orthogonal.
This simplifies finding an orthonormal basis in step~\ref{step:find-orthonormal-basis} and the sparsity significantly improves matrix-vector the efficiency of multiplication against $\Phi_\perp \tilde N^{\oasterisk 2}$.
\begin{theorem}
    \label{thm:efficient-computation}
    The matrix-vector products $\Phi_\perp \tilde N^{\oasterisk 2}v$ and  $w^\top\Phi_\perp \tilde N^{\oasterisk 2}$ may be performed in $O(d^4)$ floating-point operations.
\end{theorem}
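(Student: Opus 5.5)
The plan is to factor $\Phi_\perp \tilde N^{\oasterisk 2}$ so that neither factor is ever formed as a dense matrix, and to apply each factor through its sparsity. Write $L+1$ for the number of columns of $\tilde N$. By Lemmas~\ref{lemma:characterize-nullspace-dMRA} and~\ref{lemma:characterize-nullspace-pMRA}, $L=O(d^2)$. Each null vector $n_m(y)$ has at most two nonzero entries, and distinct null vectors have disjoint supports. The only dense column is the normalized least-squares solution $\tilde x_p$, which has $\mchoose{d}{2}=O(d^2)$ entries. I will therefore split $\tilde N=[\tilde x_p/\|\tilde x_p\|_2 \;\; N_0]$, where $N_0$ has $O(d^2)$ nonzeros in total, at most one per row in the dMRA case and $O(1)$ per row in general.

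For the forward product $v\mapsto \Phi_\perp\tilde N^{\oasterisk 2}v$, I reshape $v\in(\K^{L+1})^{\oasterisk 2}$ into a symmetric matrix $V$ of size $(L+1)\times(L+1)$, using the matricization in~\eqref{eq:matricization} with weights tracked. The multilinear identity behind Lemma~\ref{lemma:symmetric-product-equivalences} gives $\tilde N^{\oasterisk 2}v\leftrightarrow \tilde N V\tilde N^\top$, which is an $n\times n$ symmetric matrix with $n=\mchoose{d}{2}=O(d^2)$ and hence $O(d^4)$ entries. I will compute $\tilde NV\tilde N^\top$ in block form:
\begin{enumerate}
\item $N_0 V_{00} N_0^\top$ costs $O(d^4)$, because $N_0$ has $O(1)$ nonzeros per row and column, so each output entry is an $O(1)$ sum (equivalently, it is a row/column gather of $V_{00}$ with scalings).
\item The cross terms $\tilde x_p (N_0 V_{0,p})^\top$ and their transposes are outer products costing $O(d^2)\cdot O(d^2)$.
\item The term $V_{pp}\tilde x_p\tilde x_p^\top$ also costs $O(d^4)$.
\end{enumerate}

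Next I apply $\Phi_\perp$. By Proposition~\ref{prop:basis-for-orthogonal-subspace}, the rows of $\Phi_\perp$ are the normalized $q_{abcf}$, indexed by $\mchoose{d}{4}=O(d^4)$ multisets, and each row has at most three nonzeros, namely the splittings of $\mset{a,b,c,f}$. Pairing $q_{abcf}$ with a symmetric $Z\in((\K^d)^{\oasterisk 2})^{\oasterisk 2}$ is therefore an $O(1)$ combination of the entries $Z[\mset{j,k},\mset{\ell,m}]$. This step costs $O(d^4)$ in total.

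The transpose product $w^\top\Phi_\perp\tilde N^{\oasterisk 2}$ runs the same steps in reverse. First I form $\Phi_\perp^\top w$ by scattering each coefficient to at most three entries, which gives a symmetric $n\times n$ matrix $Z$ in $O(d^4)$. Then I compute $\tilde N^\top Z\tilde N$ blockwise with the same sparse/dense split:
\begin{enumerate}
\item $N_0^\top Z N_0$ is a gather over disjoint supports and costs $O(d^4)$.
\item $N_0^\top(Z\tilde x_p)$ first needs the dense matrix–vector product $Z\tilde x_p$, which costs $O(d^4)$.
\item $\tilde x_p^\top Z\tilde x_p$ also costs $O(d^4)$.
\end{enumerate}
Re-vectorizing the result into $(\K^{L+1})^{\oasterisk 2}$, again with weights, costs $O(d^4)$.

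The main obstacle is bookkeeping rather than asymptotics. I have to confirm that the symmetric-coordinate weights $w_M$ in Definition~\ref{def:symmetrized-Kronecker-product} and in matricization are consistent, so that $\tilde N^{\oasterisk 2}$ really corresponds to congruence by $\tilde N$. Lemma~\ref{lemma:symmetric-matrix-products} and Corollary~\ref{cor:lift-of-product} should give exactly this identity. I also have to check that the pMRA null vectors $\vec e_j\oasterisk\vec e_{d/2-j}$ preserve the $O(1)$-nonzeros-per-row property of $N_0$; they do, since they are single standard basis vectors whose supports are disjoint from the rest.
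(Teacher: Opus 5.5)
Your argument is correct, but it is organized differently from the paper's.

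\textbf{The paper's route.} The paper never reshapes $v$. It notes that each row of $\tilde N$ has at most two nonzeros: one from the dense column $\tilde x_p/\|\tilde x_p\|_2$, and at most one from the disjointly supported null vectors. By Lemma~\ref{lemma:symmetric-product-equivalences}, each row of $\tilde N^{\oasterisk 2}$ then has at most $4$ nonzeros. Since each row of $\Phi_\perp$ has at most $3$, each of the $\mchoose{d}{4}$ rows of $\Phi_\perp\tilde N^{\oasterisk 2}$ has at most $12$. The product is therefore an explicitly storable sparse matrix with $O(d^4)$ nonzeros. Both $v\mapsto\Phi_\perp\tilde N^{\oasterisk 2}v$ and $w\mapsto w^\top\Phi_\perp\tilde N^{\oasterisk 2}$ are then ordinary sparse matrix--vector products.

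\textbf{Your route.} Yours is matrix-free. You use that $\tilde N^{\oasterisk 2}$ acts as the congruence $V\mapsto\tilde NV\tilde N^\top$ in the orthonormal coordinates $\mathcal S_2$. You treat the dense column as a symmetric rank-$\le 2$ correction and write separate forward and adjoint routines.

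The weight check you flag does go through. It follows from the column formula in Lemma~\ref{lemma:symmetric-product-equivalences} together with $\vec e_{\mset{a,b}}=w_{\mset{a,b}}\Sym_2(\vec e_a\otimes\vec e_b)$, rather than from Corollary~\ref{cor:lift-of-product}. Concretely, column $\mset{a,b}$ of $\tilde N^{\oasterisk 2}$ is the coordinate vector of $(\tilde N\otimes\tilde N)\vec e_{\mset{a,b}}$, and \eqref{eq:matricization} is exactly the isometric identification of coordinates with symmetric matrices. For the adjoint you also need $(\tilde N^{\oasterisk 2})^\top=(\tilde N^\top)^{\oasterisk 2}$, which is immediate from the symmetry of Definition~\ref{def:symmetric-matrix-product}.

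\textbf{Comparison.} The paper's version is shorter and gets the transpose product for free. It also lets the sparse matrix be assembled once and reused across power-iteration steps. Yours makes the structural reason for the $O(d^4)$ bound transparent: the only dense work is an outer product, or a dense matrix--vector product, with an $n\times n$ matrix where $n=\mchoose{d}{2}$. It also never requires tracking the sparsity pattern of the symmetric lift.
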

The number of operations is the same order as the number of rows or columns of $\Phi_\perp \tilde N^{\oasterisk 2}$.
Without a sparse construction for $\tilde N$, a naive matrix-vector multiplication would require $O(d^8)$ operations.

\begin{proof}[Proof of Theorem~\ref{thm:efficient-computation}.]
    For both the dMRA and pMRA model, $\tilde N$ has one dense column $\tilde x_p/\|\tilde x_p\|_2$.
    However, by Lemmas~\ref{lemma:characterize-nullspace-dMRA} and \ref{lemma:characterize-nullspace-pMRA} the remaining columns have non-overlapping support and contain at most two non-zero entries.
    As a result, $\tilde N$ has at most two non-zero entries per row and $d$ rows.
    It then follows from Lemma~\ref{lemma:symmetric-product-equivalences} that $\tilde N^{\oasterisk 2}$ will have at most 4 non-zero entries per row.
    Next, the rows of $\Phi_\perp$ are also sparse, containing at most 3 non-zero entries.
    The matrix multiplications $\Phi_\perp \tilde N^{\oasterisk 2}$ has rows which are linear combinations of the rows of $N^{\oasterisk 2}$ from which we can infer that each row contains no more than 12 non-zero entries.
    This gives no more than $23\mchoose{d}{4}$ operations by summing scalar products and additions to compute the dot-product of $v$ with every row of $\Phi_\perp \tilde N^{\oasterisk 2}$ for the right multiplication.

    For the left multiplication, the first column of $\tilde N$ is dense with $\mchoose{d}{2}$ entries.
    As a result, one column of $\tilde N^{\oasterisk 2}$ contains $O(d^4)$ entries, $O(d^2)$ columns contain $O(d^2)$ entries and $O(d^4)$ columns contain a constant number of entries.
    Each row of $\Phi_\perp$ has a constant number of non-zero entries so performing the multiplication $\Phi_\perp \tilde N^{\oasterisk 2}$ required $O(d^4)$ arithmetic operations.
    Similarly, finding $w^\top\Phi_\perp \tilde N^{\oasterisk 2}$ will require $O(d^4)$ operations.
\end{proof}

Next, to demonstrate the equivalence of these two methods we first define the principal angles and principal vectors between two subspaces $\mathcal A, \mathcal B\subseteq \K^n$ where $p\ge \dim(\mathcal A) \ge \dim(\mathcal B) \ge q$.
\begin{definition}
    \label{def:principles-angles-vectors}
    The principal angles $\theta_k$ and principal vectors $u_k\in\mathcal A,v_k\in\mathcal B$ between two subspaces are recursively defined for $k=1,2,\dots, q$ by
    \[
    \cos(\theta_k) = \max_{u\in\mathcal A} \max_{v\in\mathcal B} u^*v =: u_k^* v_k, \qquad \|u\|=1,\|v\| = 1
    \]
    under the orthogonality constraints
    \[
        u_j^*u = 0,\quad v_j^*v=0 \qquad j=1,2,\dots, k-1.
    \]
\end{definition}
Our simplification arises from an efficient method for finding principles angles by Bjorck and Golub.
\begin{theorem}[{\cite[Theorem 1]{bjorck_numerical_1973}}]
    \label{thm:principle-subspaces}
    Let $Q_\mathcal A$ and $Q_\mathcal B$ have columns that are unitary bases for subspaces $\mathcal A, \mathcal B\subseteq \K^n$.
    If the $p\times q$ matrix
    \[
        M=Q_\mathcal A^* Q_\mathcal B
    \]
    has a singular value decomposition
    \[
        M = X\Sigma Y^*, \qquad \Sigma=\diag(\sigma_1, \sigma_2,\dots,\sigma_q)
    \]
    with $X^*X = Y^*Y=YY^*=I_q$ and $\sigma_1\ge \sigma_2\ge\dots\ge\sigma_q$ then
    \[
    \cos(\theta_k) = \sigma_k, \quad u_k = Q_\mathcal A X[:,k], \quad  \text{and}\quad  v_k = Q_\mathcal B Y[:,k], \quad k=1,2,\dots, q.
    \]
\end{theorem}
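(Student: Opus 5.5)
The plan is to reduce the claim to the variational characterization of singular values, applied directly to the Rayleigh-type quotient in Definition~\ref{def:principles-angles-vectors}.

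First, I parametrize the unit vectors. Since the columns of $Q_\mathcal A$ and $Q_\mathcal B$ are orthonormal, every unit $u\in\mathcal A$ can be written uniquely as $u=Q_\mathcal A a$ with $\|a\|=1$, and every unit $v\in\mathcal B$ as $v=Q_\mathcal B b$ with $\|b\|=1$. In these coordinates,
\[
u^*v=a^*Q_\mathcal A^*Q_\mathcal B b=a^*Mb.
\]
The orthogonality constraints transform the same way: $u_j^*u=0$ becomes $a_j^*a=0$, and $v_j^*v=0$ becomes $b_j^*b=0$. The definition therefore becomes the sequential problem
\[
\max a^*Mb \quad\text{over unit } a\in\K^p,\ b\in\K^q,
\]
with orthogonality to the previously chosen $a_j$ and $b_j$. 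We may take real nonnegative values, since multiplying $a$ by a phase leaves the constraints intact.

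Second, I solve this problem using the SVD $M=X\Sigma Y^*$. Write $\alpha=X^*a$ (with $\|\alpha\|\le 1$) and $\beta=Y^*b$ (with $\|\beta\|=1$, since $Y$ is unitary). Then
\[
a^*Mb=\sum_k \sigma_k\,\overline{\alpha_k}\,\beta_k\le \sigma_1\sum_k|\alpha_k||\beta_k|\le\sigma_1
\]
by Cauchy--Schwarz. Equality is attained at $a=X[:,1]$ and $b=Y[:,1]$. This gives $\cos\theta_1=\sigma_1$, $u_1=Q_\mathcal A X[:,1]$, and $v_1=Q_\mathcal B Y[:,1]$.

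Third, I proceed by induction on $k$. Suppose the first $k-1$ maximizers are $a_j=X[:,j]$ and $b_j=Y[:,j]$. The constraint $b\perp b_j$ forces $\beta_j=0$ for $j<k$, and likewise $\alpha_j=0$. The same Cauchy--Schwarz argument, now restricted to indices $\ge k$, bounds the objective by $\sigma_k$. This bound is attained by $X[:,k]$ and $Y[:,k]$, which gives the $k$-th statement.

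The main subtlety is the non-uniqueness of maximizers when singular values repeat. The theorem asserts only that the SVD vectors give \emph{a} valid choice of principal vectors, consistent with the recursive definition. To handle this, I will phrase the induction as: if the previously chosen vectors are the SVD vectors, then the $k$-th maximum equals $\sigma_k$ and is attained by the $k$-th SVD pair. The angles $\theta_k$ are well defined regardless of which maximizers are chosen, because the value of each successive maximum equals the $k$-th singular value. This follows from the min–max (Courant–Fischer) characterization, which I can cite for the angles.
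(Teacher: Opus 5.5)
Your argument is correct and is essentially the standard proof of the cited result. The paper gives no proof of its own, only the citation to Bj\"orck--Golub; there, as in your proposal, the substitution $u=Q_\mathcal A a$, $v=Q_\mathcal B b$ turns the recursive definition of principal angles into the sequential variational characterization of the singular values of $M$, which you verify directly via the SVD and Cauchy--Schwarz rather than quoting it.

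One small caveat, not needed for the statement as posed: independence of the $\theta_k$ from earlier choices of maximizers does not follow from Courant--Fischer alone, since with arbitrary orthonormal constraint vectors the constrained maximum can fall below $\sigma_k$. It follows instead from the fact that every maximizing pair is a singular pair of $M$, so the SVD can be rechosen to contain the earlier choices, and your induction then applies verbatim.
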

Note that the singular values of 1 correspond to principal angles of zero; in other words, we can extract these principal vectors to find a subspace intersection.
Finally, the following lemma ensures that an orthonormal basis may be lifted to an orthonormal basis for a symmetric space using our symmetric power.
\begin{lemma}
    \label{lemma:symmetric-lift-independence}
    If the columns of $A\in\K^{m\times n}$ where $m\leq n$ are orthonormal then so are the columns of the $k$-th symmetric lift $A^{\oasterisk k}$.
    Similarly, if the columns of $A$ are linearly independent then so are the columns of the $k$-th symmetric lift $A^{\oasterisk k}$.
\end{lemma}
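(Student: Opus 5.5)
The plan is to use the mixed-product rule for symmetric lifts (Corollary~\ref{cor:lift-of-product}) together with the inner-product formula (Corollary~\ref{cor:symmetric-product-inner-product-expansion}). Note first that the hypothesis $m\le n$ must be read with the roles arranged so that $A$ has orthonormal (resp.\ independent) columns, i.e.\ $A$ is tall. The lift $A^{\oasterisk k}$ then maps $(\K^{n'})^{\oasterisk k}$ into $(\K^{m'})^{\oasterisk k}$, where $n'$ is the number of columns of $A$.

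For orthonormality, the column condition is $A^*A = I$. I would apply Corollary~\ref{cor:lift-of-product} to the product $A^* \cdot A$, which gives $(A^*)^{\oasterisk k} A^{\oasterisk k} = (A^*A)^{\oasterisk k} = I^{\oasterisk k}$. Two facts then remain to check.

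\begin{enumerate}
\item The lift of the identity is the identity. This follows by applying the row formula \eqref{eq:symmetric-matrix-product-row} and Lemma~\ref{lemma:canonical-orthonormal-basis-symmetric-space}: the $M$-th row of $I^{\oasterisk k}$ is $\frac{w_M}{k!}\sum_\sigma \vec e_{j_{\sigma(1)}}\oasterisk\cdots\oasterisk \vec e_{j_{\sigma(k)}} = \vec e_M$.
\item The lift of the conjugate transpose is the conjugate transpose of the lift, $(A^*)^{\oasterisk k} = (A^{\oasterisk k})^*$. This is read off directly from Definition~\ref{def:symmetric-matrix-product}, because that entry formula is symmetric in swapping the row and column multisets and is compatible with entrywise conjugation.
\end{enumerate}

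Together these give $(A^{\oasterisk k})^*A^{\oasterisk k}=I$, so the columns of $A^{\oasterisk k}$ are orthonormal. As an alternative check, one can compute the Gram entries directly from the column formula \eqref{eq:symmetric-matrix-product-column} and Corollary~\ref{cor:symmetric-product-inner-product-expansion}. The inner products $\langle A[:,a],A[:,b]\rangle=\delta_{ab}$ collapse the double sum over permutations to the count of permutations matching one multiset to the other, and the weights $w$ normalize this to $\delta_{MK}$.

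For linear independence, independent columns give a left inverse $B$ with $BA = I$. Corollary~\ref{cor:lift-of-product} then gives $B^{\oasterisk k}A^{\oasterisk k} = I^{\oasterisk k} = I$, so $A^{\oasterisk k}$ has a left inverse and its columns are independent. One could instead write $A = QR$ with $R$ invertible, note that $R^{\oasterisk k}$ is invertible by the same corollary, and reduce to the orthonormal case.

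The main obstacle is the bookkeeping of weights $w_M$ and the conjugation convention. The rule in Corollary~\ref{cor:lift-of-product} is stated for plain products, and the inner product in Corollary~\ref{cor:symmetric-product-inner-product-expansion} must be sesquilinear for the orthonormality claim to hold over $\Cx$. I would handle this by verifying entrywise that conjugation commutes with the lift, which is immediate because the weights are real.
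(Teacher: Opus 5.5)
Your argument is correct, but it is organized differently from the paper's proof.

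For the orthonormal case, the paper uses the direct Gram computation that you mention only as an ``alternative check.'' It expands $((Q^{\oasterisk k})^*Q^{\oasterisk k})[M,K]$ via the column formula \eqref{eq:symmetric-matrix-product-column} and Corollary~\ref{cor:symmetric-product-inner-product-expansion}, then cancels the count of permutations fixing $M$ against $w_M^2$.

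For linear independence, the paper writes $A=QR$ and uses Corollary~\ref{cor:lift-of-product} to get $A^{\oasterisk k}=Q^{\oasterisk k}R^{\oasterisk k}$. It then shows by hand that $R^{\oasterisk k}$ is upper triangular with respect to the Dershowitz--Manna order with positive diagonal, so this is an honest QR factorization of $A^{\oasterisk k}$.

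You instead use only the functorial properties of the lift: the mixed-product rule, $I^{\oasterisk k}=I$, and $(A^*)^{\oasterisk k}=(A^{\oasterisk k})^*$ (valid because the weights $w_M$ are real). These give $(A^{\oasterisk k})^*A^{\oasterisk k}=(A^*A)^{\oasterisk k}=I$, and any left inverse $B$ of $A$ lifts to the left inverse $B^{\oasterisk k}$ of $A^{\oasterisk k}$. Your route is shorter and needs no ordering on multisets or triangularity bookkeeping. The paper's route gives the extra information that $Q^{\oasterisk k}R^{\oasterisk k}$ is itself a QR decomposition, which this lemma does not need.

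You are also right to read the hypothesis as $m\ge n$. The paper's proof assumes exactly that, so the $m\le n$ in the statement is a typo.
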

See Section~\ref{subsec:properties-symmetric-products} for the proof.
This is not true for $A^{\otimes k}$ since it contains redundant columns.

\begin{proof}[Proof of Proposition~\ref{prop:equivalent-intersection-computations}]
    We will show that both $\ker(\Phi N^{\oasterisk 2})$ and the right singular subspace of $\Phi_\perp \tilde N^{\oasterisk 2}$ associated with the top singular value of 1, may be used to construct a basis for $\operatorname{colspan}(N^{\oasterisk 2}) \cap I(\mathcal X_1^\vee)_2^\perp$.
    
    The first case is simplest.
    Suppose that $c\in \ker(\Phi N^{\oasterisk 2})$.
    Clearly $N^{\oasterisk 2} c \in \colspan(N^{\oasterisk 2})$.
    But $\Phi (N^{\oasterisk 2}c) = \vec{0}$ so $N^{\oasterisk 2} c\in \ker(\Phi)$.
    By construction, $\rowspan(\Phi) = I(\mathcal X_1^\vee)_2$ so $\ker(\Phi) =I(\mathcal X_1^\vee)_2^\perp$ and then $N^{\oasterisk 2} c \in I(\mathcal X_1^\vee)_2^\perp \cap \colspan(N^{\oasterisk 2})$.
    
    Next, for the second case observe by Lemma~\ref{lemma:symmetric-lift-independence} since $\tilde N$ is an orthonormal basis, so is $\tilde N^{\oasterisk 2}$.
    By Theorem~\ref{thm:principle-subspaces}, the SVD \[
        \Phi_\perp \tilde N^{\oasterisk 2}=X\Sigma Y^*
    \]
    gives the principal vectors between these subspaces via $u_k=\Phi_\perp X[:,k]$ and $v_k=\tilde N^{\oasterisk 2} Y[:,k]$.
    Suppose that $\sigma_1=\sigma_2=\dots=\sigma_s=1$.
    Then the first $s$ principles angles are zero and as a consequence of Definition~\ref{def:principles-angles-vectors} we have $u_k=v_k$ for $k=1,2,\dots,s$. Since $u_k\in I(\mathcal X_1^\vee)_2^\perp$ and $v_k\in \colspan(\tilde N^{\oasterisk 2})=\colspan( N^{\oasterisk 2})$ it follows that $v_k \in \colspan(N^{\oasterisk 2}) \cap I(\mathcal X_1^\vee)_2^\perp$ for $s=1,2,\dots, k$ and these vectors are a basis for the intersection.
\end{proof}

\subsection{Numerical Experiments with Full Algorithm}
\label{sec:numerical-experiments}
We implement Algorithms~\ref{alg:extend-representative-dMRA} and \ref{alg:extend-representative-pMRA} using our more efficient method for solving variety-constrained linear systems in Python.
For our procedure, we generated real signals $x\in \Rl^{64}$ by selecting entries uniformly at random from a standard normal distribution.
The exact third moments given by \eqref{eq:dMRA-invariant-tensor} and \eqref{eq:pMRA-invariant-tensor} for the dMRA and pMRA model are computed.
Our program for reconstruction reads the invariants and applies the Algorithms~\ref{alg:extend-representative-dMRA} and~\ref{alg:extend-representative-dMRA} recursively.
For the base case of the recursion, when the recursion reaches a signal length of 4 for dihedral MRA we use the frequency marching procedure from Theorem~\ref{thm:dMRA-base-case-unique-recovery}, and when the recursion reaches a signal length of 8 for projected MRA we use our modified frequency marching procedure from Theorem~\ref{thm:pMRA-base-case-list-recovery} and the next step in the recursion determines which of the four orbit representatives extend to a length-16 orbit representative.

To measure any error in recovery, we measure the dihedral-invariant distance
\begin{equation}
    d(\hat x, y) = \min_{g\in\dihgroup{2d}}\|\hat x - g\cdot y\|_2
\end{equation}
between the true set of Fourier coefficients $\hat x$ and the vector generated by our procedure $y$.
In projected MRA, we omit the Nyquist component (the 32nd coefficient) of each signal when computing this distance which is equivalent to setting each to zero.
In random tests, our algorithm recovers an estimate for the true signal with a reconstruction error close enough to zero to be consistent with loss of precision in floating point computations with a median error of $2.37\times 10^{-10}$ for dihedral MRA and $1.21\times 10^{-11}$ for projected MRA out of 1000 trials.

Since the algorithm is recursive and floating point error calculations are inexact we can also get a sense for how errors propagates through the recursion.
We can split up the vector of Fourier coefficients $\hat x\in \Cx^2$ based on which coefficients are recovered at each level of recursion.
We define the $m$-th cohort of Fourier coefficients to be the odd coefficients at level $m$ of the recursion, i.e.\ if $y \in \Cx^{2^k}$,
\begin{align*}
    C_0(y) &:= (y[0]), \\
    C_m(y) &:= (y[2^{k-m}(2j-1)] : j\in[2^{m-1}]), \quad 1<m\leq k.
\end{align*}
We define the errors for cohorts as $e_0 = |\hat x[0] - y[0]|$ and $e_m = d(C_m(\hat x), C_m(y)) / 2^{m-1}$ for $1<m\leq k$ where we have normalized by the number of elements in the cohort.
Table~\ref{tab:cohort-errors} gives a median error for each cohort over 1000 samples.
While these median errors are small and near machine precision, note that they grow for each cohort of odd Fourier coefficients recovered.
We also remark that cohort one is always the Nyquist component so we exclude it from analysis in the pMRA case.
\begin{table}[ht]
    \centering
    \renewcommand{\arraystretch}{1.5}
    \begin{tabular}{c|c|c}
         Cohort Error & dMRA & pMRA \\
         \hline
        $e_0$ & $2.11\times 10^{-15}$ & $1.11\times 10^{-15}$ \\
        $e_1$ & $1.56\times 10^{-14}$ & N/A \\
        $e_2$ & $2.39\times 10^{-14}$ & $2.24\times 10^{-14}$ \\
        $e_3$ & $1.15\times 10^{-12}$ & $1.88\times 10^{-14}$ \\
        $e_4$ & $1.76\times 10^{-12}$ & $8.98\times 10^{-14}$ \\
        $e_5$ & $3.27\times 10^{-12}$ & $1.53\times 10^{-13}$ \\
        $e_6$ & $6.07\times 10^{-12}$ & $3.45\times 10^{-13}$
    \end{tabular}
    \caption{Median reconstruction errors for each cohort of odd Fourier coefficients over 1000 random samples. }
    \label{tab:cohort-errors}
\end{table}

These results can be verified by the Python scripts in the \href{https://github.com/lemniscate8/MRA-variant-orbit-recovery}{following repository.}\footnote{Repository is found at https://github.com/lemniscate8/MRA-variant-orbit-recovery for those reading in print.}.
We also furnish unit tests verifying various mathematical calculations done in this paper.
See the repository for a detailed description of the contents.

\section{Future Directions}
\label{sec:future-directions}

\paragraph{Removing the power-of-two assumption.}
Our methodology exploits the specific structure of both dihedral and projected MRA, relying on a particular recursive structure which requires that the dimension of the signal be a power of two.
It would be interesting to adapt our approach to all signal lengths. The concurrent work~\cite{proj-mra} achieves this using a different style of algorithm (for both dihedral MRA and a variation of projected MRA).

\paragraph{Proving the conjectures.}
While proof of generic recovery in some dimension $d$ can be verified by checking the rank of certain symbolic matrices, 
we were not able to prove in general that these matrices always have full column-rank. 
Techniques for such a proof would likely lead to improved bounds for finding rank-1 symmetric matrices planted in a subspace~\cite{johnston_computing_2023} (see previous work on the non-symmetric case by the authors and Dastidar~\cite{dastidar_improving_2025}).

\paragraph{Continuous groups and heterogeneity.}
It would be interesting to explore whether our methods can be applied in \emph{continuous} multi-reference alignment, that is, orbit recovery under the action of an infinite group such as $\mathrm{SO}(2)$ or $\mathrm{O}(2)$, potentially with a projection step. It would additionally be interesting to explore whether our methods can handle a heterogeneous combination of different signals (as described in the introduction).

\paragraph{Other projection operators.}
With regards to projection operations, a close analysis of Proposition~\ref{prop:invariant-tensor-structure} reveals that most monomials present as entries in $T^{(3)}_\dMRA$ also appear in $T^{(3)}_\pMRA$. 
As a result, the effect of projection is mostly negligible for lower order moments in the sense that we have almost the same information, even though the projection cuts the signal dimension in half.
Determining more general conditions on a projection operation so that the sample complexity remains the same as for the ``un-projected'' version is an interesting direction for future work.

\paragraph{Robustness to noise.}
Finally, we have given an algorithm in the exact setting where we have abstracted away the estimation of moments from samples.
It remains open to give an end-to-end analysis of our method applied to noisy samples, and to bound the sample complexity.
This would require, in particular, a robust analysis of the intersection algorithm \cite[Algorithm 1]{johnston_computing_2023}; this currently only exists in a non-planted setting~\cite{bhaskara_new_2024}.
To illustrate that robustness can be a non-trivial issue, we point to the setting of cryo-ET: while an ``exact'' algorithm runs in polynomial time~\cite{bandeira_estimation_2023}, a more detailed analysis of the noise tolerance requires \emph{quasi-polynomial} time and sample complexity~\cite{liu_algorithms_2022}. Determining whether our method will incur similar costs is an interesting question for further work.

\appendix

\section{Appendix}
\label{sec:appendix}
\subsection{Properties of Symmetrized Products}
\label{subsec:properties-symmetric-products}
\begin{proof}[Proof of Lemma~\ref{lemma:canonical-orthonormal-basis-symmetric-space}]
    It is well known that vectors of the form $\vec{e}_M$ given in \eqref{eq:symmetric-vector} where $M$ is an $k$-element multiset of $[n]$ are proportional to $\Sym_k(\bigotimes_{j\in M} \vec{e}_j)$ which span $S^k(\K^n)$.
    Since there are $\mchoose{n}{k}$-many vectors and $\dim S^k(\K^n)=\mchoose{n}{k}$, showing these vectors are orthonormal imply they are an orthonormal basis for $S^k(\K^n)$.
    Consider that
    \[
    \langle \vec{e}_{\mset{j_1,\dots,j_k}},\vec{e}_{\mset{j'_1,\dots,j'_k}}\rangle = \frac{w_{\mset{j_1,\dots,j_k}}w_{\mset{j'_1,\dots,j'_k}}}{(k!)^2}\sum_{\sigma\in S_k}\sum_{\tau\in S_k} \prod_{i=1}^k \langle \vec e_{j_{\sigma(i)}}, \vec e_{j'_{\sigma(i)}} \rangle.
    \]
    One can see that there must be a permutation sending the elements of $\mset{j_1,\dots,j_k}$ to $\mset{j'_1,\dots,j'_k}$ for this inner product to be non-zero, so vectors of this form are orthogonal. Next,
    \[
        \|\vec{e}_{\mset{j_1,\dots,j_k}}\|^2 = \frac{w^2_{\mset{j_1,\dots,j_k}}}{(k!)^2}\sum_{\sigma\in S_k}\sum_{\tau\in S_k} \prod_{i=1}^k \langle \vec e_{j_{\sigma(i)}}, \vec e_{j_{\tau(i)}} \rangle = \frac{w^2_{\mset{j_1,\dots,j_k}}}{k!}\sum_{\sigma\in S_k} \prod_{i=1}^k \langle \vec e_{j_{\sigma(i)}}, \vec e_{j_i} \rangle
    \]
    The sum then counts the number of permutations that do not change $(j_1,\dots,j_k)$ and $\frac{w^2_{\mset{j_1,\dots,j_k}}}{k!}$ is the reciprocal of this quantity (see \eqref{eq:multiset-weight}) so $\|\vec{e}_{\mset{j_1,\dots,j_k}}\|^2$ = 1.
\end{proof}
\begin{proof}[Proof of Lemma~\ref{lemma:symmetric-product-inner-product}.]
We use the standard inner product on $\K^{\mchoose{n}{k}}$ and the Definition~\ref{def:symmetrized-Kronecker-product} for the symmetrized Kronecker product to expand:
\[
    \langle u^1 \oasterisk \cdots \oasterisk u^k, v^1 \oasterisk \cdots \oasterisk v^k \rangle = \sum_{1\leq j_1\ \leq \dots \leq j_k \leq n} \frac{w^2_{\mset{j_1,\dots, j_k}}}{(k!)^2}\left(\sum_{\sigma\in S_k} u^1[j_{\sigma(1)}] \cdots u^k[j_{\sigma(k)}]\right)\left(\sum_{\sigma\in S_k} v^1[j_{\sigma(1)}]  \cdots v^k[j_{\sigma(k)}]\right).
\]
Next, notice that $w^2_{\mset{j_1,\dots, j_k}}$ counts the number of unique permutations of the multiset $\mset{j_1,\dots, j_k}$.
We could instead write the sum indexed by all $k$-tuples in $[n]^{k}$ giving
\[
    \sum_{j_1,j_2\dots, j_k \in [n]} \left(\frac1{k!}\sum_{\sigma\in S_k} u^1[j_{\sigma(1)}] \cdots u^k[j_{\sigma(k)}]\right)\left(\frac1{k!}\sum_{\sigma\in S_k} v^1[j_{\sigma(1)}]  \cdots v^k[j_{\sigma(k)}]\right).
\]
Finally, we may rewrite the two sums as entries of Kronecker products
\[
    \sum_{j_1,j_2\dots, j_k \in [n]} \left(\frac1{k!}\sum_{\sigma\in S_k} (u^1\otimes \cdots \otimes u^k)[(j_{\sigma(1)}, \dots, j_{\sigma(k)})]\right)\left(\frac1{k!}\sum_{\sigma\in S_k} (v^1\otimes \cdots \otimes v^k)[(j_{\sigma(1)}, \dots, j_{\sigma(k)})]\right),
\]
which then is the inner product of symmetric projections,
\[
    \langle \Sym_k(u^{1}\otimes u^2 \otimes \cdots\otimes u^k) ,  \Sym_k(v^{1}\otimes v^2 \otimes \cdots\otimes v^k)\rangle.
\]
\end{proof}
\begin{proof}[Proof of Corollary~\ref{cor:symmetric-product-inner-product-expansion}.]
By Lemma~\ref{lemma:symmetric-product-inner-product}, for vectors $u^1,\dots, u^k,v^1,\dots,v^k \in \K^n$,
\[
\langle u^1 \oasterisk \cdots \oasterisk u^k, v^1 \oasterisk \cdots \oasterisk v^k \rangle = \langle \Sym_k(u^{1}\otimes u^2 \otimes \cdots\otimes u^k) ,  \Sym_k(v^{1}\otimes v^2 \otimes \cdots\otimes v^k)\rangle.
\]
Expanding the right-hand side by linearity of symmetric projection yields
\[
\Sym_k(u^{1}\otimes \cdots\otimes u^k) ,  \Sym_k(v^{1} \otimes \cdots\otimes v^k)\rangle = \frac1{(k!)^2}\sum_{\sigma\in S_k}\sum_{\tau\in S_k} \langle u^{\sigma(1)} \otimes \cdots\otimes u^{\sigma(k)}, v^{\tau(1)} \otimes \cdots\otimes v^{\tau(k)} \rangle.
\]
The inner product of a Kronecker product is the product of individual inner products:
\[
\langle u^{\sigma(1)} \otimes \cdots\otimes u^{\sigma(k)}, v^{\tau(1)} \otimes \cdots\otimes v^{\tau(k)} \rangle = \prod_{i=1}^k \langle u^{\sigma(i)}, v^{\tau(i)} \rangle,
\]
so
\[
\langle u^1 \oasterisk \cdots \oasterisk u^k, v^1 \oasterisk \cdots \oasterisk v^k \rangle = \frac1{(k!)^2}\sum_{\sigma\in S_k}\sum_{\tau\in S_k} \prod_{i=1}^k \langle u^{\sigma(i)}, v^{\tau(i)} \rangle.
\]
Finally, the product is invariant under permutation so we may reorder then re-index to eliminate a sum:
\begin{align*}
    \sum_{\sigma\in S_k}\sum_{\tau\in S_k} \prod_{i=1}^k \langle u^{\sigma(i)}, v^{\tau(i)} \rangle
    = \sum_{\sigma\in S_k}\sum_{\tau\in S_k} \prod_{i=1}^k \langle u^{i}, v^{\tau(\sigma^{-1}(i))} \rangle = \sum_{\sigma\in S_k}\sum_{\tau\in S_k} \prod_{i=1}^k \langle u^{i}, v^{\tau(i)} \rangle = k! \sum_{\tau\in S_k} \prod_{i=1}^k \langle u^{i}, v^{\tau(i)} \rangle,
\end{align*}
giving
\[
\langle u^1 \oasterisk \cdots \oasterisk u^k, v^1 \oasterisk \cdots \oasterisk v^k \rangle = \frac1{k!}\sum_{\sigma\in S_k} \prod_{i=1}^k \langle u^{i}, v^{\sigma(i)} \rangle.
\]

\end{proof}

\begin{proof}[Proof of Lemma~\ref{lemma:symmetric-product-equivalences}.]
    We will expand \eqref{eq:symmetric-matrix-product-row} and \eqref{eq:symmetric-matrix-product-column} to show that produces the same entry-wise expression as Definition~\ref{def:symmetric-matrix-product}.
    
    Indexing into the summand of~\eqref{eq:symmetric-matrix-product-row} --- reproduced here:
    \[
        (A^1 \oasterisk A^2 \oasterisk \dots \oasterisk A^k)[\mset{j_1,j_2,\dots,j_k}, :]= \frac{w_{\mset{j_1,\dots,j_k}}}{k!}\sum_{\sigma\in S_k}  A^1[j_{\sigma(1)}, :] \oasterisk A^2[j_{\sigma(2)}, :] \oasterisk \cdots \oasterisk A^k[j_{\sigma(k)}, :],
    \]
    --- by a multiset $\mset{j'_1,j'_2,\dots,j'_k}$ of $[n]$ and expanding by Definition~\eqref{eq:symmetrized-Kronecker-product} gives
    \[
        (A^1[j_{\sigma(1)}, :] \oasterisk A^2[j_{\sigma(2)}, :] \oasterisk \cdots \oasterisk A^k[j_{\sigma(k)}, :])[\mset{j'_1,j'_2,\dots,j'_k}] = \frac{w_{\mset{j'_1,\dots,j'_k}}}{k!} \sum_{\tau\in S_k} \prod_{i=1}^K A^i[j_{\sigma(i)}, j_{\tau(i)}].
    \]
    Substitution into \eqref{eq:symmetric-matrix-product-row} and rearranging sums and terms yields
    \[
        (A^1 \oasterisk A^2 \oasterisk \dots \oasterisk A^k)[\mset{j_1,j_2,\dots,j_k}, \mset{j'_1,j'_2,\dots,j'_k}] = 
        \frac{w_{\mset{j_1,\dots, j_k}}w_{\mset{j'_1,\dots, j'_k}}}{(k!)^2} \sum_{\sigma\in S_k} \sum_{\tau\in S_k} \prod_{i=1}^k A^i[j_{\sigma(i)}, j'_{\tau(i)}]
    \]
    which is exactly Definition~\ref{def:symmetric-matrix-product}.

    Next, indexing into the summand of the \eqref{eq:symmetric-matrix-product-column} --- reproduced here:
    \begin{equation}
        (A^1 \oasterisk A^2 \oasterisk \dots \oasterisk A^k)[:,\mset{j'_1,j'_2,\dots,j'_k}]= \frac{w_{\mset{j'_1,\dots, j'_k}}}{k!}\sum_{\tau\in S_k}  A^1[:,j'_{\tau(1)}] \oasterisk A^2[:,j'_{\tau(2)}] \oasterisk \cdots \oasterisk A^k[:,j'_{\tau(k)}]
    \end{equation}
    --- by a multiset $\{j_1,j_2,\dots,j_k\}$ of $[m]$ and expanding by Definition~\eqref{eq:symmetrized-Kronecker-product} is essentially the same process and gives
    \[
    (A^1[:, \sigma(j'_1)] \oasterisk A^2[:, \sigma(j'_2)] \oasterisk \cdots \oasterisk A^k[:, \sigma(j'_k)])[\{j_1,j_2,\dots,j_k\}] = \frac{w_{\mset{j_1,\dots,j_k}}}{k!} \sum_{\tau\in S_k} \prod_{i=1}^K A^i[j_{\sigma(i)}, j'_{\tau(i)}].
    \]
    Substitution into \eqref{eq:symmetric-matrix-product-column}, rearranging sums and terms also yields
    \[
        (A^1 \oasterisk A^2 \oasterisk \dots \oasterisk A^k)[\mset{j_1,j_2,\dots,j_k}, \mset{j'_1,j'_2,\dots,j'_k}] = 
        \frac{w_{\mset{j_1,\dots, j_k}}w_{\mset{j'_1,\dots, j'_k}}}{(k!)^2} \sum_{\sigma\in S_k} \sum_{\tau\in S_k} \prod_{i=1}^k A^i[j_{\sigma(i)}, j'_{\tau(i)}]. \qedhere
    \]
\end{proof}

\begin{proof}[Proof of Lemma~\ref{lemma:symmetric-matrix-products}.]
    We will show the expressions are equivalent. Let $\mset{j_1, \dots, j_k}$ be a multiset of $[m]$ and let $\mset{j'_1, \dots, j'_k}$ be a multiset of $[n]$.
    We may write an entry of matrix multiplication $(A^1\oasterisk \cdots \oasterisk A^k)(B^1\oasterisk \cdots \oasterisk B^k)$ as an inner product of the conjugate-transpose of a row of $(A^1\oasterisk \cdots \oasterisk A^k)$ and a column of $(B^1\oasterisk \cdots \oasterisk B^k)$ with expressions given by Lemma~\ref{lemma:symmetric-product-equivalences}:
    \begin{align*}
        &(A^1\oasterisk \cdots \oasterisk A^k)(B^1\oasterisk \cdots \oasterisk B^k)[\mset{j_1, \dots, j_k}, \mset{j'_1, \dots, j'_k}] \qquad \qquad (*)\\
        &= \frac{w_{\mset{j_1,\dots,j_k}}w_{\mset{j'_1,\dots, j'_k}}}{(k!)^2}\left(\sum_{\sigma\in S_k}  A^1[j_{\sigma(1)}, :] \oasterisk \cdots \oasterisk A^k[j_{\sigma(k)}, :]\right)\left(\sum_{\tau\in S_k}  B^1[:,j'_{\tau(1)}] \oasterisk \cdots \oasterisk B^k[:,j'_{\tau(k)}]\right) \\
        &= \frac{w_{\mset{j_1,\dots,j_k}}w_{\mset{j'_1,\dots, j'_k}}}{(k!)^2}\sum_{\sigma\in S_k} \sum_{\tau\in S_k}  \left(A^1[j_{\sigma(1)}, :] \oasterisk \cdots \oasterisk A^k[j_{\sigma(k)}, :]\right) \left(B^1[:,j'_{\tau(1)}] \oasterisk \cdots \oasterisk B^k[:,j'_{\tau(k)}]\right) \\
        &= \frac{w_{\mset{j_1,\dots,j_k}}w_{\mset{j'_1,\dots, j'_k}}}{(k!)^2}\sum_{\sigma\in S_k} \sum_{\tau\in S_k}  \langle (A^1[j_{\sigma(1)}, :] \oasterisk \cdots \oasterisk A^k[j_{\sigma(k)}, :])^*, B^1[:,j'_{\tau(1)}] \oasterisk \cdots \oasterisk B^k[:,j'_{\tau(k)}]\rangle.
\end{align*}
Applying Corollary~\ref{cor:symmetric-product-inner-product-expansion} gives
\begin{align*}
    (*) &= \frac{w_{\mset{j_1,\dots,j_k}}w_{\mset{j'_1,\dots, j'_k}}}{(k!)^2}\sum_{\sigma\in S_k} \sum_{\tau\in S_k}  \frac1{k!}\sum_{\rho\in S_k} \langle (A^i[j_{\sigma(i)}, :])^* , B^{\rho(i)}[:, j'_{\tau(i)}]\rangle \\
    &= \frac1{k!}\sum_{\rho\in S_k}  \frac{w_{\mset{j_1,\dots,j_k}}w_{\mset{j'_1,\dots, j'_k}}}{(k!)^2}\sum_{\sigma\in S_k} \sum_{\tau\in S_k}  (A^iB^{\rho(i)})[j_{\sigma(i)}, j'_{\tau(i)}] \\
    &= \frac1{k!}\sum_{\rho\in S_k}  \left((A^1 B^{\rho(1)})\oasterisk (A^2 B^{\rho(2)}) \oasterisk \cdots \oasterisk (A^k B^{\rho(k)})\right)[\mset{j_1, \dots, j_k}, \mset{j'_1, \dots, j'_k}]
\end{align*}
where the last line follows from Definition~\ref{def:symmetric-matrix-product} of the symmetric matrix product.
\end{proof}

\begin{proof}[Proof of Lemma~\ref{lemma:symmetric-lift-independence}.]
    We will first show that the claim holds when $A$ has  orthonormal columns then use this to show the claim holds for linearly independent columns of $A$.

    \begin{enumerate}
        \item \emph{Orthonormal.} Assume that $A=Q \in \K^{m\times n}$ has orthonormal columns, where $m\ge n$.
    We will show that $(Q^{\oasterisk k})^* Q^{\oasterisk k} = I$ by expanding its entries.
    By matrix multiplication and the column-wise definition of the symmetric lift given by Lemma~\ref{lemma:symmetric-product-equivalences},
    \[
        ((Q^{\oasterisk k})^* Q^{\oasterisk k})[\mset{j_1, \dots, j_k}, \mset{j'_1, \dots, j'_k}] = \langle w_{\mset{j_1, \dots, j_k}} Q^{\oasterisk k}[:, \mset{j_1, \dots, j_k}], w_{\mset{j'_1, \dots, j'_k}} Q^{\oasterisk k}[:, \mset{j'_1, \dots, j'_k}] \rangle.
    \]
    Next, using Corollary~\ref{cor:symmetric-product-inner-product-expansion} this inner product becomes
    \[
        \frac{w_{\mset{j_1, \dots, j_k}}w_{\mset{j'_1, \dots, j'_k}}}{k!}\sum_{\sigma\in S_k} \prod_{i=1}^k \langle Q[:,j_i], Q[:,j'_{\sigma(i)}] \rangle = \frac{w_{\mset{j_1, \dots, j_k}}w_{\mset{j'_1, \dots, j'_k}}}{k!}\sum_{\sigma\in S_k} \prod_{i=1}^k  \begin{cases}
            1 & j_i = j'_{\sigma(i)}, \\
            0 & \text{otherwise},
        \end{cases}
    \]
    where the piecewise expression follows from $Q$ having orthonormal columns.
    Now this expression can only be non-zero if $\mset{j_1, \dots, j_k} = \mset{j'_1, \dots, j'_k}$, otherwise for any permutation at least one $i\in[k]$ will have $j_i = j'_{\sigma(i)}$.
    In the case that $\mset{j_1, \dots, j_k} = \mset{j'_1, \dots, j'_k}$,
    \[
        \frac{w^2_{\mset{j_1, \dots, j_k}}}{k!} \sum_{\sigma \in S_k} \begin{cases}
            1 & (j_1,j_2,\dots,j_k) = (j_{\sigma(1)},j_{\sigma(2)},\dots,j_{\sigma(k)}), \\
            0 & \text{otherwise}.
        \end{cases}
    \]
    Now, the sum counts how many permutations of the multiset will remain unchanged if we fix an ordering.
    But this is exactly $\prod_{m \in \multi(\{j_1, \dots, j_k\})} m!$ which cancels with components of the weight $w^2_{\mset{j_1, \dots, j_k}}$.
    As a result when $\mset{j_1, \dots, j_k} = \mset{j'_1, \dots, j'_k}$ this simplifies to 1 and
    \[
        ((Q^{\oasterisk k})^* Q^{\oasterisk k})[\mset{j_1, \dots, j_k}, \mset{j'_1, \dots, j'_k}] = \begin{cases}
            1 & \mset{j_1, \dots, j_k} = \mset{j'_1, \dots, j'_k}, \\
            0 & \text{otherwise}.
        \end{cases}
    \]
    This is the identity matrix of size $\mchoose{n}{k}$ so $Q^{\oasterisk k}$ has orthonormal columns.
    \item \emph{Linearly independent}. Now, suppose that $A$ has linearly independent columns. Then $A$ has a thin-QR decomposition $A=QR$ where $Q$ is has orthonormal columns $R$ is a full-rank upper triangular matrix, i.e.\ $R[j,j'] = 0$ when $j>j'$.
    Without loss of generality we may assume $R$ has non-zero real entries along its diagonal, $R[j,j] > 0$ for $j\in[n]$, since it has full rank.
    By Corollary~\ref{cor:lift-of-product} we have that $A^{\oasterisk k} = Q^{\oasterisk k} R^{\oasterisk k}$.
    Step 1 shows that $Q^{\oasterisk k}$ is orthonormal so we need to show $R^{\oasterisk k}$ is full-rank and upper triangular.
    We will first show that $R^{\oasterisk k}$ is upper triangular.
    First, recall a matrix is upper triangular if for $j\in[m]$, $j'\in[n]$, $R[j,j'] = 0$ when $j>j'$.
    Expanding entries in $R^{\oasterisk k}$ by Definition~\ref{def:symmetric-matrix-product} gives
    \[
    (R^{\oasterisk k})[\mset{j_1,j_2,\dots,j_k}, \mset{j'_1,j'_2,\dots,j'_k}] = 
        \frac{w_{\mset{j_1,\dots, j_k}}w_{\mset{j'_1,\dots, j'_k}}}{(k!)^2} \sum_{\sigma\in S_k} \sum_{\tau\in S_k} \prod_{i=1}^k R[j_{\sigma(i)}, j'_{\tau(i)}].
    \]
    We may simplify by re-indexing permutation to get
    \[
        (R^{\oasterisk k})[\mset{j_1,j_2,\dots,j_k}, \mset{j'_1,j'_2,\dots,j'_k}] = 
        \frac{w_{\mset{j_1,\dots, j_k}}w_{\mset{j'_1,\dots, j'_k}}}{k!} \sum_{\sigma\in S_k} \prod_{i=1}^k R[j_i, j'_{\sigma(i)}].
    \]
    Now if $M=\mset{j_1,j_2,\dots,j_k}$ and $M' = \mset{j'_1,j'_2,\dots,j'_k}$ we will show that $(R^{\oasterisk k})[M, M'] = 0$ whenever $M\succ M'$ by contraposition.
    Suppose that $(R^{\oasterisk k})[M, M'] \neq 0$ for some $k$-element multisets $M$ and $M'$ of $[d]$.
    Then it must be that there is some $\sigma : [k]\to [k]$ such that for every $i\in[k]$, $j_i \ge j'_{\sigma(i)}$.
    Now set $S = \{ i\in[k]: j_i > j'_{\sigma(i)}\}$.
    If $S=\emptyset$ then $M=M'$.
    Supposing that $S$ is non-empty, set $X=\mset{j_i\in M :i\in S}$ and $Y = \mset{j'_{\sigma(i)}\in M :i\in S}$.
    We then have that $X\subseteq M$, $X\neq \emptyset$ and $M' = (M\setminus X)\cup Y$ by construction.
    Further, by our assumption on $\sigma$, for any $x\in X$ there is a $y\in Y$ so that $x>y$.
    By Definition~\ref{def:Dershowitz-Manna-order}, we have $M \prec M'$ and $M \nsucc N'$.
    It follows then that when $M \succ M'$, $(R^{\oasterisk k})[M, M'] = 0$ and $R^{\oasterisk k}$ is upper triangular.
    Next, if we examine the diagonal of $R^{\oasterisk k}$,
    \[
        (R^{\oasterisk k})[\mset{j_1,\dots,j_k},\mset{j_1,\dots,j_k}] = \frac{w^2_{\mset{j_1,\dots,j_k}}}{k!} \sum_{\sigma\in S_k} \prod_{i=1}^k R[j_i, j_{\sigma(i)}] \ge \frac{w^2_{\mset{j_1,\dots,j_k}}}{k!} \prod_{i=1}^k R[j_i, j_i] > 0
    \]
    since $R$ is full-rank and a non-zero diagonal elements.
    Hence, so $R^{\oasterisk k}$ is full-rank too.
    Then $Q^{\oasterisk k} R^{\oasterisk k}$ is a valid QR-decomposition of $A^{\oasterisk k}$ and it follows that if $A$ has linearly independent columns then so does $A^{\oasterisk k}$. \qedhere
    \end{enumerate}
\end{proof}

\subsection{Group Action and Projection in the Fourier Basis}
\label{subsec:fourier-basis}
\begin{proof}{Proof of Proposition~\ref{lemma:almost-diagonalize-groups}.}
    First, $\mathcal F\in\Cx^{2d\times 2d}$ has entries $\mathcal F[j,k] = e^{-2\pi i jk/(2d)}$ and $\mathcal F^{-1}[j,k] = \frac1{2d}e^{2\pi i jk/(2d)}$.
    
    The elements of the cyclic group on $2d$ elements are generated by $\shiftel_1$ since $\shiftel_\ell = (\shiftel_1)^\ell$ so it is sufficient to show that $\widehat{R}_1 = \mathcal F \shiftel_1 \mathcal F^{-1}$:
    \begin{align*}
         (\mathcal F \shiftel_1 \mathcal F^{-1})[j,k] &= \mathcal F[j,:] (R_1F^{-1})[:,k] \\
        &= \sum_{\ell=0}^{2d-1} e^{-2\pi i j\ell/(2d)} \cdot \frac1{2d} e^{2\pi i (\ell-1) k/(2d)} \\
        &= e^{-2\pi i k/2d}\frac1{2d}\sum_{\ell=0}^{2d-1} e^{-2\pi i (j-k)\ell/(2d)} \\
        &= \begin{cases}
            e^{-2\pi i k/2d} & j\equiv k \pmod{2d} \\
            0 &\text{otherwise}.
        \end{cases}
    \end{align*}
    So $\mathcal F \shiftel_1 \mathcal F^{-1}  = \fshiftel_1$ as defined in \eqref{eq:fourier-shift-operation} and then
    $\mathcal F \shiftel_\ell \mathcal F^{-1} = \mathcal F (\shiftel_1)^\ell \mathcal F^{-1} = (\mathcal F \shiftel_1 \mathcal F^{-1})^\ell = (\fshiftel_1)^\ell = \fshiftel_\ell$.

    Next, notice that $\freflel = JR_{-1}$ since for $x\in\K^{2d}$ with entries $x=(x_0,x_1,\dots, x_{2d-1})$,
    \[
        \reflel\shiftel_{-1}x = JR_{-1}\begin{bmatrix}
            x_0 \\ x_1 \\\vdots \\ x_{2d-2}\\ x_{2d-1}
        \end{bmatrix} = J\begin{bmatrix}
            x_1 \\ x_2 \\\vdots \\ x_{2d-1} \\ x_0
        \end{bmatrix} = \begin{bmatrix}
            x_0 \\ x_{2d-1} \\\vdots \\ x_2 \\ x_1
        \end{bmatrix} = \freflel x.
    \]
    So
    \begin{align*}
        (\mathcal F JR_{-1} \mathcal F^{-1})[j,k] &= (\mathcal F \freflel \mathcal F^{-1})[j,k] \\
        &= (\mathcal F)[j,:] (\freflel \mathcal F^{-1})[:,k] \\
        &= \sum_{\ell=0}^{2d-1} e^{-2\pi i j\ell/(2d)} \cdot \frac1{2d} e^{2\pi i (-\ell) k/(2d)} \\
        &= \frac1{2d}\sum_{\ell=0}^{2d-1} e^{-2\pi i (j+k)\ell/(2d)} \\
        &= \begin{cases}
            1 & j\equiv -k \pmod{2d} \\
            0 &\text{otherwise}.
        \end{cases}
    \end{align*}
    This is $\freflel$ as given in \eqref{eq:fourier-reflect-operation} so $\mathcal F JR_{-1} \mathcal F^{-1} = \freflel$.
    
    Finally, we aim to show that there exists 
    a full rank square matrix $C$ such that
    \[
    D\projOp \mathcal C^{-1} = \fprojOp.
    \]
    Note that the matrix $\projOp$ has the same rowspan as the matrix $\bar{\projOp} := \projOp^\top \projOp \in \Rl^{2d\times 2d}$.
    Inspecting the structure of $\bar{\projOp}$ gives the simplification $\bar{\projOp} = I + \reflel$.
    Since this matrix is square we may find its conjugation by $\mathcal F$,
    \[
        \mathcal F \bar{\projOp}\mathcal{F}^{-1} = I + \mathcal F \reflel\mathcal{F}^{-1} = I + \mathcal F \freflel \shiftel_1\mathcal{F}^{-1} = I +  \freflel \mathcal F \shiftel_1\mathcal{F}^{-1} = I + \freflel \fshiftel_1.
    \]
    Elementwise, this gives $(\mathcal F \bar{\projOp}\mathcal{F}^{-1}v)[j] = v[j] + e^{-2\pi i (2d-j)/(2d)}v[2d-j]$ for $j=0,1,2,\dots, 2d-1$. With a small algebraic simplification, the rows of this matrix are
    \[
        \mathcal F \bar{\projOp}\mathcal{F}^{-1}[j,:] = \vec{e}_j + e^{2\pi i j/(2d)}\vec{e}_{-j},
    \]
    with indices of the standard basis vectors taken modulo $2d$.
    Notably, $\mathcal F \bar{\projOp}\mathcal{F}^{-1}[d,:] = 0$ which is why the Nyquist component is unrecoverable, and $\mathcal F \bar{\projOp}\mathcal{F}^{-1}[2d-j,:] = e^{-2\pi i j/(2d)} \mathcal F \bar{\projOp}\mathcal{F}^{-1}[2d-j,:]$ so we define $\fprojOp$ to be the first $d$ rows of $\mathcal F \bar{\projOp}\mathcal{F}^{-1}$ (those indexed 0 to $d-1$).
    To make explicit the change of basis we can write
    \[
    D := \mathcal F[[d), :] \Pi^\top \in \Cx^{d\times d}
    \]
    where it then follows that $D\projOp\mathcal{F}^{-1} = \fprojOp$. (Note $\mathcal F[[d), :]$ is the first $d$ rows of the matrix $\mathcal F$.)
    The original projection matrix $\projOp$ had full row-rank and so does $\fprojOp$ (for both, the rows are an orthogonal set of vectors) so $C$ is a change of basis.
\end{proof}

\begin{proof}{Proof of Proposition~\ref{prop:two-clean-projections-sufficient}.}
Let $\ell_1, \ell_2\in[2d)$ be known and suppose that $\ell_1-\ell_2$ does not divide $2d$.
We will consider the pMRA model in the Fourier basis.
Consider the block matrix given by
\[
    P_{\ell_1,\ell_2} = \begin{bmatrix}
        \fprojOp \fshiftel_{\ell_1} \\ \fprojOp \fshiftel_{\ell_2}
    \end{bmatrix} \in \Cx^{2d \times 2d}.
\]
For a signal's Fourier transform $\hat{x}$, $P_{\ell_1,\ell_2} \hat{x}$ produces the concatenation of the projections from directions $\ell_1$ and $\ell_2$.
Now the kernel of $ P_{\ell_1,\ell_2}$ contains $\Span\{\vec e_d\}$ as previously noted but if $P_{\ell_1,\ell_2}$ has rank $2d-1$ we may recover the rest of $\hat{x}$.
We will demonstrate this by showing that Gaussian elimination produces $2d-1$ pivots.
Consider that the rows of $P_{\ell_1,\ell_2}$ have the form
\[
{
\arraycolsep=1pt
\begin{array}{rrr}
     P_{\ell_1,\ell_2}[&j,& :]=e^{-2\pi i j\ell_1/(2d)}\vec{e}_j + e^{2\pi i j(\ell_1+1)/(2d)}\vec{e}_{-j} \\
     P_{\ell_1,\ell_2}[&j +d,& :]=e^{-2\pi i j\ell_2/(2d)}\vec{e}_j + e^{2\pi i j(\ell_2+1)/(2d)}\vec{e}_{-j}
\end{array}
}, \qquad \text{for }j\in[d),
\]
where indices on are taken modulo $2d$ as needed.
For $j=0$, $P_{\ell_1,\ell_2}[0,:] = P_{\ell_1,\ell_2}[d,:]$ so we get our missing pivot since the row operation $P_{\ell_1,\ell_2}[d,:] \gets P_{\ell_1,\ell_2}[d,:] - P_{\ell_1,\ell_2}[0,:]$ produces a zero row.
Next, if we apply the row operations
\[
    P_{\ell_1,\ell_2}[j+d,:] \gets P_{\ell_1,\ell_2}[j+d,:] - e^{-2\pi i (\ell_2-\ell_1)/(2d)}P_{\ell_1,\ell_2}[j,:]
\]
for each $j=1,2,\dots,d-1$ then each row has at most one non-zero entry of the form
\begin{align*}
    P_{\ell_1,\ell_2}[j+d,:] &= (e^{2\pi i j(\ell_2+1)/(2d)} - e^{-2\pi i (\ell_2-\ell_1)/(2d)}e^{2\pi i j(\ell_1+1)/(2d)})\vec{e}_{-j} \\
    &= (e^{2\pi i j(\ell_2+1)/(2d)} - e^{-2\pi i (\ell_2-2\ell_1-1)/(2d)})\vec{e}_{-j}.
\end{align*}
These two phases will only cancel if $-j(\ell_2+1) \equiv j(\ell_2 - 2\ell_1 - 1) \pmod{2d}$ for some $j=1,2,\dots,d-1$.
Simplifying this produces $j(\ell_1-\ell_2)\equiv 0 \pmod{2d}$.
Since $\ell_1-\ell_2$ does not divide $2d$ the modular condition is not satisfied for any row and so each has a non-zero entry in the $(d-j)$-th column.
Thus, the row-echelon form for $P_{\ell_1,\ell_2}$ contains pivots in every column except for the $d$-th one and we may recover all coefficients except the $d$-th.
\end{proof}

\subsection{Invariant Tensor Structure}
\label{subsec:invariant-tensor-structure}
\begin{proof}[Proof of Proposition~\ref{prop:invariant-tensor-structure}]
    First, for $x\in \Rl^d$ we will use the structure of the invariant tensor from the original MRA model since we may express the other tensors in terms of this one:
    \[
        T^{(r)}_\mathrm{MRA}(\hat x) = \frac1d \sum_{\ell=0}^{d-1} (\widehat R \hat{x})^{\oasterisk r}.
    \]
    We will use Definition~\ref{def:symmetrized-Kronecker-product} to expand entries in the tensor:
    \begin{align*}
        T^{(r)}_\mathrm{MRA}(\hat x)[\mset{j_1,\dots, j_r}] &= \frac1d\sum_{d=\ell}^{d-1} w_{\mset{j_1,\dots, j_r}} \prod_{k=1}^r (\fshiftel_\ell \hat{x})[j_k] \\
        &= w_{\mset{j_1,\dots, j_r}} \frac1d\sum_{d=\ell}^{d-1} \prod_{k=1}^r e^{-2\pi i \ell j_k/d}x[j_k]\\
        &=  w_{\mset{j_1,\dots, j_r}}\frac1d\sum_{d=\ell}^{d-1} \exp\left(-\frac{2\pi i \ell}d \sum_{k=1}^rj_k\right)\prod_{k=1}^r x[j_k] \\
        &=  w_{\mset{j_1,\dots, j_r}}\frac1d\sum_{d=\ell}^{d-1} \left(\exp\left(-\frac{2\pi i}d \sum_{k=1}^rj_k\right)\right)^\ell\prod_{k=1}^r x[j_k] \\
        &=   w_{\mset{j_1,\dots, j_r}}\left(\prod_{k=1}^r x[j_k]\right) \bb1 \left\{\sum_{k=1}^rj_k\equiv 0\pmod{d}\right\}.
    \end{align*}
    The last line follows since the sum is a geometric series with the common ratio a complex root of unity.
    As a result it only sums to zero when the base is one which only occurs under the modular condition.
    Next, we use the fact that
    \[
        T_\dMRA^{(r)}(\hat x) = \frac12 \left(T^{(r)}_\mathrm{MRA}(\hat x) + T^{(r)}_\mathrm{MRA}(\widehat J\hat x)\right),
    \]
    so
    \[
        T_\dMRA^{(r)}(\hat x)[\mset{j_1,\dots, j_r}] = \frac{w_{\mset{j_1,\dots, j_r}}}2\left(\prod_{k=1}^r x[j_k] + \prod_{k=1}^r x[-j_k]\right)\bb1 \left\{\sum_{k=1}^rj_k\equiv 0\pmod{d}\right\}.
    \]
    Finally, for the pMRA model with $x\in\Rl^{2d}$,
    \begin{align*}
        T_\pMRA^{(r)}(\hat x)[\mset{j_1,\dots, j_r}] &= w_{\mset{j_1,\dots, j_r}}\frac1{2d}\sum_{d=\ell}^{2d-1} \prod_{k=1}^r (\fprojOp \fshiftel_\ell \hat{x})[j_k] \\
        &= w_{\mset{j_1,\dots, j_r}}\frac1{2d}\sum_{d=\ell}^{2d-1} \prod_{k=1}^r (e^{-2\pi i (j_k \ell)/(2d)}x[j_k] + e^{2\pi i (j_k )/(2d)} e^{2\pi i (j_k \ell)/(2d)} x[-j_r]) \\
        &= w_{\mset{j_1,\dots, j_r}}\frac1{2d}\sum_{d=\ell}^{2d-1} \prod_{k=1}^r \sum_{s\in\{-1,1\}}e^{\pi i (1-s)(j_k )/(2d)} e^{-2\pi i (sj_k \ell)/(2d)}x[s j_k]. \qquad (*)
    \end{align*}
    We may exchange the product and sum to get a sum over vectors of signs, giving
    \begin{align*}
        (*) &= w_{\mset{j_1,\dots, j_r}}\frac1{2d}\sum_{d=\ell}^{2d-1}  \sum_{s\in\{-1,1\}^r} \prod_{k=1}^re^{\pi i (1-s_k)(j_k )/(2d)} e^{-2\pi i (s_k j_k \ell)/(2d)}x[s_k j_k] \\
        &= w_{\mset{j_1,\dots, j_r}}\frac1{2d}\sum_{d=\ell}^{2d-1}  \sum_{s\in\{-1,1\}^r}  \exp\left(\frac{\pi i}{2d}\sum_{k=1}^r (1-s_k)j_k \right) \exp\left(\frac{-2\pi i \ell}{2d}\sum_{k=1}^r s_k j_k\right) \prod_{k=1}^r  x[s_k j_k] \\
        &= w_{\mset{j_1,\dots, j_r}}\sum_{s\in\{-1,1\}^r}  \exp\left(\frac{\pi i}{2d}\sum_{k=1}^r (1-s_k)j_k \right) \frac1{2d}\sum_{d=\ell}^{2d-1} \left(\exp\left(\frac{-2\pi i}{2d}\sum_{k=1}^r s_k j_k \right)\right)^\ell \prod_{k=1}^r  x[s_k j_k] \\
        &= w_{\mset{j_1,\dots, j_r}}\sum_{s\in\{-1,1\}^r} \exp\left(\frac{\pi i}{2d}\sum_{k=1}^r j_k \right) (-1)^{\frac1{2d}\sum_{k=1}^r s_k j_k} \left(\prod_{k=1}^r  x[s_k j_k]\right)\bb1 \left\{\sum_{k=1}^rs_kj_k\equiv 0\pmod{d}\right\}.
    \end{align*}
    In the sum, all terms are invariant to negation of the sign vector $s \mapsto -s$ except for the product $\prod_{k=1}^r  x[s_k j_k]$ so if we reindex by a negation of the sign vector we simply get
    \[
    w_{\mset{j_1,\dots, j_r}}\sum_{s\in\{-1,1\}^r}  \exp\left(\frac{\pi i}{2d}\sum_{k=1}^r j_k \right) (-1)^{\frac1{2d}\sum_{k=1}^r s_k j_k} \left(\prod_{k=1}^r  x[-s_k j_k]\right) \bb1 \left\{\sum_{k=1}^r -s_kj_k\equiv 0\pmod{d}\right\}.
    \]
    Averaging these two equivalent equations gives
    \[
    \frac{w_{\mset{j_1,\dots, j_r}}}2\sum_{s\in\{-1,1\}^r} \exp\left(\frac{\pi i}{2d}\sum_{k=1}^r j_k \right) (-1)^{\frac1{2d}\sum_{k=1}^r s_k j_k} \left(\prod_{k=1}^r  x[s_k j_k] + \prod_{k=1}^r  x[-s_k j_k]\right)\bb1 \left\{\sum_{k=1}^rs_kj_k\equiv 0\pmod{d}\right\}.
    \]  
\end{proof}

\begin{proof}[Proof of Proposition~\ref{prop:power-spectrum-pMRA}]
    Let $x\in\Rl^{2d}$. By Proposition~\ref{prop:invariant-tensor-structure}, for $j_1,j_2\in[d)$,
\begin{equation}
\begin{split}
    T_\pMRA^{(2)}(\hat x)[\mset{j_1,j_2}] =  &\frac12(\hat x[j_1]\hat x[j_2] + \hat x[-j_1]\hat x[-j_2])\bb1\{j_1 + j_2 \equiv 0 \pmod{2d}\} \\
    & +\frac12 e^{\pi i (j_1+j_2)/(2d)} (-1)^{(j_1-j_2)/(2d)}(\hat x[j_1]\hat x[-j_2] + \hat x[-j_1]\hat x[j_2])\bb1\{j_1 - j_2 \equiv 0 \pmod{2d}\},
\end{split}
\end{equation}
with indexing into $\hat{x}$ taken modulo $2d$.
The conditions, $j_1 + j_2 \equiv 0 \pmod{2d}\}$ and $j_1 - j_2 \equiv 0 \pmod{2d}\}$ only hold simultaneously when $j_1=j_2=0$.
In this case, the entry is a multiple $\hat x[0]^2$ and hence one can only determine the value of $|\hat x[0]|$.
Beyond this, there are no other conditions under which the first indicator variable is non-zero.
The second indicator is satisfied along the diagonal, and since $x$ is a real signal $\hat x[-j] = \overline{\hat x[j]}$, so for $0 < j < d$,
\[
    T_\pMRA^{(2)}(\hat x)[\mset{j,j}] = e^{\pi i j/d} \hat x[j]\hat x[-j] = e^{\pi i j/d} |\hat x[j]|^2.
\]
Taking the magnitude of these entries recovers $|x[j]|^2$ for $0 < j < d$ and this is the power spectrum of $x$ excluding the coefficient $\hat x[d]$.
\end{proof}
\subsection{Polynomial Subspaces}
\label{subsec:polynomial-subspaces}
\begin{proof}[Proof of Proposition~\ref{prop:basis-for-cutout}]
    We proceed with three main steps.
    First, we show that $\Span(\mathcal P) \subseteq I_2(\mathcal X_1^\vee)$.
    Second, we show that $|\mathcal P| = \dim I_2(\mathcal X_1^\vee)$.
    Third, we show that $\mathcal P$ is a linearly independent set.
    Together these three ensure that $\Span(\mathcal P)= I_2(\mathcal X_1^\vee)$.
    
    First, let $p_{jk\ell m} \in \mathcal P$ and let $x\in \Cx^d$.
    If $z=(x^{\oasterisk 2})^{\oasterisk 2}$ then we will show $\langle p, z\rangle = 0$. 
    Observe that entries of $z$ are indexed by two-element multisets of two-element multisets so that
    \[
    z[\mset{\mset{j,k},\mset{\ell,m}}] = w_{\mset{\mset{j,k},\mset{\ell,m}}} x^{\oasterisk 2}[\mset{j,k}]x^{\oasterisk 2}[\mset{\ell,m}] = w_{\mset{\mset{j,k},\mset{\ell,m}}} w_{\mset{j,k}}  w_{\mset{\ell,m}} x[j]x[k]x[\ell]x[m].
    \]
    Then $\langle p_{jk\ell m}, z\rangle =  x[j]x[k]x[\ell]x[m] -  x[j]x[k]x[m]x[k] = 0$.

    Second, fix indices $0\leq j_1\leq k_1 < d$ and define the
    \[
        \mathrm{TR}(j_1,k_1) = \{(j_2,k_2) : 0\leq j_2 \leq k_2 < d, j_1 < j_2, k_1 < k_2\}
    \]
    to be the set of possible indices for $j_2$ and $k_2$ given our conditions on the four indices.
    This region has a trapezoid-shaped region with corners $(j_1+1, k_1+1)$, $(k_1+1,k_1+1)$, $(d-1,d-1)$ and $(j_1+1, k_1+1)$ which degenerates to  triangle when $j_1 = k_1$.
    The mutually exclusive conditions $j_2 \leq k_1$ or $j_2 > k_1$ split this into a rectangle of width $(d-1-k_1)$ and height $(k_1-j_1)$ and triangle with base $(d-1-k_1)$. Therefore 
    $|\mathrm{TR}(j_1,k_1)| = (d-1-k_1)(k_1-j_1) + \frac{(d-k_1)(d-1-k_1)}{2}$.
    The size of $\mathcal P$ is then
    \[
    |\mathcal P| = \sum_{k=0}^{d-1} \sum_{j=0}^k \left((d-1-k)(k-j) + \frac{(d-k)(d-1-k)}{2}\right) = \frac1{12}(d+1)d^2(d-1) = \dim I_2(\mathcal X_1^\vee).
    \]
    Third, by definition each vector $p\in \mathcal P$ has a support of size two.
    Vectors with non-overlapping support are orthogonal; it is not hard to see that when multisets $\mset{j_1,k_1,j_2,k_2} \neq  \mset{j'_1,k'_1,j'_2,k'_2}$ then $\langle p_{j_1k_1j_2k_2}, p_{j'_1k'_1j'_2k'_2}\rangle = 0$ since these vectors have disjoint support.
    We may then partition $\mathcal P$ into sets
    \[
        \mathcal P_{j_1k_1j_2k_2} = \{ p_{j'_1,k'_1,j'_2,k'_2} : \mset{j'_1,k'_1,j'_2,k'_2} = \mset{j_1,k_1,j_2,k_2}\}
    \]
    so that the spans of these $\mathcal P_{j_1k_1j_2k_2}$ give an orthogonal decomposition for $\Span(\mathcal P)$.
    It remains to show that each $\mathcal P_{\{j_1,k_1,j_2,k_2\}}$ contains a linearly independent set of vectors. 
    We simply note that if $j_1 \neq k_1$ and $j_2\neq k_2$ then $0\leq j_1 \leq j_2 < d$, $0\leq k_1 \leq k_2 < d$, $j_1<k_1$ and $j_2<k_2$ so
    \[
        \mathcal P_{j_1k_1j_2k_2} = \{p_{j_1k_1j_2k_2}, p_{j_1j_2k_1k_2}\},
    \]
    which are linearly independent vectors since they have different supports,
    otherwise $P_{j_1k_1j_2k_2} = \{p_{j_1k_1j_2k_2} \}$.
    Hence, each partition is spanned by linearly independent vectors and as a whole $\mathcal P$ is a basis.
\end{proof}

\begin{proof}[Proof of Proposition~\ref{prop:basis-for-orthogonal-subspace}]
    As previously noted, $\Span(v^{\oasterisk 2} : v\in\mathcal X_1^\vee)\cong (\Cx^d)^{\oasterisk 4}$ so $\mathcal Q$ is the correct size to be an orthogonal basis for $I(\mathcal X_1^\vee)_2^\perp$.
    
    First, we will confirm that $\mathcal Q$ is a basis by showing it is an orthogonal set of vectors.
    Notice that since they are standard basis vectors of $((\Cx^n)^{\oasterisk 2})^{\oasterisk 2}$,  $\langle (\vec{e}_a\oasterisk \vec{e}_b) \oasterisk (\vec{e}_c \oasterisk \vec{e}_d), (\vec{e}_{a'}\oasterisk \vec{e}_{b'}) \oasterisk (\vec{e}_{c'} \oasterisk \vec{e}_{d'})\rangle = 0$ whenever $\mset{\mset{a,b},\mset{c,d}} \neq \mset{\mset{a',b'},\mset{c',d'}}$.
    When multisets $\mset{a,b,c,d} \neq \mset{a',b',c',d'}$, each produces a different set of two element partitions so $\langle q_{\mset{a,b,c,d}}, q_{\mset{a',b',c',d'}}\rangle = 0$.
    
    Next, we will show that for all $p\in\mathcal P$ from Definition~\ref{def:basis-of-symmetric-minors} and $q\in\mathcal Q$, $p\perp q$.
    Consider that
    if $\mset{a,b,c,d} \neq \mset{a',b',c',d'}$ then $\langle p_{abcd}, q_{a'b'c'd'}\rangle = 0$.
    We need only verify that $\langle  p_{abcd}, q_{abcd}\rangle = 0$ and $\langle  p_{abcd}, q_{acbd}\rangle = 0$.
    For the first, expanding terms then canceling weights allows us to write the inner product as a sum of inner products of standard basis vectors in $((\Cx^n)^{\oasterisk 2})^{\oasterisk 2}$,
    \begin{align*}
        \langle  p_{abcd}, q_{a'b'c'd'}\rangle &= \langle \vec{e}_{\mset{\mset{a,b},\mset{c,d}}}, \vec{e}_{\mset{\mset{a,b},\mset{c,d}}} \rangle \\
        & \quad - \langle \vec{e}_{\mset{\mset{a,d},\mset{c,b}}}, \vec{e}_{\mset{\mset{a,b},\mset{c,d}}} \rangle \\ 
        &\quad + \langle \vec{e}_{\mset{\mset{a,b},\mset{c,d}}}, \vec{e}_{\mset{\mset{a,c},\mset{b,d}}} \rangle \\
        & \quad - \langle \vec{e}_{\mset{\mset{a,d},\mset{c,b}}}, \vec{e}_{\mset{\mset{a,c},\mset{b,d}}} \rangle \\ 
        &\quad + \langle \vec{e}_{\mset{\mset{a,b},\mset{c,d}}}, \vec{e}_{\mset{\mset{a,d},\mset{b,c}}} \rangle \\
        & \quad - \langle \vec{e}_{\mset{\mset{a,d},\mset{c,b}}}, \vec{e}_{\mset{\mset{a,d},\mset{b,c}}} \rangle \\ 
        &= 1 - 1 = 0.
    \end{align*}
    The computation for $\langle  p_{abcd}, q_{acbd}\rangle = 0$ is nearly identical.
    So the span of $\mathcal Q$ is $I(\mathcal X_1^\vee)_2^\perp$.
\end{proof}

\newpage

\section*{Acknowledgments}
\label{sec:acknowledgments}
\addcontentsline{toc}{section}{\nameref{sec:acknowledgments}}

We thank Matthias K\"oppe and Benjamin Lovitz for helpful discussions.

\phantomsection
\addcontentsline{toc}{section}{References}
\bibliography{references}
\bibliographystyle{alpha}
\end{document}